\documentclass{article} %
\usepackage[final]{colm2026_conference}

\usepackage{microtype}
\usepackage[colorlinks=true, allcolors=blue]{hyperref}
\usepackage{url}
\usepackage{booktabs}
\usepackage{centernot}

\usepackage{amsmath}
\usepackage{caption}
\usepackage{graphicx}
\usepackage{subcaption}
\usepackage{natbib}
\usepackage{xspace}
\usepackage{mathtools}
\usepackage{stmaryrd}
\usepackage{dutchcal}
\usepackage{amssymb}
\usepackage{amsthm}
\usepackage{pgfplots}
\usepackage{amsmath}
\usepackage{amssymb}
\usepackage{amsthm}
\usepackage{physics}
\usepackage{hyperref}
\usepackage{enumitem}
\usepackage{thmtools}
\usepackage{cleveref}
\usepackage{xcolor}
\usepackage{tcolorbox}
\usepackage{algorithm}
\usepackage{calc}
\usepackage{algpseudocode}
\usepackage{colortbl}
\usepackage{amsthm}
\setlist[itemize]{nosep}

\usepackage{tikz}
\usetikzlibrary{arrows,backgrounds,positioning,shapes,decorations,decorations.pathreplacing,automata,fit}
\usepackage{relsize}
\tikzset{fontscale/.style = {font=\relsize{#1}}}
\usetikzlibrary{calc}
\newcounter{theorem}
\newtheorem{proposition}[theorem]{Proposition}
\newtheorem{corollary}[theorem]{Corollary}

\newtheorem{example}[theorem]{Example}
\newtheorem{defin}[theorem]{Definition}
\newtheorem{definition}[theorem]{Definition}
\newtheorem{remark}[theorem]{Remark}
\newtheorem{lemma}[theorem]{Lemma}
\newtheorem{thm}[theorem]{Theorem}

\newcommand{\macro}[1]{{#1}}

\def\im{\macro{\mathrm{Im}}}

\newcommand{\extension}[2]{\macro{\mathrm{Ext}(#2,#1)}}

\newcommand{\N}{\macro{\mathbb{N}}}%

\newcommand{\CRASP}{\ensuremath{\macro{\mathsf{C\text{-}RASP}}}\xspace}

\newcommand{\vty}[1]{\macro{\mathbf{#1}}}

\newcommand{\Z}{\ensuremath{\mathbb{Z}}\xspace}

\newcommand{\twreath}{\macro{\mathrlap{\hspace{.25ex}\cdot}{\circ}}}
\newcommand{\wrc}{\macro{\circ}}
\newcommand{\wrt}{\macro{\mathrlap{\hspace{.25ex}\cdot}{\circ}}}

\makeatletter
\newcommand{\oset}[2]{%
  {\mathop{#2}\limits^{\vbox to -.5\ex@{\kern-\tw@\ex@
   \hbox{\scriptsize $#1$}\vss}}}}
\makeatother
\newcommand{\countl}{\macro{\ensuremath{\oset{\leftharpoonup}{\#}}}}

\newcommand{\automaton}{\macro{\mathcal{A}}}

\newcommand{\obj}{\macro{\text{Obj}}}
\newcommand{\relml}{\macro{\mathrel{\triangleleft}}}

\newcommand{\types}[1]{\macro{{\mathfrak{T}_{#1}}}}
\newcommand{\type}[1]{\macro{{\mathfrak{#1}}}}
\newcommand{\units}[1]{\macro{{\mathcal{E}_{#1}}}}
\newcommand{\unit}[1]{\macro{{\mathcal{#1}}}}

\newcommand{\comp}[1]{\macro{\ensuremath{\overline{#1}}}}

\newcommand{\sub}{\macro{\star}}

\newcommand{\since}{\mathbin{\macro{\mathbf{since}}}}

\newcommand{\dyck}{\macro{\mathcal{D}}}

\newcommand{\uone}{\macro{U_1}}
\newcommand{\utwo}{\macro{U_3}}
\newcommand{\uthree}{\macro{U_2}}

\newcommand{\lang}{\macro{L}}

\newcommand{\arrow}[1]{\xrightarrow{#1}}
\renewcommand{\arrow}[1]{\macro{\rightarrow_{#1}\,}}

\newcommand{\Hom}{\macro{\text{Hom}}}

\definecolor{blu}{HTML}{1E88E5}
\definecolor{re}{HTML}{D81B60}
\definecolor{yel}{HTML}{FFC107}
\definecolor{gre}{HTML}{004D40}

\newcommand{\red}[1]{\textcolor{re}{#1}}
\newcommand{\blu}[1]{\textcolor{blu}{#1}}
\newcommand{\yel}[1]{\textcolor{yel}{#1}}

\newcommand{\sympred}[1]{\macro{#1}}

\newcommand{\bnfto}{\mathrel{::=}}

\newcommand{\str}[1]{\macro{#1}}

\newcommand{\ptl}{\macro{linear RASP program}}

\newcommand{\ptls}{\macro{linear RASP programs}}
\newcommand{\Ptls}{\macro{Linear RASP programs}}

\newcommand{\bos}{\macro{\texttt{<BOS>}}}
\newcommand{\eos}{\macro{\texttt{<EOS>}}}

\newcommand{\wwp}{\macro{\textnormal{wp}}}
\newcommand{\wwpc}{\macro{\textnormal{wpc}}}

\newcommand{\extr}[1]{\automaton{\restriction}#1}

\usepackage{lineno}

\definecolor{darkblue}{rgb}{0, 0, 0.5}
\hypersetup{colorlinks=true, citecolor=darkblue, linkcolor=darkblue, urlcolor=darkblue}

\title{Algebraic Decomposition Theory for Transformer\\ Length Generalization}

\author{Andy Yang$^{1*}$ \quad Blerta Veseli$^{2}$\thanks{AY and BV are co-first authors.
Contact: \texttt{ayang4@nd.edu},\texttt{\{blerta.veseli,mhahn\}@uni-saarland.de}} 
\quad Corentin Barloy$^{3}$ \quad Michaël Cadilhac$^{4}$ \AND Andreas Krebs$^{5}$ \quad Charles Paperman$^{6}$ \quad Howard Straubing$^{7}$ \quad Michael Hahn$^{2}$
\\[1ex]
$^{1}$University of Notre Dame, $^{2}$Saarland University, $^{3}$Ruhr University Bochum \\ $^{4}$DePaul University, $^{5}$University of Tübingen, $^{6}$University of Lille, $^{7}$Boston College  \\}

\begin{document}

\ifcolmsubmission
\linenumbers
\fi

\newpage
\maketitle
\begin{abstract}
    Transformer-based language models are known to sometimes generalize to sequences longer than seen during training, but we lack a precise characterization of which tasks admit length generalization. 
    It is not even known which \emph{regular languages} transformers length-generalize on -- and this is a foundational class of languages.
    Our contributions are to establish the first complete characterization of which regular languages transformers length-generalize on and provide a decision algorithm running in polynomial time in the size of the language's syntactic monoid. 
    These results rely on an effective characterization of the regular languages in C-RASP, a recently-established formalism that expresses which languages transformers length-generalize on. 
    This characterization is challenging because classical tools like Krohn-Rhodes decomposition theory for finite semigroups are insufficient for C-RASP.
    Firstly, the basic building blocks of Krohn-Rhodes theory -- flip-flop and simple groups -- are not expressible in C-RASP.
    Secondly, the basic building block of C-RASP (unbounded counting) is not expressible by the finite semigroups of Krohn-Rhodes theory.
    Thus, \emph{length generalization on regular languages is controlled by an algebraic property that is invisible to classical finite decomposition theory.}
    We generalize classical decomposition theory from finite semigroups to the infinite additive group on the integers, allowing us to characterize C-RASP in terms of iterated wreath products of the integers and derive a provable polynomial-time decision algorithm for regular language membership.
    Experiments across a broad test suite of regular languages confirm that our theory captures transformers' length-generalization behavior more accurately than existing classifications.\footnote{Code available at \href{https://github.com/bveseli/state-tracking-crasp}{GitHub repository}.}
\end{abstract}

\section{Introduction}
What state-tracking capabilities do transformer language models possess? 
To answer this question we study the \emph{regular languages} on which transformers can \emph{length-generalize}. 
First, regular languages give us a formal framework to describe the structure of state-tracking algorithms that transformers can implement. 
Second, probing for length-generalization gives us an empirical confirmation that the transformer learns an implementation of the underlying algorithm. 
Even though this is a fundamental question about an important architecture, we still lack a precise answer. 
For instance as shown in \cref{fig:minimal_pair}, two structurally similar state-tracking tasks (recognizing $(ab+bbaa)^*$ and $(ab+aabb)^*$) may diverge sharply in terms of length-generalizability -- and all existing theory fails to explain this discrepancy.

\begin{figure}
    \centering
    \tikzstyle{state}=[circle, draw, minimum size=0.5cm]
\begin{minipage}[b]{0.45\textwidth}
     \begin{center}
        \begin{tikzpicture}[->,>=stealth',auto,semithick, initial text=, node distance=3ex]
            \node[state, initial above, accepting] (q0) at (0,0) {};
            \node[state, below=.55 of q0] (q1) {};
            \node[state, below left=1 of q0] (q2) {};
            \node[state, below right=1 of q2] (q3) {};
            \node[state, above right=1 of q3] (q4) {};
            \path (q0) edge [bend right=25] node[left, midway] {a} (q1);
            \path (q1) edge [bend right=25] node[right, midway] {b} (q0);
            \path (q0) edge [bend right=25] node[midway, above] {b} (q2);
            \path (q2) edge [bend right=25] node[midway, below] {b} (q3);
            \path (q3) edge [bend right=25] node[midway, below] {a} (q4);
            \path (q4) edge [bend right=25] node[midway, above] {a} (q0);
        \end{tikzpicture}
        \tiny
        \begin{tikzpicture}[x=7mm, y=5mm]
            \draw[thick] (0,0) rectangle (4,4);

            \draw[opacity=0.1] (.5,0) -- (.5,4);
            \draw[opacity=0.1] (2,0) -- (2,4);
            \draw[opacity=0.1] (3.5,0) -- (3.5,4);

            \draw[opacity=0.1] (0,2) -- (4,2);
            
            \node[left] at (0,0) {0};
            \node[left] at (0,1) {25};
            \node[left] at (0,2) {50};
            \node[left] at (0,3) {75};
            \node[left] at (0,4) {100};
            
            \node[fill=green!70!brown, regular polygon, regular polygon sides=400, minimum size=1ex, scale=0.75] at (0.5,4) (p1) {};
    
            \node[fill=green!70!brown, regular polygon, regular polygon sides=400, minimum size=1ex, scale=0.75] at (2,4) (p2) {};
    
            \node[fill=green!70!brown, regular polygon, regular polygon sides=400, minimum size=1ex, scale=0.75] at (3.5,3.71) (p3) {};
    
            \draw[green!70!brown,  thick] (p1) -- (p2);
            \draw[green!70!brown,  thick] (p2) -- (p3);
    
            \node[below] at (.5,0) {[$l_{min}$,50]};
    
            \node[below] at (2,0) {[51,100]};
    
            \node[below] at (3.5,0) {[101,150]};

        \end{tikzpicture}
        \normalsize

    \end{center}   
\end{minipage}%
\begin{minipage}[b]{0.45\textwidth}
     \begin{center}
        \begin{tikzpicture}[->,>=stealth',auto,semithick, initial text=, node distance=3ex]
            \node[state, initial above, accepting] (q0) at (0,0) {};
            \node[state, below left=1 of q0] (q2) {};
            \node[state, below right=1 of q2] (q3) {};
            \node[state, above right=1 of q3] (q4) {};

            \path (q0) edge [bend right=25] node[midway,above] {a} (q2);
            \path (q2) edge [bend right=25] node[midway,below] {b} (q0);
            \path (q2) edge [bend right=25] node[below] {a} (q3);
            \path (q3) edge [bend right=25] node[below] {b} (q4);
            \path (q4) edge [bend right=25] node[above] {b} (q0);
        \end{tikzpicture}
        \tiny
        \begin{tikzpicture}[x=7mm, y=5mm]
            \draw[thick] (0,0) rectangle (4,4);
    
            \node[left] at (0,0) {0};
            \node[left] at (0,1) {25};
            \node[left] at (0,2) {50};
            \node[left] at (0,3) {75};
            \node[left] at (0,4) {100};

            \draw[opacity=0.1] (.5,0) -- (.5,4);
            \draw[opacity=0.1] (2,0) -- (2,4);
            \draw[opacity=0.1] (3.5,0) -- (3.5,4);

            \draw[opacity=0.1] (0,2) -- (4,2);
            
            \node[fill=red!70!brown, regular polygon, regular polygon sides=300, minimum size=1ex, scale=0.75] at (0.5,4) (p1) {};
    
            \node[fill=red!70!brown, regular polygon, regular polygon sides=300, minimum size=1ex, scale=0.75] at (2,3.79) (p2) {};
    
            \node[fill=red!70!brown, regular polygon, regular polygon sides=300, minimum size=1ex, scale=0.75] at (3.5,1.184) (p3) {};
    
            \draw[red!70!brown,  thick] (p1) -- (p2);
            \draw[red!70!brown,  thick] (p2) -- (p3);
    
            \node[below] at (.5,0) {[$l_{min}$,50]};
    
            \node[below] at (2,0) {[51,100]};
    
            \node[below] at (3.5,0) {[101,150]};

        \end{tikzpicture}
        \normalsize

    \end{center}   
\end{minipage}
\tikzstyle{state}=[circle, draw, minimum size=1cm]
    \caption{DFAs for $(ab+bbaa)^*$ and $(ab+aabb)^*$ and transformer length generalization on both. 
    Transformers were trained on strings of length $[l_{min},50]$, where $l_{min}$ denotes the length of the shortest valid string in the respective language, and the tested on a held-out sets of strings from bins of length up to $150$. Existing theory fails to explain why length-generalization differs between these simple regular languages with very similar structure.}
    \label{fig:minimal_pair}
\end{figure}

This line of inquiry has a deep history in machine learning. 
In fact, \citet{Kleene+2016+3+42} invented regular expressions specifically for the purpose of analyzing the capabilities of McCulloch-Pitts neural networks.
Regular language recognition is a fundamental task in the theory of computation that provides us a systematic method of determining the sequence-processing capabilities of a model \citep{sipser1996intro}.
The task is simple: given a sequence of inputs, each of which triggers a transition in a deterministic finite automaton (DFA), track the state of the machine after each transition.
These finite state-tracking tasks arise concretely in the context of language models, which we sketch below.

\textbf{Constrained decoding:} 
A user may want a language model to generate text in a structured format, such as JSON or code, which can pose challenges \citep{schall-de-melo-2025-hidden}. 
Even as outputs grow in length, their structured formats often follow regular constraints (e.g. matching curly braces to a fixed depth in JSON).

\textbf{Agentic workflows:} AI agents in practice follow workflows consisting of compositional sequences of actions \citep{schluntz2024buildingeffectiveagents}.
Keeping track of the state of the environment and deciding the next action to take can then explicitly be seen as simulating a DFA.  

\textbf{Natural language:} 
Morphemes in natural languages typically follow constraints that can be modeled by regular languages \citep{kaplan-kay-1994-regular}. 
Natural language semantics also involves maintaining the states of different referents over time \citep{kim-schuster-2023-entity}.

Previous work provides a theoretical foundation for our study by providing characterizations of which regular languages can be \emph{expressed} by the transformer architecture. 
\Citet{liu2023transformers,merrill2025a} showed transformers with $\log(n)$ depth could express all regular languages, while constant depth transformers could express all solvable regular languages.
The containment of $\mathsf{poly}(n)$-precision transformers within $\mathsf{TC}^0$ suggests transformers cannot express non-solvable regular languages (assuming $\mathsf{TC}^0\neq \mathsf{NC}^1$, as is common) \citep{NEURIPS2023_a48e5877,chiang2025transformers}.
\citet{10.1162/tacl_a_00306} showed that hard attention transformers recognized only languages in $\mathsf{AC}^0$, which
\citet{yang2024masked} later refined to the star-free regular languages, and \citet{jerad-etal-2025-unique} finally refined to the $\mathcal{R}$-trivial languages (when using leftmost tie breaking). 
Similarly, \citet{li2024chain} showed that finite-precision transformers recognized only star-free languages, which \citet{li2025characterizing} refined to the $\mathcal{R}$-trivial languages.

In contrast, \citet{bhattamishra-etal-2020-ability, huang2025formalframeworkunderstandinglength} showed that transformers can \emph{learn} languages both in and outside of each class discussed above, implying that \emph{existing expressivity characterizations do not account for transformer length generalization on regular languages}.
In particular, strong empirical evidence show that transformers tend to length-generalize on and only on the languages expressible in $\CRASP$, a programming language defining a subclass of the languages expressible by a transformer \citep{huang2025formalframeworkunderstandinglength,jobanputra2025born,yang2025knee,yang2024counting}.
However, prior to this work, there did not exist a complete characterization of the regular languages in $\CRASP$.

Formal language theory has a vibrant tradition of producing beautiful characterizations of regular language membership for different classes. 
For instance, \citet{MIXBARRINGTON1992478} showed that a regular language is in $\mathsf{AC}^0$ if and only if its syntactic morphism is quasi-aperiodic, thus providing a decision procedure.
\citet{10.1007/3-540-07407-4_23} showed that a language is piece-wise testable iff its syntactic monoid is $\mathcal{J}$-trivial.
On the frontier, regular language membership in $\mathsf{TC}^0$ is equivalent to the $40$ year old open problem of whether or not $\mathsf{TC}^0\neq \mathsf{NC}^1$ \citep{BARRINGTON1989150}, and regular language membership in all levels of the dot-depth hierarchy is a $50$ year old open problem \citep{pin:hal-01614357}.
Our work follows in this tradition, tackling the same question in the case of $\CRASP$. 

This is a deep and challenging theoretical question because classical tools for characterizing regular languages -- namely the Krohn-Rhodes decomposition theory of \citet{krohn1965algebraic} -- are insufficient for $\CRASP$.
On one front, the building blocks of Krohn-Rhodes theory (flip-flop units and simple groups) are not expressible in $\CRASP$ \citep{huang2025formalframeworkunderstandinglength}.
On a second front, the basic building blocks of $\CRASP$ (unbounded counting units) are not expressible by the finite semigroups of Krohn-Rhodes theory. 
Our core innovation is to develop an analogous algebraic decomposition theory for transformers using the additive group on the integers. 
We effectively characterize the regular languages in $\CRASP$, and provide a polynomial-time algorithm that decides DFA membership in $\CRASP$.
A simpler necessary (but not sufficient) criterion is proven via a profinite equation. 
Finally, we validate empirically that regular language membership in $\CRASP$ predicts transformer length-generalization better than any existing characterization.

\begin{figure}
    \centering

    \begin{subfigure}[t]{0.32\linewidth}
        \centering
        \begin{tikzpicture}
            \draw[thick, fill=orange, fill opacity=0.1] (-1.2,0) rectangle (1.2,1.3);
            \node (crasp) at (-1.9,0.75) {\textcolor{orange}{$\CRASP$}};

            \draw[thick, fill=blue, fill opacity=0.1] (-0.8,0) rectangle (0.8,0.65);
            \node (R)    at (0,0.35) {$\vty{R}$};

            \draw[thick, fill=blue, fill opacity=0.1] (-0.8,0) rectangle (0.8,1.3);
            \node (dy)   at (0,0.95) {$\wwpc(\vty{Dy})$};

            \draw[thick, fill=blue, fill opacity=0.1] (-0.8,0) rectangle (0.8,1.8);
            \node (Romg) at (0,1.55) {$\vty{R}^\omega$};

            \draw[thick, fill=blue, fill opacity=0.1] (-0.8,0) rectangle (0.8,2.3);
            \node (A)    at (0,2.05) {$\vty{A}$};

            \draw[thick, fill=blue, fill opacity=0.1] (-0.8,0) rectangle (0.8,2.9);
            \node (reg)  at (0,2.6) {\textcolor{blue}{$\vty{REG}$}};
        \end{tikzpicture}
        \caption{regular and subregular}
        \label{fig:reg}
    \end{subfigure}%
    \begin{subfigure}[t]{0.32\linewidth}
        \centering
        \begin{tikzpicture}
            \draw[thick, fill=blue, fill opacity=0.1] (-1.9,0) rectangle (1.9,1.75);
            \node (tc0) at (-1.55,1.5) {\textcolor{blue}{$\mathsf{TC}^0$}};

            \draw[thick, fill=blue, fill opacity=0.1] (-1.9,0) rectangle (1.9,0.75);
            \node (ac0) at (-1.55,0.5) {\textcolor{blue}{$\mathsf{AC}^0$}};

            \draw[thick, fill=orange, fill opacity=0.1] (-1.2,0) rectangle (1.2,1.3);
            \node (crasp) at (0,1.0) {\textcolor{orange}{$\CRASP$}};
        \end{tikzpicture}
        \caption{circuits}
        \label{fig:circuit}
    \end{subfigure}%
    \begin{subfigure}[t]{0.32\linewidth}
        \centering
        \begin{tikzpicture}
            \draw[thick, fill=blue, fill opacity=0.1] (-1.6,0) rectangle (1.6,0.33);
            \node (b0) at (0,0.165) {\footnotesize$\mathcal{B}_0$};

            \draw[thick, fill=blue, fill opacity=0.1] (-1.6,0) rectangle (1.6,0.66);
            \node (b1) at (0,0.5) {\footnotesize$\mathcal{B}_1$};

            \draw[thick, fill=blue, fill opacity=0.1] (-1.6,0) rectangle (1.6,1.1);
            \node (vd) at (0,1) {$\vdots$};

            \node (Alab) at (-1.5,1.3) {\footnotesize\textcolor{blue}{$\vty{A}$}};

            \draw[thick, fill=orange, fill opacity=0.1] (-1.2,0) rectangle (1.2,1.3);
            \node (crasp) at (0,1.7) {\footnotesize\textcolor{orange}{$\CRASP$}};
        \end{tikzpicture}
        \caption{dot depth hierarchy}
        \label{fig:dot}
    \end{subfigure}%

    \caption{$\CRASP$ is situated orthogonal to well-known classes of languages. $\CRASP$ contains all $\mathcal{R}$-trivial languages but not all star-free ones  (\cref{fig:reg}). 
    $\CRASP$ is contained in $\mathsf{TC}^0$, but incomparable to $\mathsf{AC}^0$ (\cref{fig:circuit}). 
    $\CRASP$ intersects every level of the dot depth hierarchy, while remaining incomparable to the full hierarchy (\cref{fig:dot}).}
    \label{fig:inclusions}
\end{figure}

\section{Algebraic Preliminaries}

Our characterization of the regular languages that languages transformers length-generalize on (i.e. those in $\CRASP$) builds upon the algebraic theory of formal languages \citep{Pin_MFAT}.

\subsection{Basics: Algebraic Theory of Formal Languages}

We refer to \cref{app:algebra} for additional definitions.
The essential ones are presented here.

\begin{defin}[Monoid]
    A monoid $(M,\cdot,1)$ is a set $M$ with an associative binary operation and an identity element. We will just write $M$ when the operation and identity are clear.
\end{defin}

A finite monoid $M$ together with a homomorphism $\phi \colon \Sigma^* \to M$ behaves like a finite automaton: after reading string $w = w_1 \cdots w_n$, the automaton is in state $\phi(w_1) \cdots \phi(w_n)$.
Like a finite automaton, a monoid $M$ \emph{recognizes} a language $L$ if there exists an accepting subset $X\subseteq M$ and a homomorphism $\phi\colon \Sigma^*\to M$ such that $L=\phi^{-1}(X)$.
Two monoids will serve as basic units for us (the latter is the ``flip-flop'' alluded to above).

\begin{defin}
    The monoid $\uone$ is the set $\{0,1\}$ where $0\cdot 1=1\cdot 0=0\cdot 0=0$ and $1\cdot1=1$. The monoid $\uthree$ is the set $\{1,a,b\}$ where $x\cdot a=a$, $x\cdot b=b$, and $1\cdot x=x=x\cdot 1$ for any $x\in \uthree$. 
\end{defin}

As an example, consider $\lang=\Sigma^*a\Sigma^*$ and the homomorphism $\phi\colon\Sigma^*\to \uone$ where $\phi(\sigma)=0$ iff $\sigma=a$.
Then for any $w\in\Sigma^*$ we have that $\phi(w)=0\in\uone$ iff in $w\in\lang$.
In some sense, $\uone$ captures the structure of $\lang$ (i.e. it detects if $a$ ever occurs in the string). 
In fact, any monoid recognizing $\lang$ is divided by $\uone$, in the following sense:

\begin{defin}[Division]
    A monoid $M$ divides a monoid $N$ iff there is a submonoid $T$ of $N$ and homomorphism $\phi\colon T\to M$ such that $M=\phi(T)$. Division is transitive.
\end{defin}
For a language $L$, the \emph{syntactic monoid} $M(L)$ is the unique monoid which recognizes $L$ and divides all other $M$ that also recognize $L$.
In the example above, $\uone$ is the syntactic monoid of $\lang = \Sigma^*a\Sigma^*$.

We will consider classes of monoids that enjoy closure properties that will be useful for effective characterizations.

\begin{defin}[Pseudovariety]
    A \emph{pseudovariety} of monoids is a collection of monoids closed under division and finite direct products. 
\end{defin}

Some pseudovarieties relevant to our characterization are $\vty{R}$ (the $\mathcal{R}$-trivial monoids \citep{BRZOZOWSKI198032}, $\vty{A}$ (the aperiodic monoids \citep{1571135650631311872}), $\vty{REG}$ (all regular languages), and $\vty{Dy}$ (the pseudovariety generated by all bounded Dyck monoids).  

\subsection{Background: Classical Algebraic Decomposition Theory of Regular Languages}

The classical algebraic operation for composing monoids is the \emph{wreath product}. 
Here we will sometimes use $+$ to notate the monoid multiplication to decongest the notation, but we do not intend to suggest it is commutative.

\begin{definition}[Classical Wreath Product]
    The wreath product $M\wrc N$ of finite monoids $(M,+)$ and $(N,\cdot)$ is the monoid $M^N\times N$ with multiplication given by $(f_1,n_1)(f_2,n_2)=(f_1+{}^{n_1}f_2,n_1n_2)$, where the left action ${}^nf$ is given by  ${}^nf(n')=f(n'n)$. 
\end{definition}
The wreath product $M \circ N$ together with a homomorphism $\phi \colon \Sigma^* \to M \circ N$ behaves like the composition of a finite automaton and a finite transducer. Let $\phi_M$ and $\phi_N$ be such that $\phi(a) = (\phi_M(a), \phi_N(a))$ for all $a \in \Sigma$. After reading a prefix $w_1 \cdots w_t$, the automaton is in state $q_t = \phi_N(w_1) \cdots \phi_N(w_t)$, and the transducer is in state $\phi_M(w_1)(q_0) + \phi_M(w_2)(q_1) + \phi_M(w_3)(q_2) + \ldots + \phi_M(wt)(q_{t-1})$.
Wreath products are the backbone of the fundamental result in the decomposition theory of finite monoids, the Krohn-Rhodes Theorem: 

\begin{thm}[Krohn-Rhodes Theorem \citep{krohn1965algebraic}]
    Every finite monoid $M$ divides an iterated wreath product of $\uthree$ and simple groups $G$ that divide $M$.
\end{thm}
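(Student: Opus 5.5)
The plan is the classical induction on $|M|$, following the standard proof (e.g.\ via the Ginzburg/Eilenberg presentation). We prove the statement for finite semigroups acting on the right and then recover monoids by adjoining an identity. Two facts about wreath products are used throughout: the wreath product is associative up to division, and division is compatible with it, meaning that if $M_1 \prec N_1$ and $M_2 \prec N_2$ then $M_1 \wrc M_2 \prec N_1 \wrc N_2$. So it suffices to show that every $M$ divides a wreath product of strictly smaller pieces, each of which either is a simple group dividing $M$ or, by induction, already has the required form.

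\textbf{Base cases.} If $M$ is a group, take a composition series $1 = G_0 \trianglelefteq \cdots \trianglelefteq G_k = M$. The Kaloujnine--Krasner embedding gives $G \hookrightarrow N \wrc (G/N)$. Iterating it shows that $M$ divides $G_1/G_0 \wrc \cdots \wrc G_k/G_{k-1}$, and each factor is a simple group dividing $M$. If $M$ is a right-zero semigroup (every element is a reset), then $M^1$ divides a direct product of copies of $\uthree$, hence an iterated wreath product of them. The monoid $\uthree$ itself is, up to the identity, the right-zero semigroup $\{a,b\}$.

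\textbf{Inductive step.} We distinguish cases on the structure of $M$.
\begin{itemize}
\item If $M$ is not cyclic and not left-simple, pick a proper left ideal $I$ with $M = I \cup T$ for a proper subsemigroup $T$. A natural choice is $T = $ the subsemigroup generated by $M\setminus I$, with $I$ maximal.
\item If $M$ is monogenic, it is handled directly: its group part is cyclic and its aperiodic part is a counter chain. Both decompose by the base cases.
\item If $M$ is left-simple, it is a right group $G \times R$ with $R$ a right-zero semigroup. Then $M$ divides $R^1 \wrc G$ or $G\times R$, which is covered by the base cases.
\end{itemize}
The key lemma is the one for the decomposition $M = I \cup T$. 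It states that $M^1$ divides $(I^1 \cup \{\text{resets}\}) \wrc T^1$, or, in the dual formulation, $M^1 \prec I^{1} \wrc (T^1)$ with an added flip-flop coordinate. The idea is that the $T$-coordinate tracks the product while the input stays in $T$, and the $I$-coordinate takes over once an element of $I$ is read. This is possible because $I$ is an ideal-like set: $I\cdot M \subseteq I$, or $MI\subseteq I$ for the left version. The extra resets $\{a,b\}$ are what force $\uthree$ factors to appear.

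\textbf{Main obstacle.} Formulating and verifying this ideal decomposition lemma correctly is the delicate step. Specifically, we must check that $I^1$ augmented with constant maps still has smaller size, or is handled by a separate induction on the pair (number of $\mathcal{J}$-classes, size). We also need that all groups appearing are divisors of $M$; this holds because subgroups of $I$ and $T$ are subgroups of $M$ and division is transitive. For the augmented $I$, the cleanest route is to use the transformation-semigroup version, $(Q,S)$ with $Q = I^1$. Adding constant maps on $Q$ introduces only aperiodic, flip-flop-like elements, which are covered by $\uthree$, and it does not increase the set of group divisors. The induction then runs on $|Q| + |S|$.
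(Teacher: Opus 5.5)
The paper does not prove this statement; it only quotes the Krohn--Rhodes theorem, so your sketch has to stand on its own. You follow the classical Ginzburg/Eilenberg route. However, the lemma you single out as the main obstacle is not merely unverified: in the first form you state it (resets on $I^1$, a plain $T^1$ as the factor that reads the input first, which is the right factor of $\wrc$ in the paper's convention), it is false.

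With right actions and $MI\subseteq I$, a letter from $I$ does put the running product into $I$. But the letters from $T$ that follow must then be multiplied onto it from scratch, and they can even take the product out of $I$. So the $I$-coordinate never simply ``takes over''. The correct bookkeeping writes $s_1\cdots s_n=\nu\tau$, where $\nu\in I$ is the product up to the last letter from $I$ and $\tau\in T^1$ is the product of the letters after it. The $T$-coordinate must therefore be \emph{restarted} at every letter $v\in I$, while the $I^1$-coordinate receives $\tau v\in I$. This gives $M^1\preceq I^1\wrc\overline{T^1}$, with the constant maps adjoined to $T^1$, not to $I^1$.

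Here is a concrete failure of your version. Let $C=\{1,a,a^2\}$ with $a^3=a^2$, and let $M$ be generated by right multiplication $r$ by $a$ together with the constant maps $c_1,c_a,c_{a^2}$. Take $I=\{c_1,c_a,c_{a^2}\}$, which is an ideal, and $T=\{1,r,r^2\}\cong C$; we take $T^1=T$ since $T$ is a monoid.
\begin{itemize}
\item In $(I^1\cup\{\text{resets}\})\wrc T^1$ the left factor is a right-zero semigroup with identity, so all of its elements are idempotent.
\item Fix covers of $c_1$ and $r$. The front component $e$ of the cover of $p=(c_1r)^2$ is idempotent, and $e\tau=e$ for the front components $\tau$ of both covers: either $e=a^2$, or $e=1$ and both front components are $1$.
\item Hence the covers of $pc_1r=c_a$ and $pc_1rr=c_{a^2}$ both have front component $e$. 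Their rear components are $y\mapsto F(y)\,g(ye)$ and $y\mapsto F(y)\,g(ye)\,g(ye)$, where $F$ and $g$ are the rear components of the covers of $pc_1$ and $r$.
\item These coincide by idempotency, so two distinct elements of $M$ receive the same cover, and no division exists.
\end{itemize}

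It follows that the factor acquiring constants is $\overline{T^1}$, and your plan for the augmented factor does not work either. Adjoining constants makes the object larger, so it does not fit an induction on $|Q|+|S|$ or any other size measure.

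What is missing is the lemma that, for transformation monoids, adjoining constants commutes with wreath products up to division: $\overline{X\wrc Y}\preceq\overline{X}\wrc\overline{Y}$. You also need $\overline{G}\preceq R^1\wrc G$ for a group $G$: on the constant $c_k$ the rear resets to $kg^{-1}$, where $g$ is the current front value. Finally, $R^1$ for a right-zero semigroup $R$ embeds in a power of $\uthree$. With these three facts, the induction hypothesis for $T$ (note $|T|<|M|$) decomposes $\overline{T^1}$ using only groups dividing $T$.

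Your left-simple case is also misidentified. A finite left simple semigroup is a \emph{left} group $L\times G$ with $L$ left-zero ($xy=x$), not a right group. Under right actions $L^1$ remembers the first letter rather than resetting, so it is not one of your base cases. It needs an extra step such as $L^1\preceq R^1\wrc\uone$.

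Lastly, you assert without proof that $\langle M\setminus I\rangle$ is proper for maximal $I$. This is true, but it needs an argument.
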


Unfortunately, Krohn-Rhodes theory is insufficient for handling the monoids involved in transformer length-generalization.
$\CRASP$ defines languages with infinite syntactic monoids (Krohn-Rhodes only applies to finite monoids), while $\uthree$ is not definable in $\CRASP$ \citep{huang2025formalframeworkunderstandinglength} (the Krohn-Rhodes flip-flop unit is useless for $\CRASP$).
The essential monoid for us will be the syntactic monoid of the bounded Dyck monoid, which can be defined in $\CRASP$. 
\begin{defin}[Bounded depth Dyck language]
    Define $\dyck_1:=(ab)^*$ and $\dyck_{k+1}:=(a\dyck_kb)^*$.
\end{defin}
In short: transformers simultaneously succeed at length-generalizing on languages beyond the scope of Krohn-Rhodes theory \citep{bhattamishra-etal-2020-ability}, and fail to length generalize on the fundamental units in scope of Krohn-Rhodes theory \citep{liu2023exposing}.
This necessitates a new decomposition theory to handle $\CRASP$.

\section{Algebraic Characterization of \(\CRASP\)}
\citet{huang2025formalframeworkunderstandinglength} showed that transformer length-generalization can be guaranteed for all languages in $\CRASP$, and strong empirical evidence suggests failure of length-generalization outside of $\CRASP$ \citep{jobanputra2025born}.
Furthermore, a form of fixed-precision transformer is equivalent to $\CRASP$ \citep{yang2025knee}.
In this section we develop an algebraic characterization of $\CRASP$, which crucially requires moving beyond Krohn-Rhodes theory to infinite monoids.
We refer to \citet{yang2024counting,huang2025formalframeworkunderstandinglength} for an exposition of $\CRASP$ and provide a formal definition in \cref{app:crasp}.

\subsection{Typed Monoids}
\citet{krebs2008typed} developed a framework for using infinite monoids to recognize languages.
We present a restriction of the aforementioned framework to the case of wreath products (a one-sided version of the block product used in previous work), which ultimately provides an exact algebraic characterization of $\CRASP$. 
The core issue here is that the wreath product of infinite monoids can generate uncountably many elements, which can be too powerful.
\begin{proposition}
    Consider the classic wreath product $\Z\wrc \Z$. Then $M(\lang)\preceq \Z\wrc \Z$ for every $\lang$.
\end{proposition}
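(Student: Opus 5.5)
We show that every language $L\subseteq\Sigma^*$, over a finite alphabet and with no assumption on decidability or regularity, is recognized by $\Z\wrc\Z$ via some homomorphism $\phi\colon\Sigma^*\to\Z\wrc\Z$ and accepting set $X$. We then turn recognition into division $M(L)\preceq\Z\wrc\Z$ by the standard argument: the image $T=\phi(\Sigma^*)$ is a submonoid, and the syntactic morphism factors through $T$, so $M(L)$ is a quotient of $T$. This step uses only that the syntactic congruence is the coarsest congruence saturating $L$. It does not depend on finiteness, so it holds verbatim for infinite monoids.

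\textbf{Setup.} In $\Z\wrc\Z=\Z^{\Z}\times\Z$ the right component behaves like an automaton. We take it to be a position counter: $\phi_N(\sigma)=1$ for every letter $\sigma$. After reading $w_1\cdots w_t$ the state is $q_t=t$. By the automaton description given after the definition of the wreath product, the left component evaluated at argument $0$ is
\[
\phi_M(w_1)(0)+\phi_M(w_2)(1)+\cdots+\phi_M(w_t)(t-1).
\]
More precisely, the left component of $\phi(w)$ is the function $n\mapsto\sum_{i}\phi_M(w_i)(n+i-1)$. Each letter $\sigma$ is therefore assigned an arbitrary function $f_\sigma\colon\Z\to\Z$, and the letter at position $i$ contributes $f_\sigma(i-1)$.

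\textbf{Encoding words injectively.} Let $\Sigma=\{\sigma_1,\dots,\sigma_k\}$. We set $f_{\sigma_j}(i)=j\cdot(k+1)^{i}$ for $i\ge 0$ and $f_{\sigma_j}(i)=0$ for $i<0$. The value at $0$ of the left component of $\phi(w)$ is then $\sum_i c(w_i)(k+1)^{i-1}$, where $c(\sigma_j)=j$. This is the base-$(k+1)$ representation of $w$ with nonzero digits. The map $w\mapsto$ (this integer, $|w|$) is injective on $\Sigma^*$: the length is recovered from the second coordinate, and the digits are recovered from the integer. We take
\[
X=\{(f,n): n\ge 0,\ (f(0),n)\text{ encodes a word in }L\}.
\]
Then $\phi^{-1}(X)=L$.

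\textbf{Main obstacle.} The only real care needed is the convention for the left action ${}^nf(n')=f(n'n)$, written multiplicatively but additive in $\Z$. We have to confirm that the argument shifts by the accumulated counter, so that letter $i$ is evaluated at $i-1$ and not at some reversed or negated index. If the convention turns out reversed, we simply define $f_\sigma$ on negative arguments instead, for example $f_{\sigma_j}(-i)=j(k+1)^i$. Injectivity survives either way.

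The conclusion holds for every $L$, including undecidable ones. This shows that the unrestricted wreath product of infinite monoids is far too powerful, which motivates the typed restriction.
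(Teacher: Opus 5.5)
Your proof is correct and is essentially the paper's argument: the right factor counts positions, and the left factor gives each letter an exponentially growing function of position, so that $w\mapsto(f_w(0),|w|)$ is injective. The paper instead reduces to $\Sigma=\{0,1\}$ with $f_\sigma(x)=\sigma\cdot 2^{|x|}$, where the absolute value removes your action-convention worry, and it needs the length coordinate because the digit $0$ is allowed; it also leaves implicit the passage from recognition to division that you spell out correctly (since $\phi$ is injective, $\Sigma^*$ embeds in $\Z\wrc\Z$ and $M(L)$ is a quotient of that copy).
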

\begin{proof}
    Without loss of generality let $\Sigma=\{0,1\}$. Consider the submonoid of $\Z\wrc\Z$ generated by the image of $\Sigma^*$ under the homomorphism $\sigma \mapsto(f_\sigma,1) $ where $f_\sigma(x)=\sigma\cdot 2^{|x|}$.
    In essence, this creates a mapping $w\mapsto (f_w,|w|)$ where $f_w(0)$ outputs the integer value of the binary number $w$.
    Thus, $\Z\wrc\Z$ can recognize arbitrary languages.
\end{proof}

This problem motivates the definition of \emph{typed monoids}, which restricts the accepting sets to be collection of sets closed under union, intersection and complement. 
\begin{definition}
    A typed monoid is a triple $(M, \types{M}, \units{M})$ where $M$ is a finitely generated monoid, $\types{M}$ is a finite Boolean algebra over $M$, and $\units{M}$ is a finite subset of $M$. Elements of $\types{M}$ are the \emph{types} and elements of $\units{M}$ are the \emph{units}. A language $L$ is recognized by $(M, \types{M}, \units{M})$ if there exists a homomorphism $h\colon \Sigma^*\to M$ such that $h(\Sigma)\subseteq\units{M}$ and $L=h^{-1}(\type{M})$ for some $\type{M}\in \types{M}$.
\end{definition}

As an example, the language $\mathsf{MAJORITY}$ (there are more $a$'s than $b$'s) is recognized by the typed monoid $(\Z,\{(-\infty,0],[1,\infty),\Z,\emptyset\},\{-1,1\})$ via the type $[1,\infty)$ and the homomorphism $a\mapsto 1$ and $b\mapsto -1$.
We will typically refer to this typed monoid as $\Z$.
Now the \emph{typed wreath} product follows the same intuition as the classical wreath product, except the computations are restricted so as not to have more distinguishing power than provided by the types.

\begin{definition}[Typed Wreath Product]
    Let \( (M, \types{M}, \units{M}) \), \( (N, \types{N}, \units{N}) \) be two typed monoids, and let \( C \subseteq N \) be a finite set. The \textit{typed wreath product} 
\[
(U, \types{U}, \units{U}) = (M, \types{M}, \units{M}) \twreath_C (N, \types{N}, \units{N})
\]
of \( (M, \types{M}, \units{M}) \) with \( (N, \types{N}, \units{N}) \) using constants $C$ is defined such that
\begin{itemize}[noitemsep]
    \item \( \units{U} \) consists of elements \( (f, n) \), where \( n \in \units{N} \), and \( f : N \rightarrow \units{M}\) is a type respecting function (see \cref{def:type_respecting}) with respect to \( (N, \types{N}, \units{N}) \) and \( C \)
    \item \( U \) is the submonoid of \( M \wrc N \) generated by \( \units{U} \)
    \item \( \types{U} \) consists of types \( \type{U}_{\type{M},\type{N}} = \{(f, n) \mid f(1_{N}) \in \type{M}, n \in \type{N} \} \), where \( \type{M} \in \types{M} \), \( \type{N} \in \types{N} \)
\end{itemize}
Multiplication is the same as in the classical wreath product.
\end{definition}

\subsection{Wreath Product Characterization }

With the notion of typed monoids in hand, we can give an algebraic characterization of $\CRASP$. 
Let $\wwpc(M,\types{M},\units{M})$ denote the \emph{wreath product closure} of a typed monoid, which closes iterated wreath products of this monoid under Boolean combinations and other basic operations.
The precise definition will be found in \cref{app:typed_monoids}.
The typed wreath product closure turns out to be the precise algebraic ``glue'' that connects integer counting to $\CRASP$ programs.
The proof is given in \cref{app:CRASP_Z}.
\begin{restatable}{thm}{CRASPZ}\label{thm:crasp_wreath}
   $L\in\CRASP \iff M(L)\in \wwpc(\Z)$
\end{restatable}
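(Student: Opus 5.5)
Both directions go by structural induction, one on $\CRASP$ programs and one on the construction of $\wwpc(\Z)$. Throughout, ``$M(L)\in\wwpc(\Z)$'' means that $L$ is recognized by a typed monoid in the closure. The closure is presumably defined as the smallest class of typed monoids containing $(\Z,\{(-\infty,0],[1,\infty),\Z,\emptyset\},\{-1,1\})$ and closed under typed wreath products $\twreath_C$, direct products, and the ``basic operations'' of \cref{app:typed_monoids}. The invariant to prove is that the languages recognized by members of $\wwpc(\Z)$ are exactly the Boolean combinations of $\CRASP$-definable languages, and that $\CRASP$ is closed under Boolean combinations. I would also use that a $\CRASP$ program defines a position-wise Boolean predicate on prefixes, so ``$w\in L$'' means the output predicate holds at the last position.

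\textbf{($\Rightarrow$).} I would induct on the depth (the nesting of count operations) of a $\CRASP$ program $P$. Each Boolean operation is absorbed into the types, since $\types{U}$ is a Boolean algebra. A count term $\countl \phi$ compares linear combinations of counts $\sum_i c_i\,\#\{j\le t:\phi_i(j)\}$ against constants or against each other. I would realize such a term as a wreath product $\Z\twreath_C N$, where $N$ is a typed monoid that by induction computes the truth values of the $\phi_i$ at each prefix. The unit $(f_\sigma,n_\sigma)$ reads the current $N$-state $x$, and $f_\sigma(x)$ is the integer contribution $\pm c_i$ depending on whether $\phi_i$ holds once $\sigma$ is appended.

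The subtle point is that $f_\sigma$ must be type respecting with respect to $N$ and $C$. This means $f_\sigma(x)$ may depend only on which types $x\cdot c$ lies in, for constants $c$, and that is exactly the information the induction hypothesis supplies about $\phi_i$ at the current position. Linear combinations with integer coefficients are encoded by allowing units $\pm1$ repeated, or by taking direct products of copies of $\Z$ and then summing. Positional atoms such as the current symbol are handled by the constants in $C$.

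\textbf{($\Leftarrow$).} I would induct on the construction of the recognizing typed monoid. The base case $\Z$ with its types is a single threshold on a count, so it is definable in $\CRASP$. For $U=M\twreath_C N$ and a type $\type{U}_{\type{M},\type{N}}$, the condition $n\in\type{N}$ is handled directly by the induction hypothesis. For the $M$-component, $f(1_N)$ equals the product $\sum_t f_{w_t}(\phi_N(w_1\cdots w_{t-1}))$. Because each $f_\sigma$ is type respecting, its value is determined by finitely many $N$-type memberships of the prefix, each of which is a $\CRASP$ predicate by the induction hypothesis. The letter is read off directly. Hence the word of units fed to $M$ is a letter-to-letter transduction whose letters are $\CRASP$-definable position predicates.

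The main obstacle is then a substitution lemma. It states that if $L'$ over the finite alphabet $\units{M}$ is $\CRASP$-definable via the induction hypothesis for $M$, then its preimage under such a definable relabeling is $\CRASP$-definable. One technicality is the offset: the relabeling uses the prefix strictly before position $t$. I would prove the substitution lemma by a further induction on the program for $L'$, replacing each letter atom $Q_u$ with the disjunction of predicates defining relabel $=u$. Strict-prefix information is obtained by comparing counts up to $t$ with the contribution of position $t$, for example $\countl[\phi]-[\phi(t)]$, which $\CRASP$ can express using subtraction.
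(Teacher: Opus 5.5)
Your overall route is the paper's: the $(\Leftarrow)$ direction is the typed wreath product principle unrolled by hand, and $(\Rightarrow)$ stacks one $\Z$ layer per level of count nesting. The gap is at the step you flag as the main obstacle, and your fix does not close it. Subtracting $[\phi(t)]$ turns each count term of the formula $\psi_j$ defining $L_j=\{v : h_N(v)c\in\type{N}'\}$ into its value on $w_{<t}$. It does nothing, however, about letter atoms of $\psi_j$ that sit outside every count. If acceptance is $v\models\psi$ iff $v,|v|\models\psi$, such an atom tests $w_{t-1}$, which no $\CRASP$ formula evaluated at position $t$ can read. This matters for the statement itself. Under that convention $\Sigma^*b$ is definable by the atom $b$, whereas the paper lists $\Sigma^*b\notin\CRASP$. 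Since $\wwpc(\Z)$ does not depend on the convention, the theorem can hold for at most one of them. Your proposal never commits to a convention, so the argument as written must break somewhere, and it breaks here. The same blind spot appears in $(\Rightarrow)$. A letter test outside every count is not a Boolean combination of the types $\type{U}_{\type{M},\type{N}}$. Also, a language-level induction hypothesis cannot supply a monoid $N$ tracking an inner $\phi_i$ whose truth at $t$ depends on $w_t$ outside a count; for $\phi_i=b$ that would again be $\Sigma^*b$. Your remark that the current symbol is handled by constants points the right way, but it is not part of the invariant you induct on.

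The missing ingredient is the $\eos$ acceptance convention. Equivalently, it is the normal form the paper takes from \citet[Lemma~A.4]{yang2025knee}: every count ranges over $j<i$, and the symbol at $i$ is tested separately. The paper's pointed languages $P(\cdot)$ encode exactly this, by pairing the type of the strict prefix with the current letter via constants. Concretely, induct on the claim that the languages recognized in $\wwpc(\Z)$ are those definable at an appended $\eos$. For a formula $\chi$, let $\chi^{\eos}$ be obtained by setting the letter atoms outside counts to false and replacing each $\countl[\theta]$ by $\countl[\theta]-[\theta]+[\theta^{\eos}]$, resolved by case splits. Then $w,t\models\psi_j^{\eos}$ iff $w_{<t}\in L_j$, which is the offset step done correctly; your subtraction supplies only the $-[\theta]$ part. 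In $(\Rightarrow)$, letter atoms at a counted position are read off the index $\sigma$ of $f_\sigma$, with $n_\sigma\in C$ so that $f_\sigma$ can test $x\,n_\sigma\in\type{N}_i$. The contribution of the $\eos$ position to a top-level count is handled by a case split on the $N$-component of the type.

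Two smaller repairs are also needed. First, summing the coordinates of a direct product of copies of $\Z$ is not available, because the types of a product are Boolean combinations of products of factor types. Integer coefficients must instead come from unit relaxation, with units in $\{-K,\dots,K\}$. Second, thresholds other than $1$, and the conditions $xc\in\type{N}'$ in the converse, require closure under shifting. The paper pays for this with an extra wreath factor per layer.
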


\section{Decomposition Theory for Regular Languages in $\CRASP$}

So far, we have characterized $\CRASP$ in terms of iterated wreath products of $\Z$.
We will turn this into an algebraic decision algorithm: given a regular language $\lang$, we determine if $M(\lang)$ divides a wreath product of $\Z$ (or not).
A priori, such a question is a formidable problem, due to multiple challenges: $\Z$ is infinite, and we do not even know how many $\Z$ factors are needed.
In fact, there even examples in the finite case where the problem ends up undecidable
\citep{rhodes1999undecidability}.
Our second main result will be that, using a detailed understanding of wreath products of $\Z$, this problem is decidable.

In the depth-$1$ case, it is easy to check if $M(\lang)$ divides $\Z$ (this is true e.g. for the AND language, $L = 1^*$).
How would we check if $M(\lang) \prec \Z \wrc\Z$?
If we could ``divide'' out the right factor to obtain an object ``$M(\lang) / \Z$'', we could check if $``M(\lang) / \Z'' \prec \Z$.
If this is not the case, we could ``divide'' by $\Z$ again, and iterate until we either reach division of $\Z$ (i.e., $L$ is in $\CRASP$), or a fixed point (i.e., $\lang$ is not in $\CRASP$).
This is the basic idea of our decision procedure.
There are three challenges at hand:
First, understanding how to ``divide'' by a wreath product factor; second, making this computable even though $\Z$ is infinite; third, understanding how to identify the fixed point.

\subsection{Categories}
What do you get when you ``divide'' one monoid by another?  
For this first challenge, there is a well-understood technique using categories as algebraic structures \citep{tilson1987categories}.
We informally present the main idea, but defer the precise definitions to \cref{app:derived_categories}.

In group theory, this question has a simple answer. 
With a surjective group homomorphism between groups $\phi\colon G\to H$, we can ``divide'' $G$ by $\ker\phi$ with a ``quotient'' of $H$, such that $G\preceq (\ker\phi) \wrc H$.
However, since we are dealing with monoids (and $\CRASP$ does not even contain any non-trivial finite groups), this is of no use for us. 

For monoid homomorphisms $\phi\colon M\to N$, we may not be able to ``divide'' $M$ by $\ker\phi$, as monoids' lack of inverses obstructs such a clean division. 
What is the divisor in the division $\phi\colon M(\dyck_1)\to \uthree$? 
Intuitively, the wreath product $M(\dyck_1)\preceq N\wrc \uthree$ should form a structure that contains elements $(\text{last symbol}=a)$ and $(\text{last symbol}=b)$.
However, composition in this structure will be highly restricted -- e.g. $(\text{this symbol}=b)(\text{last symbol}=a)$ is an illegal composition. 
It turns out the cleanest way to handle this is by lifting monoids and homomorphisms to higher order structures -- categories and relational morphisms
. 

A category $X$ consists of a set of \emph{objects} $\obj(X)$ and \emph{hom-sets} $X(x_1,x_2)$, which are collections of \emph{arrows} $\alpha\colon x_1\to x_2$ for each pair of objects $x_1,x_2\in X$.
Arrows can compose associatively, and each object has an identity arrow. 
A relational morphism of monoids $\phi\colon M\relml N$ is a relation where $\phi(m)\neq \emptyset$, $\phi(m_1)\phi(m_2)\subseteq \phi(m_1m_2)$, and $1_N \in \phi(1_M)$.
This is a generalization of the classical morphism, where elements may map to sets of elements. 

Now for any relational morphism of monoids $\phi\colon M\relml N$,  the correct notion of divisor turns out to be the \emph{derived category} $D_\phi$.
Here, $\obj(D_\phi)=\phi(M)$ and $D_\phi(n_1,n_2)=\{n_1\arrow{(m,n)}\mid n\in \phi(m), n_1n=n_2\}$.
Arrows compose via the rule $n_0\arrow{{(m_1,n_1)}}n_0n_1\arrow{{(m_2,n_2)}}=n_0\arrow{(m_1m_2,n_1n_2)}$, and we merge arrows that have the same behavior under composition.
    
The \emph{Derived Category Theorem} \citep{tilson1987categories} states that given a relational morphism $\phi\colon M\relml N$, and a division $D_\phi\preceq V$, we obtain a division $ M\preceq V\wrc N$.
Thus, if $\phi : M \relml \Z$, then $D_\phi$ formalizes the object ``$M/\Z$'' alluded to in the previous section.
Going forward, we work with an extension of the notions of relational morphisms and division in which
the left-hand side can also be a category.

\subsection{Decomposition of $\dyck_1$ into wreath products of $\Z$}\label{sec:dyck_derived}
\tikzstyle{state}=[rectangle, draw, minimum size=0.8cm, inner sep=0pt]
\begin{minipage}[b]{0.7\linewidth}
We explain our decision procedure by walking through the division of the syntactic monoid of $\dyck_1=(ab)^*$ into an iterated wreath product of $\Z$, via carefully selected relational morphisms into $\Z$. 
We will visualize objects of a category as rectangles (to prevent confusion with automata) and arrows as labels on edges between squares. 
Colors indicate the monoid each element comes from.
We can view $M(\dyck_1)$ as a category with a single object and arrows corresponding to monoid elements.
We take a relational morphism into $\phi_1\colon M(\dyck_1)\relml\red{\Z}$ where $\phi_1(\blu{\bot})= \red{\Z}$, $\phi_1(\blu{a})= \{\red{1}\}$,  $\phi_1(\blu{b})= \{\red{-1}\}$, $\phi_1(\blu{\epsilon})=\phi_1(\blu{ab})=\phi_1(\blu{ba})=\{\red{0}\}$.
\end{minipage}%
\begin{minipage}[b]{0.3\linewidth}
\begin{center}
    \begin{tikzpicture}[->,>=stealth',shorten >=1pt,auto,node distance=1.8cm,semithick,initial text=,initial where=left, node distance=1ex]
        \node[state] (q0) {$\blu{M}$};
        \path (q0) edge [loop above] node[align=center] {$\blu{\epsilon}$\\$\blu{a}$\\$\blu{b}$\\$\blu{ab}$\\$\blu{ba}$\\$\blu{\bot}$} (q0);
    \end{tikzpicture}
\end{center}
\end{minipage}

In the derived category $D_{\phi_1}$ there are infinitely many objects, corresponding to each integer.
We omit starting or ending objects of arrows where clear by the diagram, combine isomorphic arrows, and let low-opacity arrows and objects denote transitions to $\blu{\bot}$ elements. 
\begin{center}
    \begin{tikzpicture}[->,>=stealth',auto,semithick, node distance=12ex]
        \node[state] (q0) {$\red{0}$};
        \path (q0) edge [loop left] node[align=center, above left = 0.15 and -.3] {$\arrow{(\blu{\epsilon},\red{0})}$\\$\arrow{(\blu{ab},\red{0})}$} (q0);

        \node[state,right= of q0] (q1) {$\red{1}$};
        \path (q1) edge [loop right] node[align=center, above right = 0.15 and -.3] {$\arrow{(\blu{\epsilon},\red{0})}$\\$\arrow{(\blu{ba},\red{0})}$} (q1);

        \node[state,left= of q0, opacity=0.36] (qn1) {$\red{-1}$};

        \node[state,right= of q1, opacity=0.36] (q2) {$\red{2}$};

        \node[state,left= of qn1, opacity=0.36] (qn2) {$\red{-2}$};

        \node[left=.2 of qn2] {\large$\cdots$};
        \node[right=.2 of q2] {\large$\cdots$};

        \path (q1) edge [bend right=6] node[above, align=center] {$\red{0}\arrow{(\blu{b},\red{-1})}$} (q0);
        \path (q0) edge [bend right=6] node[below, align=center] {$\red{1}\arrow{(\blu{a},\red{1})}$} (q1);

        \path (q0.south) edge [<->, dashed, opacity=0.36, bend right=15, color=black] (q2.south);
        \path (qn1.south) edge [<->, dashed, opacity=0.36, bend right=15, color=black] (q1.south);
        \path (qn1.south) edge [<->, dashed, opacity=0.36, bend right=15, color=black] (q2.south);
        \path (qn2.south) edge [<->, dashed, opacity=0.36, bend right=15, color=black] (q0.south);
        \path (qn2.south) edge [<->, dashed, opacity=0.36, bend right=15, color=black] (q1.south);
        \path (qn2.south) edge [<->, dashed, opacity=0.36, bend right=15, color=black] (q2.south);

        \path (q1.south) edge [<->, dashed, opacity=0.36, bend right=15, color=black] (q2.south);
        \path (q0.south) edge [<->, dashed, opacity=0.36, bend right=15, color=black] (q1.south);
        \path (qn1.south) edge [<->, dashed, opacity=0.36, bend right=15, color=black] (q0.south);
        \path (qn2.south) edge [<->, dashed, opacity=0.36, bend right=15, color=black] (qn1.south);

        \path (qn2) edge [dashed, opacity=0.36, loop above] (qn2);
        \path (qn1) edge [dashed, opacity=0.36, loop above] (qn1);
        \path (q0) edge [dashed, opacity=0.36, loop above] (q0);
        \path (q1) edge [dashed, opacity=0.36, loop above] (q1);
        \path (q2) edge [dashed, opacity=0.36, loop above] (q2);

    \end{tikzpicture}
\end{center}
We have not yet arrived at a division $M(\dyck_1)\preceq \Z$ because $D_{\phi_1}$ has hom-sets containing multiple arrows \citep[Lemma~3.1, Lemma~4.1]{tilson1987categories}.
To proceed, we take another relational morphism $\psi_1\colon D_{\phi_1}\relml \yel{\Z}$ where 
$\psi_1(\red{x}\arrow{(\blu{\bot},\red{1})})=\psi_1(\red{x}\arrow{(\blu{\bot},\red{-1})})=[\yel{1},\yel{\infty})$ for all $\red{x}$; arrows not associated to $\blu{\bot}$ are instead mapped to $\{\yel{0}\}$.
From $\phi_1$, $\psi_1$ we define a relational morphism $\phi_2\colon M(\dyck_1)\relml \yel{\Z}\wrc\red{\Z}$ where we let $\phi_2(\blu{m})=\left\{\left(\yel{f}_{(\blu{m},\red{y})},\red{y}\right)\mid\red{y}\in\phi_1(\blu{m}), \forall \red{x} : \yel{f}_{(\blu{m},\red{y})}(\red{x})=\psi_1\left(\red{x} \arrow{(\blu{m},\red{y})} \right)\right\}$.
We visualize the derived category $D_{\phi_2}$ below (writing $(\yel{f_{\blu{m}}},\red{y})$ as shorthand for $\phi_2(\blu{m})$ whenever it is unique).
\begin{center}
    \begin{tikzpicture}[square/.style={
    minimum size=0.8cm},->,>=stealth',shorten >=1pt,auto,semithick,initial text=,initial where=left, node distance=12ex]
        \node[state, square] (qab) {$(\yel{f}_{\blu{ab}},\red{0})$};
        \node[state, square, right=of qab] (qa) {$(\yel{f}_{\blu{a}},\red{1})$};
        \node[state, square, right=of qa] (qb) {$(\yel{f}_{\blu{b}},\red{-1})$};
        \node[state, square, right=of qb] (qba) {$(\yel{f}_{\blu{ba}},\red{0})$};

        \node[state, square, below right=.5 and 0.5 of qa, opacity=0.36] (qbot0) {$(\yel{f}_{\blu{\bot}},\red{0})$};
        \node[state, square, right=1.4 of qbot0, opacity=0.36] (qbot1) {$(\yel{f}_{\blu{\bot}},\red{1})$};
        \node[state, square, right=1.4 of qbot1, opacity=0.36] (qbot2) {$(\yel{f}_{\blu{\bot}},\red{2})$};
        \node[state, square, left=1.4 of qbot0, opacity=0.36] (qbotn1) {$(\yel{f}_{\blu{\bot}},\red{-1})$};
        \node[state, square, left=1.4 of qbotn1, opacity=0.36] (qbotn2) {$(\yel{f}_{\blu{\bot}},\red{-2})$};
        \node[left=.2 of qbotn2] (qndots) {\large$\cdots$};
        \node[right=.2 of qbot2] (qdots) {\large$\cdots$};

        \node[state, square, above right=.5 and 0.5 of qa] (qe) {$(\yel{f}_{\blu{\epsilon}},\red{0})$};
        \path (qe) edge [loop above] node[align=center] {$\arrow{(\blu{\epsilon},(\yel{f}_{\blu{\epsilon}},\red{0}))}$} (qe);

        \path (qe) edge [bend right=12] node[align=center, above left] {$\arrow{(\blu{a},(\yel{f}_{\blu{a}},\red{1}))}$} (qa);
        \path (qe) edge [bend right=31] node[align=center, above=.2] {$\arrow{(\blu{ab},(\yel{f}_{\blu{ab}},\red{0}))}$} (qab);
        \path (qe) edge [bend left=31] node[align=center, above=.2] {$\arrow{(\blu{ba},(\yel{f}_{\blu{ba}},\red{0}))}$} (qba);
        \path (qe) edge [bend left=12] node[align=center, above right] {$\arrow{(\blu{b},(\yel{f}_{\blu{b}},\red{-1}))}$} (qb);
        
        \path (qa) edge [bend right=10] node[align=center, above] {$\arrow{(\blu{b},(\yel{f}_{\blu{b}},\red{-1}))}$} (qab);
        \path (qab) edge [bend right=10] node[align=center, below] {$\arrow{(\blu{a},(\yel{f}_{\blu{a}},\red{1}))}$} (qa);

        \path (qb) edge [bend right=10] node[align=center, below] {$\arrow{(\blu{a},(\yel{f}_{\blu{a}},\red{1}))}$} (qba);
        \path (qba) edge [bend right=10] node[align=center, above] {$\arrow{(\blu{b},(\yel{f}_{\blu{b}},\red{-1}))}$} (qb);

        \path (qab) edge [loop above] node[align=center] {$\arrow{(\blu{\epsilon},(\yel{f}_{\blu{\epsilon}},\red{0}))}$\\$\arrow{(\blu{ab},(\yel{f}_{\blu{ab}},\red{0}))}$} (qab);

        \path (qba) edge [loop above] node[align=center] {$\arrow{(\blu{\epsilon},(\yel{f}_{\blu{\epsilon}},\red{0}))}$\\$\arrow{(\blu{ba},(\yel{f}_{\blu{ba}},\red{0}))}$} (qba);

        \foreach \src in {qe, qa, qab, qba, qb} {
            \foreach \dst in {qbot0, qbot1, qbot2, qbotn1, qbotn2} {
                \path (\src) edge [dashed, opacity=0.36] node[align=center] {} (\dst);
            }
        }
        \foreach \src in {qbot0, qbot1, qbot2, qbotn1, qbotn2} {
                \path (\src) edge [dashed, loop below, opacity=0.36] node[align=center] {} (\src);
        }

        \path (qbotn2) edge [<->, dashed, opacity=0.36, bend right=15, color=black] (qbotn1);
        \path (qbotn2) edge [<->, dashed, opacity=0.36, bend right=15, color=black] (qbot0);
        \path (qbotn2) edge [<->, dashed, opacity=0.36, bend right=15, color=black] (qbot1);
        \path (qbotn2) edge [<->, dashed, opacity=0.36, bend right=15, color=black] (qbot2);
        
        \path (qbotn1) edge [<->, dashed, opacity=0.36, bend right=15, color=black] (qbot0);
        \path (qbotn1) edge [<->, dashed, opacity=0.36, bend right=15, color=black] (qbot1);
        \path (qbotn1) edge [<->, dashed, opacity=0.36, bend right=15, color=black] (qbot2);
        
        \path (qbot0) edge [<->, dashed, opacity=0.36, bend right=15, color=black] (qbot1);
        \path (qbot0) edge [<->, dashed, opacity=0.36, bend right=15, color=black] (qbot2);
        
        \path (qbot1) edge [<->, dashed, opacity=0.36, bend right=15, color=black] (qbot2);
    \end{tikzpicture}
\end{center}

We still do not have a division because there are still two arrows in a single homset of $D_{\phi_2}$, for instance $\phi_2(\blu{\epsilon})$ and $\phi_2(\blu{ab})$ act the same on $(\yel{f}_{\blu{ab}},\red{0})$.
To handle this, we define a final relational morphism $\phi_3\colon M(\dyck_1)\relml \Z\wrc \yel{\Z}\wrc\red{\Z}$ such that $\phi_3(\blu{m})= (g,\phi_2(\blu{m}))$ where $g(x)=0$ iff $x=(\yel{f}_{(\blu{\epsilon},\red{0})},\red{0})$.
We ultimately obtain the division. $\phi_3\colon M(\dyck_1)\preceq \Z\wrc\yel{\Z}\wrc\red{\Z}$.
In this example, we chose the correct relational morphisms into $\Z$ so as to obtain a division, but in principle there are infinitely many choices of morphisms.
Guiding the choice of relational morphisms to reach a termination is a major challenge we address in our decision procedure. 
\tikzstyle{state}=[circle, draw, minimum size=1cm]

\subsection{Algebraic Decision Procedure}

We now address the second challenge of maintaining computability despite the existence of infinitely many relational morphisms into $\Z$.
First, we divide $M(\lang)$ into equivalence classes via the $\mathcal{R}$ relation, a classical relation in semigroup theory which characterizes which elements are reachable by other elements via right-multiplication \citep{Pin_MFAT}.
We construct a sequence of iterated relational morphisms into $\Z$ by iterating over the $\mathcal{R}$-classes in order. 
At step $i+1$, we assume that a relational morphism $\phi_i$ covers all $\mathcal{R}$-classes up to $R_i$, and extend to a relational morphism $\phi_{i+1}$ that covers $R_{i+1}$.
This is done by iteratively computing  nontrivial relational morphisms whose values are bounded on $R_i$. 
This process terminates, because the set of such bounded relational morphisms forms a finitely generated $\Z$-module (an integer analogue of vector spaces), and every step finds a morphism linearly independent of the previous ones.

Finally we address the termination conditions by proving completeness of the algorithm.
If the algorithm terminates with success, we have constructed a division from $M(\lang)$ into a wreath product of $\Z$ (and thus $\CRASP$ by \cref{thm:crasp_wreath}).
Otherwise, if the algorithm terminates with failure, we show no such division exists. 
In this case, assume for sake of contradiction that $M(\lang)$ divides a $T$-fold iterated wreath product of $\Z$ via a relational morphism $\psi\colon M(\lang)\preceq \Z \wrc\cdots \wrc\Z$ (where $T$ is minimal).
The rightmost component defines a relational morphism $\omega : M(L) \relml \Z$.
Either $\omega$ is bounded, and it would have been chosen as part of the decomposition had it provided any useful information, or $\omega$ is unbounded which renders long strings indistinguishable by types in $\Z$ since values may run to infinity. In either case, we can remove the rightmost component of $\psi$, and obtain a division from $M$ into a $(T-1)$-fold iterated wreath product of $\Z$, contradicting the minimality of $T$. 

To formalize the above proof, we choose relational morphisms to $\Z$ which truncate values beyond a threshold $k$ (which results in relational morphisms to $\dyck_k$). This allows us to formalize the proof using finite category theory (avoiding the use of types on the category side).
A formal writeup can be found in \cref{app:algebraic_decision_procedure}.

\subsection{Necessary (but not Sufficient) Condition via Equations}

In this section use profinite equations to give an alternative characterization of the regular languages in $\CRASP$. 
These equations are a technique drawing ideas from topology to derive elegant and decidable characterizations for classes of monoids.
We defer a precise definition to \cite{pin:LIPIcs.STACS.2009.1856}. 
\begin{definition}
Define $\vty{R}^\omega$ as an equation and the corresponding variety of monoids\footnote{$\vty{R}^\omega$ alludes to the equation for $\mathcal{R}$-trivial monoids, $(xy)^\omega=(xy)^\omega x$ \citep{ALMEIDA1989129}} as $$\vty{R}^\omega:   (xy^\omega)^\omega x = (xy^\omega)^\omega.$$
\end{definition}
For our purposes, a finite monoid $M$ satisfies $\vty{R}^\omega$ if for any $m_1,m_2\in M$ we have that $(m_1m_2^\omega)^\omega=(m_1m_2^\omega)^\omega m_1$, where $m^\omega$ denotes $m^k$ such that $m^k=m^{2k}$ (i.e. the unique idempotent generated by $m$). 
We give an exact characterization of the monoids in $\vty{R}^\omega$, the proof of which is in \cref{app:R_infty}. %
\begin{restatable}{thm}{RInfinity}\label{thm:R_infty_idempotents}
     $M\in \vty{R}^\omega$ iff $M$ is aperiodic and every $\mathcal{R}$-class of $M$ contains at most one idempotent. 
\end{restatable}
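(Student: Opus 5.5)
We prove the two directions separately, using standard Green's-relation facts for finite monoids: stability ($x \mathcal{R} xy \iff x \mathcal{J} xy$), and the fact that in an $\mathcal{R}$-class $R$ containing idempotents $e,f$ we have $ef=f$ and $fe=e$ (each idempotent is a left identity for its $\mathcal{R}$-class).

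\emph{($\Rightarrow$)} Assume $M \in \vty{R}^\omega$.

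For aperiodicity, substitute $y=1$, so that $y^\omega=1$. The equation becomes $x^\omega x = x^\omega$, i.e.\ $x^{\omega+1}=x^\omega$, which is exactly aperiodicity.

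For the idempotent condition, let $e \mathcal{R} f$ be idempotents. Then $ef=f$ and $fe=e$. Substitute $x=f$ and $y=e$, so $y^\omega=e$ and $xy^\omega=fe=e$. Hence $(xy^\omega)^\omega=e$, and the equation gives $e=ef=f$. So each $\mathcal{R}$-class contains at most one idempotent.

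\emph{($\Leftarrow$)} Assume $M$ is aperiodic and every $\mathcal{R}$-class contains at most one idempotent. Fix $m_1,m_2$. Put $f=m_2^\omega$, $u=m_1f$ and $e=u^\omega$. We must show $em_1=e$.

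First, $e$ ends in $f$, so $ef=e$. Consider the idempotent $g=(em_1)^\omega$.

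Next we show $g \mathcal{R} e$. From $e=e\cdot e = e(m_1 f)^{\omega}$ we get $e = e\,m_1 f\,(m_1f)^{\omega-1}$. This shows $e \le_{\mathcal{R}} em_1$, and trivially $em_1 \le_{\mathcal{R}} e$. Hence $em_1 \mathcal{R} e$. More precisely, $em_1 \cdot f u^{\omega-1} = e$, so right multiplication by $m_1$ followed by $fu^{\omega-1}$ stabilizes the class. Iterating, $(em_1)^k \mathcal{R} e$ for all $k$, so $g \mathcal{R} e$ as well. Both $e$ and $g$ are idempotents in the same $\mathcal{R}$-class, so by hypothesis $g=e$.

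Now we show $em_1=e$. Since $em_1 \mathcal{R} e$, Green's lemma says right multiplication by $m_1$ maps the $\mathcal{H}$-class $H_e$ into the $\mathcal{H}$-class $H_{em_1}$ within $R_e$. We have $(em_1)^\omega=e$, and $e$ is a left identity on $R_e$, so $e\cdot em_1=em_1$. Thus $em_1 \in eMe$ and $em_1 \mathcal{R} e$. An element $x\in eMe$ with $x\mathcal{R}e$ satisfies $x\mathcal{J}e$, and by stability (as $x = e\,x\le_{\mathcal{L}} e$ too) also $x \mathcal{L} e$. Hence $x \mathcal{H} e$, so $x$ lies in the maximal subgroup $H_e$. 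Aperiodicity makes $H_e$ trivial, giving $em_1=e$, which is the equation.

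\emph{Main obstacle.} The delicate step is the ($\Leftarrow$) argument that $em_1$ stays $\mathcal{R}$-equivalent to $e$ and that its $\omega$-power lands back on $e$. The first attempt uses the explicit witness $e=em_1\cdot(fu^{\omega-1})$, verified by expanding $u^\omega=u\,u^{\omega-1}$. The fallback is to argue through the $\mathcal{J}$-order: $e\le_{\mathcal{J}} em_1\le_{\mathcal{J}} e$ combined with stability.
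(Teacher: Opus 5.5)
Your ($\Rightarrow$) direction is correct and is essentially the paper's argument. The ($\Leftarrow$) direction has a genuine gap at ``Iterating, $(em_1)^k\,\mathcal{R}\,e$ for all $k$.'' The witness $fu^{\omega-1}$ only shows that right multiplication by $m_1$ keeps $e$ inside $R_e$. The square $(em_1)^2=(em_1)\,e\,m_1$ applies $m_1$ to a different element, and nothing you have proved controls that.

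In fact the step is equivalent to the conclusion. Since $e$ is the only idempotent of $R_e$, stability and Clifford--Miller give $(em_1)^2\in R_e$ iff $H_{em_1}$ is a group, iff $em_1\,\mathcal{H}\,e$, iff $em_1=e$ by aperiodicity.

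Worse, no argument that uses the idempotent hypothesis only on $R_e$ can succeed. Take transformations of $\{1,2,3\}$, written $(1s,2s,3s)$ and composed left to right, with $m_1=(3,2,2)$ and $m_2=f=(1,2,1)$. Then:
\begin{itemize}
\item $e=m_1f=(1,2,2)$ is idempotent, $em_1=m_1$, and $m_1^2=(2,2,2)$;
\item the generated monoid is $\{1,m_1,f,e,fm_1=(3,2,3),(2,2,2)\}$;
\item it is aperiodic, with $\mathcal{R}$-classes $\{1\}$, $\{e,m_1\}$, $\{f,fm_1\}$, $\{(2,2,2)\}$.
\end{itemize}
Everything you establish holds: $ef=e$, $em_1\,\mathcal{R}\,e$, and $e$ is the unique idempotent of $R_e$. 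Yet $(em_1)^2=(2,2,2)\notin R_e$ and $em_1\neq e$. The hypothesis fails only in the class $\{f,fm_1\}$, which contains two idempotents.

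The missing idea is to apply the hypothesis to the $\mathcal{R}$-class of the other idempotent $k=(fm_1)^\omega=(m_2^\omega m_1)^\omega$.
\begin{itemize}
\item Since $fk=k$, we get $(kf)^2=k(fk)f=kf$, so $kf$ is idempotent.
\item By aperiodicity $k=(fm_1)^{\omega+1}=kf\cdot m_1$, so $kf\,\mathcal{R}\,k$. The hypothesis on $R_k$ then forces $kf=k$.
\item Next, $em_1=(m_1f)^\omega m_1=m_1k$, while aperiodicity gives $e=(m_1f)^{\omega+1}=m_1kf$.
\item Hence $em_1=m_1k=m_1kf=e$.
\end{itemize}

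The paper's proof also brings in $(m_2^\omega m_1)^\omega$, but via the claim $(m_1m_2^\omega)^\omega=(m_2^\omega m_1)^\omega$. That claim is false in general: in the $\mathcal{R}$-trivial monoid $\{1,a,b\}$ with $xy=x$ for $x\neq 1$, $(ab)^\omega=a\neq b=(ba)^\omega$. What is actually needed is the identity $kf=k$ above.

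Your final step is also looser than it needs to be. From $e\cdot em_1=em_1$ you only get $em_1\in eM$, not $eMe$, and $x=ex$ yields $x\le_{\mathcal{R}}e$ rather than $\le_{\mathcal{L}}$. But once $(em_1)^\omega=e$ is known, aperiodicity gives directly $em_1=e\cdot em_1=(em_1)^{\omega+1}=(em_1)^\omega=e$.
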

Equivalently, $\vty{R}^\omega=\vty{R}\wrc\vty{G}\cap\vty{A}$, where $\vty{R}$ denotes the monoids where each $\mathcal{R}$-class is singleton, $\vty{G}$ denotes the finite groups, and $\vty{A}$ denotes the aperiodic monoids.
In our experiments below, we will draw languages from the larger class $\vty{R}\wrc\vty{G}$ rather than $\vty{R}^\omega$ for testing length-generalization. 
While both the main decision procedure and $\vty{R}^\omega$ require iterating over the $\mathcal{R}$-classes of the monoid, the former needs to iteratively compute relational morphisms, while 
the latter only needs to count idempotents. This gives a much simpler criterion that is \emph{necessary} for membership in $\CRASP$, though we can show it is \emph{not sufficient}.

\subsection{Algebraic Characterization}

The decision procedure results an an exact algebraic characterization of the regular languages in $\CRASP$ -- namely, they are wreath products of bounded-depth Dyck languages. 
We thus derive a small hierarchy within the classes of finite monoids surrounding $\CRASP$.

\begin{restatable}{thm}{DyckVSRInfty}\label{thm:dyck_vs_R_infty}
    $\CRASP\cap\vty{REG}=\wwpc(\vty{Dy})$. Hence $\vty{R}\subsetneq \CRASP\cap \vty{REG}\subsetneq \vty{R}^\omega\subsetneq \vty{A} \subsetneq \vty{REG}$
\end{restatable}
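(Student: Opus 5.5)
The plan has two parts: first the equality $\CRASP\cap\vty{REG}=\wwpc(\vty{Dy})$, then the chain of inclusions, where each strictness is witnessed by a separating example.

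\textbf{The equality.} For $\supseteq$, I will use that $\wwpc$ is monotone and idempotent, so it suffices to show $M(\dyck_k)\in\wwpc(\Z)$ for every $k$. I will do this by induction on $k$, generalising the explicit three-fold decomposition of $\dyck_1$ in \cref{sec:dyck_derived}. The bottom $\Z$ factor tracks the height $\#_a-\#_b$. A type-respecting unit on the next factor counts violations, meaning prefixes whose height leaves $[0,k]$. A final factor handles the residual hom-set collisions. Because every factor is typed, the language is recognised by a type of the iterated product, and \cref{thm:crasp_wreath} then gives $\dyck_k\in\CRASP$. Closure of $\CRASP$ under the $\wwpc$ operations then yields the inclusion.

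For $\subseteq$, let $L$ be regular with $M(L)\in\wwpc(\Z)$, using \cref{thm:crasp_wreath}. I will run the decision procedure of \cref{app:algebraic_decision_procedure} on $M(L)$. Its completeness argument shows that if any division into wreath products of $\Z$ exists, the procedure succeeds. On success it produces relational morphisms into $\Z$ whose values are truncated at a threshold $k$. These are exactly relational morphisms into $M(\dyck_k)$, so the resulting decomposition lives in $\wwpc(\vty{Dy})$. The main obstacle is this finite reduction: I must argue that thresholding does not destroy the division. The point is that the morphisms chosen by the procedure are bounded on the relevant $\mathcal{R}$-classes, and any unbounded values only feed into types that cannot distinguish long strings anyway.

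\textbf{The inclusions.} $\vty{R}\subseteq\CRASP$ holds because $\mathcal{R}$-trivial monoids divide iterated wreath products of $\uone$, and $\uone$ divides $M(\dyck_1)$: the language $L=\Sigma^*a\Sigma^*$ only needs a count of $a$'s at least $1$. For $\CRASP\cap\vty{REG}\subseteq\vty{R}^\omega$, I will show that each $M(\dyck_k)$ is aperiodic with at most one idempotent per $\mathcal{R}$-class, and then invoke \cref{thm:R_infty_idempotents}. I also need that $\vty{R}^\omega$, being defined by an equation, is closed under the operations of $\wwpc$. Wreath products need a separate argument: I will use $\vty{R}^\omega=\vty{R}\wrc\vty{G}\cap\vty{A}$ and check that $\vty{A}\wrc\vty{Dy}$-type products keep the idempotent condition. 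The inclusions $\vty{R}^\omega\subseteq\vty{A}\subseteq\vty{REG}$ are immediate from the definitions.

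\textbf{Strictness.} Each gap is separated by an example:
\begin{itemize}
\item $\dyck_1$ lies in $\CRASP\cap\vty{REG}$ but is not in $\vty{R}$, since $ab\,\mathcal{R}\,a$ with $ab\neq a$.
\item $\uthree$ is aperiodic but its $\mathcal{R}$-class $\{a,b\}$ has two idempotents, so it is in $\vty{A}$ but not $\vty{R}^\omega$.
\item A parity language is regular but not aperiodic.
\end{itemize}
The remaining gap, $\CRASP\cap\vty{REG}\subsetneq\vty{R}^\omega$, is the delicate one. I will take the $(ab+aabb)^*$ language of \cref{fig:minimal_pair}, verify that it is in $\vty{R}^\omega$ by counting idempotents per $\mathcal{R}$-class, and show that the decision procedure reaches a fixed point without producing a division.
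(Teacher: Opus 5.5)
\textbf{Main gap: the witness for $\CRASP\cap\vty{REG}\subsetneq\vty{R}^\omega$ is wrong.} The language $(ab+aabb)^*$ is \emph{not} in $\vty{R}^\omega$, so your step ``verify that it is in $\vty{R}^\omega$ by counting idempotents per $\mathcal{R}$-class'' fails. Take the minimal DFA of \cref{fig:minimal_pair}, with states $q_0,q_2,q_3,q_4$ plus a sink.
\begin{itemize}
\item The element $ba$ sends $q_2,q_4\mapsto q_2$ and everything else to the sink.
\item The element $baab$ sends $q_2,q_4\mapsto q_4$ and everything else to the sink.
\end{itemize}
Both are idempotent. Moreover $ba\cdot ab=baab$ and $baab\cdot ba=ba$, so they are distinct $\mathcal{R}$-equivalent idempotents. \cref{thm:R_infty_idempotents} therefore excludes this monoid from $\vty{R}^\omega$. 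In fact this is exactly the paper's witness for the \emph{next} gap, $\vty{R}^\omega\subsetneq\vty{A}$.

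The paper separates $\CRASP\cap\vty{REG}$ from $\vty{R}^\omega$ with $(ab+bba)^*$ instead.
\begin{itemize}
\item It lies in $\vty{R}^\omega$. In its minimal DFA no two live states are ever sent to a common live state. Hence every idempotent is the identity on the states it keeps alive, so it is determined by its kernel. This gives at most one idempotent per $\mathcal{R}$-class, and aperiodicity holds as well.
\item It is not in $\CRASP$. The loops $ab$ and $bba$ at the initial state force any balanced counter to satisfy $E(a)+E(b)=0=E(a)+2E(b)$, so $E=0$. The procedure therefore separates nothing and fails.
\end{itemize}

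\textbf{Secondary issue: closure of $\vty{R}^\omega$ under wreath products.} Being equational gives $\vty{R}^\omega$ closure under division and direct products, not under wreath products. Also, the class you propose to check, $\vty{A}\wrc\vty{Dy}$, is not contained in $\vty{R}^\omega$: $\uthree\preceq\uthree\wrc M(\dyck_1)$, and you yourself note $\uthree\notin\vty{R}^\omega$.

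What you actually need is $\vty{R}^\omega\wrc\vty{R}^\omega\subseteq\vty{R}^\omega$, and this can be checked directly. Suppose $(f_1,n_1)$ and $(f_2,n_2)$ are $\mathcal{R}$-equivalent idempotents of $M\wrc N$.
\begin{itemize}
\item The second coordinates are $\mathcal{R}$-equivalent idempotents of $N$, so $n_1=n_2=:n$.
\item There are $g,g'$ with $f_2(x)=f_1(x)g(xn)$ and $f_1(x)=f_2(x)g'(xn)$ for all $x$.
\item Idempotency gives $f_i(x)=f_i(x)f_i(xn)$.
\item Evaluating at $xn$ (using $n^2=n$) shows that $f_1(xn)$ and $f_2(xn)$ are $\mathcal{R}$-equivalent idempotents of $M$. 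So they are equal to a common $e$, and $e\,g(xn)=e$.
\item Therefore $f_2(x)=f_1(x)e\,g(xn)=f_1(x)e=f_1(x)$.
\end{itemize}
Aperiodicity is preserved by wreath products as usual. The paper reaches the same conclusion differently: it uses $\dyck_k\preceq(\uone\wrc\Z_{k+3})\wrc\uone$ (\cref{prop:dyck-krohn-rhodes-cyclic}) and Stiffler's theorem to land in $\vty{R}\wrc\vty{G}$, then intersects with $\vty{A}$.

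The rest of your plan agrees with the paper: the equality via the decision procedure and the Dyck decomposition, and the other three separations. Your use of $\uthree$ for $\vty{R}^\omega\subsetneq\vty{A}$ is a fine alternative witness.
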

A proof is given in \cref{app:R_infty}. Furthermore, the decision procedure runs in polynomial time in the size of the monoid.  
This allows us to efficiently decide whether or not we expect a transformer to length-generalize on any given regular language. A proof is in \cref{app:algebraic_decision_procedure}.
\begin{restatable}{thm}{CraspRegPolyTime}\label{thm:algebraic_polytime}
Membership of $M\in \CRASP\cap\vty{REG}$ is decidable in $O(\mathsf{poly}(|M|))$ time.
\end{restatable}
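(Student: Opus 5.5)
We prove \cref{thm:algebraic_polytime} by showing that the decision procedure sketched in the previous subsection is correct and runs in polynomial time. Correctness (soundness and completeness) we take from \cref{app:algebraic_decision_procedure}, together with \cref{thm:crasp_wreath} and \cref{thm:dyck_vs_R_infty}. What remains is to bound the running time. Concretely, we bound three quantities: the number of outer iterations, the number of inner iterations per outer step, and the cost of one inner step, each polynomially in $|M|$.

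\textbf{Preprocessing and outer loop.} From the multiplication table of $M$ (size $|M|^2$), we compute the $\mathcal{R}$-relation in $O(|M|^2)$ time. This is reachability under right multiplication, followed by taking strongly connected components of the right Cayley graph. We then fix a topological order $R_1,\dots,R_\ell$ of the $\mathcal{R}$-classes, with $\ell\le |M|$. The outer loop runs once per $\mathcal{R}$-class, so there are at most $|M|$ outer iterations. As a cheap early rejection, we can also test the necessary condition of \cref{thm:R_infty_idempotents}: aperiodicity, and at most one idempotent per $\mathcal{R}$-class. This takes polynomial time.

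\textbf{Inner loop: bounded relational morphisms as a $\Z$-module.} At step $i+1$ the procedure searches for relational morphisms into $\Z$ whose values are bounded on the relevant part of the current derived category. These morphisms are additive labelings of the arrows, i.e.\ integer $1$-cocycles. We encode each one as an integer vector indexed by the generators of the derived category, obtained by quotienting at the current level. Boundedness then becomes a finite system of linear constraints: along every cycle inside the relevant $\mathcal{R}$-class, the labels must sum to $0$. The solutions form a subgroup $K\le \Z^{d}$, where $d$ is polynomial in $|M|$. Each new morphism the procedure adds is linearly independent of the previous ones, so the inner loop terminates after at most $\operatorname{rank}K\le d$ rounds. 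Each round is a Hermite or Smith normal form computation over $\Z$, which is polynomial time with polynomial bit-size.

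\textbf{Main obstacle.} The hard part is keeping the derived categories polynomial in size. A priori $D_\phi$ has infinitely many objects, one per integer, as in the $\dyck_1$ example. The fix is the truncation described in the previous subsection: values are cut off at a threshold $k$, which replaces $\Z$ by $\dyck_k$. We must show that $k$ can be taken $O(|M|)$. The reason is that, by aperiodicity and the $\mathcal{R}$-order, any bounded cocycle along an $\mathcal{R}$-chain has range at most the chain length, so at most $|M|$. With $k=O(|M|)$, the objects of each derived category are pairs (element of $M$, truncated value vector over the at most $d$ coordinates retained for the current $\mathcal{R}$-class). The arrows are then quotiented by their behaviour under composition. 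To get a polynomial bound, I would first try to prove that only the coordinates attached to the current $\mathcal{R}$-class matter for the local division test, so that the objects reduce to $M\times[-k,k]$. This gives $O(|M|^2)$ objects and polynomially many arrows. The final check, that each hom-set is a singleton (which yields division by \citep[Lemma~3.1]{tilson1987categories}), is then polynomial. Multiplying the three bounds gives an overall running time of $O(\mathsf{poly}(|M|))$.
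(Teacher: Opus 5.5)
Your outline (outer loop over the $\mathcal{R}$-classes, integer linear algebra via normal forms in each inner round) matches the paper's. The problem is the step you yourself call the main obstacle: it is not established, and the argument you sketch for it does not work.

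\textbf{The truncation bound is unsupported.} You claim the threshold can be taken $k=O(|M|)$ because a bounded cocycle along an $\mathcal{R}$-chain has range at most the chain length. Finiteness of the relevant images is preserved when you multiply a morphism by any integer, so it says nothing about magnitudes. Even a reduced integral basis of the solution lattice (Hermite normal form) is only guaranteed polynomial \emph{bit}-size, as you note yourself. Its entries may therefore be exponential in $|M|$. Since $k$ must exceed the values of the chosen morphism on the arrows $1\rightarrow_{(s,t)}t$ with $s\in R_i$, an object set like $M\times[-k,k]$ need not be polynomial, let alone $M\times[-k,k]^d$.

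\textbf{The reduction to one coordinate is left open.} You say you would ``try to prove'' that only the current coordinate matters, but you do not prove it. Without a polynomial bound on the part of $E_{i,j}$ you compute with, you also have no polynomial bound on $d$, on the number of inner rounds, or on the final hom-set check. In addition, the inner rounds act on successively refined categories, so a single bound $\operatorname{rank}K\le d$ for one fixed module does not count them.

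\textbf{The missing idea is to localize rather than bound the whole derived category.} By \cref{lemma:structure-of-Eij}, once the entry-point coordinate is fixed, the objects inside $R_i$ are surjected onto by $R_i$, so there are at most $|R_i|$ of them. Each hom-set has at most $|M|$ arrows, since an arrow is determined by its $M$-label. Both bounds hold independently of $k$ and of the depth of the iterated wreath product.

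This is what lets the paper's proof search for candidate morphisms only on the bonded component $C_o$. Compatibility with composition, plus zeros on the idempotent (looping) arrows of \cref{lemma:looping-zero}, give linear constraints whose number is polynomial in $|M|$. An integral basis of the solution space is computable in polynomial time with integers written in binary. Any solution lifts to an element of $\Omega_{i,j}$ by the Extension Lemma (\cref{lemma:extend-omega}). The threshold $k$ therefore never has to be unrolled into an enumerated object set. The number of inner rounds is controlled by the hom-set-splitting argument of the Termination lemma, not by a rank bound on a fixed module.

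The paper also gives a second route that avoids derived categories altogether. The automaton-based criterion of \cref{app:automata-based-proof} is decided in time polynomial in the number of states of the minimal automaton (\cref{lem:separable-poly}), and that number is at most $|M|$.
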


\section{Experiments}
\label{main:experiments}

\begin{figure}[t]
\centering

\includegraphics[width=\linewidth]{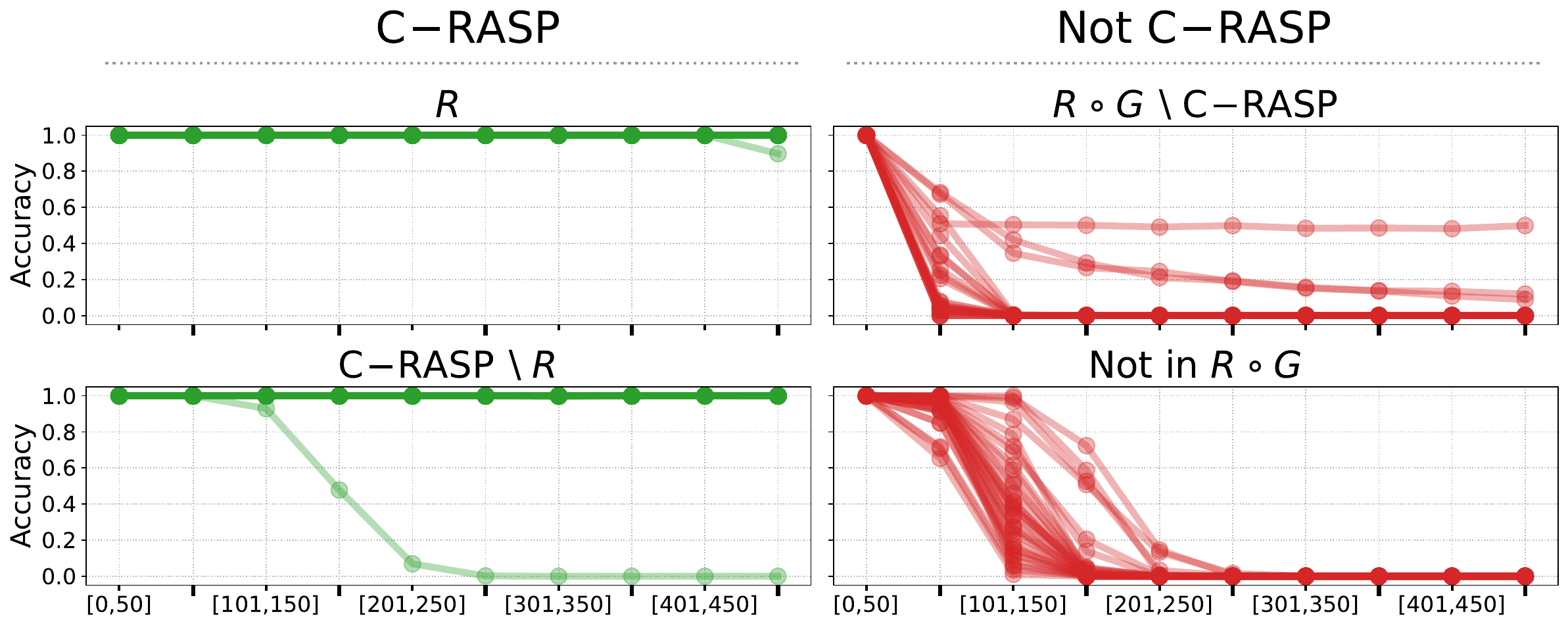}

\caption{\textbf{Length generalization on regular languages.} Models are trained on strings of lengths in $[l_{\min},50]$, and evaluated on bins $[l_{\min},50]$ (in-distribution) up to $[451,500]$ in bins of width 50. Each curve corresponds to a language. Green curves denote languages in $\CRASP$, while red curves denote languages not in $\CRASP$. Languages in $\CRASP$ maintain near-perfect accuracy well beyond the training range, whereas languages outside $\CRASP$ exhibit rapid degradation, typically failing shortly after. Each panel includes all languages in the corresponding class from Table~\ref{tab:languages} in appendix \ref{app:languages}.
}
\label{fig:length_gen}
\end{figure}

We empirically evaluate whether our characterization of regular languages in $\CRASP$ predicts transformer length-generalization.
In particular, we examine the different levels of the hierarchy shown in \cref{thm:dyck_vs_R_infty} to determine the efficacy of each class in predicting length-generalization by transformers. 
Because $\vty{R}^\omega\setminus \CRASP\cap\vty{REG}$ contains few samples, we report results based on membership in $\vty{R}\wrc\vty{G}$ (whose aperiodic fragment is $\vty{R}^\omega$).

\subsection{Languages}

We construct a diverse suite of 125 regular languages, including both representative examples from prior work \citep{li2025characterizing,huang2025formalframeworkunderstandinglength} and systematically generated ones. We provide the full table of languages in Table \ref{tab:languages}, appendix \ref{app:languages}).
To generate languages within the classes discussed in \cref{thm:dyck_vs_R_infty}, we sample regular expressions using a probabilistic context-free grammar (PCFG) for which we tune the probabilities to obtain diverse class membership. 
For each language, we determined membership in \CRASP using an automata-based version of the algebraic procedure, which we explain in \cref{app:automata-based-proof}.

\subsection{Experimental Setup}

\paragraph{Task Definition.} The task is state prediction, i.e. tracking automaton states over prefixes. Let $w = a_1 \dots a_n$, and denote by $w_{1:i} = a_1 \dots a_i$ the prefix of length $i$.
Each symbol $a_i$ triggers a transition, and the model must predict the sequence of DFA states $q_1, \dots, q_n$, where $q_i$ is the state reached after processing $w_{1:i}$.

\paragraph{Training and Evaluation.} Per formal language, we train GPT-2 models on 10,000 words sampled from the training length range $[l_{min},50]$ with $l_{min}$ being the length of the shortest valid word in the respective language. We use an 80/20 train-test split. We then evaluate generalization on test length ranges $[51,100],[101,150]\dots[451,500]$, with 1,000 words in each test set. We define successful length generalization as maintaining near in-distribution performance at lengths beyond $2\times$ the maximum training length $50$. Following \cite{huang2025formalframeworkunderstandinglength} we used AdamW with weight decay 0.01 and dropout 0.0. We performed a hyperparameter sweep over layers $\{1,2,4\}$, heads ${\{1,2,4\}}$, dimension ${\{16,64,256\}}$ and learning rates ${\{0.001,0.0001\}}$ -- training every combination in the grid with early stopping when 100\% accuracy is achieved on in-distribution test data. For those languages, where in-distribution accuracy never reached 100\%, we additionally performed a hyperparameter sweep over layers ${\{6,8,12\}}$, heads ${\{4,8\}}$, dimensions ${\{64,256\}}$ and learning rates $\{0.001, 0.0001\}$. We adopted optimization choices and hyperparameter ranges from \cite{huang2025formalframeworkunderstandinglength}. Per language, we choose the configuration that achieves the highest accuracy on the longest test length range among those whose accuracy on in-distribution length $[l_{min},50]$ is 100\%. We break remaining ties by successively considering the next-longest ranges and finally preferring the smallest model (fewest layers, then attention heads, then hidden dimension). The chosen configuration was then used for a multi-seed run, in which models were trained under different random model initializations, an approach commonly used in prior work \citep{li2025characterizing, huang2025formalframeworkunderstandinglength}. A run was considered successful if it achieved 100\% accuracy on the in-distribution test set. We evaluated random initializations sequentially, up to a maximum of 1,000 trials, and ended the search once five successful runs were found. We report the best successful seed in Figure \ref{fig:length_gen} and show the average performance across all successful seeds in \ref{app:experiments}.

\paragraph{Architectural Constraints.} To align empirical results with our theoretical setup, state prediction should depend on the evolving prefix rather than local information. 
We therefore introduce two constraints: first, we remove positional information by replacing positional embeddings with the zero function (NoPE) forcing the model to rely on the sequential order of tokens provided by the causal attention mask. 
Second, while NoPE removes absolute positional information, the model still has direct access to the most recent input symbol $a_i$ when predicting the target state $q_i$, which provides a positional signal through alignment with the prediction target. 
This can enable shortcuts in which the model predicts the state based solely on the final symbol, rather than the full prefix.
To prevent this, we insert a separator token $\&$ between symbols (see below) and require the model to predict the state $q_i$ only at these separator positions. Consequently, access to the most recent symbol is also mediated through the attention mechanism. Word symbols receive placeholder targets $\#$ and are excluded from the cross-entropy loss. Each target sequence begins with the initial DFA state, predicted at the first separator following \texttt{<bos>}. 

\newcommand{\gradcell}[1]{\cellcolor{black!#1}}
\newcommand{\diagramtext}[1]{\texttt{#1}}
\begin{center}
\footnotesize
\setlength{\tabcolsep}{2pt}
\begin{tabular}{rcccccccccccccccc}
    \hline
    \rule{0pt}{2.5ex}\texttt{Input} 
    & \gradcell{2}{\diagramtext{$\bos$}} & \gradcell{4}{\diagramtext{\&}} & \gradcell{6}{\diagramtext{a}} & \gradcell{8}{\diagramtext{\&}} 
    & \gradcell{10}{\diagramtext{b}} & \gradcell{12}{\diagramtext{\&}} & \gradcell{14}{\diagramtext{a}} & \gradcell{16}{\diagramtext{\&}} 
    & \gradcell{18}{\diagramtext{b}} & \gradcell{20}{\diagramtext{\&}} & \gradcell{22}{\diagramtext{a}} & \gradcell{24}{\diagramtext{\&}} 
    & \gradcell{26}{\diagramtext{b}} & \gradcell{28}{\diagramtext{\&}} & \gradcell{30}{\diagramtext{$\eos$}} \\

    \rule{0pt}{2.5ex}\texttt{Positional Encoding} 
    & \gradcell{2}{\diagramtext{0}} & \gradcell{4}{\diagramtext{0}} & \gradcell{6}{\diagramtext{0}} & \gradcell{8}{\diagramtext{0}} 
    & \gradcell{10}{\diagramtext{0}} & \gradcell{12}{\diagramtext{0}} & \gradcell{14}{\diagramtext{0}} & \gradcell{16}{\diagramtext{0}} 
    & \gradcell{18}{\diagramtext{0}} & \gradcell{20}{\diagramtext{0}} & \gradcell{22}{\diagramtext{0}} & \gradcell{24}{\diagramtext{0}} 
    & \gradcell{26}{\diagramtext{0}} & \gradcell{28}{\diagramtext{0}} & \gradcell{30}{\diagramtext{0}} \\

    \rule{0pt}{2.5ex}\texttt{Target} 
    & \gradcell{0}{\diagramtext{\#}} & \gradcell{4}{\diagramtext{1}} & \gradcell{0}{\diagramtext{\#}} & \gradcell{8}{\diagramtext{3}} 
    & \gradcell{0}{\diagramtext{\#}} & \gradcell{12}{\diagramtext{1}} & \gradcell{0}{\diagramtext{\#}} & \gradcell{16}{\diagramtext{3}} 
    & \gradcell{0}{\diagramtext{\#}} & \gradcell{20}{\diagramtext{1}} & \gradcell{0}{\diagramtext{\#}} & \gradcell{24}{\diagramtext{3}} 
    & \gradcell{0}{\diagramtext{\#}} & \gradcell{28}{\diagramtext{1}} & \gradcell{0}{\diagramtext{\#}} \\
    \hline
\end{tabular}
\normalsize
\end{center}

\subsection{Results}
Figure \ref{fig:length_gen} shows that $\CRASP$ membership is a strong predictor of transformer length-generalization. Languages in $\CRASP$ generalize reliably to substantially longer lengths, while languages outside $\CRASP$ fail to do so, with accuracy rapidly collapsing beyond the training length. This demonstrates that across all evaluated languages $\CRASP$ provides an accurate characterization of length generalization.

\section{Conclusion}

We have precisely characterized the regular languages on which transformers length-generalize using a decision procedure for regular language membership in $\CRASP$ (which runs in polynomial-time).
Experiments on a range of languages in and outside of $\CRASP$ demonstrate that the theory accurately predicts when transformers do and do not length-generalize.
The decision procedure is based on a novel algebraic decomposition theory for $\CRASP$ that is distinguished from the classical Krohn-Rhodes theory due to the inclusion of infinite monoids and the omission of the aperiodic flip-flop unit $\uthree$. 
A simpler criterion which is necessary (but not sufficient) for membership in $\CRASP$ is also given in the form of a profinite equation. 
Our results provide a deeper understanding of the state-tracking capabilities of transformers on the machine learning front, as well as a deeper understanding of counting, on the algebraic front.

\newpage

\section*{Acknowledgments}

We thank Dana Angluin, Michael Benedikt, David Chiang, and Will Merrill for fruitful discussion and feedback.
We thank the anonymous reviewers for their helpful comments.

Funded in part by the Deutsche Forschungsgemeinschaft (DFG, German Research Foundation) -- GRK 2853/1 “Neuroexplicit Models of Language, Vision, and Action” - project number 471607914, and the US 
National Science Foundation (grant number~2502292).
AY is supported by the US National Science Foundation Graduate Research Fellowship Program under Grant No.~2236418.
MH acknowledges support from the Deutsche Forschungsgemeinschaft (DFG, German Research Foundation) – Project number 560456343.

\bibliography{colm2026_conference}
\bibliographystyle{colm2026_conference}

\appendix
\tableofcontents
\section*{Author Contributions}
AY led paper writing, drafted the proof of Theorem \ref{thm:crasp_wreath} and the implementation of the decision procedure, contributed to the discovery and proof of Theorems \ref{thm:dyck_vs_R_infty} and \ref{thm:algebraic_polytime}, drafted Appendix \ref{app:automata-based-proof}, and drafted Sections 1–4 of the paper.
BV designed, implemented, and carried out the experiments, and drafted Section 5 of the paper.
CB, MC, AK, CP, HS contributed to the discovery and proof of Theorems \ref{thm:dyck_vs_R_infty} and \ref{thm:algebraic_polytime}, and provided input to the paper writing.
HS further contributed Theorem \ref{thm:R_infty_idempotents}.
MH contributed to the discovery and proof of Theorems \ref{thm:dyck_vs_R_infty} and \ref{thm:algebraic_polytime}, drafted the proof of Theorems \ref{thm:dyck_vs_R_infty} and \ref{thm:algebraic_polytime} in Appendix \ref{app:decision} and an early draft of Appendix \ref{app:automata-based-proof}, and contributed to paper writing.

\section{FAQ}

\begin{enumerate}
\item \emph{Q: How does the work relate and compare to \cite{liu2023transformers}?}

\citet{liu2023transformers} primarily looked at expressivity, not length generalization.
Indeed, their experiments also confirmed that transformers may not length-generalize on the particular regular languages that they were able to express and learn to a fixed length.

Our results may also apply to the expressive power of transformers under a particular fixed-precision assumption, because $\CRASP$ was shown to be equivalent to these transformers by \citet{yang2025knee}.

\item \emph{Q: How about other architectures? Like log-depth transformers, state-space models, ...?}

It is already known that log-depth architectures can simulate arbitrary automata, so a characterization of the regular languages they can express is not needed \citep{liu2023transformers, merrill2025a}.
As for length-generalization, we do not presently know what regular languages these architectures length-generalize on.

Analyzing architectures with limited recurrence, like state-space models, would be an interesting future question.
Under varying assumptions, state-space models are able to simulate flip-flops and counting \citep{sarrof2024the,alsmann2026on}.
The techniques we have developed to handle counting could be extended to handle these cases as well. 

\item \emph{Q: Why not evaluate LLMs?}

We're more interested in the architecture itself. LLM abilities depend on prompt format and are strongly impacted by what's in the training data.
There is work suggesting that the capabilities of LLMs are ultimately bounded by $\CRASP$ \citep{jobanputra2025born}, though a precise investigation in the case of regular language length-generalization is out of the scope of this work.

\item \emph{Q: How does $\CRASP$ compare to other classes, such as subregular classes, circuit classes, and the dot depth hierarchy? Could they on their own already predict transformer length generalization?}

$\CRASP$ is distinct from known classes, and much more successful at predicting length generalization than those are.
Here, we will expand on \cref{fig:inclusions}.
In fact, all existing classes do not predict length-generalization on transformers.
\begin{itemize}
    \item Regular and subregular: The regular languages in $\CRASP$ define a strict subset of the star-free languages. As we show, even $\vty{R}^\omega$ only covers star-free languages (and in fact it is a strict subset). 
    \item Circuit classes: The circuit classes most relevant to transformers are $\mathsf{AC}^0$ and $\mathsf{TC}^0$; neither of them captures $\CRASP$ or transformer length generalization well.
    $\CRASP$ is a strict subset of $\mathsf{TC}^0$, as was shown in \cite{huang2025formalframeworkunderstandinglength}.
    This extends to regular languages:   E.g., the PARITY language $(b^*ab^*ab^*)$ (checking if the number of $a$'s is even) is in $\mathsf{TC}^0$ but not in $\CRASP$.
    In general, $\CRASP$ is incomparable with $AC^0$. 
    Interestingly, on the level of regular languages, $\CRASP$ is a strict subset of $\mathsf{AC}^0$: $\CRASP$ only includes star-free languages (and all of those are in $AC^0$), but, for instance, $\{a,b\}^*b$ is in $\mathsf{AC}^0$ but not in $\CRASP$.
    \item Dot-depth hierarchy. Given the containment inside the star-free languages, one could consider  stratifications of these languages. One of the most well-studied ones is the dot-depth hierarchy. $\CRASP$ can express every bounded-depth dyck languages, and thus can touch every single level of the dot-depth hierarchy, while not covering the entire hierarchy.
\end{itemize}

\item \emph{Q: What about the role of positional encodings?}

This is an important question for future work that would result in a different characterization than the one we have derived here. 
\citet{huang2025formalframeworkunderstandinglength} showed that $\CRASP[\mathsf{periodic},\mathsf{local}]$ characterizes the languages which transformers with APE can length-generalize on. 
Obtaining a decision procedure on the algebraic side would require additional techniques from the ones used in this paper.

\item \emph{Q: Why did you not use MLRegTest for your experiments \citep{JMLR:v25:23-0518}?} 

We looked into this, but unfortunately a significant portion of the languages in the dataset contain \emph{strictly local} patterns, which are not expressible in $\CRASP$ without positional encodings \citep{huang2025formalframeworkunderstandinglength}.
Future work extending the characterization of $\CRASP$ to handle positional encodings should use this benchmark. 

\item \emph{Q: The experiments deliberately prevent residual-stream access to the last token in state prediction. Why? Could the theory handle the case where one actually wants this access?}

We argue that state tracking should be robust to the insertion of ``do-nothing'' actions not changing the state, or other unrelated extra material.
In formal language theory, such actions are referred to as \emph{neutral symbols}. 
Invariance under the inclusion of such neutral symbols is a natural property of characterizations in terms of syntactic monoids, as we have done here.
An example where this plays a role is $\{a,b\}^* b$ (where the state -- \emph{is the last symbol seen a $b$?} -- can be easily tracked based on the last symbol), which has the same syntactic monoid as the language $\{a,b,e\}^* b e^*$ resulting from adding a neutral symbol $e$, which is difficult for Transformer length generalization \citep{liu2023exposing, huang2025formalframeworkunderstandinglength}. Neither is definable in $\CRASP$.

\item \emph{Q: Why do your empirical results differ from that of \citet{li2025characterizing} and \citet{huang2025formalframeworkunderstandinglength} who both test length generalization for regular languages in $\CRASP$?}

We think addressing the empirical and theoretical results of \citet{huang2025formalframeworkunderstandinglength} and \citet{li2025characterizing} in light of our findings is an important question.
\citet{li2025characterizing} suggested that transformers consistently fail to length-generalize on languages outside of $\vty{R}$, providing evidence when trained on length $N$ and tested on length $12N$.
While \citet{huang2025formalframeworkunderstandinglength} showed length-generalization in languages in $\CRASP\setminus\vty{R}$, they only trained on length $N$ and tested up to $3N$, so the results may be incomparable.
In the present work, we specifically probed languages in $\CRASP\cap\vty{REG}\setminus\vty{R}$ and found length-generalization from length $N$ to $10N$, suggesting a more optimistic picture of length-generalization than \citet{li2025characterizing}.

We suspect the difference in our empirical observations might be attributed to the small number of languages in $\CRASP\cap\vty{REG}\setminus\vty{R}$ tested by \citet{li2025characterizing}  -- only $3$ such languages were tested.
Indeed, prior to this work there existed no sizeable dataset of languages in $\CRASP\cap\vty{REG}\setminus\vty{R}$, as there was no computable membership criterion for $\CRASP\cap\vty{REG}$.

An additional difference is that \citet{li2025characterizing} use a language \emph{classification} task where the model is trained on both positive and negative examples. 
The experiments done by \citet{bhattamishra-etal-2020-ability,huang2025formalframeworkunderstandinglength,yang2025knee}, and this work on $\CRASP$ all use a \emph{next-token prediction} setup.

\end{enumerate}

\newpage

\section{Additional Experimental Results}
\label{app:experiments}

In this section, we provide additional experimental results supporting Section \ref{main:experiments}. We first report complementary analyses on the language suite from Table \ref{tab:languages} used for the experiments in Figure \ref{fig:length_gen} shown in the main paper. Specifically, we present best seed results grouped directly by $\CRASP$ membership in Section \ref{app:crasp_membership_comp}, results across multiple random seeds in Section \ref{app:seeds}, and experiments with increased training data (from 10K to 100K) in Section \ref{app:train_size}.

We then evaluate a second suite of more complex regular languages with greater nesting depth (Table \ref{tab:languages_complex}). For these languages, we extend the training range from $[l_{min},50]$ to $[l_{min},200]$ and report the corresponding experiments in Section \ref{app:complex}. We present length generalization results across the classes $\vty{R}$, $\CRASP$, and $\vty{R} \circ \vty{G}$ in Section \ref{app:length_gen_complex}, directly compare languages in and outside of $\CRASP$ in Section \ref{app:crasp_membership_comp_complex}, and evaluate performance across multiple random seeds in Section \ref{app:seeds_complex}.

Finally, Section \ref{app:languages} provides the complete lists of regular languages used in both experimental suites, together with their membership in $\vty{R}$, $\vty{R}^\omega$, $\vty{R} \circ \vty{G}$, and $\CRASP$.

\subsection{Additional Results on the Main Language Suite}
\label{app:main_suite}

\subsubsection{Results by C-RASP Membership}
\label{app:crasp_membership_comp}

The main paper reports length generalization results separately for languages in $\vty{R}$, $\CRASP \setminus \vty{R}$, $\vty{R} \circ \vty{G} \setminus \CRASP$, and outside $\vty{R} \circ \vty{G}$ (Figure \ref{fig:length_gen}). Here, we provide an alternative view of the same experimental results by grouping languages according to their $\CRASP$ membership only. Figure \ref{fig:2-panel-main} shows a clear separation between languages, where languages in $\CRASP$ reliably generalize beyond the training length range, whereas languages outside $\CRASP$ consistently fail to do so.

\begin{figure}[H]
\centering
\includegraphics[width=\linewidth]{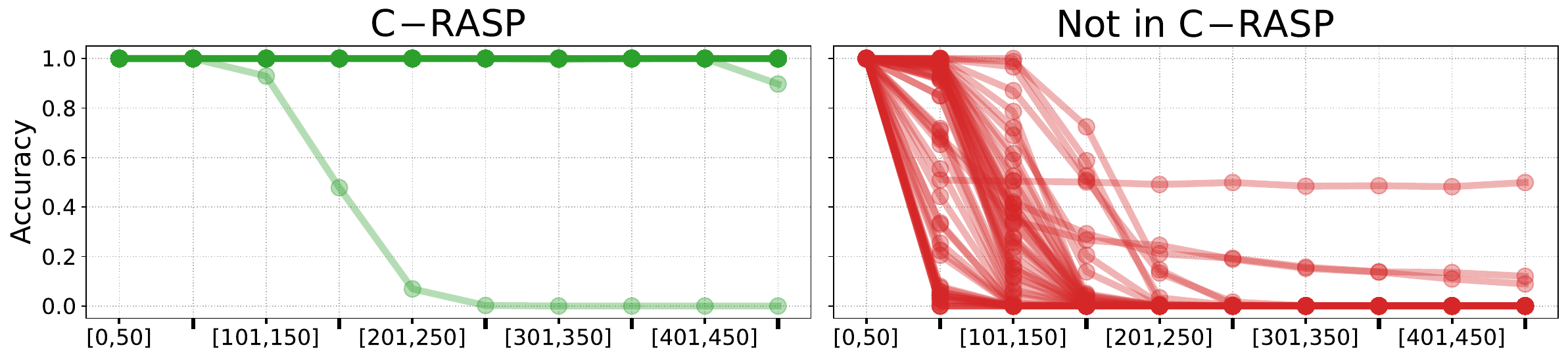}
\caption{\textbf{Length generalization by $\CRASP$ membership.}
Models are trained on strings of lengths in $[l_{min},50]$ and evaluated on ranges from $[l_{min},50]$ (in-distribution) up to $[451,500]$ in steps of 50. Each curve corresponds to a language from Table \ref{tab:languages}. %
}
\label{fig:2-panel-main}
\end{figure}

\subsubsection{Length Generalization Across Seeds and Languages}
\label{app:seeds}

The results in the main paper report the best successful seed for each language. To assess whether the observed length generalization behavior is robust across random seeds (i.e. random model initializations), we additionally aggregate results over the five best successful seeds per language, where a seed is considered successful if it achieves $100\%$ accuracy on the in-distribution test data. Figure \ref{fig:seed-main} reports these results across $\vty{R}$, $\CRASP \setminus \vty{R}$, $\vty{R} \circ \vty{G} \setminus \CRASP$, and outside $\vty{R} \circ \vty{G}$, while Figure \ref{fig:2-panel-seed-main} groups them only by $\CRASP$ membership. Results show that languages outside $\CRASP$ consistently fail to length generalize across languages and across random seeds.

\begin{figure}[H]
\centering
\includegraphics[width=\linewidth]{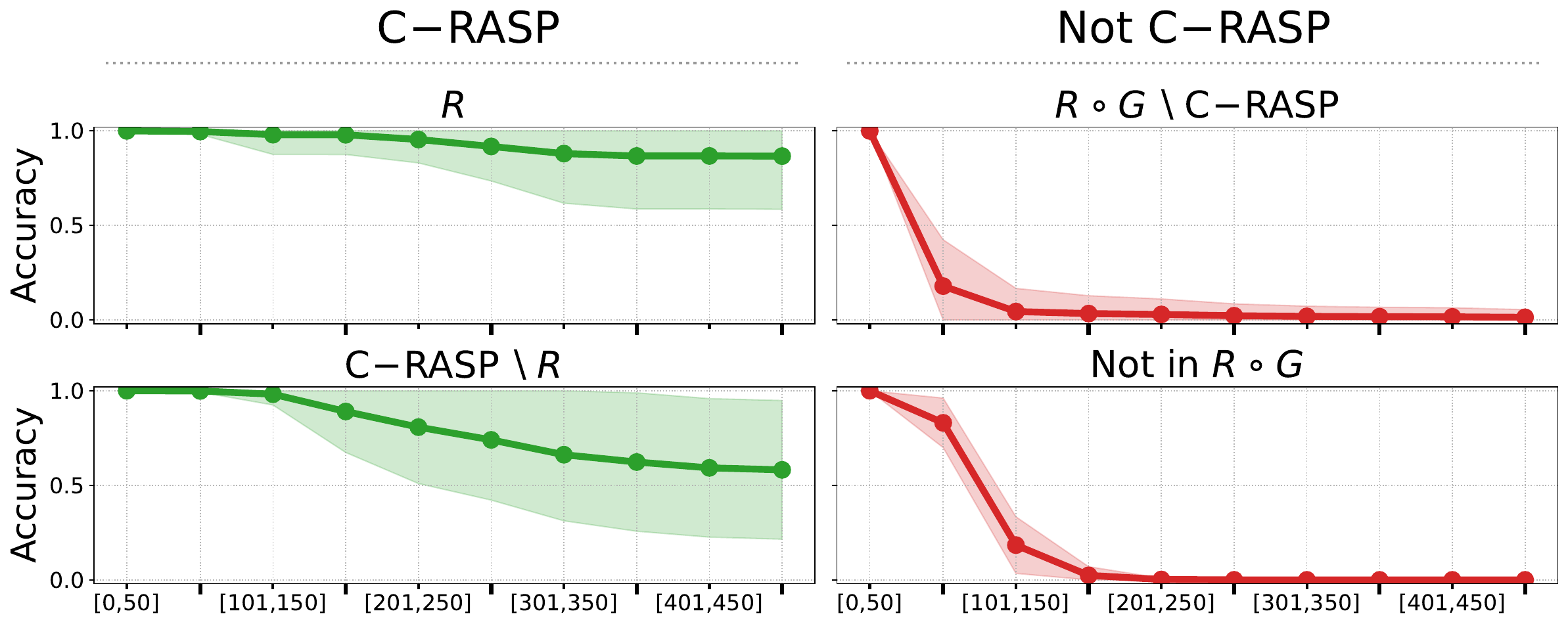}
\caption{\textbf{Length generalization on regular languages (aggregated across languages and seeds).}
Models are trained on strings of lengths in $[l_{min},50]$ and evaluated on length ranges from $[l_{min},50]$ (in-distribution) up to $[451,500]$ in steps of 50.
For each language we compute the mean accuracy across its 5 best successful seeds, where a seed is considered successful if it achieves $100\%$ accuracy on the in-distribution test data.
Solid lines show the mean of these per-language curves within each group, and shaded regions show one standard deviation across languages.
Green corresponds to languages in $\CRASP$, while red corresponds to languages not in $\CRASP$. Each panel includes all languages in the corresponding group from Table \ref{tab:languages}.}
\label{fig:seed-main}
\end{figure}

\begin{figure}[H]
\centering
\includegraphics[width=\linewidth]{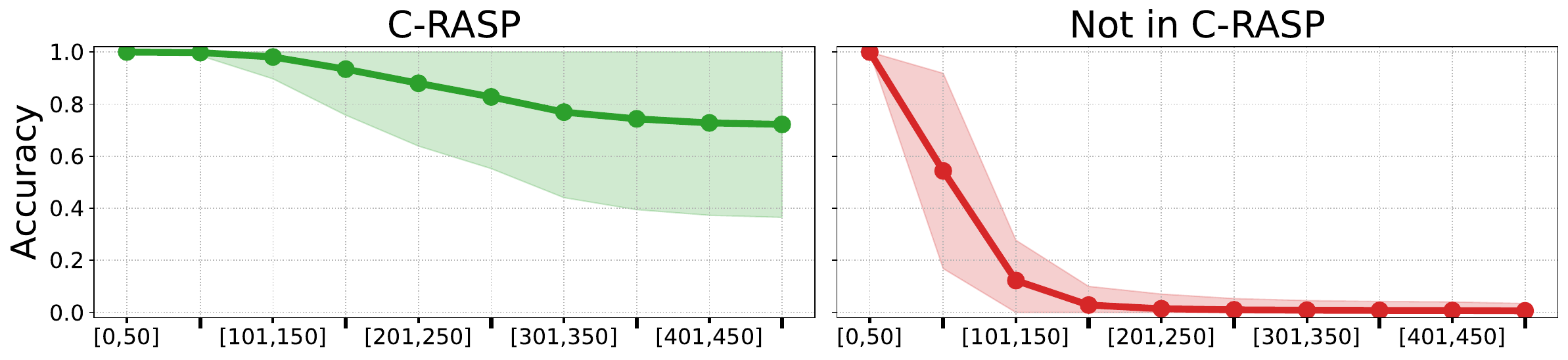}
\caption{\textbf{Length generalization by $\CRASP$ membership, aggregated across languages and seeds.}
Models are trained on strings of lengths in $[l_{min},50]$ and evaluated on length ranges from $[l_{min},50]$ (in-distribution) up to $[451,500]$ in steps of 50. For each language, we compute the mean accuracy across its 5 best successful seeds, where a seed is considered successful if it achieves $100\%$ accuracy on the in-distribution test data. Solid lines show the mean of these per-language curves for languages within and outside $\CRASP$, and shaded regions show one standard deviation across languages. In contrast to Figure \ref{fig:seed-main}, which separates languages into $\vty{R}$, $\CRASP \setminus \vty{R}$, $\vty{R} \circ \vty{G} \setminus \CRASP$, and those outside $\vty{R} \circ \vty{G}$, here we group the same languages by $\CRASP$ membership only, providing an overall view of length generalization within and outside $\CRASP$. Each panel includes all languages in the corresponding group from Table \ref{tab:languages}.}
\label{fig:2-panel-seed-main}
\end{figure}

\subsubsection{Increased Training Data Size}
\label{app:train_size}

Finally, we test how increased training data size affects length generalization within different classes. We repeat the experiments with 100K sampled words per language, using an 80/20 train-test split, compared to 10K training examples in the original experiments. We otherwise follow the same training and hyperparameter search procedure described in Section \ref{main:experiments}. Figure \ref{fig:main-longer-train} shows that increasing the amount of training data does not change trends in length generalization: languages in $\CRASP$ continue to generalize beyond the training lengths, while languages outside $\CRASP$ do not. In Figure \ref{fig:2-panel-main-longer-train} we again regroup the same results from Figure \ref{fig:main-longer-train} into languages in and outside of $\CRASP$ only.

\begin{figure}[H]
\centering
\includegraphics[width=\linewidth]{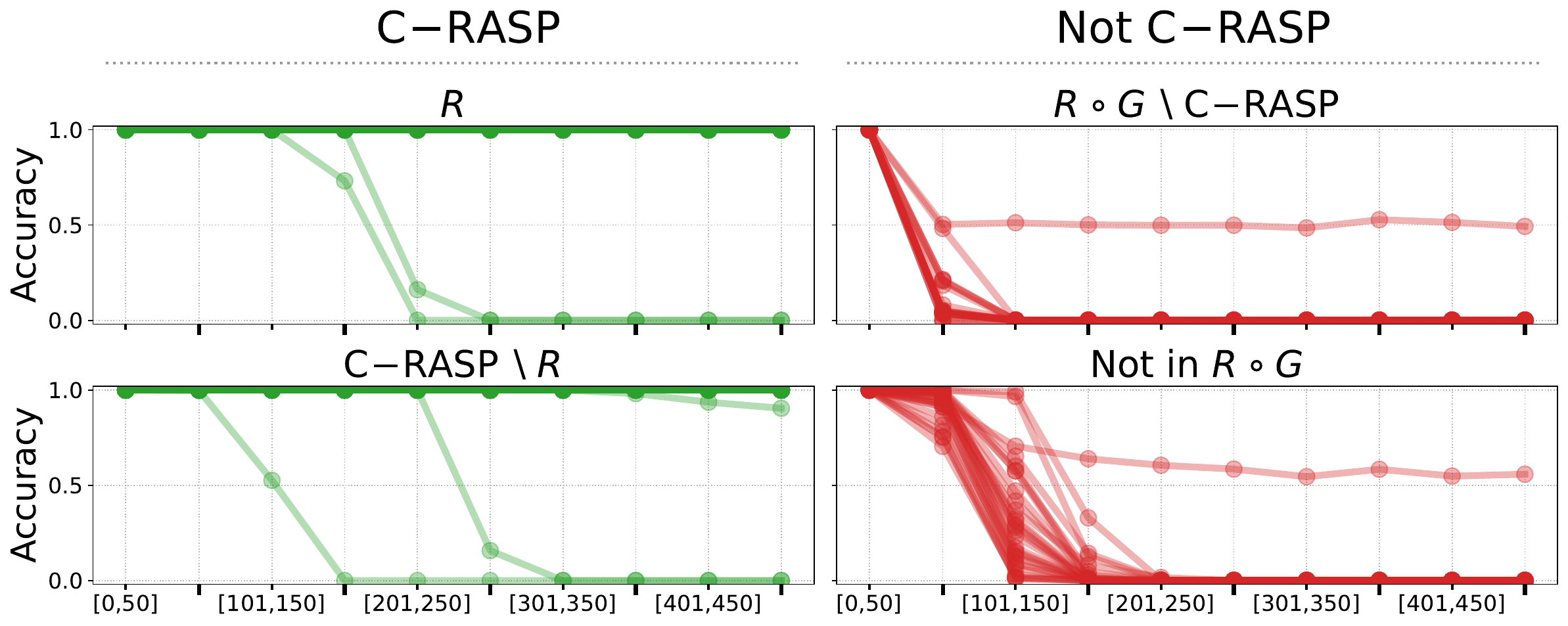}
\caption{\textbf{Length generalization on regular languages with increased training data.}
Models are trained on strings of lengths in $[l_{min},50]$ using a larger training set (100K examples instead of 10K), and evaluated on lengths from $[l_{min},50]$ (in-distribution) up to $[401,500]$ in steps of 50. Each curve corresponds to a language. We report the best seed per language. Increasing the training data does not change trends in length generalization: languages in $\CRASP$ continue to generalize, while languages not in $\CRASP$ consistently fail to generalize beyond the training range. For these experiments, we used the systematically generated subset of languages listed in Table \ref{tab:languages}, consisting of 100 languages in total.}
\label{fig:main-longer-train}
\end{figure}

\begin{figure}[H]
\centering
\includegraphics[width=\linewidth]{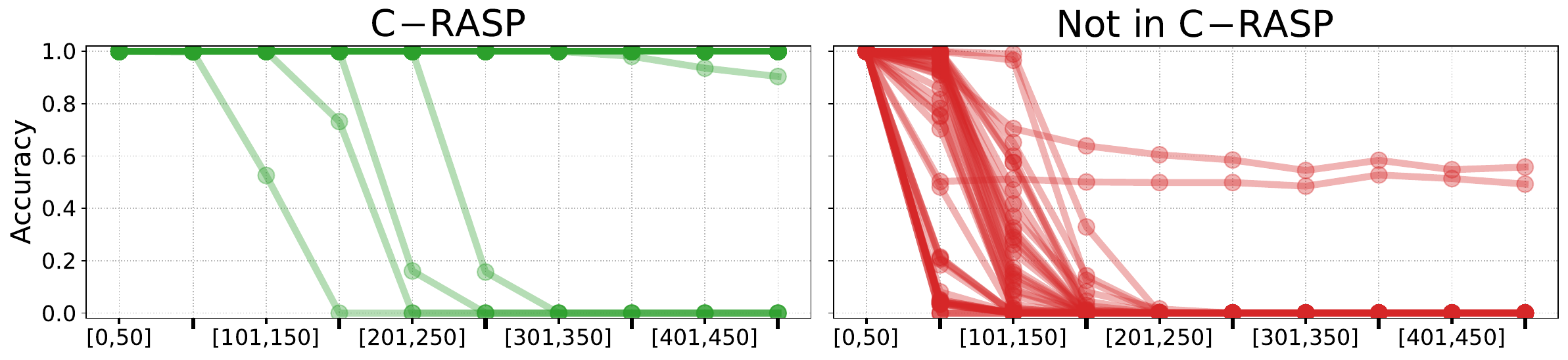}
\caption{\textbf{Length generalization by $\CRASP$ membership with increased training data.}
Models are trained on strings of lengths in $[l_{min},50]$ using a larger training set (100K examples instead of 10K) and evaluated on length ranges from $[l_{min},50]$ (in-distribution) up to $[401,500]$ in steps of width 50. Each curve corresponds to a language, and we report the best seed per language. In contrast to Figure \ref{fig:main-longer-train}, which separates languages into the four groups $\vty{R}$, $\CRASP\!\setminus\!R$, $\vty{R} \circ \vty{G}\!\setminus\!\CRASP$, and languages outside $\vty{R} \circ \vty{G}$, here we group the same languages solely by $\CRASP$ membership, providing an overall view of length generalization within and outside $\CRASP$. For these experiments, we used the systematically generated subset of languages listed in Table \ref{tab:languages}, consisting of 100 languages in total.}
\label{fig:2-panel-main-longer-train}
\end{figure}

\subsection{Experiments on More Complex Languages}
\label{app:complex}

To test whether our findings extend to more complex languages, we repeat our experiments on a systematically generated set of 50 regular languages with greater nesting depth (Table \ref{tab:languages_complex}), following the same experimental procedure described in Section \ref{main:experiments}. In addition, we increase the maximum training length from 50 to 200, allowing us to examine length generalization behavior when models are trained on longer strings.

\subsubsection{Length Generalization}
\label{app:length_gen_complex}

In Figure \ref{fig:main-complex} we show results across $\vty{R}$, $\CRASP \setminus \vty{R}$, $\vty{R} \circ \vty{G} \setminus \CRASP$, and those outside $\vty{R} \circ \vty{G}$, confirming length generalization trends seen on simpler languages and shorter training lengths. Figure \ref{fig:2-panel-main-complex} highlights this length generalization trends further by dividing languages into subsets of in and outside of $\CRASP$ providing a more summarized overview. 

\begin{figure}[H]
\centering
\includegraphics[width=\linewidth]{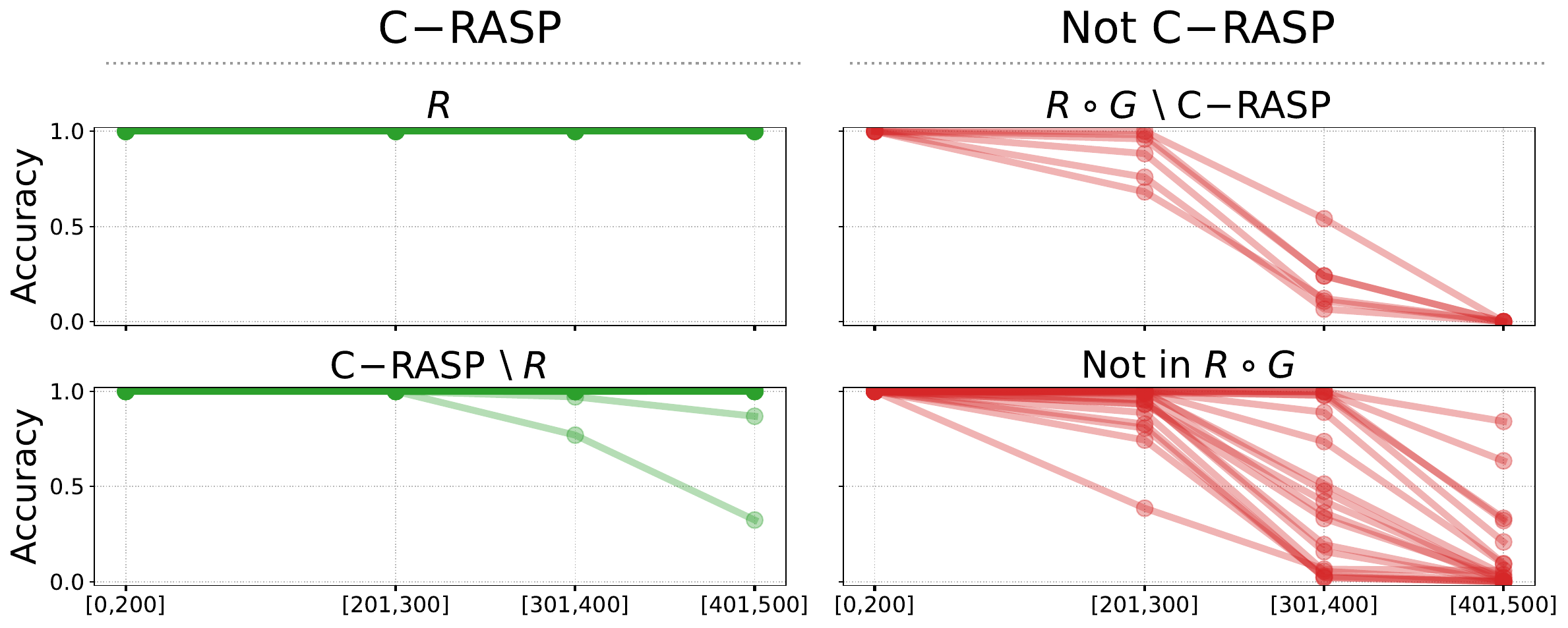}
\caption{\textbf{Length generalization on regular languages (longer training length).} Models are trained on strings of lengths in $[l_{min},200]$, and evaluated on length ranges $[l_{min},200]$ (in-distribution) up to $[401,500]$ in steps of 100. Each curve corresponds to a language. Green curves denote languages in $\CRASP$, while red curves denote languages not in $\CRASP$. Languages in $\CRASP$ maintain near-perfect accuracy well beyond the training range, whereas languages outside $\CRASP$ exhibit rapid degradation, typically failing shortly after. Languages corresponding to this plot can be viewed in Table \ref{tab:languages_complex}.}
\label{fig:main-complex}
\end{figure}

\subsubsection{Results by C-RASP Membership}
\label{app:crasp_membership_comp_complex}

\begin{figure}[H]
\centering
\includegraphics[width=\linewidth]{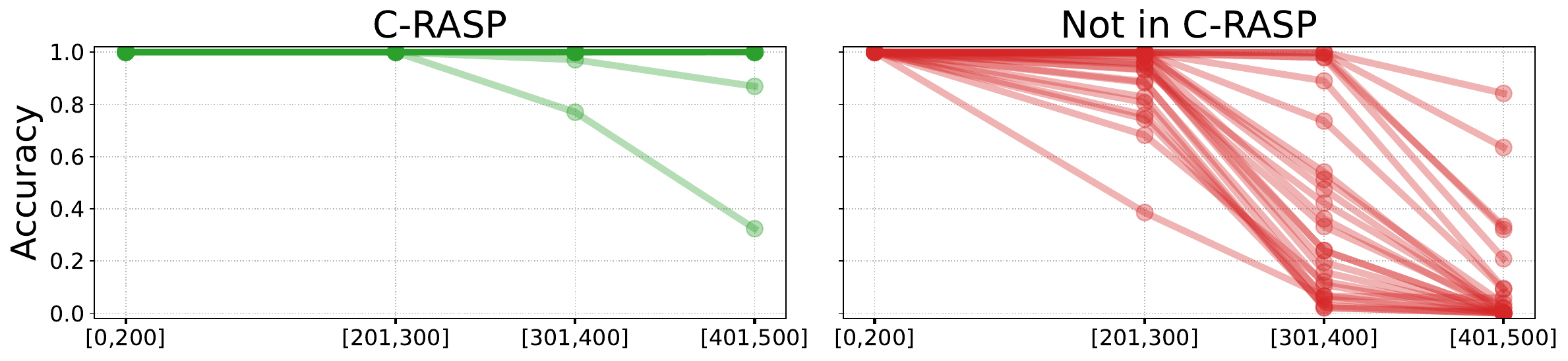}
\caption{\textbf{Length generalization by $\CRASP$ membership (longer training length).}
Models are trained on strings of lengths in $[l_{min},200]$ and evaluated on length ranges from $[l_{min},200]$ (in-distribution) up to $[401,500]$ in steps of 100. Each curve corresponds to a language. In contrast to Figure \ref{fig:main-complex}, which separates languages into the four groups $\vty{R}$, $\CRASP\!\setminus\!R$, $\vty{R} \circ \vty{G}\!\setminus\!\CRASP$, and languages outside $\vty{R} \circ \vty{G}$, here we group the same languages by $\CRASP$ membership, providing an overall view of length generalization within and outside $\CRASP$. Each panel includes all languages in the corresponding group from Table \ref{tab:languages_complex}.}
\label{fig:2-panel-main-complex}
\end{figure}

\subsubsection{Length Generalization Across Seeds and Languages}
\label{app:seeds_complex}

Similar to Section \ref{app:seeds}, we evaluate how consistent length generalization trends are across random seeds (i.e. random model initializations) and languages within a class. We compute the mean accuracy across the five best successful seeds for each language and aggregate these results across languages. Figure \ref{fig:seed-main-complex} shows that same length generalization trends persist across seeds: languages in $\CRASP$ maintain high accuracy beyond the training length range, while languages outside $\CRASP$ consistently degrade with increasing length. Figure \ref{fig:2-panel-seed-main-complex} further emphasizes this difference by grouping languages in and outside of $\CRASP$.

\begin{figure}[H]
\centering
\includegraphics[width=\linewidth]{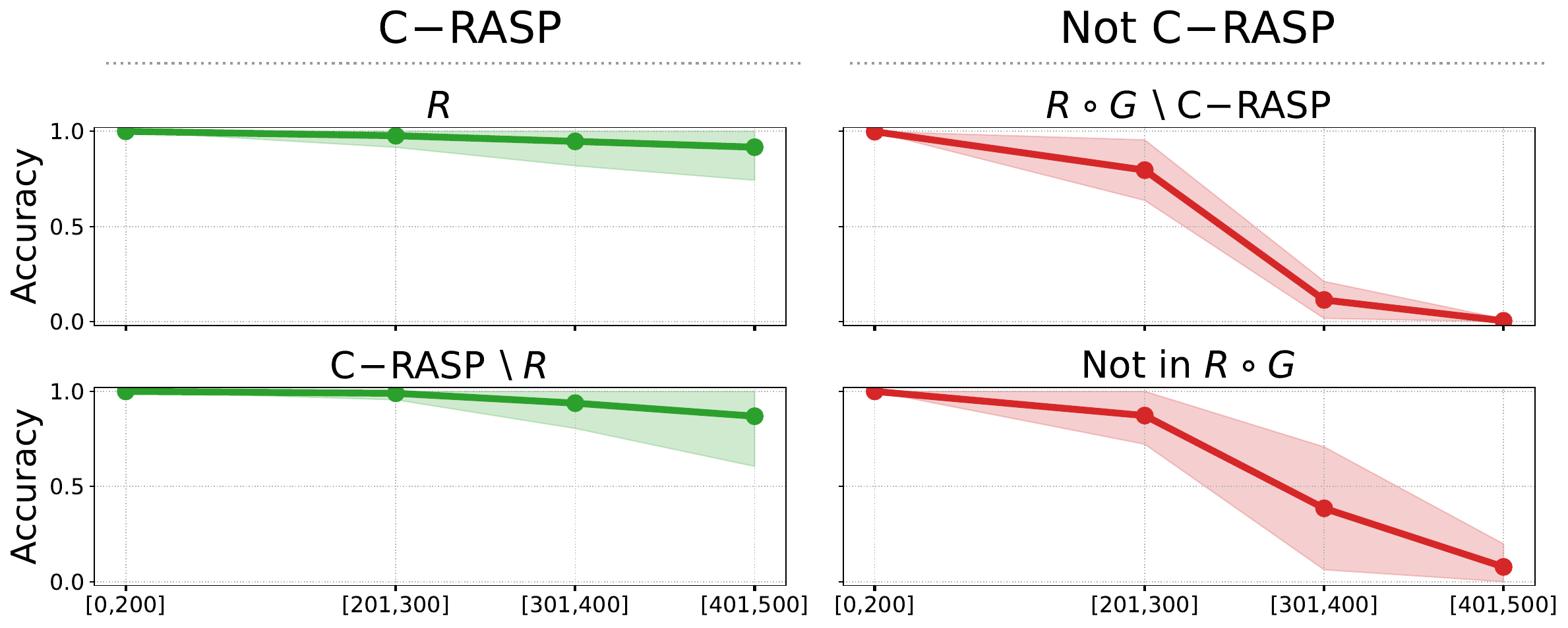}
\caption{\textbf{Length generalization on regular languages (aggregated across languages and seeds).}
Models are trained on strings of lengths in $[l_{min},200]$ and evaluated on length ranges from $[l_{min},200]$ (in-distribution) up to $[401,500]$ in steps of 100.
For each language we compute the mean accuracy across its 5 best successful seeds, where a seed is considered successful if it achieves $100\%$ accuracy on the in-distribution test data.
Solid lines show the mean of these per-language curves within each class, and shaded regions show one standard deviation across languages.
Green corresponds to languages in $\CRASP$, while red corresponds to languages not in $\CRASP$.
Results for individual best seeds are shown in Figure \ref{fig:main-complex}.
Each panel includes all languages in the corresponding class from Table \ref{tab:languages_complex}.}
\label{fig:seed-main-complex}
\end{figure}

\begin{figure}[H]
\centering
\includegraphics[width=\linewidth]{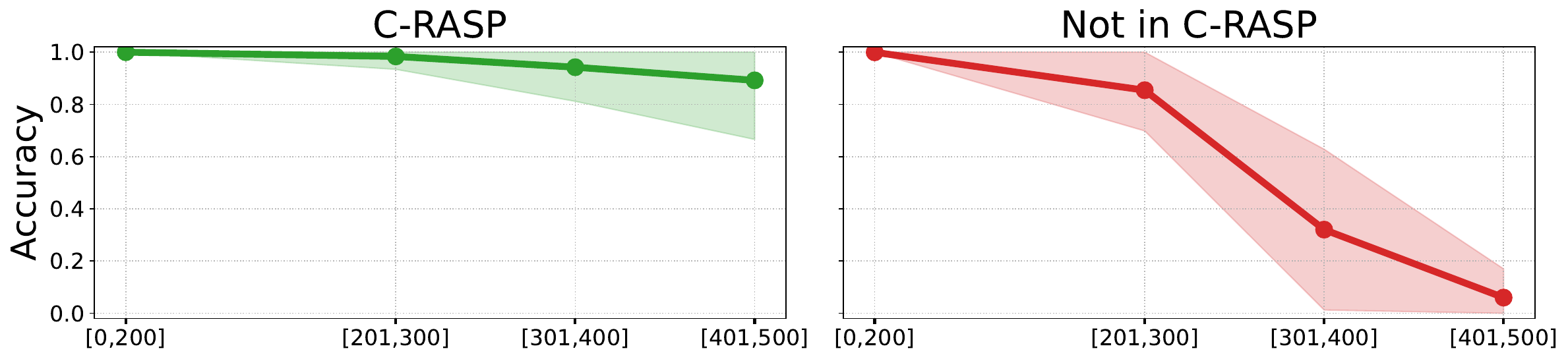}
\caption{\textbf{Length generalization by $\CRASP$ membership, aggregated across languages and seeds.}
Models are trained on strings of lengths in $[l_{min},200]$ and evaluated on length ranges from $[l_{min},200]$ (in-distribution) up to $[401,500]$ in steps of 100. For each language, we compute the mean accuracy across its 5 best successful seeds, where a seed is considered successful if it achieves $100\%$ accuracy on the in-distribution test data. Solid lines show the mean of these per-language curves for languages within and outside $\CRASP$, and shaded regions show one standard deviation across languages. In contrast to Figure \ref{fig:seed-main-complex}, which separates languages into the four groups $\vty{R}$, $\CRASP\!\setminus\!R$, $\vty{R} \circ \vty{G}\!\setminus\!\CRASP$, and languages outside $\vty{R} \circ \vty{G}$, here we group the same languages solely by $\CRASP$ membership, providing an overall view of length generalization within and outside $\CRASP$. Each panel includes all languages in the corresponding group from Table \ref{tab:languages_complex}.}
\label{fig:2-panel-seed-main-complex}
\end{figure}

\subsection{Regular Languages}
\label{app:languages}

We report the complete set of regular languages used in our experimental evaluation in the main paper and appendix. Tables \ref{tab:languages} and \ref{tab:languages_complex} provide a comprehensive overview of the dataset. For each language, we indicate membership in $\vty{R}$, $\vty{R}^\omega$, $\vty{R} \circ \vty{G}$, and $\CRASP$.

\begin{table*}[t]
\centering
\scriptsize
\setlength{\tabcolsep}{3.5pt}
\renewcommand{\arraystretch}{0.9}

\begin{tabular}{>{\raggedright\arraybackslash}p{0.1828\textwidth}>{\centering\arraybackslash}m{0.040\textwidth}>{\centering\arraybackslash}m{0.040\textwidth}>{\centering\arraybackslash}m{0.040\textwidth}>{\centering\arraybackslash}m{0.040\textwidth}|>{\raggedright\arraybackslash}p{0.318\textwidth}>{\centering\arraybackslash}m{0.040\textwidth}>{\centering\arraybackslash}m{0.040\textwidth}>{\centering\arraybackslash}m{0.040\textwidth}>{\centering\arraybackslash}m{0.040\textwidth}}
\toprule
Formal Language & $\vty{R}$ & \makebox[0pt][c]{$\CRASP$} & $\vty{R}^\omega$ & \makebox[0pt][c]{$\vty{R} \circ \vty{G}$} & Formal Language & $\vty{R}$ & \makebox[0pt][c]{$\CRASP$} & $\vty{R}^\omega$ & \makebox[0pt][c]{$\vty{R} \circ \vty{G}$} \\
\midrule
$(bbac)^*$ & False & True & True & True & $(ab)^* + (bb)^*$ & False & False & False & True \\
$(bab + b)^*$ & False & False & False & False & $(bb)^*(bb)^*$ & False & False & False & True \\
$(c + b(a)^*)^*$ & False & False & False & False & $((b)^*ac)^*$ & False & False & False & False \\
$b + (bb)^*$ & False & False & False & True & $(bc(c)^*)^*$ & False & False & False & False \\
$acabcc(c)^*$ & True & True & True & True & $(baa + a)^*$ & False & False & False & False \\
$bbcc(aa)^*$ & False & False & False & True & $(ac)^*c + ba + a$ & False & True & True & True \\
$(ab(a)^*)^*$ & False & False & False & False & $(b + a)^*aaac$ & False & False & False & False \\
$(cc)^*$ & False & False & False & True & $(bb)^*cbac$ & False & False & False & True \\
$((a)^*ac)^*$ & False & False & False & False & $((b)^*)^*(ab)^*$ & False & True & True & True \\
$((b)^*ab)^*$ & False & False & False & False & $bba(c + a)^*$ & True & True & True & True \\
$(ac(a)^*)^*$ & False & False & False & False & $cb(a)^*baa$ & True & True & True & True \\
$(cca + a)^*$ & False & False & False & False & $(b)^*baa$ & True & True & True & True \\
$cb(a)^* + abbc$ & True & True & True & True & $(aa + (b)^*)^*$ & False & False & False & True \\
$aa(a)^*c$ & True & True & True & True & $aa + ca(aa)^*$ & False & False & False & True \\
$(aa + aa)^*$ & False & False & False & True & $(a)^*(a)^*b + aac$ & True & True & True & True \\
$(a + c)^*cbbb$ & False & False & False & False & $(bb)^*ab + ac$ & False & False & False & True \\
$bc(b)^*abbb$ & True & True & True & True & $(cc)^*cccb$ & False & False & False & True \\
$(ca)^*(cb)^*$ & False & True & True & True & $(b)^*cbbcaa$ & True & True & True & True \\
$(cc)^*ccb + b$ & False & False & False & True & $((b)^*b)^*$ & True & True & True & True \\
$((a)^*)^*(ac)^*$ & False & True & True & True & $ccc(ba)^*$ & False & True & True & True \\
$bccb(aa)^*$ & False & False & False & True & $bcac(a)^* + ba$ & True & True & True & True \\
$(ba)^*(b)^*bc$ & False & True & True & True & $(b)^*cc(ca)^*$ & False & True & True & True \\
$(ac)^*ba(b)^*$ & False & True & True & True & $(cab + c)^*$ & False & False & False & False \\
$(a)^*b + acab$ & True & True & True & True & $(a + ac + a)^*$ & False & False & False & False \\
$(ca)^*(a)^*bb$ & False & True & True & True & $aba + c(bb)^*$ & False & False & False & True \\
$(b)^*(b)^*(aa)^*$ & False & False & False & True & $((a)^*ca)^*$ & False & False & False & False \\
$b + ca + b(c + a)^*$ & True & True & True & True & $(abbc)^*$ & False & True & True & True \\
$(cbb + b)^*$ & False & False & False & False & $(a + c)^*(cb)^*$ & False & False & False & False \\
$c + cab + (bc)^*$ & False & True & True & True & $(aa)^*bb(a)^*$ & False & False & False & True \\
$(c)^*b + b(b)^*$ & True & True & True & True & $a(bc)^*$ & False & True & True & True \\
$(ca(c)^*)^*$ & False & False & False & False & $(ca)^*(ab)^*$ & False & True & True & True \\
$bab(ba)^*$ & False & True & True & True & $((ac)^*)^*$ & False & True & True & True \\
$bbbc(ca)^*$ & False & True & True & True & $(ba + a + c)^*$ & False & False & False & False \\
$(abcb)^*$ & False & True & True & True & $((a)^*)^*(cc)^*$ & False & False & False & True \\
$cc + cc(bb)^*$ & False & False & False & True & $((b)^*a + c)^*$ & False & False & False & False \\
$bcba + (a)^*aa$ & True & True & True & True & $(ba)^*cb + aa$ & False & True & True & True \\
$a + c(a)^*cb + a$ & True & True & True & True & $(baaa)^*$ & False & True & True & True \\
$(cb(c)^*)^*$ & False & False & False & False & $(baa + b)^*$ & False & False & False & False \\
$((b)^*bc)^*$ & False & False & False & False & $ccc(b)^*$ & True & True & True & True \\
$(a + aac)^*$ & False & False & False & False & $aa(c)^* + acca$ & True & True & True & True \\
$(cb)^*caa + b$ & False & True & True & True & $(a)^*caaa + c$ & True & True & True & True \\
$caaba + (a)^*$ & True & True & True & True & $((a)^*bb)^*$ & False & False & False & False \\
$(cc)^*bac$ & False & False & False & True & $(bb)^*(c)^*ac$ & False & False & False & True \\
$(b + ac + c)^*$ & False & False & False & False & $(b)^*b + aa + bba$ & True & True & True & True \\
$(acba)^*$ & False & True & True & True & $(b + bc)^*$ & False & False & False & False \\
$cbac(a)^*$ & True & True & True & True & $(b)^*cc(b)^*ca$ & True & True & True & True \\
$(a)^*(a)^*(bb)^*$ & False & False & False & True & $aca(a)^*$ & True & True & True & True \\
$((a)^*ab)^*$ & False & False & False & False & $bcbb(bb)^*$ & False & False & False & True \\
$baca(bc)^*$ & False & True & True & True & $(a)^*a(b + b)^*$ & True & True & True & True \\
$(ca)^*(cc)^*$ & False & False & False & True & $b + c + ac(ac)^*$ & False & True & True & True \\
$(ab + aabb)^*$ & False & False & False & False & $(ab + bbaa)^*$ & False & True & True & True \\
$(aa)^*$ & False & False & False & True & $(a^+b^+)+$ & False & False & False & False \\
$(ab)^+ a^+$ & False & True & True & True & $(ab)^+ a^+ b^+$ & False & True & True & True \\
$(ab)^+ a^+ b^+ a^+$ & False & True & True & True & $((ab)^+ b^+)^+$ & False & False & False & False \\
$((ab)^+ b^+)^k$ & False & True & True & True & $(ab + ba)^*$ & False & True & True & True \\
$(ab)^+ b(ab)^+$ & False & True & True & True & $(ab + bba)^*$ & False & False & True & True \\
$(a+b+)^k$ & True & True & True & True & ${abe}^*be^*$ & False & False & False & False \\
$(ab)^*$ & False & True & True & True & $(a(ab)^*b)^*$ & False & True & True & True \\
$b\Sigma^*$ & True & True & True & True & $\Sigma^*b$ & False & False & False & False \\
$(\Sigma \!\setminus\! \{a,b_0\})^*a(\Sigma \!\setminus\! \{b_1\})^*$ & True & True & True & True & $(\Sigma \!\setminus\! \{a_1,b_0\})^*a_1(\Sigma \!\setminus\! \{a_2,b_1\})^*a_2(\Sigma \!\setminus\! \{b_2\})^*$ & True & True & True & True \\
$\Sigma^*a\Sigma^*$ & True & True & True & True & $\Sigma^*ab\Sigma^*$ & True & True & True & True \\
$a^*(ba^*ba^*)^*$ & False & False & False & True & $\Sigma^*a\Sigma^*b\Sigma^*$ & True & True & True & True \\
$(\Sigma \!\setminus\! {b_0})^*a(\Sigma \!\setminus\! {a,b_1})^*$ & False & False & False & False & & & & & \\
\bottomrule
\end{tabular}

\caption{Set of 125 regular languages used in the experiments in Figure \ref{fig:length_gen} and in Section \ref{app:main_suite}. For each language, we report membership in $\vty{R}$, $\vty{R}^\omega$, $\vty{R}\wrc\vty{G}$, and $\CRASP$ (where True in a column denotes membership in the column's class). For each language, we sampled 10K words for the training set (lengths $l_{min}$–50) and 1K words for each evaluation length bin.
}
\label{tab:languages}

\end{table*}

\begin{table}[t]
\centering
\scriptsize
\setlength{\tabcolsep}{4pt}
\renewcommand{\arraystretch}{0.9}

\begin{tabular}{l>{\centering\arraybackslash}m{0.035\textwidth}>{\centering\arraybackslash}m{0.035\textwidth}>{\centering\arraybackslash}m{0.035\textwidth}>{\centering\arraybackslash}m{0.035\textwidth}}
\toprule
Formal Language & $\vty{R}$ & \makebox[0pt][c]{$\CRASP$} & $\vty{R}^\omega$ & \makebox[0pt][c]{$\vty{R} \circ \vty{G}$}\\
\midrule
$(ab)^*bb(aabb)^*$ & False & True & True & True \\
$(abb)^*(a)(c+a)(a)cbb+a^*(c)$ & False & True & True & True \\
$(b^*ca+ca)(ac+b)(c+a)^*(b)^*+(bc)^*(b+c)(c)+a(c)^*caa$ & False & True & True & True \\
$(b^*bc+bb)c(a)(a)c^*aa+(bc)^*(b)^*(b)b^*aa+(a+a)(b)aba+(c)^*c^*bc+b^*+bc+a$ & False & True & True & True \\
$(cba)^*a$ & False & True & True & True \\
$(cac)^*b(b+c)^*a^*+bb+a(b)a^*ba+(a)^*c^*bb$ & False & True & True & True \\
$(cb)^*ab+(aa+a)^*a(a)bba$ & False & True & True & True \\
$(ac+aa+c)(ca+b+b)(c+a)^*ba^*ab+(ab+c)^*(b+c)^*(b)^*a^*ac+c+b^*abc+b^*cb+bc+c+b$ & False & True & True & True \\
$(c^*cc+b)(bc)^*(c+a)^*+(bc+a)^*+b$ & False & True & True & True \\
$(acc)^*+(ab+c+b)(a+b)^*+(a+b)+cabc$ & False & True & True & True \\
$(cac)^*(ca+c+c)(c)(b)^*ccc+(ac+c+a)^*a^*(c)^*ba+(a+a)^*(a)a$ & False & True & True & True \\
$(bcb)^*(b)(b)^*(c)^*a^*$ & False & True & True & True \\
$(a^*ac+ba+a+a)(ab+c)^*ba^*b^*bc$ & False & True & True & True \\
$(bcb+c+c+c)(bc+c+b)(b+a)^*(a)^*bba$ & False & False & False & False \\
$(a^*cc+bb+b)(ac)^*b^*+(c+b+b)^*(a)(b)a^*ba+(a+c)^*cc$ & False & False & False & False \\
$(bcb+cb+a+c)^*$ & False & False & False & False \\
$(ccb+ac+c+b)+(cc+b)^*(b+c)^*(c)cba+a^*$ & False & False & False & False \\
$b^*+(cb+a)^*(c+b)^*c^*c^*c+(a+a)^*(c)b^*+(b)^*$ & False & False & False & False \\
$(b^*c+ba)^*c^*+(aa+b+c)(a+c)a^*ccc$ & False & False & False & False \\
$(a^*ab+cb+a+c)^*$ & False & False & False & False \\
$(cbb+cb+a+a)(ca+b+c)^*(c+b)^*(b)^*b^*+(cb)^*(a+b)(b)+b(b)c^*ba$ & False & False & False & False \\
$(b^*ab+ab+a+c)^*(cb+b)(b)^*+c^*(b+b)^*(c)^*$ & False & False & False & False \\
$(a^*bb+ba+b+b)^*+b^*(b+a)(c)acb+c(b)b^*b+aaba+cc$ & False & False & False & False \\
$(c^*bb+bb+c)+(ab+c+c)^*ab+(b+c)^*a+ab$ & False & False & False & False \\
$(c^*+bb+c+a)(ac+a)c(c)a^*ab+(cb+b+a)(a)a^*c+(c+b)^*(b)b^*a$ & False & False & False & False \\
$(b^*ca+cb+b)^*(a+a)(b+a)a$ & False & False & False & False \\
$(baa+ab+b)^*$ & False & False & False & False \\
$(c^*+ac+a+b)^*b^*(c+b)$ & False & False & False & False \\
$(a^*cb+ca+b+a)c+a^*c^*c^*+(b+a)^*(a)^*b^*ac+(b)a^*c$ & False & False & False & False \\
$a^*+(aa+a+c)(a)(b)^*b+(b+a)^*(a)bba$ & False & False & False & False \\
$(abb+b)(ba+c+c)(c+a)+(b)(a+b)^*+(c)b^*c^*+(a)^*a^*a$ & True & True & True & True \\
$(b^*cc+bb+b+a)(ca+b)$ & True & True & True & True \\
$(b^*b+ca+b+c)(a+a+c)^*(a+c)+(aa+c)a(a)$ & True & True & True & True \\
$(a^*)^*$ & True & True & True & True \\
$(aca)^*(a+c+a)^*+(ba+b+a)^*cc^*c^*+(c)(b)^*b+b^*aab+b$ & True & True & True & True \\
$(cbc)(bc)a^*(c)^*a^*ab+a^*(a+b)(c)c^*bb+c^*(c)^*a^*aa$ & True & True & True & True \\
$b^*(ba+c)c^*c^*+c(b+b)^*bccc+(c)a^*c^*bb+a^*c^*cc$ & True & True & True & True \\
$b+b^*(c)(b)cca+ac^*baa+(b)^*ac+b^*cb+b+c+b$ & True & True & True & True \\
$(acc+c+c+b)(a)(a+a)^*(a)^*bbb+(ba+a+b)+a^*ca^*aa+(b)+acc$ & True & True & True & True \\
$a^*b^*+a(b+a)(a)b^*ac$ & True & True & True & True \\
$(a^*b)^*+(ac+b+a)b^*bc^*cb$ & False & False & False & False \\
$(a^*ac)+c^*b(c)^*+(c+b)^*+(b)aac$ & True & True & True & True \\
$a^*+(a)^*$ & True & True & True & True \\
$(a+bb)^*a^*$ & False & False & False & True \\
$(b^*bc)+(aa+c)^*$ & False & False & False & True \\
$(cc+aa+a+b)(aa+b+b)^*+b^*b+(b+b)(c)aca+(a)^*acb+cab$ & False & False & False & True \\
$(b+aa+b+c)^*$ & False & False & False & True \\
$b(aa+c)^*(b)(b)cca$ & False & False & False & True \\
$(a)+(bb)^*+a^*a^*cacb$ & False & False & False & True \\
$(b^*cc)^*$ & False & False & False & False \\
\bottomrule
\end{tabular}

\caption{Set of 50 regular languages used in the experiments in \ref{app:complex}. For each language, we report membership in  $\vty{R}$, $\CRASP$, $\vty{R}^{\omega}$ and $\vty{R}\wrc\vty{G}$  (where True in a column denotes membership in the column's class). For each language, we sampled 10K words for the training set (lengths $l_{min}-200$) and 1K words for each evaluation length bin.}
\label{tab:languages_complex}
\end{table}

\section{Algebraic Preliminaries}\label{app:algebra}
\begin{defin}[Recognition; Syntactic Monoid]
    Let $\lang\subseteq \Sigma^*$ be a language. A monoid $M$ \emph{recognizes} $\lang$ iff there is a homomorphism $h\colon \Sigma^*\to M$ and a subset $X\subseteq M$ such that $\lang=h^{-1}(X)$. The syntactic monoid $M(\lang)$ of a language $\lang$ is the minimal monoid (up to isomorphism) which recognizes $\lang$.
\end{defin}
\begin{defin}[Basic Units]
    The three basic semigroup units $\uone, \utwo,\uthree$ are given with their multiplication tables, where rows denote the first operand and columns denote the second. 
    \begin{center}
    \begin{minipage}[position]{0.33\textwidth}
        \begin{tabular}{c|cc}
            $\uone$ & 0 & 1\\
            \hline
            0     & 0 & 0\\
            1     & 0 & 1 
        \end{tabular}
    \end{minipage}%
    \begin{minipage}[position]{0.33\textwidth}
        \begin{tabular}{c|cc}
            $\utwo$ & a & b\\
            \hline
            a     & a & b\\
            b     & a & b 
        \end{tabular}
    \end{minipage}%
    \begin{minipage}[position]{0.33\textwidth}
        \begin{tabular}{c|ccc}
            $\uthree$ & 1 & a & b\\
            \hline
            1     & 1 & a & b\\
            a     & a & a & b \\
            b     & b & a & b
        \end{tabular}
    \end{minipage}
    \end{center}
\end{defin}

\begin{definition}[Basic monoid operations]
    We define basic monoid operations.
    \begin{itemize}[noitemsep]
        \item Submonoid. A monoid $M$ is a submonoid of $N$ (usually written $M\leq N$) whenever $M\subseteq N$ and $M$ is closed under the monoid operation of $N$.
        \item Direct product. The direct product $M\times N$ of monoids $M$ and $N$ has elements $(m,n)$ where $m\in M $ and $n\in N$, with the operation $(m_1,n_1)\cdot_{M\times N}(m_2,n_2)=(m_1\cdot_M m_2,n_1\cdot_N n_2)$.
        \item Homomorphism. A homomorphism of monoids $\phi\colon M\to N$ is a function such that $\phi(1_M)=1_N$ and $\phi(m_1)\phi(m_2)=\phi(m_1m_2)$.
        \item Division. A monoid $M$ divides a monoid $N$ (written $M\preceq N$) whenever there is a submonoid $X\leq N$ of $N$ and a homomorphism $\phi\colon X\to M$ such that $M=\phi(X)$
    \end{itemize}
\end{definition}

\begin{defin}[Pseudovariety]
    A pseudovariety of monoids is a class closed under submonoids, division, and finite direct products. A pseudovariety of languages is a class closed under inverse homomorphism, Boolean operations, and shifts ($a^{-1}\lang b^{-1}$).
\end{defin}
\begin{defin}
    Define $\vty{V}\wrc\vty{W}$ as the pseudovariety generated by all $V\wrc W$ where $V\in\vty{V}$ and $W\in\vty{W}$.
    Define $\wwp^1(\vty{V})=\vty{V}$, $\wwp^{k+1}(\vty{V})=\wwp^k(\vty{V})\wrc\vty{V}$, and $\wwpc(\vty{V})=\bigcup_{k>0} \wwp^k(\vty{V})$.
    We also define $\wwp^k(M)$ for monoids, taking $\wwp^1(M)$ as the pseudovariety generated by $M$.
\end{defin}

We define some pseudovarieties that we use in the paper:
\begin{defin}
    The following are standard, except for $\vty{Dy}$. 
    \begin{itemize}[noitemsep]
        \item $\vty{R}$ is the pseudovariety of $\mathcal{R}$-trivial monoids \citep{BRZOZOWSKI198032}
        \item $\vty{A}$ is the pseudovariety of all aperiodic monoids 
        \item $\vty{G}$ is the pseudovariety of all finite groups
        \item $\vty{Dy}$ is the pseudovariety generated by $M(\dyck_k)$ for all $k\in\N$
    \end{itemize}
\end{defin}

An important relation on monoid elements we will use is the $\mathcal{R}$ relation \citep{BRZOZOWSKI198032}:

\begin{defin}\label{def:R_equivalent}
    For $s, t \in M$, we say $s \preceq_{\mathcal{R}} t \Leftrightarrow sM \subseteq tM$.

    $\mathcal{R}$-classes are the equivalence classes for $s \sim_\mathcal{R} t \Leftrightarrow \left[s \preceq_{\mathcal{R}} t \wedge t \preceq_{\mathcal{R}} s\right]$.
\end{defin}

\begin{proposition}\label{prop:dyck-krohn-rhodes-cyclic}
    $M(\dyck_k)\preceq (\uone\wrc \Z_{k+3})\wrc \uone$, where $\Z_{k+3}$ is the cyclic group of order $k+3$.
\end{proposition}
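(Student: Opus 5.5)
The plan is to prove the divisibility at the level of transformation monoids, using the automaton view of the wreath product from the main text: the right factor reads the input, and the left factor reads, at each step, the letter together with the right factor's state \emph{before} that letter. First, a reduction removes the outer $\uone$. For any monoids $M,N$ we have $M\preceq M\wrc N$, via the submonoid of pairs $(c_x,1_N)$ with $c_x$ the constant function to $x$. It therefore suffices to show $M(\dyck_k)\preceq \uone\wrc\Z_{k+3}$, and then apply this embedding with $N=\uone$; division is transitive.

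For the main step, recall that $M(\dyck_k)$ is the transition monoid of the minimal DFA of $\dyck_k$. That DFA has states $0,1,\dots,k$ (the current nesting depth) and a sink $\bot$. On live states, $a$ acts as $+1$ and $b$ as $-1$. A transition that would leave $[0,k]$ goes to $\bot$, and $\bot$ is absorbing. We build a cascade with state set $\{0,1\}\times\Z_{k+3}$, whose first coordinate is a $\uone$-flag and second a residue. The homomorphism is
\[
a\mapsto (f_a,+1),\qquad b\mapsto (f_b,-1).
\]
Here $f_a(r)=0$ iff $r\in\{k,k+1,k+2\}$, $f_b(r)=0$ iff $r\in\{0,k+1,k+2\}$, and otherwise the value is $1$. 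Since $0\in\uone$ is absorbing, flag $0$ means ``dead''. We then define a partial covering map $\pi$ by $\pi(1,r)=r$ for $r\le k$, and $\pi(\cdot,\cdot)=\bot$ on all remaining states.

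Next we check that $\pi$ is a covering. From a live state $(1,r)$ with $r\le k$, letter $a$ keeps the flag at $1$ and moves to $r+1$ exactly when $r<k$. When $r=k$, the flag drops to $0$, which matches $k\xrightarrow{a}\bot$. Letter $b$ behaves symmetrically at $r=0$. Dead states stay dead, because the flag remains $0$. A quick induction on word length then gives $\pi(s\cdot w)=\pi(s)\cdot w$ for every cascade state $s$ and word $w$. The key point in that induction is that, while the flag is $1$, the residue equals the true depth. Consequently, every cascade element $\phi(w)$ induces exactly the DFA transformation of $w$ on $\{0,\dots,k,\bot\}$.

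Finally, let $T$ be the submonoid of $\uone\wrc\Z_{k+3}$ generated by $\phi(a)$ and $\phi(b)$. The map sending $\phi(w)$ to the transformation of $w$ is well defined and is a surjective homomorphism onto $M(\dyck_k)$, which gives the division. I expect the main obstacle to be this well-definedness. It requires that $\phi(w)=\phi(w')$ forces $w,w'$ to act identically on the DFA, including the starting states that are not $0$. The covering argument must therefore use the cascade started from \emph{every} residue and flag, and not just the initial state. The modulus $k+3$ leaves more room than is needed: the residues $k+1,k+2$, which are the images of the depths $k+1$ and $-1$ reduced mod $k+3$, are only ever reached with flag $0$. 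So those residues are harmless, and they also coincide with the paper's intended encoding.
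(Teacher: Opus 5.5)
Your proof is correct. It shares the paper's core idea, a cyclic counter in $\Z_{k+3}$ plus an absorbing $\uone$ flag, but the execution differs in a way worth noting.

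\textbf{The paper's construction.} The paper uses a single letter-independent function $g$, with $g(x)=0$ iff $x\in\{k+1,k+2\}$, for both letters. Because $g$ is evaluated at the counter value \emph{before} each letter, a violation caused by a letter is only written into the flag at the next step. The paper then works in $(\uone\wrc\Z_{k+3})\times\uone$, adding an extra generator $((g,0),1)\mapsto\epsilon$ and a second $\uone$ coordinate that records non-emptiness. It reaches $(\uone\wrc\Z_{k+3})\wrc\uone$ by invoking that direct products divide wreath products. The surjective homomorphism itself is only asserted (``it can be verified'').

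\textbf{Your construction.} Your letter-dependent lookahead functions $f_a,f_b$ flag the violation at the step where it occurs. This makes the covering $\pi$ checkable one letter at a time. You get the sharper division $M(\dyck_k)\preceq\uone\wrc\Z_{k+3}$ directly, and the outer $\uone$ is absorbed by the constant-function embedding $M\preceq M\wrc N$.

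\textbf{Two small remarks.}
\begin{itemize}
\item The same lookahead construction already works with modulus $k+1$, taking $f_a$ to vanish only at $k$ and $f_b$ only at $0$, since residues outside $[0,k]$ are never used while the flag is $1$.
\item Your phrase ``dead states stay dead, because the flag remains $0$'' does not literally cover the flag-$1$ dead states $(1,k+1)$ and $(1,k+2)$. These are harmless for two reasons: $f_a$ and $f_b$ vanish there, and well-definedness only requires starting the cascade from the live states $(1,r)$ with $r\le k$, since $\bot$ is fixed by every word.
\end{itemize}
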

\begin{proof}
    First consider the submonoid of $ (\uone\wrc \Z_{k+3})\times \uone$ generated by $((g,0),1)$, $((g,1),0)$, and $((g,k+2),0)$ where $g\in \uone^{\Z_{k+3}}$ given by $g(x)=0\iff x\in\{k+1,k+2\}$.
    It can be verified that $((g,0),1)\mapsto\epsilon$, $((g,1),0)\mapsto a$, and $((g,k+2),0)\mapsto b$ extends to a surjective homomorphism $h\colon\uone\wrc \Z_{k+3}\to M(\dyck_k)\times \uone$.
    Finally, using the fact that direct products divide wreath products, we conclude that $M(\dyck_k)\preceq (\uone\wrc \Z_{k+3})\wrc \uone$.
\end{proof}
Intuitively, computation in $\Z_{k+3}$ detects if the depth ever exceeds $k+1$, and this depth-violation is detected by $\uone$. Another $\uone$ detects non-emptiness of the string.

\section{\Ptls{}}\label{app:ptl}

In this section we abstract from $\CRASP$ to the class of \emph{\ptls{}}, which intuitively are a class of straight-line programs in which each operation at position $i$ can only depend on previously defined operations and positions $j\leq i$. 
This content follows \citet{doi:10.1137/S0097539797322772}.

\begin{definition}[\Ptls{}]
    \Ptls{} are those with the syntax
    \begin{align*}
        \phi & ::= \sigma\mid  \lnot \phi_1 \mid \phi_1\land\phi_2 \mid\mathcal{O}\langle \phi_1,\phi_2,\ldots, \phi_k\rangle
    \end{align*}
    where $\sigma\in\Sigma$ and $\mathcal{O}$ is some operator of arity $k$. For each operator there is an associated collection $K_\mathcal{O}\subseteq \Sigma^*\times (2^\N)^k$ of words and sets of positions contained in the operator. 
    Semantics are defined
    \begin{align*}
        &w,i\models \sigma &&\iff &&w_i=\sigma\\
        &w,i\models \lnot\phi &&\iff && w,i\not\models\phi\\
        &w,i\models \phi_1\land \phi_2 && \iff &&w,i\models \phi_1 \text{ and } w,i\models\phi_2\\
        &w,i\models \mathcal{O}\langle \phi_1,\phi_2,\ldots, \phi_k\rangle &&\iff && (w_{\leq i}, \{j\mid w,j\models\phi_1\},\ldots, \{j\mid w,j\models \phi_k\})\in K_\mathcal{O}
    \end{align*}
\end{definition}
This is just defining Lindstr\"om quantifiers, which can be instantiated by the typical logics as follows.
\begin{example}
    Linear Temporal Logic is the class of \Ptls{} with the binary $\since$ operator, typically written infix. 
    \begin{align*}
    K_{\since} &= \{(w,R_1,R_2) \mid \exists k\in R_2\text{ st } k<|w|  \text{ and }  [k,|w|)\subseteq R_1\}
    \end{align*}
    In this way
    \begin{align*} 
        w,i\models \phi_1\since \phi_2 &\iff (w_{\leq i}, \{j\mid w,j\models\phi_1\},\{j\mid w,j\models\phi_2\})\in K_{\since}\\
        &\iff \text{there exists $k<i$ such that $w,k\models\phi_2$ and $w,j\models\phi_1$ for all $k<j<i$}
    \end{align*}
    Similarly, $\CRASP$ is the class of \ptls{} with the operator $(\Lambda,C)=\sum_{1\leq m\leq k} \lambda_m\cdot\countl\phi_m\geq C$
    \begin{align*}
    K_{(\Lambda,C)} &= \left\{(w,R_1,R_2,\ldots, R_k) \middle| \sum_{1\leq m\leq k} \lambda_m\cdot |R_m|\geq C\right\}
    \end{align*}
    In this way 
    \begin{align*}
        w,i\models \sum_{1\leq m\leq k} \lambda_m\cdot\countl\phi_m\geq C &\iff (w_{\leq i}, \{j\mid w,j\models\phi_1\},\ldots \{j\mid w,j\models\phi_2\})\in K_{(\Lambda,C)}\\
        &\iff \sum_{1\leq m\leq k} \lambda_m\cdot|\{j\mid w,j\models\phi_m\}|\geq C
    \end{align*}
\end{example}
\begin{definition}\label{def:substitution}
   Let $\Phi$ and $\Psi$ be classes of \ptls{} formulas over alphabet $\Gamma$ and $\Sigma$, respectively. Let $G=\{\psi_\gamma\}_{\gamma\in \Gamma}$ be a family of formulas in $\Psi$. Let $\theta_G$ be a mapping of formulas given by 
    \begin{align*}
        &\theta_G(Q_\gamma) &\mapsto\quad\quad& \psi_\gamma &&\\\
        &\theta_G(\lnot\phi)&\mapsto\quad\quad&\lnot\theta_G(\phi) && \\
        &\theta_G(\phi_1\land\phi_2)& \mapsto\quad\quad&\theta_G(\phi_1)\land\theta_G(\phi_2) && \\
        &\theta_G(\mathcal{O}\langle\phi_1,\phi_2,\ldots,\phi_k\rangle) & \mapsto \quad\quad& \mathcal{O}\langle\theta_G(\phi_1),\theta_G(\phi_2),\ldots,\theta_G(\phi_k)\rangle) && 
    \end{align*}
    We write $\phi[\gamma\mapsto\psi_\gamma]$ for this substitution.
    This is called a $\Psi$ substitution of $\Phi$ formulas. We write $\Phi\sub\Psi$ for the class of all $\Psi$ substitutions of $\Phi$ formulas. 
    The semantics are as would be expected.
    By default we let this operation be right-associative.

\end{definition}

To help formalize the expressivity of \ptls{}, we define a class of languages.

\begin{definition}[End-Pointed Language]
    A pointed word is a tuple $(w,p)$ for $w\in\Sigma^*$ and $1\leq p\leq |w|$. An end-pointed language is a set of pointed words where the point denotes the end of the prefix of the string used for recognition.
    \begin{itemize}[noitemsep]
        \item For $\Phi$ a class of \ptls{}, $P(\Phi)$ is the set of pointed languages $L$ for which there exists $\phi\in\Phi$ such that $(w,p)\in L$ iff $w,p\models\phi$. 
        \item For $\vty{M}$ a class of typed monoids, $P(\vty{M})$ is the set of pointed languages $L$ for which there exists $(M,\types{M},\units{M})\in\vty{M}$, typed homomorphism $h\colon\Sigma^*\to (M,\types{M},\units{M})$, a type $\type{M}\in\types{M}$, and a finite set $C\subseteq M$ such that $(w,p)\in L$ iff $(h(w_1w_2\cdots w_{p-1}),h(w_p))\in \type{M}\times C$.
    \end{itemize}
\end{definition}

Programs are classically connected to algebraic characterizations of languages via wreath product principles. 
We define how to take two classes of programs and obtain a more complex class.

\begin{definition}[Program composition]
    Let $\Phi,\Psi$ be classes of \ptls{}. The class $\Phi\sub\Psi$ consists of all programs in $\Phi$ where atomic operations may refer to programs in $\Psi$. 
\end{definition}

For instance, $\CRASP_1\sub\CRASP_1$ is equivalent to the class of depth $2$ programs $\CRASP_2$.
We will see in \cref{app:typed_monoids} that \ptls{} can be closely connected to wreath products of monoids.

\section{Typed Monoids}\label{app:typed}
\citet{krebs2008typed} developed a framework for using infinite monoids to recognize languages.
We present a restriction of the aforementioned framework to the case of wreath products (a one-sided version of the block product used in previous work), which ultimately provides an exact algebraic characterization of $\CRASP$. 
The core issue here is that the wreath product of infinite monoids can generate uncountably many elements, which can be too powerful.
\begin{proposition}
    Consider the classic wreath product $\Z\wrc \Z$. Then $M(\lang)\preceq \Z\wrc \Z$ for every $\lang$.
\end{proposition}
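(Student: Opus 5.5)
The plan is to realize the syntactic monoid of an arbitrary language $\lang\subseteq\Sigma^*$ as a homomorphic image of a submonoid of $\Z\wrc\Z$. Since $\Z\wrc\Z=\Z^{\Z}\times\Z$ has uncountably many elements and no typing restriction, it can absorb the free monoid $\Sigma^*$ injectively. We therefore aim to embed $\Sigma^*$ itself into $\Z\wrc\Z$. Because $M(\lang)$ is a quotient of $\Sigma^*$ (via the syntactic morphism), composing the inverse of this embedding with the syntactic morphism gives the required surjection from a submonoid onto $M(\lang)$.

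\emph{Step 1: reduce to a binary alphabet.} Encode $\Sigma$ injectively into $\{0,1\}^*$ by a fixed-length block code. This gives an injective homomorphism $\Sigma^*\to\{0,1\}^*$, so it suffices to embed $\{0,1\}^*$. Alternatively, we can skip this step and use base $|\Sigma|+1$ digits directly.

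\emph{Step 2: define generators.} Send $\sigma\in\{0,1\}$ to $(f_\sigma,1)$, with $f_\sigma(x)=(\sigma+1)\cdot 3^{|x|}$ for $x\ge 0$ and, say, $0$ for $x<0$. We use digits $1,2$ in base $3$ to avoid the leading-zero ambiguity that plain binary with digit $0$ would create. Using the product rule $(f_1,n_1)(f_2,n_2)=(f_1+{}^{n_1}f_2,n_1+n_2)$ with ${}^{n}f(x)=f(x+n)$, an induction on length shows that $w=w_1\cdots w_t$ maps to $(f_w,t)$, where
\[
  f_w(0)=\sum_{i=1}^{t}(w_i+1)\,3^{\,i-1}.
\]
This is the base-$3$ number with nonzero digits read from $w$.

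\emph{Step 3: injectivity and conclusion.} The second coordinate recovers $t=|w|$. The value $f_w(0)$ has a unique base-$3$ expansion with digits in $\{1,2\}$ of length $t$, so it recovers $w$. Hence $w\mapsto(f_w,|w|)$ is injective, and its image $T$ is a submonoid of $\Z\wrc\Z$ isomorphic to $\{0,1\}^*$. Composing the inverse of this isomorphism with the syntactic morphism $\{0,1\}^*\to M(\lang)$ gives a surjective homomorphism $T\to M(\lang)$, so $M(\lang)\preceq\Z\wrc\Z$.

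The main obstacle is in Step 2: checking that the action convention (left action ${}^nf(n')=f(n'n)$, written additively) really produces exponent $i-1$ for the $i$-th letter and does not reverse the order. If the convention shifts the other way, we will instead define $f_\sigma(x)=(\sigma+1)3^{-x}$ on $x\le 0$ and $0$ elsewhere, or evaluate at the appropriate endpoint. Either choice still gives an injective encoding, since injectivity only needs distinct positions to receive distinct powers of $3$.
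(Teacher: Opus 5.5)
Your proof is correct and follows essentially the same route as the paper, which sends $\sigma\mapsto(f_\sigma,1)$ with $f_\sigma(x)=\sigma\cdot 2^{|x|}$, so that $f_w(0)$ is the binary value of $w$ and the second coordinate records $|w|$. Your shift to digits $\{1,2\}$ in base $3$ is harmless but unnecessary, since the length coordinate already removes any leading-zero ambiguity; you also make explicit the reduction to a binary alphabet and the final composition with the syntactic morphism, both of which the paper leaves implicit.
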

\begin{proof}
    Without loss of generality let $\Sigma=\{0,1\}$. Consider the submonoid of $\Z\wrc\Z$ generated by the image of $\Sigma^*$ under the homomorphism $\sigma \mapsto(f_\sigma,1) $ where $f_\sigma(x)=\sigma\cdot 2^{|x|}$.
    In essence, this creates a mapping $w\mapsto (f_w,|w|)$ where $f_w(0)$ outputs the integer value of the binary number $w$.
    Thus, $\Z\wrc\Z$ can recognize arbitrary languages.
\end{proof}

This problem motivates the definition of \emph{typed monoids}, which restricts the accepting sets.
\subsection{Definitions}

\begin{definition}
    A typed monoid is a triple $(M, \types{M}, \units{M})$ where $M$ is a finitely generated monoid, $\types{M}$ is a finite Boolean algebra over $M$, and $\units{M}$ is a finite subset of $M$. Elements of $\types{M}$ are the \emph{types} and elements of $\units{M}$ are the \emph{units}. A language $L$ is recognized by $(M, \types{M}, \units{M})$ if there exists a homomorphism $h\colon \Sigma^*\to M$ such that $h(\Sigma)\subseteq\units{M}$ and $L=h^{-1}(\type{M})$ for some $\type{M}\in \types{M}$.
\end{definition}

As an example, the language $\mathsf{MAJORITY}$ can be recognized by the typed monoid $(\Z,\{(-\infty,0],[1,\infty), \Z, \emptyset\},\{-1,1\})$ via the type $[1,\infty)$ and the homomorphism $a\mapsto 1$ and $b\mapsto -1$.
We will typically refer to this typed monoid as $\Z$.

We define morphisms

\begin{definition}
    Let $(S, \types{S}, \units{S})$ and \( (T, \types{T}, \units{T}) \) be typed monoids. A typed monoid homomorphism $h\colon (S, \types{S}, \units{S})\to (T, \types{T}, \units{T})$ is a triple $(h_S,h_\types{S},h_\units{S})$ such that:

    \begin{itemize}[noitemsep]
        \item $h_S\colon S\to T$ is a monoid homomorphism
        \item $h_\types{S}\colon \types{S}\to\types{T}$ is a homomorphism of Boolean algebras
        \item $\forall \type{S}\in\types{S}, h_S(\type{S})=h_\types{S}(\type{S})\cap h_S(S)$
        \item $\forall \unit{s}\in\units{S}, h_S(\unit{s})=h_\units{S}(\unit{s})$
    \end{itemize}

    And due to the compatibility of $h_S,h_\types{S},h_\units{S}$ we can omit the subscripts. We say that a typed monoid $(S, \types{S}, \units{S})$ recognizes the language $L\subseteq\Sigma^*$ if there is a morphism $h \colon \Sigma^* \to S$ with $h(\Sigma)\subseteq \units{S}$ and a type $\type{S} \in \types{S} $ such that $L = h^{-1}(\type{S})$.

\end{definition}

So we want our functions to be compatible with the finite types, which motivates the following definition. 
In a sense, this requires that all elements of the same type, up to some constant shifting $C$, behave the same under the function.

\begin{definition}[Type-respecting Functions]\label{def:type_respecting}
    Let \( S \) be a set and \( (T, \types{T}, \units{T}) \) be a typed monoid and let \( C \subseteq T \) be a nonempty finite set of constants. A function \( f : T  \to S \) is called type respecting with respect to \( (T, \types{T}, \units{T}) \) and $C$ if the preimage \( f^{-1}(s) \) can be described by a finite Boolean combination of conditions of the form \( t c\in \type{T} \) where \( c \) is a constant in \( T \) (not necessarily in \( \units{T} \)) and \( \type{T} \in \types{T} \).  
\end{definition}

Intuitively, the image of $x$ under a type-respecting function depends only on the type of $xc$ for some qualified set of constants $c$.
Now the typed wreath product is similar to the untyped case, though the functions are constrained to be type-respecting functions.

\begin{definition}[Typed Wreath Product]
    Let \( (M, \types{M}, \units{M}) \), \( (N, \types{N}, \units{N}) \) be two typed monoids, \( C \subseteq N \) be a finite set. The \textit{typed wreath product} 
\[
(U, \types{U}, \units{U}) = (M, \types{M}, \units{M}) \twreath_C (N, \types{N}, \units{N})
\]
of \( (M, \types{M}, \units{M}) \) with \( (N, \types{N}, \units{N}) \) is defined such that
\begin{itemize}[noitemsep]
    \item \( \units{U} \) consists of all elements \( (f, n) \), where \( n \in \units{N} \), and \( f : N \rightarrow \units{M}\) is a type respecting function (see \cref{def:type_respecting}) with respect to \( (N, \types{N}, \units{N}) \) and \( C \)
    \item \( U \) is the submonoid of \( M \wrc N \) generated by \( \units{U} \) 
    \item \( \types{U} \) consists of all types \( \type{U}_{\type{M},\type{N}} = \{(f, n) \in U \mid f(1_{N}) \in \type{M}, n \in \type{N} \} \), where \( \type{M} \in \types{M} \), \( \type{N} \in \types{N} \)
    
\end{itemize}
\end{definition}

\begin{defin}[Typed Monoid Pseudovariety]
    A typed monoid pseudovariety is a class of typed monoids closed under
    \begin{itemize}[noitemsep]
        \item Division
        \item Shifting (changing types by inverse multiplication)
        \item Unit relaxation (swapping out units)
        \item Trivial extension (applying a congruence)
    \end{itemize}
\end{defin}

\begin{defin}[Typed Wreath Product Closure]
    For typed monoid pseudovarieties define $\vty{V}\wrc\vty{W}$ as the pseudovariety generated by all $V\wrt W$ where $V\in\vty{V}$ and $W\in\vty{W}$.
    Define $\wwp^1(\vty{V})=\vty{V}$, $\wwp^{k+1}(\vty{V})=\wwp^k(\vty{V})\wrt\vty{V}$, and $\wwpc(\vty{V})=\bigcup_{k>0} \wwp^k(\vty{V})$.
    We also define $\wwp^k((M,\types{M},\units{M}))$ for typed monoids $(M,\types{M},\units{M})$, taking $\wwp^1((M,\types{M},\units{M}))$ as the pseudovariety generated by $(M,\types{M},\units{M})$.
\end{defin}

Finally, we note the compatibility of the classical wreath product and the typed wreath product, which will become important in our algebraic decision procedure.
\begin{restatable}{lemma}{FiniteDividesTyped}\label{lemma:finite-product-divides-typed-product}
    Assume $M \preceq S, N \preceq T$ where $M, N$ are finite and $S, T$ are typed.
    Then 
    \begin{equation}
        M \wrc N \preceq S^{N} \wrt T
    \end{equation}
    where $\wrc$ is the classical wreath product and $\wrt$ is the typed wreath product. 
\end{restatable}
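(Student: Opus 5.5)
The plan is to build the division explicitly. Write the divisions as surjective homomorphisms $\alpha\colon S'\to M$ and $\beta\colon T'\to N$, where $S'\le S$ and $T'\le T$ are submonoids. By unit relaxation and shifting we may assume $S'$ and $T'$ are generated by units: elements of $M$ and $N$ are images of finitely many elements, so each needed preimage can be adjoined as a unit. Here $S^{N}$ denotes the direct power of the typed monoid $S$, with componentwise types and units; this lies in the pseudovariety generated by $S$. The idea is that an element $(f,n)\in M\wrc N$ with $f\colon N\to M$ is simulated by a pair $(\hat f,t)$. The coordinate $t\in T'$ is a lift of $n$. The function $\hat f\colon T\to S^{N}$ is meant to send $t'$ to the tuple of lifts of $f(\beta(t'))$. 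Since $\beta$ is only defined on $T'$, we instead let $\hat f$ depend on the type information of $t'$ and let the $N$-indexed coordinates of $S^N$ carry the value for each candidate $n'\in N$.

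The construction proceeds in three steps.

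\textbf{Step 1: constants and types.} Choose a finite set $C\subseteq T$ and, for each $n'\in N$, a Boolean combination of conditions $t'c\in\type{T}$ that carves out a set $X_{n'}\supseteq\beta^{-1}(n')$. The sets $X_{n'}$ must be pairwise disjoint on $T'$. This is possible because $N$ is finite and $\beta$ is a typed division, so each fibre $\beta^{-1}(n')$ is the trace on $T'$ of a type, possibly after shifting. The shifting constants are collected into $C$.

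\textbf{Step 2: the units.} For each generator $(g,n)$ of $M\wrc N$, with $n=\beta(u)$ for a unit $u$, define the unit $(\hat g,u)$ of $S^{N}\wrt_C T$ by
\[
\hat g(t')=\bigl(\hat g_{n'}(t')\bigr)_{n'\in N},\qquad \hat g_{n'}(t')=\begin{cases}\text{a fixed lift of } g(n') & t'\in X_{n'},\\ 1_S & \text{otherwise}.\end{cases}
\]
By Step 1, $\hat g$ is type-respecting and takes values in the units of $S^N$.

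\textbf{Step 3: the homomorphism.} Let $U$ be the submonoid generated by these units. Define $\Theta\colon U\to M\wrc N$ by $\Theta(\hat f,t)=(f,\beta(t))$, where $f(n')=\alpha\bigl(\text{the $n'$ coordinate of }\hat f(t')\bigr)$ for any $t'\in\beta^{-1}(n')$. We then need to check that $\Theta$ is well defined and multiplicative. The wreath multiplication gives ${}^{t_1}\hat f_2(t')=\hat f_2(t't_1)$, and $\beta(t't_1)=\beta(t')n_1$. So under $\Theta$ this left action matches the left action of $n_1$ in $M\wrc N$, and additivity in $S^N$ is preserved by $\alpha$ coordinatewise. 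Surjectivity follows because the generators of $M\wrc N$ are hit.

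The main obstacle is well-definedness: $\hat f(t')$ must depend on $t'$ only through $\beta(t')$ when $t'$ ranges over the relevant elements. This holds for generators by the choice of the $X_{n'}$. For products, it has to be maintained through the shift $t'\mapsto t't_1$. Since $t_1\in T'$ and the $X_{n'}$ are unions of $\beta$-fibres on $T'$, membership is preserved: $\beta(t't_1)$ depends only on $\beta(t')$, and $t't_1\in T'$. If exact fibre descriptions by types fail, the first fallback is to restrict $\hat f$ to $T'$ and use the trivial-extension closure, quotienting $U$ by the congruence that identifies elements agreeing on $T'$.
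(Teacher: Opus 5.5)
There is a genuine gap: your units index the $N$ coordinates the wrong way, and as a result $\Theta$ is not multiplicative. On $T'$, your $\hat g(t')$ is one-hot. Its only non-identity coordinate is $\beta(t')$, which holds a lift of $g(\beta(t'))$, so the index records the \emph{current} state.

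Take units $(\hat g_1,u_1),(\hat g_2,u_2)$, put $n_1=\beta(u_1)$, and pick $n'$ with $n'n_1\neq n'$. For $t'\in\beta^{-1}(n')$, consider $(\hat g_1+{}^{u_1}\hat g_2)(t')=\hat g_1(t')\,\hat g_2(t'u_1)$. Its $n'$-coordinate is $\tilde g_1(n')\cdot 1_S$, because $t'u_1\in\beta^{-1}(n'n_1)$ lies outside $X_{n'}$. So $\Theta$ returns $g_1(n')$ at $n'$, whereas the product in $M\wrc N$ has $g_1(n')\,g_2(n'n_1)$ there. Taking $g_1\equiv 1_M$ and $g_2(n'n_1)\neq 1_M$ gives an explicit failure.

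Your equivariance check breaks at exactly this point. Passing from $t'$ to $t't_1$ moves the relevant coordinate from $\beta(t')$ to $\beta(t')n_1$, but $\Theta$ keeps reading coordinate $\beta(t')$. No other read-off from $T'$ can rescue the encoding either. Contributions made at different states land in independent coordinates of the direct power, so their order in $M$ is lost. For example, with $N=\Z_2$ written additively, the lifts of $(g_1,1)(g_2,1)(g_3,\cdot)$ and of $(g_1,0)(g_3,1)(g_2,\cdot)$ have identical restrictions to $T'$. Yet their first components at $0$ are $g_1(0)g_2(1)g_3(0)$ and $g_1(0)g_3(0)g_2(1)$ respectively. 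Your focus on well-definedness was therefore misplaced; that part is fine, and multiplicativity is what fails.

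The missing idea is to index coordinates by the \emph{starting} state, which is what the paper does. For a pointwise lift $d\in(S')^N$ of $g$, the unit is $(f_d,u)$ with $f_d(t')(n)=d(n\,\beta(t'))$ for $t'\in T'$, and $f_d=1$ off $T'$. Then at $1_T$, coordinate $n$ accumulates $\tilde g_1(n)\,\tilde g_2(nn_1)\,\tilde g_3(nn_1n_2)\cdots$. The division is $(f,t)\mapsto\bigl(n\mapsto\alpha(f(1_T)(n)),\,\beta(t)\bigr)$.

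Reading everything off $f(1_T)$ is forced, not cosmetic. The types of the typed wreath product only inspect $f(1_T)$ and $t$. So for the type part of the division, which your proposal never checks, the image of $(f,t)$ must be determined by $f(1_T)$ and the type of $t$. The direct power $S^N$ is in the statement precisely so that $f(1_T)$ carries the whole function $N\to M$.

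Your $\Theta$ evaluates $\hat f$ at points $t'\neq 1_T$, so it would violate this type condition even if repaired at the monoid level. For instance, the single-coordinate version $\hat g(t')=\widetilde{g(\beta(t'))}$ into $S$ is multiplicative under your read-off, but its types determine only $f(1_N)$.
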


\begin{proof}
    Call these typed semigroups $(S,\types{S},\units{S})$ and $(T,\types{T},\units{T})$. 
    Let $h_M\colon (S',\types{S}',\units{S}')\to (M,2^M,M)$ and $h_N\colon (T',\types{T'},\units{T}')\to (N,2^N,N)$ define the divisions. 
    We will define a function $h\colon (U,\types{U},\units{U})\to M\wrc N$ where $(U,\types{U},\units{U})\leq (S^N,\types{S}^N,\units{S}^N)\wrt(T,\types{T},\units{T})$. First, let $U$ be the submonoid of $(S^N)^T\times T$ generated by $(f_d,t)$ where $f_d$ for $d\in (S')^N$ is defined such that $f_d(1_T)=d$ and $f_d(t)=[n\mapsto d(n+h_N(t))]$ for all $t\in T'$, and $f_d(t)=1_{(S')^N}$ for $t\in T\setminus T'$.
    Note that all $(f,t)$ in $U$ satisfy the following constraints:
    \begin{enumerate}[noitemsep]
        \item $t\in T'$
        \item $\im(\im(f))\subseteq S'$
        \item $f(t_1+t_2)(n)=f(t_1)(n+h_N(t_2))$ 
        \item $f(t_1)=f(t_2)$ whenever $h_N(t_1)=h_N(t_2)$ 
        \item $\im(f(t))=1_{{S'}^N}$ for $t\in T\setminus T'$
    \end{enumerate}

    As a submonoid of the wreath product, \( \types{U} \) consists of all types \( \type{U}_{\type{SN},\type{T}} = \{(f, t) \in U \mid f(1_{T}) \in \type{SN}, t \in \type{T} \} \), where \( \type{SN} \in \types{S'}^N \), \( \type{T} \in \types{T'} \) and \( 1_T \) is the neutral element of \( T \). 
    Let $\units{U}$ consist of $(f,t)$ where $t\in\units{T}'$. 
    Define a function $h\colon (U,\types{U},\units{U})\to M\wrc N$ such that $h((f,t))=(g,h_N(t))$ where $g(n)=h_M(f(1_T)(n))$.
    \begin{itemize}
        \item All functions in $U$ are type-respecting.
        This is because each of the generators $f_d$ is type-respecting -- since $h_N$ is a homomorphism on types $h_N\colon \types{T}\mapsto 2^N$, the image $f_d(t)$ is determined by a boolean combination of conditions on the type of $t$.
        \item $h$ is a surjection, because any $(g,n)\in M\wrc N$:
        \begin{itemize}
            \item There is $f\in U$ such that $g(\cdot)=h_M(f(t)(\cdot))$, because all $d\in (S')^{N}$ are represented in $U$. 
            We just pick $d$ to be compatible with $g$ and $f_d$ witnesses the preimage.
            As for which $d$ to choose, we construct $d$ where for $n_0\in N$, we pick an element $s_0\in h_M^{-1}(g(n_0))$ (which exists by the surjectivty of $h_M$) and set $d$ such that $d(n')=s_0$ for all $n'$ where $g(n')=g(n_0)$.
            \item There is $t\in T'$ such that $h_N(t')=n$ by the surjectivity of $h_N$.
        \end{itemize}
        \item $h$ is a homomorphism on elements. From $(3)$ we get that $f(t_1+t_2)(n)=f(t_1)(n+h_N(t_2))$.
        \begin{align*}
           & h((f_1,t_1)(f_2,t_2))\\
           &=h((f_1+{}^{t_1}f_2,t_1t_2)) &&\\
            &=(n\mapsto h_M((f_1+{}^{t_1}f_2)(1_T)(n)), h_N(t_1t_2)) &&\\
            &=((n\mapsto h_M(f_1(1_T)(n)))+(n\mapsto h_M({}^{t_1}f_2(1_T)(n))), h_N(t_1t_2)) &&\\
            &=((n\mapsto h_M(f_1(1_T)(n)))+(n\mapsto h_M(f_2(1_T+t_1)(n))), h_N(t_1t_2)) &&\\
            &= ((n\mapsto h_M(f_1(1_T)(n)))+(n\mapsto h_M(f_2(1_T)(n+h_N(t_1)))),h_N(t_1)h_N(t_2) &&\text{by (3)}\\
            &= ((n\mapsto h_M(f_1(1_T)(n)))+{}^{h_N(t_1)}(n\mapsto h_M(f_2(1_T)(n))),h_N(t_1)h_N(t_2)\\
            &=(n\mapsto h_M(f_1(1_T)(n)),h_N(t_1))(n\mapsto h_M(f_2(1_T)(n)),h_N(t_2))\\
            &=h((f_1,t_1))h((f_2,t_2))
        \end{align*}
        \item $h$ is a homomorphism on types. Let $\type{U}_{\type{SN},\type{T}}$ be a type of $(U,\types{U},\units{U})$.
        Here, $h(\type{SN})\in 2^{M^N}$ and $h(\type{T})\in 2^N$, because $h_M,h_N$ are homomorphisms on types by assumption\footnote{Note that \cite{krebs2008typed} defined the type of $(f,t)$ independent of $t$ (in order to ease the connection to logic), in which case the image under $h$ would not be a type in $2^{M^N}\times 2^N$. 
        By conditioning the type on $t$, the homomorphism on types goes through without need for additional direct products to enforce the types of $T$.}. 
        First, $h$ respects complements
        \begin{align*}
            h(\comp{\type{U}_{\type{SN},\type{T}}}) & = h(\{(f, t) \in U \mid f(1_{T}) \not\in \type{SN} \text{ or } t \not\in \type{T} \})\\
            &= \{h((f, t)) \in M\wrc N \mid h(f(1_{T})) \not\in h(\type{SN}) \text{ or } h(t) \not\in h(\type{T}) \}\\
            &= 2^{M^N}\times 2^N\setminus \{h((f, t)) \in M\wrc N \mid h(f(1_{T})) \in h(\type{SN}), h(t) \in h(\type{T}) \}\\
            &=\comp{h(\type{U}_{\type{SN},\type{T}})}
        \end{align*}
        And $h$ respects union
        \begin{align*}
            &h(\type{U}_{\type{SN}_1,\type{T}_1}\cup \type{U}_{\type{SN}_1,\type{T}_1}) \\
            &= h(\{(f, t) \in U \mid f(1_{T}) \in \type{SN}_1, t \in \type{T}_1 \}\cup \{(f, t) \in U \mid f(1_{T}) \in \type{SN}_2, t \in \type{T}_2 \})\\
            &= \{h((f, t)) \in U \mid f(1_{T}) \in \type{SN}_1, t \in \type{T}_1 \}\cup \{h((f, t)) \in U \mid f(1_{T}) \in \type{SN}_2, t \in \type{T}_2 \}\\
            &= h(\{(f, t) \in U \mid f(1_{T}) \in \type{SN}_1, t \in \type{T}_1 \})\cup h(\{(f, t) \in U \mid f(1_{T}) \in \type{SN}_2, t \in \type{T}_2 \})\\
            &= h(\type{U}_{\type{SN}_1,\type{T}_1})\cup h(\type{U}_{\type{SN}_1,\type{T}_1}) 
        \end{align*}
        And $h$ preserves the inf and sup
        \begin{align*}
            h(\emptyset)&=\emptyset\\
            h(U) &= \{h(f_d,t)\mid d\in (S')^N, t\in T'\}\\
            &= \{(h_M(d),h_N(t))\mid d\in (S')^N, t\in T'\}\\
            &= M^N\times N
        \end{align*}
    \end{itemize}
\end{proof}

\subsection{Proof of Typed Wreath Product Principle}\label{app:typed_monoids}

First, we show how the type of an element is computed within the wreath product. 
Here we write $\pi_1$ and $\pi_2$ as the projections from the first and second coordinates of $M^{N}\times N$
\begin{lemma}\label{lem:wreath_type_computation}
    Let $h\colon\Sigma^*\to(T,\types{T},\mathcal{T})=(M, \types{M}, \units{M}) \twreath_C (N, \types{N}, \units{N})$ for $C\subseteq N$. Let $\pi_1(\type{T})\in\types{M}$ be such that $t\in\type{T}\iff \pi_1(t)(1_N)\in\pi_1(\type{T})$.
    Then $h(w)\in\type{T}\in\types{T}$ iff

    \begin{align*}
        \sum_{1\leq i\leq |w|}\pi_1(h(w_i))\left(\prod_{1\leq j<i} \pi_2(h(w_j))\right)&\in \pi_1(\type{T})\in\types{M}\\
        \prod_{1\leq j<\leq i} \pi_2(h(w_j))&\in \pi_2(\type{T})\in\types{N}
    \end{align*}
\end{lemma}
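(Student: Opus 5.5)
The plan is to compute $h(w)=h(w_1)h(w_2)\cdots h(w_{|w|})$ explicitly in the wreath product and then read off the type membership from the definition of $\types{T}$. Write $h(w_i)=(f_i,n_i)$ with $f_i=\pi_1(h(w_i))$ and $n_i=\pi_2(h(w_i))$. We prove by induction on $k$ that
\[
h(w_1\cdots w_k)=\Bigl(f_1+{}^{n_1}f_2+{}^{n_1n_2}f_3+\cdots+{}^{n_1\cdots n_{k-1}}f_k,\; n_1n_2\cdots n_k\Bigr).
\]
The base case $k=0$ gives the identity $(\mathbf{0},1_N)$ of the wreath product, where $\mathbf{0}$ is the constant function to $1_M$, and $k=1$ is immediate. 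For the inductive step, we multiply on the right by $(f_{k+1},n_{k+1})$ using the multiplication rule $(F,n)(f,n')=(F+{}^{n}f,nn')$. This appends the summand ${}^{n_1\cdots n_k}f_{k+1}$, because ${}^{n}$ is applied with $n=n_1\cdots n_k$.

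Next we evaluate the first coordinate at $1_N$. By the definition of the left action, ${}^{n}f(1_N)=f(1_N\cdot n)=f(n)$. Since addition in $M^N$ is pointwise, the first coordinate evaluated at $1_N$ equals
\[
\sum_{1\le i\le|w|} f_i\Bigl(\prod_{1\le j<i} n_j\Bigr),
\]
where the sum is computed in the (possibly noncommutative) monoid $M$ and kept in order. The second coordinate is $\prod_{1\le j\le|w|} n_j$. This is presumably what the statement means by the garbled bound "$j<\leq i$": the product runs over all letters of $w$.

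Finally, by the definition of the typed wreath product, every type of $U$ has the form $\type{U}_{\type{M},\type{N}}=\{(f,n)\mid f(1_N)\in\type{M},\ n\in\type{N}\}$. Here we take $\pi_1(\type{T})=\type{M}$ and $\pi_2(\type{T})=\type{N}$. So $h(w)\in\type{T}$ holds exactly when the two displayed conditions hold.

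One subtlety concerns types formed as Boolean combinations, such as unions of rectangles, since these are not themselves rectangles. We will handle them either by restricting the lemma to the generating types $\type{U}_{\type{M},\type{N}}$, as the statement's use of $\pi_1(\type{T})$ and $\pi_2(\type{T})$ suggests, or by noting that the characterization extends to Boolean combinations componentwise.

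The main obstacle is only bookkeeping: keeping the order of summation correct in the noncommutative $M$, and checking the direction of the action ${}^nf(n')=f(n'n)$. With $n'=1_N$ this direction gives the prefix product $\prod_{j<i}n_j$, which is the value the statement requires, rather than a suffix product.
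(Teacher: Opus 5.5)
Your proposal is correct and follows essentially the same route as the paper: expand $h(w_1)\cdots h(w_{|w|})$ in the wreath product, evaluate the first coordinate at $1_N$ using ${}^{n}f(1_N)=f(n)$ to get the ordered sum $\sum_i f_i\bigl(\prod_{j<i} n_j\bigr)$, and read off membership from the rectangular form of the types. Your explicit induction and your caveat that the lemma really concerns the rectangular generating types are small refinements of details the paper's proof leaves implicit.
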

\begin{proof}
    We walk through the computation. First, we note that $h(w)\in\type{T}$ iff $\pi_1(h(w))\in\pi_1(\type{T})$, and this type $\pi_1(\type{T})$ exists by the definition of the typed wreath product. 
    Let $h(w_i)=(f_i,n_i)$ where $f_i\in M^N$ and $n_i\in N$. Then compute

    \begin{align*}
        h(w) &= h(w_1)h(w_2)\ldots h(w_{|w|})\\
            & =(f_1,n_1)(f_2,n_2)\ldots (f_{|w|},n_{|w|})\\
            &= \left(f_1+{}^{n_1}f_2+\ldots+ {}^{\left(\prod_{1\leq j\leq |w|-1} n_j\right)}f_{|w|}, \prod_{1\leq i\leq |w|} n_i\right)\\
            &= \left(\sum_{1\leq i\leq |w|} {}^{\left(\prod_{1\leq j<i} n_j\right)}f_i, \prod_{1\leq i\leq |w|} n_i\right)
    \end{align*}

    Recall by the definition of the wreath product that ${}^{n}f(x)=f(xn)$, so ${}^nf(1_N)=f(n)$. 

    \begin{align*}
        \pi_1(h(w))(1_N) \in\pi_1(\type{T})&\iff\sum_{1\leq i\leq |w|} {}^{\left(\prod_{1\leq j<i} n_j\right)}f_i(1_N)\in\pi_1(\type{T})\\
        &\iff \sum_{1\leq i\leq |w|} f_i\left(\prod_{1\leq j<i} n_j\right)\in\pi_1(\type{T})\\
        &\iff\sum_{1\leq i\leq |w|}\pi_1(h(w_i))\left(\prod_{1\leq j<i} \pi_2(h(w_j))\right)\in \pi_1(\type{T})\\
    \end{align*}

    The computation of the second coordinate is routine as is computed in $(N,\types{N},\units{N})$
    \begin{align*}
        \pi_2(h(w)) \in\pi_2(\type{T})&\iff \prod_{1\leq j<\leq i} \pi_2(h(w_j))\in \pi_2(\type{T}).
    \end{align*}
\end{proof}

Now connection between the algebraic and logical formulations is often spelled out using statements in the form of a ``wreath product principle''.

\begin{restatable}{theorem}{TypedWreathPrinciple}\label{thm:typed_wreath_principle}
\begin{thm}[Typed Wreath Product Principle]
    Let $\Phi,\Psi$ be classes of \ptls{} and $\vty{M},\vty{N}$ be pseudovarieties of monoids such that $L(\Phi)=L(\vty{M})$ and $P(\Psi)=P(\vty{N})$.
    Then $ L(\Phi\sub\Psi)=L(\vty{M}\twreath\vty{N})$.
\end{thm}
\end{restatable}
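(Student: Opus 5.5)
We prove the two inclusions separately, using \cref{lem:wreath_type_computation} as the computational bridge in both directions. The guiding idea is the usual one for wreath product principles: the right factor $N$ sees the prefix $w_1\cdots w_{i-1}$ together with the current letter, which is exactly the pointed-language data $P(\vty{N})=P(\Psi)$. The left factor $M$ then reads a relabelled word whose $i$-th letter records that pointed information, and this matches a $\Phi$ formula evaluated over the alphabet of $\Psi$-definable predicates.

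\textbf{Inclusion $L(\vty{M}\twreath\vty{N})\subseteq L(\Phi\sub\Psi)$.} Let $L=h^{-1}(\type{T})$ with $h\colon\Sigma^*\to M\twreath_C N$ and $h(w_i)=(f_i,n_i)$. By \cref{lem:wreath_type_computation}, membership is decided by two conditions.
\begin{itemize}
\item The first condition is whether $\sum_i f_i(n_1\cdots n_{i-1})$ lies in $\pi_1(\type{T})$. Each $f_i=f_{w_i}$ is type respecting, so $f_{w_i}(n_1\cdots n_{i-1})$ is determined by a Boolean combination of conditions $n_1\cdots n_{i-1}c\in\type{N}$ with $c\in C$, together with the letter $w_i$. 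Each such condition is a pointed language in $P(\vty{N})$; for instance, take the type $\type{N}c^{-1}$, obtained by closure under shifting, and constrain the pointed letter. Hence each condition is given by some $\psi\in\Psi$.
\item We take the finite alphabet $\Gamma=\units{M}$ and let $\psi_\gamma$ be the $\Psi$ formula asserting that the value of $f_{w_i}$ at the prefix equals $\gamma$. The relabelled word $\gamma_1\cdots\gamma_n$ has image $\sum_i\gamma_i$ in $M$, and membership in $\pi_1(\type{T})$ is an $L(\vty{M})=L(\Phi)$ condition. This yields a formula $\phi[\gamma\mapsto\psi_\gamma]$.
\item The second condition, $n_1\cdots n_{|w|}\in\pi_2(\type{T})$, is itself a $\Psi$ condition. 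We fold it in by conjunction, using closure of $\Phi\sub\Psi$ under Boolean operations and the trivial substitution with an identity $\Phi$ formula.
\end{itemize}

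\textbf{Inclusion $L(\Phi\sub\Psi)\subseteq L(\vty{M}\twreath\vty{N})$.} Take $\phi[\gamma\mapsto\psi_\gamma]$. Each $\psi_\gamma$ gives a pointed language, recognized by some $N_\gamma\in\vty{N}$ via a type and a constant set $C_\gamma$. We take the direct product $N=\prod_\gamma N_\gamma$, which stays in $\vty{N}$, and let $C$ be the union of the constants. The truth value of $\psi_\gamma$ at position $i$ is then a type-respecting function of $n_1\cdots n_{i-1}$ and the letter $w_i$. One point needs care: pointed membership asks that $h(w_p)\in C$, and this depends on the current letter rather than the prefix. Since $f$ is indexed by the generator $(f_{w_i},n_{w_i})$, the letter is known at that step, so this condition can be absorbed into the choice of $f$.

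Let $\phi$ be recognized by $M\in\vty{M}$ via $g\colon\Gamma^*\to M$. There is a subtlety when the $\psi_\gamma$ are not mutually exclusive. We first normalize so that the substitution is by a partition, replacing $\Gamma$ by atoms of the Boolean algebra generated by the $\psi_\gamma$; this is a standard relabelling that stays within $L(\Phi)$, since $L(\Phi)=L(\vty{M})$ is closed under inverse morphisms. With the partition in place, we set $f_{w}(x)=g(\gamma)$, where $\gamma$ is the atom holding at $(x,w)$. \Cref{lem:wreath_type_computation} then shows that $h(w)$ lies in the type $\type{U}_{\type{M},N}$ exactly when $w\models\phi[\gamma\mapsto\psi_\gamma]$.

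The main obstacle is the bookkeeping in the type-respecting condition. We must verify that the pointed semantics of $P(\vty N)$, which uses a type of the prefix together with a constant for the current letter, corresponds exactly to the allowed shifted conditions $tc\in\type{T}$, including the unit and shift closures of the typed pseudovariety. This is where we would first check the shifting closure carefully.
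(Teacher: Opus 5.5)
Your proposal is correct and follows the paper's proof: both inclusions go through \cref{lem:wreath_type_computation}, with the first coordinate handled by relabelling each position via $\Psi$-definable pointed conditions on the $N$-prefix and the second coordinate by a conjunction, and the converse handled by combining the recognizers $N_\gamma$ of the substituted formulas and producing the letter of $\Gamma$ through a type-respecting $f$. Your relabelling alphabet $\units{M}$ is slightly more careful than the paper's $\types{N}^C$, since it also captures the dependence of $f_{w_i}$ on the current letter; one small correction is that your normalization to a partition follows from syntactically substituting Boolean combinations of atoms inside $\phi$, not from closure of $L(\Phi)$ under inverse morphisms.
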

\begin{proof}
    \quad
    \begin{itemize}
        \item Suppose $h\colon\Sigma^*\to(T,\types{T},\mathcal{T})=(M, \types{M}, \units{M}) \twreath_C (N, \types{N}, \units{N})$ for $C_N\subseteq N$.
        For each type $\type{T}_{\type{M},\type{N}}\in\types{T}$ we will construct a formula $\theta_\type{T}\in \Phi\sub\Psi$ such that $w\models\theta_\type{T} \iff h(w)\in\type{T}$.
        Observe from Lemma \ref{lem:wreath_type_computation} that $h(w)\in\type{T}$ iff 
        \begin{align*}
            \sum_{1\leq i\leq |w|}\pi_1(h(w_i))\left(\prod_{1\leq j<i} \pi_2(h(w_j))\right)\in \type{M} &&\text{ and } && \prod_{1\leq j<\leq i} \pi_2(h(w_j))\in \type{N}.
        \end{align*}

        The first coordinate can be viewed as recognition of a language by $\type{M}$ in $M$ via a homomorphism $h_1\colon (\types{N}^C)^*\to (M,\types{M},\units{M})$. By assumption obtain a $\phi\in\Phi$ that recognizes this language. 
        As for the word $w'\in (\types{N}^C)^*$ we have that

        \begin{align*}
            w'_i = (\type{N}_{c_1},\type{N}_{c_2},\ldots, \type{N}_{c_{|C|}}) \iff \bigwedge_{c\in C} \left[\left(c\prod_{1\leq j< i} \pi_2(h(w_j))\right)\in\type{N}_c\right]
        \end{align*}
        Observe that for each $c$, we use $(N,\types{N},\units{N})$ to recognize an end-pointed language accepting with type $\type{N}_c$. Then by assumption we obtain $\psi_c\in\Psi$ recognizing each prefix, and then use the substitution 
        
        $$\phi_{\type{M}}\left[(\type{N}_{c_1},\type{N}_{c_2},\ldots, \type{N}_{c_{|C|}})\mapsto \bigwedge_{c\in C} \psi_c\right]$$

        The second coordinate can be viewed as recognition of a language by $\type{N}$ in $N$. Since $\vty{N}$ is a pseudovariety, $L(\vty{N})\subseteq P(\vty{N})$.
        We obtain a formula $\psi_{\type{N}}\in\Psi$ such that recognizes the same language and define $\theta_{\type{T}}=\phi_{\type{M}}\land\psi_{\type{N}}$.
        This results in a formula $\theta_{\type{T}}\in\Phi\sub\Psi$ which recognizes the same language as $(T,\types{T},\mathcal{T})$. 
        
        \item The other direction is similar. Suppose we have a formula with substitution $\phi[\gamma\mapsto\psi_\gamma]\in\Phi\sub\Psi$. 
        Because $P(\Psi)=P(\vty{N})$, there are typed monoids $N_\gamma$ that can compute the substitution's output $\gamma$ at each position. That is, for each $\gamma$ there is a type $\type{N}_\gamma$ and homomorphism $h\colon \Sigma^*\to N_\gamma$

        \[w,i\models\psi_\gamma \iff  h(w_{<i})\in\type{N}_\gamma\]

        Using this, we can compute the substitution into a word over $\Gamma$.
        Then because $L(\Phi)=L(\vty{M})$, we obtain a typed monoid $M$ that can recognize the resulting language over $\Gamma$.
        The entire computation can thus be computed using a monoid in $\vty{M}\wrt\vty{N}$.
    \end{itemize}
\end{proof}

\section{$\CRASP$}\label{app:crasp}

Here we reiterate some of the definitions of $\CRASP$, which can be found in prior papers by \citet{yang2024counting,huang2025formalframeworkunderstandinglength}.
\subsection{Definitions}

\begin{definition}

The syntax of $\CRASP$ formulas is defined:
    \begin{align*}
        \phi & \bnfto \sympred\sigma \mid \lnot \phi_1 \mid \phi_1\land \phi_2 \mid \sum _{t\in \mathcal{T}} \alpha_t \cdot \countl[\phi_t] \sim k
    \end{align*} 
    where $\sympred\sigma\in\Sigma$, $\alpha_i,k\in\Z$ and $\mathord\sim\in\{<,\leq, =,\geq,>\}$.
    The semantics of formulas is defined as follows:
    \begin{subequations}
        \let\origiff\iff
        \renewcommand{\iff}{\hspace{\tabcolsep}\origiff\hspace{\tabcolsep}}
        \begin{alignat*}{2}
            &\str{w},i \models \sympred\sigma &\iff &\text{$\str{w}_i = \sigma$} \\
            &\str{w},i \models \lnot \phi & \iff &\str{w},i\not\models \phi\\
            &\str{w},i \models \phi_1\land \phi_2 &\iff &\text{$\str{w},i \models \phi_1$ and $\str{w},i \models \phi_2$} \\
            &\str{w},i \models \sum _{t\in \mathcal{T}} \alpha_t t \sim k &\iff &\sum_{t\in \mathcal{T}} \alpha_t \cdot |\{j\in[1,i] \mid \str{w},j \models \phi\}| \sim k.
        \end{alignat*}
    \end{subequations}
    We write $\str{w} \models \phi$ iff $\str{w}\eos, |\str{w}|+1 \models \phi$ where $\eos\not\in\Sigma$ is a special end-of-sequence symbol. and we say that $\phi$ defines the language $\lang(\phi) = \{\str{w} \mid \str{w} \models \phi\}$.
\end{definition}

In the sequel, we will use a \emph{DAG} (directed acyclic graph) representation of $\CRASP$ formulas,
where a subformula $\varphi$ may be used \emph{multiple times} in a formula. 
Such a formula can be thought of as a straight-line \emph{program}, i.e.,
a sequence $\varphi = (\varphi_i)_{i=1}^n$, where $\varphi_i$ is any $\CRASP$
definition that could refer to $\varphi_j$ with $j < i$. 

\begin{definition}
\label{def:TLC_semantics}
The syntax of $\CRASP$ is as follows:
    \begin{align*}
        \phi & \bnfto \sympred\sigma \mid t_1 < t_2 \mid \lnot \phi_1 \mid \phi_1\land \phi_2 && \sigma \in \Sigma && \text{Boolean-valued formulas}\\
        t & \bnfto \countl[\phi_1] \mid t_1 + t_2 
        \mid 1 &&&& \text{integer-valued terms}
    \end{align*} 
    The semantics of formulas is defined as follows:
    \begin{subequations}
        \let\origiff\iff
        \renewcommand{\iff}{\hspace{\tabcolsep}\origiff\hspace{\tabcolsep}}
        \begin{alignat}{2}
            &w,i \models \sympred\sigma &\iff &\text{$w_i = \sigma$} \\
            &w,i \models \lnot \phi & \iff &w,i\not\models \phi\\
            &w,i \models \phi_1\land \phi_2 &\iff &\text{$w,i \models \phi_1$ and $w,i \models \phi_2$} \\
            &w,i \models t_1 < t_2 &\iff &t_1^{w,i} < t_2^{w,i}.
        \end{alignat}
    \end{subequations}
    The semantics of terms is defined as follows:
    \begin{subequations}
        \begin{align}
            \countl[\phi]^{w,i} &= |\{j\in[1,i] \mid w,j \models \phi\}| \\
            (t_1+t_2)^{w,i} &=  t_1^{w,i}+t_2^{w,i} \label{eq:TLC_sem_plus} \\
            1^{w,i} &= 1.
        \end{align}
    \end{subequations}
    We write $w \models \phi$ if $w, |w| \models \phi$, and we say that $\phi$ defines the language $\lang(\phi) = \{w \mid w \models \phi\}$.
\end{definition}
The table below shows how this program works for the string $(())()$, which belongs to $\dyck_2$.
\begin{center} \small \normalfont
\begin{tabular}{rll|cccccc}
\toprule
predicate & definition & description & $\sympred{a}$ & $\sympred{a}$ & $\sympred{b}$ & $\sympred{b}$ & $\sympred{a}$ & $\sympred{b}$ \\
\midrule
$\sympred{a}$ & & is left paren & $\top$ & $\top$ & $\bot$ & $\bot$ & $\top$ & $\bot$ \\
$\sympred{b}$ & & is right paren & $\bot$ & $\bot$ & $\top$ & $\top$ & $\bot$ & $\top$ \\
$C_{\sympred{a}}$ &:= $\countl [\sympred{a}]$ & num of left parens & 1 & 2 & 2 & 2 & 3 & 3 \\
$C_{\sympred{b}}$ &:= $\countl [\sympred{b}]$ & num of left parens & 0 & 0 & 1 & 2 & 2 & 3 \\
$\phi_{\text{low}}$ &:= $C_{\sympred{a}}-C_{\sympred{b}} \geq 0$ & depth above $0$ & $\top$ & $\top$ & $\top$ & $\top$ & $\top$ & $\top$ \\
$\phi_{\text{up}}$ &:= $C_{\sympred{a}}-C_{\sympred{b}} \leq 2$ & depth below $2$ & $\top$ & $\top$ & $\top$ & $\top$ & $\top$ & $\top$ \\
$\phi_{\text{bounded}}$ & := $ \phi_{\text{low}} \land \phi_{\text{up}}$& depth bounded & $\top$ & $\top$ & $\top$ & $\top$ & $\top$ & $\top$ \\
$\phi_{\text{matched}}$ & :=  $\countl[\lnot\phi_{\text{bounded}}]=0$ & depth bounded everywhere & $\top$ & $\top$ & $\top$ & $\top$ & $\top$ & $\top$ \\
$\phi_{\text{balanced}}$ & :=  $C_{\sympred{a}}=C_{\sympred{b}}$ & balanced at end & $\bot$ & $\bot$ & $\bot$ & $\top$ & $\bot$ & $\top$ \\
$\phi_{\dyck_2}$ & :=  $\phi_{\text{matched}}\land\phi_{\text{balanced}}$ & acceptance & $\bot$ & $\bot$ & $\bot$ & $\top$ & $\bot$ & $\top$ \\
\bottomrule
\end{tabular}
\end{center}

The table below shows how this program works for the string $())()($, which does not belong to $\dyck_2$.
\begin{center} \small \normalfont
\begin{tabular}{rll|cccccc}
\toprule
predicate & definition & description & $\sympred{a}$ & $\sympred{b}$ & $\sympred{b}$ & $\sympred{a}$ & $\sympred{b}$ & $\sympred{a}$ \\
\midrule
$\sympred{a}$ & & is left paren & $\top$ & $\bot$ & $\bot$ & $\top$ & $\top$ & $\bot$ \\
$\sympred{b}$ & & is right paren & $\bot$ & $\top$ & $\top$ & $\bot$ & $\top$ & $\bot$ \\
$C_{\sympred{a}}$ &:= $\countl [\sympred{a}]$ & num of left parens & 1 & 1 & 1 & 2 & 2 & 3 \\
$C_{\sympred{b}}$ &:= $\countl [\sympred{b}]$ & num of left parens & 0 & 1 & 2 & 2 & 3 & 3 \\
$\phi_{\text{low}}$ &:= $C_{\sympred{a}}-C_{\sympred{b}} \geq 0$ & depth above $0$ & $\top$ & $\top$ & $\bot$ & $\top$ & $\bot$ & $\top$ \\
$\phi_{\text{up}}$ &:= $C_{\sympred{a}}-C_{\sympred{b}} \leq 2$ & depth below $2$ & $\top$ & $\top$ & $\top$ & $\top$ & $\top$ & $\top$ \\
$\phi_{\text{bounded}}$ & := $ \phi_{\text{low}} \land \phi_{\text{up}}$& depth bounded & $\top$ & $\top$ & $\bot$ & $\top$ & $\bot$ & $\top$ \\
$\phi_{\text{matched}}$ & :=  $\countl[\lnot\phi_{\text{bounded}}]=0$ & depth bounded everywhere & $\top$ & $\bot$ & $\bot$ & $\bot$ & $\bot$ & $\bot$\\
$\phi_{\text{balanced}}$ & :=  $C_{\sympred{a}}=C_{\sympred{b}}$ & balanced at end & $\bot$ & $\bot$ & $\bot$ & $\top$ & $\bot$ & $\top$ \\
$\phi_{\dyck_2}$ & :=  $\phi_{\text{matched}}\land\phi_{\text{balanced}}$ & acceptance & $\bot$ & $\bot$ & $\bot$ & $\bot$ & $\bot$ & $\bot$\\
\bottomrule
\end{tabular}
\end{center}

\subsection{$\CRASP$ as a pseudovariety of Languages}\label{app:crasp_variety}

\begin{restatable}{proposition}{CRASPVariety}\label{prop:crasp-variety}

$\CRASP$ defines a pseudovariety of languages.

\end{restatable}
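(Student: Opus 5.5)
We verify directly the closure properties defining a pseudovariety of languages from the Pseudovariety definition: closure under Boolean operations, under inverse homomorphisms, and under left and right quotients (shifts $a^{-1}\lang$ and $\lang b^{-1}$). Throughout, we work with the straight-line (DAG) form of $\CRASP$ from \cref{def:TLC_semantics}. Acceptance is evaluated at the last position, $w\models\phi$ iff $w,|w|\models\phi$. All constructions proceed by structural induction on formulas.

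\textbf{Boolean operations.} If $\lang(\phi_1)$ and $\lang(\phi_2)$ are in $\CRASP$, then $\lnot\phi_1$ and $\phi_1\land\phi_2$ define the complement and the intersection, since acceptance is evaluated at a single position. Union follows by De Morgan.

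\textbf{Inverse homomorphisms.} Let $h\colon\Gamma^*\to\Sigma^*$. For non-erasing letter-to-letter $h$, substitute each atom $\sigma$ by $\bigvee_{\gamma: h(\gamma)=\sigma}\gamma$; counts are unchanged. For general $h$, where $|h(\gamma)|$ varies, this fails. My plan is to rely instead on the fact that for pseudovarieties it is standard (Eilenberg) to require closure under inverse morphisms between free monoids. I would reduce to this as follows. Factor $h$ as a length-multiplying map followed by a projection. Each count $\countl[\phi]$ at a position of $w\in\Gamma^*$ must then simulate the count over the virtual positions inside the blocks $h(w_1)\cdots h(w_i)$. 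Inductively, for each subformula $\phi$ over $\Sigma$ and each offset $r<\max_\gamma|h(\gamma)|$, I would construct $\phi^{(r)}$ over $\Gamma$ that holds at position $i$ iff $\phi$ holds at virtual position $|h(w_{<i})|+r$ of $h(w)$. The counts then become
\[
\countl[\phi]^{h(w),\,|h(w_{\le i})|}=\sum_{\gamma}\sum_{r<|h(\gamma)|}\countl\big[\gamma\land\phi^{(r)}\big]^{w,i},
\]
which is a linear combination of $\CRASP$ counts. The part I expect to be the main obstacle is evaluating comparisons at an interior offset $r$ of the current block. There, the count up to virtual position $|h(w_{<i})|+r$ equals the count up to the end of the previous block plus a bounded correction from the current block. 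The prefix-up-to-$i-1$ count is available as $\countl[\cdot]$ minus the indicator at $i$, and the correction is a finite case split on the current letter. Both are expressible because comparisons against constants absorb the bounded corrections. Erasing letters ($h(\gamma)=\epsilon$) contribute $0$ to every sum. One subtlety remains: a position where the current block is empty yields no virtual position. We handle it by defining $\phi^{(0)}$ at such $i$ to mean the value at the last virtual position so far, which again needs only prefix counts; the empty word case is checked separately.

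\textbf{Shifts.} For $a^{-1}\lang=\{w: aw\in\lang\}$, we prepend a virtual letter: every count $\countl[\phi]$ is replaced by $\countl[\phi']+[\phi\text{ holds at the virtual first position}]$. The latter is a fixed truth value computable by induction, since at the first position all counts are $0$ or $1$, so it is a constant. For $\lang b^{-1}=\{w: wb\in\lang\}$, we evaluate at a virtual position $|w|+1$ carrying $b$: each top-level count becomes the current count plus the indicator that the subformula holds at the appended $b$. This indicator is itself determined by counts at position $|w|$ shifted by bounded amounts, defined recursively, and constants again absorb it. This is the same offset technique as above, so once the inverse-homomorphism step is done the shifts follow with the same bookkeeping.
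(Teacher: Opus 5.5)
Your overall route matches the paper's sketch. You check Boolean closure directly, and you get inverse morphisms and quotients by simulating the program on the virtual word $h(w)$; the paper defers this simulation to Lemma F.2 of Yang et al.\ and Lemma 34 of Huang et al. Your offset bookkeeping, with bounded corrections absorbed into the constants, is fine for non-erasing $h$.

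The genuine gap is the acceptance convention. You take $w\models\phi$ iff $w,|w|\models\phi$. The DAG definition you worked from does state this convention, but the proposition is false under it, as the remark after the proposition in the paper points out. The language $\lang=\{a,b\}^*b$ is defined by the atom $b$ at the last position. Take $h\colon\{a,b,e\}^*\to\{a,b\}^*$ with $h(e)=\epsilon$. Then $h^{-1}(\lang)=\{a,b,e\}^*be^*$, the flip-flop language with syntactic monoid $U_2$, which is not definable in \CRASP{} (the paper's table lists it as such).

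The step that fails is your handling of a final erased position, where you ``define $\phi^{(0)}$ at such $i$ to mean the value at the last virtual position so far, which again needs only prefix counts''. Counts are indeed unproblematic there, since erased letters add nothing to them. But for the atom $b$ this asks for the last letter of the last non-erased block. That is exactly membership of the prefix in $\{a,b,e\}^*be^*$, and no \CRASP{} formula computes it. The empty word you set aside is a second symptom of the same convention, since $\epsilon$ has no last position.

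The repair is to use the \eos{} convention of the paper's first definition, $w\models\phi$ iff $w\eos,|w|+1\models\phi$, and to extend $h$ by $h(\eos)=\eos$. Then every point where you need a truth value is either an offset inside a non-erased block or the final \eos. Erased positions never carry a value and only contribute $0$ to the virtual counts, and the rest of your simulation goes through as written.

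The quotients adapt in the same way. For $a^{-1}\lang$, prepend a virtual $a$ as you already do. For $\lang b^{-1}$, reinterpret the real \eos{} position as the two-letter virtual block $b\,\eos$, which is the paper's ``simulate the computations at $b\eos$'' and again just your interior-offset technique. The empty word then needs no separate case, since $\epsilon\eos$ has a position.
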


\begin{proof}
We sketch the proof here. 
\begin{itemize}[noitemsep]
    \item Closed under Boolean combinations: by definition.
    \item Closed under inverse homomorphisms: follow the construction in \citet[Lemma F.2]{yang2025knee} or \citet[Lemma 34]{huang2025formalframeworkunderstandinglength}, but ignoring positional predicates and allowing arbitrary symbols in the image of the homomorphism.
    \item Closed under factors: Let $L$ be in \CRASP, and $a \in \Sigma$. To obtain $a^{-1} L$, we can detect the beginning in \CRASP and simulate the computations at a prefix $a$. To obtain $L a^{-1}$, at $\eos$ we simulate the computations at $a\eos$.
\end{itemize}
\end{proof}

Because the languages form a pseudovariety, there must exist a corresponding class of monoids \citep[Theorem 2]{behle2011eilenberg}.

\begin{remark}
  If we defined recognition without the $\eos$ symbol, the result would not be a pseudovariety, because of the special role played by the final position. For instance, a $\CRASP$ program can separate $\{a,b\}^* b$ from $\{a,b\}^* a$, but taking the inverse of a homomorphism that deletes $e$ and keeps $a,b$ unchanges leads to the two sets $\{a,b\}^* b e^*$ from $\{a,b,e\}^* a e^*$ which no $\CRASP$ program can separate.
\end{remark}

\subsection{Existing Characterizations}\label{app:crasp_reg_previously}

Previous work has explored upper and lower bounds for the regular languages of $\CRASP$ and the related logic $\widehat{\mathsf{MAJ}}_2[<]$, though none have arrived at an exact characterization. 
Here, we summarize a few of these previously known results.

\begin{example}
    We know the following language characterizations
    \begin{enumerate}[noitemsep]
        \item $\vty{R} \subset \CRASP \cap \vty{REG}$, because any $\mathcal{R}$-trivial language is definable using existential quantification to the left, which is implementable in $\CRASP$ ($\countl[\phi]\geq 1)$.
        \item $\CRASP \cap \vty{REG} \subset \vty{A}$, since any periodic regular language like $(aa)^*$ is not definable in $\CRASP$ \citep[Lemma 38]{huang2025formalframeworkunderstandinglength}.
        \item $\Sigma^*b$, is not in \CRASP \citep[Lemma 38]{huang2025formalframeworkunderstandinglength}.
        \item $\Sigma^*bb\Sigma^*$is not in \CRASP, because it is not in the larger class $\widehat{\mathsf{MAJ}}_2[<]$ by Lemma 6.11 in \cite{krebs2008typed}
        \item For any $k$, $(a^+ b^+)^k $  in \CRASP by \cite{yang2025knee}
        \item $(ab)^+ $ in \CRASP
    \end{enumerate}
\end{example}

The next section develops our exact characterization.

\subsection{Algebraic Characterization of $\CRASP$}\label{app:CRASP_Z}

\CRASPZ*
\begin{proof}
    We will show that $L(\CRASP)=L(\wwpc(\Z))$, which will be equivalent to the theorem statement.
    First, we note that $\wwpc(\Z)$ and $\wwp^2(\Z)$ are pseudovarieties of typed monoids by definition.
    Then, we establish that $P(\CRASP_1)=P(\wwp^2(\Z))$, noting that $\CRASP$ is an instance of a \ptl.
    First, by \citet[Lemma A.4]{yang2025knee} every $\CRASP$ program can be written such that it only counts over positions $j<i$, with Boolean operations testing the symbol at $i$. 
    The strict counting can be translated into a typed monoid over $\wwp^2(\Z)$ (since the equations in $\CRASP$ may not just be $\countl \phi \geq 0$, we require another wreath product to add constants into the equation), and the testing of the symbol at $i$ can be handled by constants in $\wwp^2(\Z)$. The other direction is similar -- for every type of $\Z$, there exists a $\CRASP$ program that checks if the running sum is in that type. 
    Thus by \cref{thm:typed_wreath_principle},  $L(\CRASP_1)=L(\Z\wrt\Z)$.
    From this base case we can build up to $L(\CRASP_k)=L(\wwp^{2k}({\Z}))$ by induction, and thus $L(\CRASP)=\bigcup_{k\geq0}L(\CRASP_k)=\bigcup_{k\geq 0} L(\wwpc^k({\Z}))=\wwpc(\Z)$.
\end{proof}

\section{Derived Categories}\label{app:derived_categories}

Here we provide a formal exposition of the ideas that were informally presented in the body of the paper. The notions are based on \cite{tilson1987categories} but with some notational adaptations; we refer to that paper for full formal definition, and for proofs of well-definedness.
As mentioned above, we will principally be interested in finite categories for use as algebraic objects.

\begin{definition}[Category]
    A category $X$ consists of a set of objects $\obj(X)$ and for each $c,c'\in \obj(X)$ a homset of arrows $X(c,c')$, often written as $x\colon c\to c'$ for $x\in X(c,c')$. 
    A category is endowed with the following algebraic structure:
    \begin{itemize}[noitemsep]
        \item For arrows we have an associative composition operation, where $s\colon c\to c'$ and $t\colon c'\to c''$, compose into an arrow $st\colon c\to c''$. Furthermore, for $s\colon c_1\to c_2$, $t\colon c_2\to c_3$, and $v\colon c_3\to c_4$, we have that $(st)v=s(tv)$.
        \item For objects $c$ we have an identity arrow $1_c\colon c\to c$, where $s1_c=s$ and $1_ct=t$ for $s\colon c'\to c$ and $t\colon c\to c''$.
    \end{itemize}
\end{definition}

\begin{remark}
    When the ambient category $C$ is unambiguous, we also write $Hom(x \rightarrow x')$ for $C(x,x')$.
\end{remark}

We will think of a monoid as a single-object category with the monoid elements as arrows of the category.
Between categories we can define relations, which do not necessarily have to be functions.

\begin{definition}[Category relation]
    Let $X,Y$ be categories. A category relation $f\colon X\to Y$ has 
    \begin{itemize}[noitemsep]
        \item An object relation $f\colon \obj(X)\to \obj(Y)$, thought of as a subset of $X\times Y$. 
        \item For any corresponding edge sets $X(c,c')$ and $Y(d,d')$ where $d\in cf,d'\in c'f$, an edge set relation $f\colon X(c,c')\to Y(d,d')$. 
    \end{itemize}
and satisfies the property that $\#f$, defined as follows, is a subcategory $\#f$ of $X\times Y$:
    \begin{itemize}[noitemsep]
        \item $\obj(\#f)=\{(c,d)\colon d\in cf\}$
        \item $\#f[(c,d),(c',d')]=\{(x,y) \mid x\in X(c,c'), y\in Y(d,d'), y\in xf\}$
    \end{itemize}
\end{definition}

The most important kind of relation for us will be a relational morphism, denoted $\phi\colon X\relml Y$.
In a sense, this is a generalization of a homomorphism where you can take any function on the atomic elements of the $X$ as a generator of the resulting relation $X\relml Y$ after we close under composition in $X$.

\begin{definition}[Relational Morphism]
    A \emph{relational morphism} $f : C \relml C'$ between categories $C$ and $C'$ is a category relation where the object relation is a function and each hom-set relation is fully-defined. 
    A relational morphism where the hom-set relations are injective is called a \emph{division}.
\end{definition}

\begin{remark}
It is convenient to view relational morphisms as set-valued functions. That is, if $f : C \relml C'$ and $\alpha \in C(x,x')$ is an arrow in $C$, then we write $f(\alpha)$ for the set $\{ \beta \in C'(y,y') : (\alpha, \beta) \in \#f[(x,x'),(y,y')]\}$, where $(x,y), (x',y') \in Obj(\#f)$.

The condition that $\#f$ be a category implies in particular
\begin{equation}
f(\alpha)f(\beta) \subseteq f(\alpha\beta)
\end{equation}
whenever $\alpha \in C(x,x'), \beta \in C(x',x'')$. It also implies that the image of an identity arrow at some object of $C$ always includes the identity arrow at the corresponding target object in $C'$.
\end{remark}

The notion of homomorphism and division for finite monoids is a special case of the definitions above for categories. 
Then, as discussed above, the analogue of the kernel of a group homomorphism (and thus the analogue of a ``divisor'') for monoids is the \emph{derived category}.
The worked example in \cref{sec:dyck_derived} hopefully provides some intuition on the structure of the derived category.

\begin{definition}[Derived Category]
    Let $\phi\colon M\relml N$ be a relational morphism between monoids. The \emph{derived category} $D_\phi$ of the relational morphism is defined with $\obj(D_\phi)=\phi(M)$ and $\Hom(n_1,n_2)=\{n_1\arrow{(m,n)}\colon \phi^{-1}(n_1)\to \phi^{-1}(n_2)\mid (m,n)\in \#\phi, n_1n=n_2\}$,
    where $n_1\arrow{(m,n)}$ is a function $\phi^{-1}(n_1)\to \phi^{-1}(n_1n)$ mapping any $x \in \phi^{-1}(n_1)$ to $xm \in \phi^{-1}(n_1n)$.
    Composition of arrows is given by $\left(n_0\arrow{{(m_1,n_1)}}\right) \left(n_1\arrow{{(m_2,n_2)}}\right)=n_0\arrow{(m_1m_2,n_1n_2)}$.
\end{definition}

\begin{remark}
We note that, as the arrows denote functions on subsets of $M$, it is possible for $n_1\arrow{(m,n)}$ and $n_1\arrow{(m',n)}$ to be identical even if $m \neq m'$, provided $m$ and $m'$ act identically on $\phi^{-1}(n_1)$.

Where helpful for notational clarity, we explicitly include the end object, writing $n_1\arrow{(m,n)}$ as $n_1\arrow{(m,n)}n_2$ where $n_2 = n_1n$. 
\end{remark}

For readers familiar with the corresponding construction for groups, we provide some intuition. 
In some sense, $\ker\phi$ for $\phi\colon G\to H$ records what information is lost when compressing $G$ into $H$. 
By recording what elements collapse into $1_H$, we can reconstruct how every other component of $G$ collapses into $H$ (taking advantage of the inverses in the group to form connections between elements). 
Then, taking the quotient $G/(\ker\phi)$ precisely records what information is lost by this collapse, and by enriching $H$ with this information again we reconstruct $G$ via the division $G\preceq (\ker\phi)\wrc H$.

In the case of monoids, we lack the nice closure properties of groups, and thus $\ker\phi$ cannot be used to reconstruct the collapsing behavior of every other component of $G$. 
So the corresponding structure must be enriched with additional information.
Indeed, in the derived category construction, for each element of $n\in N$ we consider subsets of $M$ which collapse into $n$ via a relational morphism, and then must record specific information about the interactions of elements within and between these subsets. 
The derived category just stores the essential amount of information in order to reconstruct $M$ via a wreath product $M\preceq V\wrc N$, formalized by the Derived Category Theorem \citep{tilson1987categories}:

\begin{thm}[Derived Category Theorem]
    \quad
    \begin{enumerate}
        \item Let $\phi\colon M\relml N$ be a relational morphism of monoids, and let $V$ be a monoid satisfying $D_\phi\preceq V$. Then there is a division of monoids $\theta\colon M\preceq V\wrc N$.
        \item Let $\theta\colon M\preceq V\wrc N$ be a division of monoids, and let $\phi=\theta\pi\colon M\relml N$ be the associated relational morphism. Then $D_\phi\preceq V^N$. 
    \end{enumerate}
\end{thm}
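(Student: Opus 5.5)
I will prove the two parts separately, in each case working directly with the wreath product $V\wrc N$ and the derived category $D_\phi$ as defined above (objects $\phi(M)$, arrows $n_1\arrow{(m,n)}$ viewed as functions $\phi^{-1}(n_1)\to\phi^{-1}(n_1n)$).

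\textbf{Part 1.} Let $\theta'\colon D_\phi\preceq V$ be a division. I will treat $V$ as a one-object category, so $\theta'$ sends each arrow of $D_\phi$ to a set of elements of $V$, with hom-set relations injective: distinct arrows with the same endpoints have disjoint images.

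The plan is as follows.
\begin{enumerate}
\item Since $\#\phi$ is generated compatibly, I define a relation $\theta\colon M\relml V\wrc N$ by
\[
\theta(m)=\{(f,n)\mid n\in\phi(m),\ f(x)\in\theta'\bigl(x\arrow{(m,n)}\bigr)\text{ for all } x\in\phi(M)\}.
\]
On $x\notin\phi(M)$, $f(x)$ is arbitrary (or fixed to $1_V$).
\item I check that $\theta$ is a relational morphism. For $(f_1,n_1)\in\theta(m_1)$ and $(f_2,n_2)\in\theta(m_2)$, the product has first coordinate $f_1+{}^{n_1}f_2$, whose value at $x$ is $f_1(x)f_2(xn_1)$. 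This lies in
\[
\theta'\bigl(x\arrow{(m_1,n_1)}\bigr)\,\theta'\bigl(xn_1\arrow{(m_2,n_2)}\bigr)\subseteq\theta'\bigl(x\arrow{(m_1m_2,n_1n_2)}\bigr),
\]
by the composition law in $D_\phi$ and the fact that $\theta'$ is a relational morphism. One caveat: I need $xn_1\in\phi(M)$, which holds when $x\in\phi(M)$ since $\phi(M)$ is a submonoid. The identity condition follows because $1\in\phi(1)$ and $\theta'$ of identity arrows contains $1_V$.
\item I check injectivity. Suppose $(f,n)\in\theta(m)\cap\theta(m')$. Then at $x=1_N\in\phi(1_M)$, the arrows $1\arrow{(m,n)}$ and $1\arrow{(m',n)}$ share the image $f(1)$. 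Injectivity of $\theta'$ forces them to be equal as functions on $\phi^{-1}(1_N)$. Evaluating at $1_M\in\phi^{-1}(1_N)$ gives $m=m'$.
\item A relational morphism that is injective in this sense yields a division, via the standard equivalence of \citet{tilson1987categories}: take the submonoid $\theta(M)$ and project back.
\end{enumerate}

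\textbf{Part 2.} Let $\theta\colon M\preceq V\wrc N$ and $\phi=\theta\pi$. I define $\psi\colon D_\phi\relml V^N$ by
\[
\psi\bigl(x\arrow{(m,n)}\bigr)=\{\,y\mapsto f(xy)\mid (f,n)\in\theta(m)\,\}.
\]
A single object $x$ in the domain is not enough to separate arrows, so I shift along all of $N$; this is why the target is $V^N$ rather than $V$.

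I then verify two things.
\begin{enumerate}
\item $\psi$ is a relational morphism. Composition works because $(f_1+{}^{n_1}f_2)(xy)$ is not directly the composite; this is the delicate point. I expect to need to index by the object: the component at $y$ should use $f_2(xn_1y)$. I will fix this by choosing the right convention for which coordinate is shifted, following Tilson's proof.
\item $\psi$ is a division. If two arrows $x\arrow{(m,n)}$ and $x\arrow{(m',n)}$ share an image, then for each $s\in\phi^{-1}(x)$ I pick $(g,x)\in\theta(s)$. Then $\theta(sm)\ni(g,x)(f,n)$, which equals the corresponding element for $sm'$. Injectivity of $\theta$ gives $sm=sm'$, so the two arrows coincide as functions.
\end{enumerate}

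\textbf{Main obstacle.} The main obstacle is this last step. It needs the $V$-coordinates to agree on all of $N$, not just at a point, which is exactly why the target is $V^N$. Getting the indexing of the action right is the crux, and there I would defer to Tilson's original argument.
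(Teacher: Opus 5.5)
\textbf{Part 1.} This part is correct. It is exactly the construction the paper relies on: the paper's extension construction, quoted from case (a) of Tilson's proof. Three checks go through:
\begin{itemize}
\item $\phi(M)$ is a submonoid, so $xn_1$ is again an object.
\item The relational-morphism property of $\theta'$ on the composable pair $x\arrow{(m_1,n_1)}$, $xn_1\arrow{(m_2,n_2)}$ gives closure under products.
\item Evaluating the coterminal arrows $1_N\arrow{(m,n)}$ and $1_N\arrow{(m',n)}$ at $1_M\in\phi^{-1}(1_N)$ gives injectivity.
\end{itemize}
One correction: drop the option of fixing $f(x)=1_V$ for $x\notin\phi(M)$. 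An object outside $\phi(M)$ can be sent into $\phi(M)$ by $n_1$, for instance when $n_1\in\phi(M)$ is a zero of $N$. Then $(f_1\,{}^{n_1}f_2)(x)=f_2(xn_1)$ need not be $1_V$, so that version of $\theta$ is not closed under products. Those coordinates must be left unconstrained.

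\textbf{Part 2.} Here there is a genuine gap: the map $\psi$ you write down is wrong, and you explicitly leave the fix to Tilson. With the paper's action ${}^{n}f(n')=f(n'n)$:
\begin{itemize}
\item \emph{Composition fails.} Composing your images gives $y\mapsto f_1(xy)\,f_2(xn_1y)$, while the image of the composite arrow is $y\mapsto f_1(xy)\,f_2(xyn_1)$. These differ when $N$ is noncommutative.
\item \emph{Your injectivity step fails for the same reason.} The product $(g,x)(f,n)=(g\,{}^{x}f,\,xn)$ depends on $f$ through ${}^{x}f=f(\,\cdot\,x)$. So agreement of $f(x\,\cdot\,)$ and $f'(x\,\cdot\,)$ does not make the two products equal.
\end{itemize}

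The missing idea is to act on the correct side. Set
$\psi(\alpha)=\{{}^{x}f : (f,n)\in\theta(m),\ x\arrow{(m,n)}=\alpha\}$.
Take the union over all representatives $(m,n)$ of $\alpha$: arrows of $D_\phi$ are equivalence classes, so a per-representative definition is not well defined. With this $\psi$:
\begin{itemize}
\item \emph{Composition.} Since $n\mapsto{}^{n}(\cdot)$ is a left action of $N$ by endomorphisms of $V^N$, we have ${}^{x}f_1\cdot{}^{xn_1}f_2={}^{x}(f_1\,{}^{n_1}f_2)$. This gives $\psi(\alpha)\psi(\beta)\subseteq\psi(\alpha\beta)$.
\item \emph{Identities.} ${}^{x}\mathbf{1}=\mathbf{1}$.
\item \emph{Injectivity.} Take coterminal arrows $x\arrow{(m,n)}$ and $x\arrow{(m',n')}$ with ${}^{x}f={}^{x}f'$. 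Coterminal means $xn=xn'$; you do not get $n=n'$, so your restriction to equal second components is too narrow. For every $s\in\phi^{-1}(x)$ and $(g,x)\in\theta(s)$ you get $(g\,{}^{x}f,xn)=(g\,{}^{x}f',xn')\in\theta(sm)\cap\theta(sm')$. Injectivity of $\theta$ then gives $sm=sm'$, so the two arrows coincide as functions on $\phi^{-1}(x)$.
\end{itemize}

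This is the standard semidirect-product argument, viewing $V\wrc N$ as $V^N$ acted on by $N$: an arrow leaving object $x$ with $V^N$-component $f$ goes to ${}^{x}f$. It also explains why the target is $V^N$ rather than $V$. Sending the arrow only to $f(x)={}^{x}f(1_N)\in V$ still respects composition, but it loses injectivity, because $(g,x)(f,n)$ sees all of ${}^{x}f$.
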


\section{Algebraic Decision Procedure}\label{app:decision}

\subsection{Relevant Lemmas}
\begin{lemma}
    If $M\preceq S$ and $N\preceq T$ then $M\wrc N\preceq S\wrc T$.
\end{lemma}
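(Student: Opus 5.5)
The plan is to compose the given divisions coordinatewise. By hypothesis there are submonoids $S'\leq S$ and $T'\leq T$ together with surjective homomorphisms $\alpha\colon S'\to M$ and $\beta\colon T'\to N$. The aim is to find a submonoid $U\leq S\wrc T$ and a surjective homomorphism $h\colon U\to M\wrc N$. This is the classical wreath-product monotonicity argument. It is the untyped analogue of Lemma \ref{lemma:finite-product-divides-typed-product}, and I would reuse that proof's skeleton.

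\textbf{Definition of $U$ and $h$.} Let $U$ consist of all pairs $(f,t)\in S\wrc T$ satisfying three conditions:
\begin{itemize}
\item $t\in T'$;
\item $f(T')\subseteq S'$;
\item $f$ is $\beta$-compatible on $T'$, meaning $\beta(t_1)=\beta(t_2)$ implies $\alpha(f(t_1))=\alpha(f(t_2))$ for $t_1,t_2\in T'$.
\end{itemize}
On $U$, set $h(f,t)=(g,\beta(t))$, where $g(n)=\alpha(f(t_n))$ for any $t_n\in\beta^{-1}(n)$. Compatibility makes $g$ well defined, and surjectivity of $\beta$ means every $n$ has such a $t_n$. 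Values of $f$ outside $T'$ are simply ignored.

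\textbf{Closure of $U$.} Take $(f_1,t_1)(f_2,t_2)=(f_1+{}^{t_1}f_2,\,t_1t_2)$. For $t\in T'$ we have $t t_1\in T'$, so $({}^{t_1}f_2)(t)=f_2(tt_1)\in S'$. Hence the sum lies in $S'$, because $S'$ is a submonoid. For compatibility, if $\beta(t)=\beta(t')$ then $\beta(tt_1)=\beta(t't_1)$, so the $\alpha$-images of both summands agree. The identity $(t\mapsto 1_S,1_T)$ lies in $U$.

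\textbf{$h$ is a homomorphism.} Compute the first component of $h$ of the product at $n$, choosing a representative $t_n$:
\[
\alpha\bigl(f_1(t_n)\,f_2(t_nt_1)\bigr)=g_1(n)\,g_2\bigl(n\beta(t_1)\bigr).
\]
This uses that $\alpha$ is a homomorphism and that $t_nt_1$ is a representative of $n\beta(t_1)$. The right-hand side is exactly $(g_1+{}^{\beta(t_1)}g_2)(n)$. The second components multiply correctly because $\beta$ is a homomorphism.

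\textbf{Surjectivity.} Given $(g,n)\in M\wrc N$, choose $t\in\beta^{-1}(n)$. Then, for each $n'$, choose $s_{n'}\in\alpha^{-1}(g(n'))$. Define $f(t')=s_{\beta(t')}$ for $t'\in T'$, and $f=1_S$ off $T'$. This $f$ is constant on $\beta$-fibers, hence compatible, and $h(f,t)=(g,n)$.

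I expect the only real obstacle to be well-definedness of $g$, that is, making sure functions on $T$ descend to functions on $N$. The compatibility condition in the definition of $U$ is designed for exactly this, and the main thing to check is that it survives the left action. That check is the short computation in the closure step.
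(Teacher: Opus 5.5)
Your proof is correct and uses the same construction as the paper: restrict to $S'$-valued functions on $T'$ and push down along the two surjections $\alpha,\beta$. Your $\beta$-compatibility condition is exactly what justifies the paper's remark that the choice of $t\in h_N^{-1}(n)$ ``does not matter''; that remark fails for the paper's unrestricted generating set, so your version closes a gap in the paper's sketch.
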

\begin{proof}
    This is a standard fact which we sketch out here. 
    For some $S'\leq S$ and $T'\leq T$ there exists surjections $h_M\colon S'\to M$ and $h_N\colon T'\to N$. 
    Let $G'$ be the subset of $S^T\times T$ generated by all $(f,t')$ where $t'\in T'$, $\im(f)\subseteq S'$, and $f(t)=1_S$ for $t\not\in T'$, with the wreath product action inherited from $S\wrc T$.
    Define $h\colon G\to M\wrc N$ by $h(f,t')=(f',h_N(t'))$ where $f'(n)=h_M(f'(t))$ for some $t\in h_N^{-1}(n)$ (the choice does not matter). 
    It is clear that $h$ is a surjective homomorphism via inheritance from $h_M$ and $h_N$.
\end{proof}

\subsection{Bounded-Depth Dyck Monoids}\label{sec:dyck-monoids-congruence}

To make constructions computable despite the infinity of $\Z$, we use bounded-depth Dyck monoids $\dyck_k$ as partial stand-ins for $\Z$.
The non-$\bot$ elements of the monoid $\dyck_k$ can be canonically represented as tuples $(h,h_\downarrow, h_\uparrow)$ where, for a word $w \in \{a,b\}^*$, the syntactic morphism $\eta$ maps it to
\begin{align*}
h = & \sum_{i=1}^{|w|} \phi(w_i) \\
h_\uparrow =&  \max_j \sum_{i=1}^{j} \phi(w_i) \\
h_\downarrow = & \min_j \sum_{i=1}^{j} \phi(w_i) \\
\end{align*}
under $\phi(a) = 1$, $\phi(b) = -1$, provided these numbers are all in $[-k,\dots, k]$; otherwise the word is mapped to $\bot$.
The monoid $\dyck_k$ has a natural action on the set 
\begin{equation}\label{eq:truncated-integers}
S_k := [-k, \dots, k] \cup \{\bot\}
\end{equation}
given by
\begin{equation}\label{eq:action-dyck-counts}
    j \cdot (h, h_\uparrow, h_\downarrow) = \begin{cases}
\bot & \text{if }j=\bot \\
h+j & \text{if } j+h_\downarrow \geq -k \wedge j+h_\uparrow \leq k \\
\bot & \text{else}
    \end{cases}
\end{equation}
Formally, the pair $(\dyck_k, S_k)$ is a \emph{transformation monoid}.\footnote{We note that $S_k$ itself is not a monoid, because truncated addition is not associative. The monoid $\dyck_k$ describes the monoid of bounded incrementing/decrementing operations on $S_k$. Together, they provide an algebraic model of bounded counting sufficient for our purposes.} We could develop the decidability proof in terms of wreath products of $(\dyck_k, S_k)$, which are somewhat different from wreath products of $\dyck_k$; this would have some advantages because the set $S_k$ naturally behaves as a truncated version of the infinite set $\Z$.
However, to avoid introducing more technical notions, we stay on the level of monoids, at the cost of factoring out an extra congruence out of the derived category. %
Specifically, there is a natural right-congruence on $\dyck_k$, namely $n \sim m \Leftrightarrow 0n = 0m$. 
It is a right-congruence in the sense that $0n = 0m \Rightarrow 0nr = 0mr$ for any $n,m,r \in \dyck_k$.
When considering derived categories for relational morphisms to $\dyck_k$, we will factor this right congruence out of the object set; this is a well-defined construction resulting in a category because $\sim$ is a right-congruence. The result resembles the derived category, but has a more coarse-grained object set consisting of the equivalence classes of $\sim$.

\subsection{Decidability Proof (via Finite Derived Categories)}\label{app:algebraic_decision_procedure}

The following formalizes a basic construction in the theory of derived categories; it permits going from a derived category to a wreath product decomposition:
\begin{defin}[Extension]
Let $M, S, T$ be finite monoids.
    Given relational morphisms $\phi : M \relml T$, $\phi' : D_{\phi} \relml S$, we define
     $\extension{\phi}{\phi'} : M \relml (S \wrc T)$ as
     \[\extension{\phi}{\phi'}(m) = \{(f,t) \in S^T \times T : t \in \phi(m); \forall x \in Obj(D_\phi) : f(x) \in \phi'(x \rightarrow_{(m,t)} x t)\}\]
\end{defin}

\begin{remark}
    This construction is made at the top of p. 116, case (a) of the proof of Theorem 5.2 of \cite{tilson1987categories}. There, it is also proven that it is a relational morphism. It is also proven that if $\phi'$ is a division, then so is $\extension{\phi}{\phi'}$.
\end{remark}

\begin{remark}
We write $Ext(\phi_1, \phi_2, \dots, \phi_n)$ for $\extension{\extension{\dots}{\phi_{n-1}}}{\phi_n}$.
\end{remark}

The following is shown in \cite{tilson1987categories}:
\begin{proposition}
    $\extension{\phi}{\phi'}$ is a well-defined relational morphism $M \relml S \wrc T$.
\end{proposition}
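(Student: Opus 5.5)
We need to show that $\extension{\phi}{\phi'}$ is a relational morphism. Concretely, we must check three things: that $\extension{\phi}{\phi'}(m)$ is nonempty for every $m\in M$, that $1_{S\wrc T}\in\extension{\phi}{\phi'}(1_M)$, and that $\extension{\phi}{\phi'}(m_1)\,\extension{\phi}{\phi'}(m_2)\subseteq \extension{\phi}{\phi'}(m_1m_2)$. The plan is to verify these pointwise over the object set $Obj(D_\phi)=\phi(M)\subseteq T$. Arguments of $f$ outside this set impose no constraint, so on those arguments we may choose $f$ freely; when needed we set it to $1_S$.

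\textbf{Nonemptiness.} Fix $m$ and pick any $t\in\phi(m)$, which is nonempty because $\phi$ is a relational morphism. For each object $x$, consider the arrow $x\arrow{(m,t)}xt$ in $D_\phi$; it is a genuine arrow because $(m,t)\in\#\phi$. Since $\phi'$ is fully defined on hom-sets, $\phi'$ of this arrow is a nonempty subset of $S$, and we choose $f(x)$ in it. (If $xt$ leaves $\phi(M)$ we need a remark; I expect the convention is $\obj(D_\phi)=T$ or that $\phi(M)$ is closed under products, since $\phi(m)\phi(m')\subseteq\phi(mm')$.)

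\textbf{Identity.} Take $t=1_T\in\phi(1_M)$ and $f\equiv 1_S$. For every object $x$, the arrow $x\arrow{(1_M,1_T)}x$ is the identity arrow at $x$. Because $\#\phi'$ is a subcategory, $\phi'$ of an identity arrow contains $1_S$, so this $(f,1_T)$ lies in $\extension{\phi}{\phi'}(1_M)$.

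\textbf{Multiplicativity.} This is the main step. Take $(f_1,t_1)\in\extension{\phi}{\phi'}(m_1)$ and $(f_2,t_2)\in\extension{\phi}{\phi'}(m_2)$. Their product is $(f,t_1t_2)$, where $f=f_1+{}^{t_1}f_2$, that is, $f(x)=f_1(x)\,f_2(xt_1)$. First, $t_1t_2\in\phi(m_1)\phi(m_2)\subseteq\phi(m_1m_2)$. For the first coordinate, fix an object $x$. In $D_\phi$ the arrows compose as
\[
\bigl(x\arrow{(m_1,t_1)}xt_1\bigr)\bigl(xt_1\arrow{(m_2,t_2)}xt_1t_2\bigr)=x\arrow{(m_1m_2,t_1t_2)}xt_1t_2 .
\]
By hypothesis $f_1(x)$ lies in $\phi'$ of the first arrow and $f_2(xt_1)$ lies in $\phi'$ of the second. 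Since $\phi'$ is a relational morphism of categories, $\phi'(\alpha)\phi'(\beta)\subseteq\phi'(\alpha\beta)$ for composable arrows $\alpha,\beta$, and hence $f(x)$ lies in $\phi'$ of the composite arrow. Thus $(f,t_1t_2)\in\extension{\phi}{\phi'}(m_1m_2)$.

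The main obstacle is well-definedness with respect to the identification of arrows in $D_\phi$. An arrow $x\arrow{(m,t)}$ is really an equivalence class: two labels give the same arrow when they induce the same function $\phi^{-1}(x)\to\phi^{-1}(xt)$. So we must confirm two things. First, the defining condition for $\extension{\phi}{\phi'}(m)$ refers only to the arrow and not to the chosen representative. Second, composition of classes is representative-independent, so the displayed composite is the correct arrow. Both follow from the definition of $D_\phi$, because composition of the underlying functions $y\mapsto ym_1$ and $y\mapsto ym_2$ equals $y\mapsto ym_1m_2$. With this in hand, the verification above goes through exactly as in Tilson's proof.
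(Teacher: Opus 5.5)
Your direct verification is correct and is the standard argument from Tilson (Theorem 5.2, case (a)) that the paper cites in place of a proof: you check nonemptiness, the identity, and $\extension{\phi}{\phi'}(m_1)\,\extension{\phi}{\phi'}(m_2)\subseteq\extension{\phi}{\phi'}(m_1m_2)$ pointwise, using $(f_1+{}^{t_1}f_2)(x)=f_1(x)\,f_2(xt_1)$ and composition of arrows in $D_\phi$. Your hedge about $xt$ leaving the object set is settled by the reason you yourself gave: $1_T\in\phi(1_M)$ and $\phi(m)\phi(m')\subseteq\phi(mm')$ make $\phi(M)$ a submonoid of $T$, so $x\in\phi(M)$ and $t\in\phi(m)$ give $xt\in\phi(M)$.
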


In the sequel we will write $\Z$ or $(\Z,\Z_+)$ for the typed monoid $(\Z,\Z_+,\pm1)$ and $\dyck_k$ for the syntactic monoid of the language $\dyck_k$.

\begin{thm}\label{thm:algebraic-algorithm-is-correct-wreath-products}
Given a finite monoid $M$, the procedure below correctly determines if $M$ divides an iterated typed wreath product of $(\Z,\Z_+)$.
\end{thm}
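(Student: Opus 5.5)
The proof has two halves: soundness (a successful run yields a genuine division) and completeness (a failing run implies no division exists). The procedure, as sketched in the body, processes the $\mathcal{R}$-classes of $M$ from the top of the $\preceq_{\mathcal{R}}$ order downward. It maintains a chain of relational morphisms $\phi_1,\dots,\phi_i$ into bounded Dyck monoids $\dyck_k$ and their extension $\Phi_i=\extension{}{}(\phi_1,\dots,\phi_i)\colon M\relml \dyck_{k}\wrc\cdots\wrc\dyck_{k}$. At each stage it looks for a new relational morphism from the current derived category $D_{\Phi_i}$ (with the right congruence $n\sim m\Leftrightarrow 0n=0m$ factored out of objects) into $\Z$ whose values stay bounded on the classes already covered. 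It adjoins such a morphism, truncated at the threshold $k$, whenever it separates arrows not yet separated. It succeeds when every hom-set of the derived category is trivial, and fails when it reaches a fixed point without that.

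\textbf{Soundness.} Suppose the run succeeds. Then the final derived category divides a trivial monoid, so by the Remark after the definition of $\extension{}{}$ (Tilson's Theorem 5.2), $\Phi_n$ is a division $M\preceq \dyck_k\wrc\cdots\wrc\dyck_k$. Each $\dyck_k$ lies in $\wwpc(\Z)$ via the $\CRASP$ program for $\dyck_k$ in \cref{app:crasp} and \cref{thm:crasp_wreath}. \Cref{lemma:finite-product-divides-typed-product}, applied inductively, lifts a classical wreath product of finite monoids dividing typed monoids to a typed wreath product of powers $\Z^{N}$. Since finite direct powers of $\Z$ lie in the pseudovariety generated by $\Z$, we get $M\in\wwpc(\Z)$. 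Termination follows from the module argument: the morphisms bounded on the covered classes are determined by integer weights on a finite generating set, so they form a finitely generated $\Z$-submodule of $\Z^{|M|}$. Each useful new morphism is linearly independent of the earlier ones, so at most $|M|$ are added per $\mathcal{R}$-class. This gives polynomially many steps, each a linear-algebra computation over $\Z$.

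\textbf{Completeness.} Suppose $\psi\colon M\preceq \Z\wrt\cdots\wrt\Z$ with $T$ factors. First, because the types of $\Z$ depend only on thresholds relative to finitely many constants $C$, every type-respecting function, and hence every relevant value of $\psi$, is determined by counters clipped at some bound $k$ depending on $C$ and $|M|$. This lets us replace $\psi$ by a division into an iterated classical wreath product of $\dyck_k$ and work with finite derived categories. Next, take the rightmost projection $\omega\colon M\relml\Z$. By part 2 of the Derived Category Theorem, $D_\omega$ divides a power of the remaining $(T-1)$-fold product. Then induct on $T$ together with the $\mathcal{R}$-class order, with the invariant that the procedure's derived category divides any derived category arising from an actual decomposition. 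There are two cases. If $\omega$ is bounded on the currently covered $\mathcal{R}$-classes, it lies in the module the algorithm saturated, so it is already a $\Z$-combination of the chosen morphisms and refines nothing further. If $\omega$ is unbounded, pumping idempotents $e^\omega$ in the current $\mathcal{R}$-class drives values past every threshold, so the types cannot distinguish the arrows it would separate, and $\omega$ can be discarded. In both cases the rightmost factor is removed, leaving a shorter decomposition of the procedure's derived category. Contradicting minimality of $T$ then shows that if the procedure stalls, no division exists.

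\textbf{Main obstacle.} The unbounded case is the hardest step. It requires showing rigorously that an unbounded $\omega$ gives no distinguishing information modulo types. I would first prove a pumping lemma: for aperiodic $M$, whose aperiodicity is forced by \cref{thm:R_infty_idempotents}-style arguments, any arrow in the derived category whose $\omega$-value is unbounded along $e^\omega$-iterates acts, past threshold $k$, exactly like the arrow $\bot$, i.e.\ constantly on types. The clipping to $\dyck_k$ and the right-congruence quotient are what make this finite and checkable.
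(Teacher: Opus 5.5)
Your soundness half is an acceptable route, modulo bookkeeping for direct powers and for the $\uone$ entry-point factors. The paper instead reassociates via $\dyck_k\preceq\uone^{2k}\wrt\Z$ and Lemma~\ref{lem:UZ_associative}. The completeness half, however, has a genuine gap, starting at its first step.

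\textbf{The reduction to $\dyck_k$.} You replace an arbitrary division $\psi\colon M\preceq\Z\wrt\cdots\wrt\Z$ by one into an iterated classical wreath product of $\dyck_k$, on the grounds that type-respecting functions only see clipped values.
\begin{itemize}
\item A type-respecting $g\colon\Z\to S$ does depend only on a clipped version of its argument. But that argument is the \emph{unclipped} running sum.
\item Clipping does not commute with addition: the clipped value of $x+d$ is not determined by the clipped value of $x$ and by $d$. Moreover, $\dyck_k$ sends any overflow to an absorbing $\bot$.
\item So a counter that makes an arbitrarily large excursion and returns, which typed wreath products of $\Z$ can exploit, is not simulated.
\end{itemize}
Showing that such excursions are useless for a finite $M$ is exactly what completeness must establish. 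The paper obtains $\CRASP\cap\vty{REG}=\wwpc(\vty{Dy})$ as a \emph{consequence} of this theorem, so your reduction presupposes the conclusion. Similarly, Part~2 of the Derived Category Theorem is untyped, with $V^{\Z}$ an infinite power. It does not give a typed division of $D_\omega$ into the remaining $(T-1)$-fold product.

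\textbf{The factor-removal step.} This is also not set up so that it can work.
\begin{itemize}
\item Failure of the procedure constrains $\omega$ only on the class $R_i$ where it fails. So you cannot hope to shorten a decomposition of all of $M$: the rightmost counter may be essential for earlier $\mathcal{R}$-classes.
\item A shorter decomposition of ``the procedure's derived category'' does not contradict minimality of $T$ for $M$. Your invariant linking the two is unproved.
\end{itemize}
The paper fixes one nontrivial bonded component $C_o$ of $E_{i,j}$, obtains a typed division of $C_o$ via Lemma~\ref{lem:typed-division-compose}, and minimizes $T$ for $C_o$ itself. The case split is on whether $\omega|_{C_o}$ is finite-image on arrows ending in $R_i$, not on boundedness over already covered classes.

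\emph{Finite case.} The Extension Lemma~\ref{lemma:extend-omega} turns $\omega$ into an element of $\Omega_{i,j}$. Since neither Condition~A nor Condition~B fired, each homset carries a single common value $\theta_{u,u'}$. One then defines $\mu$ by evaluating the left-factor function $g$ at the potential $\nu_u$.

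Your claim that $\omega$ is ``a $\Z$-combination of the chosen morphisms'' is not what the stopping rule gives: a non-separating morphism need not lie in that span. More importantly, without a designated evaluation point you never obtain a relational morphism into the $(T-1)$-fold product at all.

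\emph{Infinite case.} One evaluates at $g(\pm\infty)$, which exists because type-respecting functions are eventually constant. Separation is preserved by prefixing any two distinct arrows with a path through the unbounded arrow (Lemmas~\ref{lemma:looping-zero} and~\ref{lemma:injectivity-anchor}).

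Your proposed pumping lemma, that such arrows act ``like $\bot$, i.e.\ constantly on types'', points the wrong way. What must be shown is that the remaining $T-1$ factors, evaluated in the limit, still separate distinct arrows, not that information collapses.
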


Recall the definition of $\mathcal{R}$ classes and the $\prec_{\mathcal{R}}$ order from \cref{def:R_equivalent}. We now state our decision procedure:

\begin{definition}[Decision Procedure]\label{def:decition-procedure}
Our input is a finite monoid $M$.
Let $R_1, R_2, \dots, R_\ell$ be the $\mathcal{R}$-classes of $M$, such that $R_i \prec_{\mathcal{R}} R_j \Rightarrow i > j$.
For $i=1 , 2, \dots, \ell$, we maintain relational morphisms $\phi_i$ from $M$ to an iterated wreath product of $\dyck_k$ and $\uone$.
We maintain the invariant that \[\{m\} = \phi_i^{-1}(\phi_i(m)), \forall m \in R_1, \dots, R_i\; \; \; \; \; \; \; \;  (\operatorname{Invariant}(i))\]
Let $N_0$ be the trivial monoid and $\phi_0\colon M\relml N_0$ be the trivial relational morphism.
For $i=1, \dots, \ell$, do
    \begin{enumerate}

\item %
\textcolor{darkblue}{Note: Here, we record, for any path through the category going into $R_i$, which arrow was used for passing into it. We will later refer to this annotation as the ``entry point coordinate''. It intuitively tells us via which element of $R_i$ we first entered it. Later, $E_{i,j}$ will have multiple strongly connected components (or \emph{bonded components}, the term used by Tilson) corresponding to $R_i$, indexed by the different possible entry points (Lemma \ref{lemma:structure-of-Eij}).}

Let $\delta : \obj(D_{\phi_{i-1}}) \rightarrow \{0,1\}$ be given as
\[\delta(o) = \begin{cases}
    1 & \exists m \in \phi_{i-1}^{-1}(o) : m \preceq_R R_i \\ 0 & else 
\end{cases}\]

Let $\partial_i = \{(o \rightarrow_{s,t} ot) : \delta(o) = 0, \delta(ot) = 1\}$.

Define $\phi' : D_{\phi_{i-1}} \relml B$, where $B$ is the left-zero semigroup with identity $*$ adjoined, with left-zeros indexed by the elements of $\partial_i$\footnote{$B := \partial_i \cup \{*\}$, $*b = b$, $rb=r$ whenever $r \in \partial_i$, $b \in B$.}, as the relational morphism generated by (i.e., the intersection of all relational morphisms satisfying) 
\begin{equation}
    \phi'(o \rightarrow_{(s,t)} ot) \supseteq \begin{cases}
        \partial_i & \delta(o) = \delta(ot) = 1 \\
        \{*\} & \delta(o) = \delta(ot) = 0 \\
        \{(o \rightarrow_{(s,t)} ot)\} & \delta(o) = 0, \delta(ot) = 1 \\
    \end{cases}
\end{equation}
where we note $\delta(o) = 1, \delta(ot) = 0$ cannot occur. 

By definition, $* \not\in \phi'(1 \rightarrow_{(s,o)} o)$ iff $s \preceq_R R_i$.

We note that $B$ is $\mathcal{R}$-trivial; hence it divides an iterated wreath product of $\uone$.

This is a relational morphism.

    Set $\phi_{i,1} := \extension{\phi_{i-1}}{\phi'} : M \relml N_{i,1}$ where $N_{i,1} = B \wrc N_{i-1}$.

    \item Build a sequence of relational morphisms $\phi_{i,2}, \phi_{i,3}, \dots$ as follows, over $j=1,2,3,\dots$:

    \begin{enumerate}
\item

\textcolor{darkblue}{Intuition: We simplify $D_{\phi_{i,j}}$ in two ways. First, all information from earlier annotation is removed once we have entered $R_i$. This simplifies our proof by avoiding any complications arising from interactions between old and new annotation. Second, elements of $\dyck_k$ provide extra information beyond bounded counting; here, we project its elements to the underlying set of truncated integers.}

  Write an element of \(N_{i,j}\) as
  \[
  a=(f_{i,j},\ldots,f_{i,1},f).
  \]
  where $f_{i,j} : N_{i,j-1} \rightarrow \dyck_k$.
  Define its ``retained signature'' by
  \[
  \operatorname{sig}_{i,j}(a)
  :=
  \big(
    \underbrace{0 \cdot \underbrace{f_{i,j}(1_{N_{i,j-1}})}_{\in \dyck_k}}_{\in S_k},
    \ldots,
    0 \cdot f_{i,2}(1_{N_{i,1}}),
    \underbrace{f_{i,1}(1_{N_{i-1}})}_{\in B}
  \big),
  \]
  where we view $\dyck_k$ as acting on $S_k := [-k, \dots, k] \cup \{\bot\}$ from the right (\ref{eq:action-dyck-counts}). 
We define a right congruence\footnote{In the sense that $a \rho b \Rightarrow ar \rho br$ for any $a,b,r \in N_{i,j}$.} \(\rho_{i,j}\) on \(N_{i,j}\) by setting
  \(
  a \,\rho_{i,j}\, b
  \)
  whenever (i) $a=b$, or (ii)
  \(
  f^a_{i,1}(1_{N_{i-1}})\neq *, f^b_{i,1}(1_{N_{i-1}})\neq *,
  \)
  and
  \(
  \operatorname{sig}_{i,j}(a)
  =
  \operatorname{sig}_{i,j}(b).
  \)
  We define $E_{i,j}$ as the category obtained from $D_{\phi_{i,j}}$ and the right-congruence $\rho_{i,j}$ (based on the discussion in Section \ref{sec:dyck-monoids-congruence}).

\item Consider the set $\Omega_{i,j}$ of algebraic relational morphisms $\phi' : E_{i,j} \relml \Z$ such that for each $s \in R_i$ and any $t \in \phi_{i,j}(s)$, $\phi'(1 \rightarrow_{s,t} t)$ is a finite set.
\footnote{Without loss of generality, we restrict to relational morphisms $\phi'$ where, for each object $o$ with non-$*$ entry annotation, every vlaue of $\phi'$ on an arrow $1 \rightarrow o$ is generated by factorization through a prefix ending inside $R_i$: $\phi'(\alpha) = \bigcup_{\alpha=\beta\gamma, \beta \in \pi^{-1}(R_i)} \phi'(\beta)\phi'(\gamma)$. This condition is imposed because relational morphisms may add values that are not generated by any nontrivial factorization of an arrow. The construction in Lemma \ref{lemma:extend-omega} satisfies this condition.}
\textcolor{darkblue}{Note: In the automaton-based proof, the key requirement is for the relabelings to be balanced. Here, we formulate this in terms of finite sets.}

\item %
If there is $\phi' \in \Omega_{i,j}$ such that \emph{at least one} of the following is satisfied:
\begin{enumerate}
    \item Condition A: $\phi' : E_{i,j} \relml \Z$ assigns disjoint images to two arrows in $Hom(o \rightarrow o')$ where $o, o'$ both are in $R_i$ (in the sense of Definition \ref{def:notions-start-end-in-Ri}).

  \item Condition B: 
  There exist objects
  \(x,y\in Obj(E_{i,j})\), arrows
  \(
    a\in Hom(x\rightarrow y); b\in Hom(y\rightarrow x); \ell\in Hom(x\rightarrow x)
  \)
  such that $x, y$ are inside $R_i$\footnote{In the sense of Definition \ref{def:notions-start-end-in-Ri}:
  \(
    Hom(1,x)\cap \pi^{-1}(R_i)\neq\emptyset;
    Hom(1,y)\cap \pi^{-1}(R_i)\neq\emptyset,
  \), where $\pi : E_{i,j} \relml M$ is the canonical division.},
  but no arrow in $Hom(x,y)$ ends in $R_i$\footnote{In the sense of Definition \ref{def:notions-start-end-in-Ri}: \(
    \forall \beta\in Hom(x\rightarrow y);
    Hom(1\rightarrow x)\,\beta\cap \pi^{-1}(R_i)=\emptyset,
  \), where $\pi : E_{i,j} \relml M$ is the canonical division.}, whereas $\ell$ ends in $R_i$\footnote{In the sense of Definition \ref{def:notions-start-end-in-Ri}:  
  \(
    Hom(1,x)\,\ell\cap \pi^{-1}(R_i)\neq\emptyset,
  \)}
  and
  \[
    \phi'(ab)\cap \phi'(\ell)=\emptyset .
  \]

\end{enumerate}
then we take such a $\phi'$, turn it to a relational morphism $\tilde{\phi'} : E_{i,j} \relml \dyck_k$\footnote{For sufficiently large $k$, greater than the absolute value of, for each $s \in R_i$ and any $t \in \phi_{i,j}(s)$, the entries of the finite set $\phi'(1 \rightarrow_{s,t} t)$. To convert a relational morphism to $\Z$ into one on $\dyck_k$, map 1 to a, -1 to b, etc., and close to make it a relational morphism. Choosing $k$ large enough will ensure $\bot$ is avoided on these homsets. Also, on these homsets, this resulting relational morphism is at least as finegrained as the original one, since one can get back the original one by mapping Dyck elements to their height.}, pull it back to $\widehat{\phi}' : D_{\phi_{i,j}} \relml \dyck_k$ and build $\phi_{i,j+1} := \extension{\phi_{i,j}}{\widehat{\phi}'} \colon M\relml N_{i,j+1}$, where $N_{i,j+1} := \dyck_k \wrc N_{i,j}$.

        \item Otherwise, we terminate this inner loop.

        \item 
        One possibility is that $\phi_{i,j}^{-1}(\phi_{i,j}(m)) = \{m\}$ for each $m \in R_i$. By induction, we have satisfied $\operatorname{Invariant}(i)$. Then, set $\phi_i := \phi_{i,j}\colon M\relml N_i$ and pass to the next iteration in the outer loop. 
        Else, we exit from the entire algorithm and declare \emph{failure}.
        \end{enumerate}

    \end{enumerate}
    We declare \emph{success} and return $N_\ell$ if we iterated through all $i=1, \dots, \ell$ without ever declaring failure.
\end{definition}
\textcolor{darkblue}{Note: Motivation of this algorithm: The generic strategy would be to keep generating relational morphisms to $\Z$, but that would not suffice for decidability, because (i) we wouldn't know how often to iterate, and (ii) wouldn't know which types to use. We ``guide'' the process by (i) proceeding along $\mathcal{R}$ classes, (ii) considering morphisms to well-selected finite monoids, (iii) passing to $E_{i,j}$ instead of the more complicated $D_{\phi_{i,j}}$, and (iv) focusing on the sets $\Omega_{i,j}$.}

Throughout, we write $1 \in Obj(E_{i,j})$ for the  object arising from the identity element of $N_{i,j}$.

\begin{defin}\label{def:notions-start-end-in-Ri}
For any of the categories $E_{i,j}$ constructed in the decision procedure, we define the following notions. Let $\pi : E_{i,j} \prec M$ be the canonical division. 

\begin{enumerate}
\item Let $S_i$ be the set of arrows $o \rightarrow_{s,t} ot$ such that there are arrows $1 \rightarrow_{s',o} o$ and $ot \rightarrow_{s'',t''} ott''$ such that $ss's'' \in R_i$.%

\item We say an arrow $\alpha \in Hom(o \rightarrow o')$ ``ends'' in $R_i$ if there is an arrow $\beta \in Hom(1 \rightarrow o)$ such that $\beta\alpha$ is defined and $\beta\alpha \in \pi^{-1}(R_i)$.

\item We say an arrow $\alpha \in Hom(o \rightarrow o')$ ``starts'' in $R_i$ if there is an arrow $\beta \in Hom(1 \rightarrow o)$ such that $\beta\alpha$ is defined and $\beta \in \pi^{-1}(R_i)$.

\item We say an object $o$ is ``inside $R_i$'' if $Hom(1 \rightarrow o) \cap \pi^{-1}(R_i) \neq \emptyset$. 
\end{enumerate}

\end{defin}

\subsubsection{Key Properties of the Procedure}

\begin{lemma}[Correctness]\label{lemma:algebraic-correctness}%
    If the algorithm succeeds, then $M$ divides an iterated wreath product of $(\Z, \Z_+)$, of the form $(((\dots \Z) \wrt \Z) \wrt \Z)$.
\end{lemma}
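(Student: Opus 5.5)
If the algorithm succeeds, the final relational morphism $\phi_\ell\colon M\relml N_\ell$ satisfies $\operatorname{Invariant}(\ell)$. Every element of $M$ lies in some $\mathcal{R}$-class $R_i$ with $i\le\ell$, so $\phi_\ell^{-1}(\phi_\ell(m))=\{m\}$ for every $m\in M$. I will read this as injectivity of $\phi_\ell$. Then the subset $T=\{(m,n)\in\#\phi_\ell\}$, projected to $N_\ell$, gives a submonoid $\phi_\ell(M)\le N_\ell$. The map sending $n\in\phi_\ell(m)$ to $m$ is well defined on this submonoid, because the images of distinct elements are disjoint. It is a surjective homomorphism, since $\phi_\ell(m_1)\phi_\ell(m_2)\subseteq\phi_\ell(m_1m_2)$ and $1\in\phi_\ell(1)$. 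This yields the first step: a division $M\preceq N_\ell$.

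The second step is to unfold $N_\ell$. By construction, $N_\ell$ is an iterated classical wreath product whose factors are finite monoids of two kinds. The first kind is the left-zero monoids $B$, which are $\mathcal{R}$-trivial and hence divide iterated wreath products of $\uone$. The second kind is the bounded Dyck monoids $\dyck_k$ for various $k$. Using the lemma that division is compatible with wreath products ($M\preceq S$, $N\preceq T$ imply $M\wrc N\preceq S\wrc T$), together with transitivity of division and associativity of the wreath product up to division, I reduce to showing two facts. Each $\uone$ divides a typed monoid in $\wwp^k(\Z)$, and each $\dyck_k$ divides a typed monoid in $\wwpc(\Z)$. For $\uone$ I use the type $\{0\}$ of $\Z$, with letter $0\mapsto 1$ and $1\mapsto 0$, to detect ``count $\ge1$''. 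For $\dyck_k$ I use the decomposition worked out in \cref{sec:dyck_derived}, generalized to depth $k$, or equivalently the $\CRASP$ program for $\dyck_k$ given in \cref{app:crasp} combined with \cref{thm:crasp_wreath}. Strictly, we need division of the monoids rather than mere recognition of the languages, so I will state the depth-$k$ analogue of $\phi_3$ explicitly: a counter $\Z$, a $\Z$ counting threshold violations, and a final $\Z$ separating the identity.

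The third step converts the classical iterated wreath product of finite monoids into a typed one. Here I apply \cref{lemma:finite-product-divides-typed-product} inductively. Suppose $N_{i,j}\preceq S$ with $S$ typed in $\wwpc(\Z)$, and $\dyck_k\preceq U$ (or $B\preceq U$) with $U$ in $\wwpc(\Z)$. Then the next stage, for instance $\dyck_k\wrc N_{i,j}$, divides $U^{N_{i,j}}\wrt S$. The finite power $U^{N_{i,j}}$ is a direct product, and I must argue that it stays in the typed pseudovariety. This follows because a direct product divides a wreath product, and $\wwpc(\Z)$ is closed under wreath products. Iterating gives $M\preceq(((\dots\Z)\wrt\Z)\wrt\Z)$.

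I expect the main obstacle to be bookkeeping rather than any new idea. Specifically, I need to check that the inductive application of \cref{lemma:finite-product-divides-typed-product} produces a typed product in the right bracketing, so that the left-nested form in the statement is attained. I will also need to confirm that the direct powers $U^{N}$ are absorbed into the closure. For the bracketing I plan to use that $\wwp^{k+1}$ is defined with new factors added on the right, so the inductive step matches this exactly.
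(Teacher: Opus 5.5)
Your first step (injectivity from $\operatorname{Invariant}(\ell)$ giving $M\preceq N_\ell$) matches the paper, and so does your reduction of the $B$ factors to $\uone$. The gap is in your third step, and it is more than bookkeeping.

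The algorithm sets $N_{i,j+1}=\dyck_k\wrc N_{i,j}$. Your inductive use of \cref{lemma:finite-product-divides-typed-product} therefore gives $\dyck_k\wrc N_{i,j}\preceq U^{N_{i,j}}\wrt S$. Here the new factor sits on the \emph{left}, and the accumulated typed product $S$, which already contains many copies of $\Z$, sits on the \emph{right}. That is the opposite of $\wwp^{k+1}(\vty{V})=\wwp^k(\vty{V})\wrt\vty{V}$, so your claim that the inductive step ``matches this exactly'' is false.

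To reach the left-nested form $(((\dots\Z)\wrt\Z)\wrt\Z)$, and likewise to absorb the powers $U^{N_{i,j}}$, you appeal to ``$\wwpc(\Z)$ is closed under wreath products''. That is an associativity law $X\wrt(Y\wrt Z)\preceq(X'\wrt Y')\wrt Z$ for typed wreath products whose middle factor $Y$ is infinite. The paper neither states nor proves it: $\wwpc$ is defined only by nesting on the left. Moreover, in $X\wrt(Y\wrt Z)$ the functions must respect the joint, constant-shifted types of $Y\wrt Z$, and these would have to be re-expressed through separate $Y$- and $Z$-types. Decomposing $\dyck_k$ into three copies of $\Z$ makes things worse, since every rebracketing then involves such an infinite middle factor.

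The paper's proof is built to avoid general associativity. It proceeds in four moves:
\begin{enumerate}
\item It rebrackets $N_\ell$ into left-nested form while every factor is still finite. There, associativity of the classical wreath product (at the level of pseudovarieties) is standard.
\item It uses the decomposition $\dyck_k\preceq\uone^{2k}\wrt\Z$, which has a finite left factor and a \emph{single} $\Z$. The only typed rebracketing ever needed is then of the shape handled by \cref{lem:UZ_associative}, namely $T\wrt(\uone\wrt S)\preceq(T\wrt\uone^{|\types{S}|})\wrt S$. That lemma works precisely because the middle factor is finite: the left factor records, in finitely many $\uone$ bits, which types of $S$ have occurred.
\item Direct products of $\uone$ left in the middle are split into wreath products; this is harmless because the factors are independent.
\item Finally, $\uone\preceq\Z$ finishes the argument.
\end{enumerate}
To repair your argument, either follow this route (rebracket first, and decompose $\dyck_k$ as a finite monoid wreathed with one $\Z$), or give a genuine proof of associativity of the typed wreath product for infinite middle factors.
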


\begin{proof}

We refer to a wreath product of the form $(\dots \wrt \cdot) \wrt \cdot) \wrt \cdot) \wrt \cdot$ as \emph{left-associative}.
First, if the algorithm succeeds we obtain a relational morphism $\phi_k : M\relml N$ where $N$ divides a left-associative wreath product of $\uone$ and $\dyck_k$.
By the invariant maintained in the Decision Procedure, $\phi_k$ is injective, making it a division.
We now need to explain why $M$ also divides a product of the form $(((\dots \Z) \wrt \Z) \wrt \Z)$.
First, by we can obtain a representation in terms of $\dyck_k \preceq  ((\uone)^{2k} \wrt \Z)$  and $\uone$, in the left-associative bracketing due to the associativity of the finite wreath product at the level of pseudovarieties \citep{tilson1987categories}.
By repeated application of \cref{lem:UZ_associative}, we then obtain a left-associative wreath product where the factors are either direct products of $\uone$ or just $\Z$. 
Because direct products divide wreath products, we know that $M\preceq (S\wrt (N\times \uone))\wrt T$ implies $M\preceq ((S\wrt N)\wrt \uone)\wrt T$.
Iteratively applying this identity, we obtain a division into a left-associative wreath product of $\uone$ and $\Z$ (with much greater depth). 
Finally because $\uone\preceq\Z$, we can conclude that $M$ divides a left-associative wreath product of $\Z$.

\end{proof}

\begin{lemma}\label{lem:UZ_associative}%
    For typed monoids $T,S$, we have that $ T\wrt (\uone\wrt S)\preceq(T\wrt \uone^{|\types{S}|})\wrt S$
\end{lemma}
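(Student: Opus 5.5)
We prove the division by writing down an explicit embedding of $T\wrt(\uone\wrt S)$ into $(T\wrt \uone^{|\types{S}|})\wrt S$. The idea is that a type-respecting function in the middle layer of $\uone\wrt S$ only sees the $S$-coordinate through finitely many types. It can therefore be replaced by a vector of $\uone$ values indexed by $\types{S}$, carried one level up.

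\textbf{Step 1: shape of elements.} An element of $\uone\wrt S$ is a pair $(g,s)$ with $g\colon S\to\uone$, and a unit has $g$ type-respecting. Its first coordinate at $1_S$ is a product $\prod_i g_i(s_1\cdots s_{i-1})$ in $\uone$. Since $\uone$ is idempotent and commutative, this product equals $0$ iff some prefix $s_1\cdots s_{i-1}$, shifted by one of the constants in $C$, lands in a type on which $g_i$ is $0$. So the relevant information is a finite record: for each pair (type $\type{S}\in\types{S}$, constant $c$), whether some prefix has hit $\type{S}$ after multiplication by $c$. I will index $\uone^{|\types{S}|}$ by these pairs, enlarging the exponent to $|\types{S}|\cdot|C|$ if necessary. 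The target $\uone^{|\types{S}|}$ is insensitive to this enlargement at the pseudovariety level, since it is a direct product of copies of $\uone$.

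\textbf{Step 2: the embedding.} For a unit $(F,(g,s))$ of the left-hand side, where $F\colon \uone\wrt S\to \units{T}$ is type-respecting, I map it to $(F',s)$ in $(T\wrt\uone^{|\types{S}|})\wrt S$. Here $F'(s')=(F'',v_{s'})$, where
\begin{itemize}[noitemsep]
\item $v_{s'}\in\uone^{|\types{S}|}$ records which types $g$ flags at $s'$;
\item $F''\colon \uone^{|\types{S}|}\to \units{T}$ evaluates $F$ from the recorded type information.
\end{itemize}
This is possible because $F$ depends only on the type of its argument, namely $g(1)\in\{0,1\}$ together with the type of the $S$-part. The $S$-part type is also available to $F'$, since $F'$ itself is type-respecting in $s'$.

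\textbf{Step 3: verification.} I then check three things. First, the generated submonoid maps onto the left-hand side via the projection recovering $(F,(g,s))$. Second, multiplication is respected: a direct computation in the style of Lemma~\ref{lem:wreath_type_computation} expands both sides into nested sums over prefixes and shows the terms agree. Third, types are respected.

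\textbf{Main obstacle.} I expect the hardest step to be proving that type-respecting functions compose correctly. Specifically, $F$ must be evaluated on the accumulated $\uone$ value $\prod_{j<i} g_j(\cdot)$, which on the right-hand side is only available through the $\uone^{|\types{S}|}$ counter. The argument has to show that the set of hit types along the prefix determines the type that $F$ needs. It also has to show that shifting by constants commutes with this bookkeeping, which may force the indexing set to be closed under the constant shifts $c\in C$.
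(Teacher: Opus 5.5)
Your proposal follows essentially the paper's route. The paper maps $(f,(g,h))\mapsto(\alpha_{f,g},h)$, where $\alpha_{f,g}(\zeta)$ pairs a $\uone^{|\types{S}|}$-vector of type information with a function $\gamma_{f,\zeta}$ that evaluates $f$ from that record (well defined because $f$ is type-respecting). It then checks the type-relevant coordinates by expanding products as in Lemma~\ref{lem:wreath_type_computation}. Your treatment of the constants $C$ is more explicit than the paper's, which suppresses them.

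When you write it out, make sure the vector contributed at position $i$ depends on $g_i$, as your Step~2 says; for instance, one coordinate can simply carry $g_i(s_1\cdots s_{i-1})$. A $g$-independent record of which types the prefixes visited, as Step~1 phrases it, does not determine $\prod_i g_i(s_1\cdots s_{i-1})$ when the $g_i$ vary.
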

\begin{proof}
    We show the proof ignoring any constants in the wreath product to put less strain on notation, but it is easy to add them back in.
    Let $(f_1,(g_1,h_1))(f_2,(g_2,h_2))\cdots (f_n,(g_n,h_n))$ be a multiplication in $T\wrt (\uone\wrt S)$ where $h_i\in S$, $g_i\in\uone^S$, and $f_i\in T^{\uone^S\times S}$.
    We will write multiplication in all monoids using $\Pi$, but assume the distinction is clear. By \cref{lem:wreath_type_computation} we can compute the type of this multiplication by computing the following and checking the types of the indicated terms $\type{1}$, $\type{2}$, and $\type{3}$ 
    \begin{align*}
        \left(\underbrace{\prod_{i=1}^{n-1} f_i\left(\left(\prod_{j=1}^{i-1}g_j\left(\prod_{k=1}^{j-1} h_k\right),\prod_{j=1}^{i-1} h_j\right)\right)}_{\type{1}},\left(\underbrace{\prod_{i=1}^{n-1}g_i\left(\prod_{j=1}^{i-1} h_j\right)}_{\type{2}},\underbrace{\prod_{i=1}^{n-1} h_i}_{\type{3}}\right)\right)
    \end{align*}

    We will simulate this computation in $(T\wrt \uone^{|\types{S}|})\wrt S$.
    That is, the desired computation is given by the following injective relational morphism given below. 
    Intuitively, $\uone^{|\types{S}|}$ records which of the finitely many \emph{types} have been seen.
    Since $f$ must be type-respecting, this suffices to recreate the computation of $T\wrt(\uone\wrt S)$.
    \begin{align*}
        \phi\colon T\wrt (\uone\wrt S) &\to (T\wrt \uone^{|\types{S}|})\wrt S\\
        \alpha_{f,g} &\in (T^{\uone^{|\types{S}|}}\times \uone^{|\types{S}|})^S\\
        \gamma_{f,\zeta} &\in T^{\uone^{|\types{S}|}}\\
        \phi((f,(g,h)))&= (\alpha_{f,g},h)\\
        \alpha_{f,g}\left(\zeta\right)&=(\gamma_{f,\zeta},(\lambda_{\type{S}_1},\lambda_{\type{S}_2},\ldots, \lambda_{\type{S}_{|\types{S}|}})) & \lambda_\type{S}=\begin{cases}
            1 & \zeta\in\type{S}\\
            0 & \zeta\not\in\type{S}
        \end{cases}\\
        \gamma_{f,\zeta}((\lambda_{\type{S}_1},\lambda_{\type{S}_2},\ldots, \lambda_{\type{S}_{|\types{S}|}}))&=f\left(\beta,\zeta\right) & \Im(\beta)\subseteq\bigcap_{\type{S}\in\types{S};\lambda_\type{S}=1} \type{S}
    \end{align*}
    Note that $\gamma_{f,\zeta}$ is well-defined because $f((\beta,\zeta))$ identical for all $\beta$ whose images are contained in the same set of types, since $f$ is type-respecting.
    Now, this mapping is injective because every type-respecting function $g\in \uone^S$ is represented by some $(\lambda_{\type{S}_1},\lambda_{\type{S}_2},\ldots, \lambda_{\type{S}_{|\types{S}|}})) \in \uone^{|\types{S}|}$, and every type-respecting function $f$ is represented as some $\gamma_{f,\zeta}$.
    
    Thus the sequence $(f_1,(g_1,h_1))(f_2,(g_2,h_2))\cdots (f_n,(g_n,h_n))$ maps homomorphically to the sequence $(\alpha_{f_1,g_1},h_1)(\alpha_{f_2,g_2},h_2)\cdots (\alpha_{f_n,g_n},h_n)$ which is evaluated as 
    $$\left(\prod_{i=1}^{n-1}\alpha_{f_i,g_i}\left(\prod_{j=1}^{i-1}h_j\right),\underbrace{\prod_{i=1}^{n-1}h_i}_{\type{3}}\right)$$
    Observe that the type $\type{3}$ is already computed in the second coordinate. We will see that $\type{1}$ and $\type{2}$ are also computed in the first coordinate:
    \begin{align*}
        \prod_{i=1}^{n-1}\alpha_{f_i,g_i}\left(\prod_{j=1}^{i-1}h_j\right)&=\alpha_{f_1,g_1}\left(0\right)\alpha_{f_2,g_2}\left(h_1\right)\cdots \alpha_{f_n,g_n}\left(\prod_{j=1}^{n-1}h_j\right)\\
        &=(\gamma_{f_1,0},\Lambda_1)(\gamma_{f_2,h_1},\Lambda_2)\cdots (\gamma_{f_i,\prod_{i=1}^{n-1}h_j},\Lambda_n)\\
        &=\left(\underbrace{\left(\prod_{i=1}^{n-1}f_i\left(1,\prod_{j=1}^{i-1} h_j\right)\right)}_{\type{1}}, \underbrace{\prod_{i=1}^{n-1}g_i\left(\prod_{j=1}^{i-1} h_j\right)}_{\type{2}}\right)\\
    \end{align*}
\end{proof}

The following properties of $\omega \in \Omega_{i,j}$ are key:
\begin{lemma}[Balance on Loops]\label{lemma:looping-zero} 
Let $\omega \in \Omega_{i,j}$. %
\begin{enumerate}
    \item Let $u \in R_i$, $\rho \in M$ such that $u\rho = u$. 
    Consider arrows \[1 \rightarrow_{(u,o)} o \rightarrow_{(\rho, t)} ot\] in $E_{i,j}$. 
    Then in fact $ot=o$, and  $\omega(o \rightarrow_{(\rho, t)} o) = \{0_{\Z}\}$ where $0_{\Z}$ is the neutral element of $\Z$. %

    \item Let $u$ be such that $R_iu = R_i$. 
    
For $v \in R_i$,    consider arrows
    \[1 \rightarrow_{(v,o)} o \rightarrow_{(u,t)} ot\]
    in $E_{i,j}$.
    Then $\omega(o \rightarrow_{(u,t)} ot)$ is a singleton.
\end{enumerate}

\end{lemma}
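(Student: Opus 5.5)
The plan is to combine the finiteness condition defining $\Omega_{i,j}$ (the $\omega$-images of arrows $1 \rightarrow_{(s,t)} t$ with $s\in R_i$ are finite) with the inclusion $\omega(\alpha)\,\omega(\beta)\subseteq\omega(\alpha\beta)$ for relational morphisms. The one other fact used throughout is that each hom-set of $E_{i,j}$ is finite, since arrows are functions between subsets of the finite monoid $M$. Hence the set
\[F_o := \bigcup\{\omega(\alpha) : \alpha\in Hom(1\rightarrow o),\ \pi(\alpha)\in R_i\}\]
is a finite subset of $\Z$ for every object $o$ inside $R_i$.

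\emph{Part 1, showing $ot=o$.} I would argue by induction on $j$ (and on the outer index $i$) that the loop labelled $\rho$ does not change the object of $E_{i,j}$. The object $o$ reached by $u\in R_i$ has a non-$*$ entry annotation, so it is identified up to $\rho_{i,j}$ by its retained signature $\operatorname{sig}_{i,j}$. The $B$-coordinate is the entry-point arrow into $R_i$. It is unchanged by $\rho$ because elements of $\partial_i$ are left zeros, and $u\rho=u$ keeps us inside $R_i$ (so we never return to $*$). Each $S_k$-coordinate is $0\cdot f_{i,j'}(1)$ for some earlier step $j'<j$, and it changes along the loop by the $\dyck_k$-value of the earlier morphism $\widehat\phi'$ on the corresponding loop arrow of $E_{i,j'}$. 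By the induction hypothesis (part 1 at level $j'$) that value is $0$, and $k$ is chosen so that no $\bot$ occurs. So the signature is preserved and $ot=o$. This bookkeeping through the signature, in particular checking that the earlier counters really act as loops at the earlier levels, is the step I expect to be most delicate.

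\emph{Part 1, showing $\omega(o\rightarrow_{(\rho,t)}o)=\{0\}$.} Write $\lambda$ for the loop and $\beta=1\rightarrow_{(u,o)}o$. For every $n$, the composite $\beta\lambda^n$ is an arrow $1\rightarrow o$ whose $M$-label is $u\rho^n=u\in R_i$. Pick $a\in\omega(\beta)$ and $c\in\omega(\lambda)$; both exist because the hom-set relations are fully defined. Then $a+nc\in\omega(\beta\lambda^n)\subseteq F_o$ for all $n$. Since $F_o$ is finite, this forces $c=0$. Hence $\omega(\lambda)\subseteq\{0\}$, and nonemptiness gives equality.

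\emph{Part 2.} Here $vu\in R_iu=R_i$, so the arrow $\gamma=1\rightarrow_{(v,o)}o$ followed by $\mu=o\rightarrow_{(u,t)}ot$ is an arrow into $R_i$. Then $\omega(\gamma)+\omega(\mu)\subseteq F_{ot}$, and $\omega(\gamma)\neq\emptyset$, so $\omega(\mu)$ is finite. For the singleton claim, note that $vu\,\mathcal R\,v$, so there is $x\in M$ with $vux=v$. Choose $t'\in\phi_{i,j}(x)$; this gives an arrow $\nu=ot\rightarrow_{(x,t')}ott'$. The arrow $\mu\nu$ is labelled $ux$ and satisfies $v(ux)=v$, so part 1 applies: $ott'=o$ and $\omega(\mu\nu)=\{0\}$. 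Then $\omega(\mu)+\omega(\nu)\subseteq\{0\}$ with both sets nonempty subsets of $\Z$, which forces each to be a singleton. In particular $\omega(\mu)$ is a singleton.
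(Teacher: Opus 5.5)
Your proof is correct and follows the paper's route. You prove $ot=o$ by induction on $j$ via the retained signature, using part 1 at earlier levels, and you pump $u\rho^n=u$ against the finiteness condition of $\Omega_{i,j}$ to force the loop value to $0$; for part 2 you return along $x$ with $vux=v$, apply part 1, and cancel in $\Z$. The only difference is cosmetic: you bound all arrows in $Hom(1\rightarrow o)$ labelled in $R_i$ by the finite set $F_o$, whereas the paper observes directly that $\omega(1\rightarrow_{(u,o)}o)$ itself would be infinite. Your version is also more explicit than the paper's about why the entry-point and $S_k$ signature coordinates are preserved along the loop.
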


\begin{proof}
For the first point, we have two claims to prove: that (a) $ot=o$, and that (b) the image is $\{0_{\Z}\}$. We first note that (a) holds at $j=1$: $u \in R_i$, $u\rho = u$ and 
$1 \rightarrow_{(u,o)} o \rightarrow_{(\rho, t)} ot$ in $E_{i,j}$, then in fact $ot = o$ by construction of $E_{i,1}$.
We also note that, if (b) has been shown for $1, \dots, j$, then (a) follows for $1, \dots, j+1$.
We thus need to perform the inductive step for (b). We consider the arrow $o \rightarrow_{(\rho, t)} o$. Assume $z \in \omega(o \rightarrow_{(\rho, t)} o)$ and $z \neq 0$.
    Now, for all $k \geq 0$, $u\rho^k = u$; hence, $\omega(1 \rightarrow_{(u,o)} o)$ is infinite. This is a contradiction to $\omega \in \Omega_{i,j}$.

    For the second point, from $R_iu = R_i$, obtain $\rho$ such that $vu\rho = v$. We consider arrows:
    \[1 \rightarrow_{(v,o)} o \rightarrow_{(u,t)} ot \rightarrow_{(\rho, t')} ott'\]
    By (1), in fact, $ott' = o$ and $\omega(o \rightarrow_{(u,t)} ot \rightarrow_{(\rho, t')} ott') = \{0_{\Z}\}$.
    Because $\omega$ is a relational morphism and $\Z$ is a group, this entails $\omega(o \rightarrow_{(u,t)} ot)$ must be a singleton.
\end{proof}

We deduce the following structural properties of $E_{i,j}$, which are used both for termination and for completeness of the procedure. As foreshadowed when we defined the ``entry-point coordinate'' $\phi_{i,1}$, we find that $R_i$ is reflected in $E_{i,j}$ in multiple strongly connected components (or ``bonded components''), each indexed by a different entry arrow of the ``boundary set'' $\partial_i$ we defined there:
\begin{lemma}[Structure of $E_{i,j}$]\label{lemma:structure-of-Eij} %
\quad
\begin{enumerate}
\item Let $\alpha := (o \rightarrow_{(s,t)} ot) \in \partial_i$, and let $m \in R_i$.
Then there is exactly one $u \in Obj(E_{i,j})$ such that both of the following hold: (i) there is an arrow $1 \rightarrow_{(m,u)} u$, and (ii) the $\phi_{i,1}$ component of $u$ is $\alpha$.

\item Let $\alpha = (1 \rightarrow_{(m,u)} u), \alpha' = (1 \rightarrow_{(m',u')} u')$ be arrows in $E_{i,j}$, where $m, m' \in R_i$; and assume $u, u'$ have the same $\phi_{i,1}$ component. Then there is $\beta \in Hom(u,u')$ such that $\alpha' = \alpha\beta$.

\end{enumerate}
\end{lemma}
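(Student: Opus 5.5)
We argue by induction on $j$, proving both parts simultaneously, and we make repeated use of the way $E_{i,j}$ is built from $D_{\phi_{i,j}}$ through the right congruence $\rho_{i,j}$.

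\textbf{Base case $j=1$.} Objects of $E_{i,1}$ reachable by an arrow $1 \rightarrow_{(m,u)} u$ with $m \in R_i$ have a non-$*$ $B$-coordinate. By the construction of $\phi'$ into $B$, this coordinate equals the unique element of $\partial_i$ through which the path entered the region $\delta = 1$. Since $B$ restricted to $\partial_i$ is left-zero, this entry label is retained once $\delta=1$.

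\textbf{Part (1), existence.} Fix $\alpha = (o \rightarrow_{(s,t)} ot) \in \partial_i$ and $m \in R_i$. Since $\delta(ot)=1$, pick a prefix realising $o$ followed by $s$, landing at some $m_0 \preceq_R R_i$. Because $\alpha$ is a boundary arrow, the product lies in $R_i$ itself, and $\mathcal{R}$-equivalence then lets us extend to reach $m$. This produces an arrow from $1$ labelled by $m$ whose entry coordinate is $\alpha$.

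\textbf{Part (1), uniqueness.} The key observation is that $\rho_{i,j}$ identifies objects with equal retained signature once the $B$-coordinate is non-$*$. So it suffices to show that the $S_k$ components of $\operatorname{sig}_{i,j}$ are determined by $m$ and the entry arrow $\alpha$. For $j=1$ there are no $S_k$ components and the claim is immediate.

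For the step $j \to j+1$, the new component is $0\cdot f_{i,j+1}(1)$. Up to truncation this is the $\omega$-value of the path, where $\omega \in \Omega_{i,j}$ is the morphism chosen at step $j$. Take two paths to $m$ with the same entry $\alpha$. By part (2) at level $j$, any two objects $u, u'$ inside $R_i$ with the same entry coordinate are connected by arrows in both directions. Lemma~\ref{lemma:looping-zero} then shows that $\omega$ assigns $\{0\}$ to loops at $u$ that fix an element of $R_i$, and singletons to arrows acting within $R_i$. The remaining task is to compare the two paths directly:
\begin{enumerate}
\item write each path as a common prefix up to the entry $\alpha$, followed by segments staying inside $R_i$;
\item decompose the difference of the two inside-$R_i$ segments as a loop $u \rightarrow u$ at a point fixing $m$;
\item apply Lemma~\ref{lemma:looping-zero}(1) to conclude the two $\omega$-values agree.
\end{enumerate}
The prefixes up to $\alpha$ themselves differ before entry. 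I expect this is handled because the earlier annotation is discarded by $\operatorname{sig}$, which keeps only the values after entry. It is also where the footnote condition on $\Omega_{i,j}$ (values generated by factorisation through $R_i$) is needed.

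\textbf{Part (2).} Given $\alpha, \alpha'$ with $m, m' \in R_i$ and the same entry coordinate, choose $x$ with $mx = m'$, which exists since $m \,\mathcal{R}\, m'$. Take $\beta = u \rightarrow_{(x,t)} ut$ for some $t \in \phi_{i,j}(x)$. Then $\alpha\beta$ is an arrow from $1$ labelled $m'$ with the same entry coordinate, because the left-zero coordinate is unchanged. By part (1) its target is $u'$, so $\alpha\beta = \alpha'$: arrows of the derived category are determined by their source, target, and the induced action, which agree here.

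\textbf{Main obstacle.} The hard step is the inductive uniqueness in part (1): showing that the $\omega$-values (equivalently the $\dyck_k$-heights) are path-independent inside a bonded component. I would first try to derive it purely from Lemma~\ref{lemma:looping-zero}, using $\Z$ being a group to cancel common suffixes, and only fall back on the factorisation footnote if that fails.
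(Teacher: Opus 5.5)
Your skeleton matches the paper's. You induct on $j$, get the base case from $\rho_{i,1}$ collapsing all objects with the same non-$*$ $B$-value, and derive part (2) from the uniqueness half of part (1) by taking $x$ with $mx=m'$ and using that the left-zero entry coordinate is preserved. Your steps 2--3 are essentially the cancellation argument behind Lemma~\ref{lemma:looping-zero}(2), which is what the paper's inductive step uses. For an object $o$ already reached by $1\rightarrow_{(m,o)}o$ with $m\in R_i$, every arrow $o\rightarrow_{(m',t)}ot$ with $mm'\in R_i$ has a singleton $\omega$-image, so no such arrow splits in $E_{i,j+1}$. The paper only propagates forward from a fixed object inside $R_i$ and never compares whole paths from $1$.

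Your version does compare whole paths. The step you defer, the segments before $\alpha$, is a genuine gap, and the reason you give for it is a misreading. $\operatorname{sig}$ does \emph{not} keep only post-entry values. By the definition of $\mathrm{Ext}$, an object $(f,t)$ reached by $1\rightarrow_{(m,t)}t$ has $f(1_{N_{i,j}})\in\widehat{\phi}'(1\rightarrow_{(m,t)}t)$. So the new coordinate $0\cdot f(1_{N_{i,j}})$ is the height on the entire arrow from $1$, pre-entry factors included; $\operatorname{sig}$ only discards the old $N_{i-1}$-coordinate. The footnote to $\Omega_{i,j}$ concerns factorizations through prefixes ending in $R_i$ and says nothing about what happens before $\alpha$. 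Lemma~\ref{lemma:looping-zero} only controls arrows starting at objects already reached through $R_i$. Membership in $\Omega_{i,j}$ asks only for \emph{finite} images on arrows $1\rightarrow_{(s,t)}t$ with $s\in R_i$. So nothing you invoke rules out a finite but non-singleton value on the pre-entry segment, which would give two objects with different signatures for the same $(m,\alpha)$ in $E_{i,j+1}$. To close this you need an extra ingredient that forces $\omega$ to be single-valued up to the entry object. One option is to restrict to $\omega$ of the shape built in Lemma~\ref{lemma:extend-omega}, with value $\{0\}$ off the bonded component.

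Your existence argument also contains a false step: ``because $\alpha$ is a boundary arrow, the product lies in $R_i$ itself.'' A boundary arrow only passes from $\delta=0$ to $\delta=1$, i.e.\ into an object containing \emph{some} element $\preceq_{\mathcal R}R_i$, not necessarily your $xs$. It may also land strictly below $R_i$. For instance, in $M(\dyck_1)$ with the $\mathcal R$-classes ordered $R_1=\{1\}$, $R_2=\{a,ab\}$, the arrow labelled $\bot$ out of the identity object lies in $\partial_2$. No element of $R_2$ factors through it, since every product through it equals $\bot$. So existence can only hold for those $\alpha\in\partial_i$ through which $R_i$ is actually entered, and your argument should be restricted accordingly. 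Your part (2), like the paper's, needs only the uniqueness half.
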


\begin{proof}
    We show this by induction over $j$. The claims are immediate at $j=1$.
    For the first claim, we show the inductive step using Lemma \ref{lemma:looping-zero}.
For any arrow \[\alpha = (1 \rightarrow_{(m, o)} o)\] with $m \in R_i$, Lemma \ref{lemma:looping-zero}.2 entails that $\omega$ is single-valued on all arrows $\beta = (o \rightarrow_{(m', t)} ot)$ with $mm' \in R_i$; hence, inductively, after passing to $E_{i,j+1}$, only a single arrow is derived from $\beta$.

The second claim follows by choosing $n$ such that $m' = mn$; the first claim enforces the presence of an arrow covering $n$ in $Hom(u,u')$.
\end{proof}

\begin{lemma}[Termination]
    The inner loop always terminates. 
\end{lemma}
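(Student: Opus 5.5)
The plan is to prove termination by a potential argument: each iteration of the inner loop strictly increases a quantity that is bounded in terms of $|M|$. Following the informal sketch in the main text, the quantity will be the rank of a finitely generated free $\Z$-module of ``balanced labellings'' of the part of $E_{i,j}$ that lies inside $R_i$. Every new $\phi'$ chosen in step (c) will be shown to lie outside the span of the earlier choices.

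\textbf{Step 1: a fixed finite frame.} By Lemma~\ref{lemma:structure-of-Eij}, the objects of $E_{i,j}$ inside $R_i$ split into components indexed by the entry arrows $\alpha\in\partial_i$. Within a component, objects are in bijection with elements $m\in R_i$ via the unique $u$ with $1\rightarrow_{(m,u)} u$. The set $\partial_i$ is fixed at $j=1$. Hence the ``$R_i$-part'' of $E_{i,j}$ has at most $|\partial_i|\cdot|R_i|$ objects, and its arrows are indexed by triples (entry point, $m\in R_i$, $n\in M$ with $mn\in R_i$). This index set $I$ is finite and independent of $j$.

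\textbf{Step 2: morphisms become vectors.} By Lemma~\ref{lemma:looping-zero}.2 (and its proof), every $\omega\in\Omega_{i,j}$ is single-valued on these arrows. So $\omega$ restricts to a function $v_\omega\colon I\to\Z$ that is additive along composition and vanishes on loops fixing elements of $R_i$ (Lemma~\ref{lemma:looping-zero}.1). These cocycle-type conditions cut out a subgroup $V\le\Z^I$, a free $\Z$-module of rank at most $|I|$.

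Let $W_j\le V$ be the subgroup spanned by the vectors of the morphisms $\phi'_1,\dots,\phi'_{j-1}$ used so far. The pullback to $\dyck_k$ is injective on these bounded values for $k$ large, so the $E_{i,j}$-arrows inside $R_i$ are separated at least as finely as by the values of all $v\in W_j$.

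\textbf{Step 3: strict increase.} Suppose a $\phi'$ satisfying Condition A or B were in $W_j\otimes\mathbb{Q}$, i.e.\ $c\,v_{\phi'}=\sum_r c_r v_{\phi'_r}$ with $c\neq0$.

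\begin{itemize}
\item \emph{Condition A.} The two arrows of $\mathrm{Hom}(o\to o')$ receive different $v_{\phi'}$-values, hence different values under some earlier $\phi'_r$. But then the $\rho_{i,j}$-quotient via the $S_k$-signature already places them in different hom-sets (they end at different objects), contradicting that they share $\mathrm{Hom}(o\to o')$.
\item \emph{Condition B.} We have $v(ab)\neq v(\ell)$. The same argument shows $x$ and the endpoint of the composite $ab$ would differ in some signature coordinate. We need to check that $\ell$ and $ab$ would then not both lie in $\mathrm{Hom}(x\to x)$.
\end{itemize}

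Either way $\operatorname{rank}_\Z W_{j+1}>\operatorname{rank}_\Z W_j$, so the loop runs at most $|I|\le|\partial_i||R_i||M|$ times.

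\textbf{Main obstacle.} The delicate step is Step 3. One must verify that the signature-based congruence $\rho_{i,j}$ really records each previous relational morphism's value as a truncated integer, so that no later $\phi'$ in the rational span can create a new distinction. This needs care because $E_{i,j}$ only keeps $0\cdot f(1)\in S_k$ rather than the full Dyck element. I would first check that additivity plus the chosen $k$ exceeding all values on $I$ makes the $S_k$-coordinate determine the integer value. The rank argument would then also cover Condition B through the ``ends in $R_i$'' bookkeeping of Definition~\ref{def:notions-start-end-in-Ri}.
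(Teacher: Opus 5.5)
Your rank potential does not control Condition~B, so the proof has a genuine gap. In Condition~B no arrow of $\mathrm{Hom}(x\to y)$ ends in $R_i$. Hence $a$, and with it the loop $ab$, has already left $R_i$, and the separation $\phi'(ab)\cap\phi'(\ell)=\emptyset$ is witnessed on an arrow outside your index set $I$. On such arrows:
\begin{itemize}
\item Lemma~\ref{lemma:looping-zero} gives no single-valuedness;
\item the finiteness requirement defining $\Omega_{i,j}$ imposes nothing, since no arrow $1\rightarrow_{(s,t)}t$ with $s\in R_i$ factors through an arrow that has left $R_i$;
\item $\phi'(a)$, and hence $\phi'(ab)$, is not determined by the vector $v_{\phi'}\in\mathbb{Z}^I$.
\end{itemize}
So nothing in your argument stops the procedure from accepting a Condition-B morphism whose restriction $v_{\phi'}$ is $0$ or lies in $W_j\otimes\mathbb{Q}$. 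An example would be a morphism that is $\{0\}$ on every arrow staying in $R_i$ and nonzero only on exit arrows. Such a step leaves $\operatorname{rank} W_j$ unchanged, so your bound of $|I|$ iterations is not established.

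The check you propose (that $x$ and the endpoint of $ab$ differ in an earlier signature coordinate) cannot close this gap. The hypothesis $v_{\phi'}\in W_j\otimes\mathbb{Q}$ says nothing about $\phi'(ab)$, so there is no contradiction to derive. The same problem arises for Condition~A whenever one of the two separated arrows of $\mathrm{Hom}(o\to o')$ does not end in $R_i$.

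When both separated arrows stay in $R_i$, your Step~3 is sound and takes a different route from the appendix. Earlier morphisms are recorded in the $S_k$-signatures of objects, so on $E_{i,j}$ they are constant on each homset inside $R_i$, and any separator is linearly independent of them. This is the $\mathbb{Z}$-module argument sketched in the main text.

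The paper's appendix proof uses a combinatorial potential instead. By Lemma~\ref{lemma:structure-of-Eij}, the objects inside $R_i$ form a frame of at most $|\partial_i|\cdot|R_i|$ objects. The homsets between them only get finer from $E_{i,j}$ to $E_{i,j+1}$, and every accepted step, via A or B, splits such a homset. In Condition~B the split separates the arrows $ab$ and $\ell$ of $\mathrm{Hom}(x\to x)$. Because this count registers a split whether or not the separated arrows stay in $R_i$, it handles Condition~B uniformly.

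To complete your proof, you need a second potential that accounts for separations on exit arrows, or you can use the refinement count throughout. The rank on $I$ alone cannot bound the number of iterations.
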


\begin{proof}

A new morphism $\phi_{i,j}$ can only be accepted because it separates two arrows in a homset $Hom(o \rightarrow o')$ where $o, o'$ are both inside $R_i$.
We need to show that this can only happen a bounded number of times.

To this end, we first note by the first point of Structure of $E_{i,j}$ that, for each entry arrow $\alpha$, we can uniquely assign a single $o$ for every $r \in R_i$, and this mapping is a surjection onto the objects inside $R_i$.

Also, with every $j$ iteration, this structure becomes finer, i.e., a homset $Hom(o \rightarrow o')$ where $o, o'$ are in $R_i$ can split into several ones that (save for arrows leaving $R_i$, where arrows carrying the same $M$-label might go into into different descendant homsets) each exactly carry some subset of the original arrows.
Thus, the total number of times this happens must be bounded.
\end{proof}

\begin{lemma}[Computability]
In each step of the inner loop, we can effectively decide if there is a $\phi'$ satisfying the requirements.
\end{lemma}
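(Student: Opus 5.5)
The plan is to reduce the existence question to a finite linear-algebra problem over $\Z$. First, $E_{i,j}$ is a finite category: $N_{i,j}$ is finite, being an iterated wreath product of finite monoids ($B$, $\dyck_k$), and $E_{i,j}$ is a quotient of $D_{\phi_{i,j}}$. So we can list its objects, arrows and composition table explicitly.

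Next, we parametrize the candidates $\phi'\in\Omega_{i,j}$. A relational morphism $E_{i,j}\relml\Z$ is determined by which integers it assigns to each arrow. By the generation convention in the footnote defining $\Omega_{i,j}$, on the arrows relevant to Conditions A and B it is generated from values on arrows that start inside $R_i$. Lemma~\ref{lemma:looping-zero} then forces strong structure there:
\begin{itemize}[noitemsep]
\item on arrows $o\rightarrow_{(u,t)} ot$ with $o$ inside $R_i$ and $R_iu=R_i$, the image is a singleton $\{x_\alpha\}$;
\item on loops fixing an element of $R_i$, the image is $\{0\}$.
\end{itemize}
Lemma~\ref{lemma:structure-of-Eij} organizes these arrows into finitely many bonded components, one per entry arrow in $\partial_i$. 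I would therefore introduce one integer unknown $x_\alpha$ per arrow inside $R_i$ (finitely many). Requiring $\phi'$ to be a relational morphism on these arrows amounts to the linear equations $x_{\alpha\beta}=x_\alpha+x_\beta$ for every composable pair, together with $x_\alpha=0$ for the loops of Lemma~\ref{lemma:looping-zero}. The solution set is a finitely generated $\Z$-module $\Lambda_{i,j}\subseteq\Z^{n}$, and a basis for it is computable by Smith/Hermite normal form.

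Arrows entering $R_i$ from outside ($\partial_i$ and its compositions with earlier arrows) carry finite-set values, which only need to be bounded. I would argue that these can always be chosen freely, for instance as a single value $0$ on the entry arrows, and then closed up under composition, without affecting Conditions A and B. Those conditions only compare values of arrows between objects inside $R_i$, namely $\phi'(ab)$ versus $\phi'(\ell)$, or two arrows in one homset. Each condition then becomes: there is $x\in\Lambda_{i,j}$ with $\ell_{\alpha,\beta}(x):=x_\alpha-x_\beta\neq 0$ for a specified finite list of pairs $(\alpha,\beta)$ read off from $E_{i,j}$. Such an $x$ exists iff some basis vector of $\Lambda_{i,j}$ has $\ell_{\alpha,\beta}\neq0$. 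This is because a linear form vanishing on all basis vectors vanishes on the whole module, and conversely any nonvanishing witness is already in $\Omega_{i,j}$. Checking this is finite, so the test is effective and polynomial in $|E_{i,j}|$. Once a witness is found, the bound $k$ for the conversion to $\dyck_k$ can be read off from it.

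The main obstacle is the reduction step: showing that restricting to singleton-valued (linear) assignments loses no generality. If some $\phi'\in\Omega_{i,j}$ separates two arrows with disjoint finite sets, we must extract a single-valued solution that still separates them. My first attempt is to take, for each arrow inside $R_i$, an arbitrary element of $\phi'$ and invoke Lemma~\ref{lemma:looping-zero}.2 to show these choices are forced to be consistent, i.e.\ additive. If that fails for arrows not covered by Lemma~\ref{lemma:looping-zero}.2, I would restrict the unknowns to the bonded components, where every arrow satisfies $R_iu=R_i$. For arrows leaving $R_i$ I would check that they never enter Conditions A or B except through their ending objects.
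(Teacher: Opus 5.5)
\textbf{Gap: Condition B is not decided by your test.} Your $\Z$-module test is essentially the paper's treatment of Condition A. However, it does not decide Condition B. Your claim that arrows leaving $R_i$ enter the conditions ``only through their ending objects'' is false.

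Condition B is defined by the hypothesis that no arrow of $Hom(x \rightarrow y)$ ends in $R_i$. So $a$, and hence $ab$, leaves $R_i$. Lemma~\ref{lemma:looping-zero} gives no single-valuedness there, and $\phi'(ab)$ is only constrained by $\phi'(c)+\phi'(d)\subseteq\phi'(ab)$ over all factorizations $ab=cd$.

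Worse, $ab$ and $\ell$ are both endo-arrows at $x$ in the finite category $E_{i,j}$. For any endo-arrow $c$ there are $n,p\geq 1$ with $c^n=c^{n+p}$. Every single-valued additive assignment then gives $n x_c=(n+p)x_c$, i.e.\ $x_c=0$. This leaves two cases:
\begin{itemize}
\item If $ab$ and $\ell$ (and their powers) carry unknowns in $\Lambda_{i,j}$, the form $x_{ab}-x_\ell$ vanishes identically, so your test can never certify Condition B.
\item If $ab$ carries no unknown, as your list of single-valued arrows suggests, Condition B is not expressed at all.
\end{itemize}
Witnesses for Condition B are intrinsically set-valued on arrows leaving $R_i$. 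Compare $\psi_1$ assigning $[1,\infty)$ to $\bot$-arrows in the $\dyck_1$ example of Section~\ref{sec:dyck_derived}.

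\textbf{How the paper handles it.} The paper's proof separates the cases. A Condition-B witness is constructed directly whenever one exists, and it is applied first. Only when none is available does every homset of a bonded component $C_o$ contain an arrow ending in $R_i$. Then arrows leaving $R_i$ receive nonempty value sets by closure under composition. At that point Condition A does reduce to your $\Z$-module test on arrows that start and end in $R_i$.

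On those arrows, single-valuedness is already supplied by Lemma~\ref{lemma:looping-zero}, whose proof only uses $vu\in R_i$. So the obstacle you flagged is not the real one. What you actually need is this ordering, plus a separate argument for deciding Condition B.

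\textbf{Smaller point.} Your step ``set entry arrows to $0$ and close under composition'' is the Extension Lemma (Lemma~\ref{lemma:extend-omega}). You should invoke its argument: closure could a priori make $\phi'(1\rightarrow_{(s,t)} t)$ infinite for $s\in R_i$, which would take $\phi'$ outside $\Omega_{i,j}$.
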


\begin{proof}
We can always construct a relational morphism of the second type if it exists. We can always apply this case when it is available.
We need to check when a relational morphism of the first type exists.
For the arrows that start and end in $R_i$, since they will be single-valued, we can check the $\Z$-module of functions defined on these arrows satisfying the balancedness condition.
For arrows in the connected components of objects in $R_i$ that however do not end $R_i$, all of them will receive a nonempty set of values by closing this under composition, for, otherwise, we would have been able to choose a relational morphism of the second type.
We can extend any solution from a single connected component to a full morphism, by the Extension Lemma (Lemma \ref{lemma:extend-omega}).

\end{proof}

\subsubsection{Establishing Completeness}
Assume that, after constructing $E_{i,j}$, the procedure terminates with failure.  We localize this failure into a smaller, focused category.

\begin{defin}
    For each object $o$ inside $R_i$, we consider the strongly connected component (or ``bonded component'' in \cite{tilson1987categories}) of $o$: %
\begin{align*}
    Obj(C_o) := &\{o' \in Obj(E_{i,j}) : Hom(o \rightarrow o') \neq \emptyset, Hom(o'\rightarrow o) \neq \emptyset   \} %
\end{align*}
As above, let $\pi$ be the canonical division $\pi\colon E_{i,j}\prec M$.
Within any homset $Hom(o' \rightarrow o'')$ in $C_o$, we merge any two arrows $\beta, \beta'$ such that, in $E_{i,j}$, $Hom(1 \rightarrow o')\beta$ and $Hom(1 \rightarrow o')\beta'$ both are not in $\pi^{-1}(\bigcup_{j \leq i} R_j)$.
By construction, $C_o$ is a well-defined category, and $C_o \prec E_{i,j} \prec M$.

\end{defin}

\begin{lemma}
Every homset $Hom(u \rightarrow u')$ in $C_o$ hosts an arrow ending in $R_i$.
\end{lemma}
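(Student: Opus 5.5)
We need to show: for every homset $Hom(u \rightarrow u')$ in $C_o$, with $u,u'$ in the bonded component of an object $o$ inside $R_i$, there is an arrow $\beta$ and some $\gamma \in Hom(1 \rightarrow u)$ with $\gamma\beta \in \pi^{-1}(R_i)$. The plan is to use the strong connectivity of $C_o$ together with the structure lemma (Lemma \ref{lemma:structure-of-Eij}). Then we realize a target element of $R_i$ by an arrow from $u$ to $u'$.

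\textbf{Step 1: every object of $C_o$ is inside $R_i$.} Let $u \in Obj(C_o)$ and pick $\alpha_o \in Hom(1 \rightarrow o)$ with $\pi(\alpha_o) = r \in R_i$. Since $Hom(o \rightarrow u) \neq \emptyset$, choose $\beta$ there. Then $\alpha_o\beta \in Hom(1 \rightarrow u)$ has label $r m$ with $rm \preceq_{\mathcal R} r$. Since $Hom(u \rightarrow o) \neq \emptyset$, pick $\beta'$ there. Then $\alpha_o\beta\beta'$ lands at $o$, and its label $r m m'$ lies $\mathcal R$-below $R_i$.

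To return to $R_i$ I would use the $\delta$/entry-point coordinate from $\phi_{i,1}$. The object $o$ carries a non-$*$ entry annotation, and this annotation is absorbing, since the left-zeros of $B$ are left-zeros. By construction of $E_{i,j}$, the coarse objects (quotiented by $\rho_{i,j}$) are then determined by the retained signature together with the entry arrow.

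\textbf{Step 2: the key obstacle, ruling out leaving $R_i$ below.} The danger is that $rm$ lies strictly $\mathcal R$-below $R_i$ while the object $u$ is still in the component. Here I would use the merging rule in the definition of $C_o$: arrows whose composites with $Hom(1\rightarrow\cdot)$ all fall outside $\bigcup_{j\le i}R_j$ are merged. This is combined with the failure of Condition B at termination.

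Suppose $Hom(u\rightarrow u')$ had no arrow ending in $R_i$, even though $u,u'$ are inside $R_i$. Take $x=u$, $y=u'$, $a \in Hom(u \rightarrow u')$, and $b \in Hom(u' \rightarrow u)$ (nonempty by bondedness). Take $\ell$ to be the identity loop at $u$, which ends in $R_i$ because $u$ is inside $R_i$. This is exactly the configuration of Condition B.

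Since the procedure terminated, no $\phi' \in \Omega_{i,j}$ separates $ab$ from $\ell$. I would then construct such a separating $\phi'$ explicitly, and this is the hardest step. The candidate assigns $0$ to arrows ending in $R_i$ within the component (balanced, so finite on $1 \rightarrow_{s,t} t$ for $s\in R_i$, per Lemma \ref{lemma:looping-zero}). It assigns positive values to arrows that drop out of $R_i$, and extends to all of $E_{i,j}$ via the Extension Lemma (Lemma \ref{lemma:extend-omega}). Then $\phi'(ab) \subseteq [1,\infty)$ is disjoint from $\phi'(\ell)=\{0\}$, and this contradiction shows that some arrow of $Hom(u \rightarrow u')$ ends in $R_i$.

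\textbf{Step 3: cleanup.} The main difficulty is checking that the explicit $\phi'$ respects composition and lies in $\Omega_{i,j}$. In particular, I must verify that "dropping out of $R_i$" is stable under post-composition. This holds because $\mathcal R$-order only descends, so once a composite leaves $\bigcup_{j\le i}R_j$ it never returns, and positive values accumulate consistently. Once Step 1 and Step 2 are established, the lemma follows directly, since $u$ and $u'$ range over all objects of $C_o$.
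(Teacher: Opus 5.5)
Your Step 2 reduction (take $x=u$, $y=u'$, $a\in Hom(u\rightarrow u')$, $b\in Hom(u'\rightarrow u)$, $\ell=1_u$ in Condition B) is exactly the paper's proof. That proof consists only of the remark that otherwise a morphism satisfying Condition B would have been selected. The trouble is in what you add.

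\textbf{The explicit $\phi'$ is not a relational morphism.} Condition B needs an actual $\phi'\in\Omega_{i,j}$ with $\phi'(ab)\cap\phi'(\ell)=\emptyset$. ``Ends in $R_i$'' is existential over prefixes in $Hom(1\rightarrow o)$, and before $R_i$ has been separated, distinct elements of $R_i$ can reach the same object. So an arrow can keep one $R_i$-prefix inside $R_i$ while pushing another out, and two arrows that each end in $R_i$ can compose to one that ends in no $R_i$.

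For instance, take $M(\mathcal{D}_1)$ with $R_i=\{a,ab\}$. The object $z$ of $E_{i,1}$ carrying the entry annotation of $a$ is reached by both $a$ and $ab$. The $b$-loop at $z$ ends in $R_i$ (via $a$), but its square ends in no $R_i$, since $abb$ and $ab\,bb$ are both $\bot$. Because $\phi'(b)+\phi'(b)\subseteq\phi'(bb)$, you cannot give $b$ the value $0$ and $bb$ only positive values. Taking the generated closure instead puts $0$ into value sets of dropping arrows, and then $0\notin\phi'(ab)$ is no longer automatic.

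Your Step 3 only checks that dropping is stable under post-composition, which is the harmless direction. The finiteness claim is also circular. Lemma~\ref{lemma:looping-zero} derives zeros on loops \emph{from} membership in $\Omega_{i,j}$ and cannot establish it. Lemma~\ref{lemma:extend-omega} needs a genuine relational morphism on $C_o$ as input.

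Calling an arrow good only when it keeps \emph{every} $R_i$-prefix in $R_i$ restores compositionality and keeps $0$ out of $\phi'(ab)$. But then ambiguous loops, such as the $a$- and $b$-loops at $z$, receive positive values. They can be iterated inside the arrow $1\rightarrow z$ labelled $ab\in R_i$, which factors as $a\cdot b\cdot(ab)^n$ for every $n$. This violates the finiteness required by $\Omega_{i,j}$. So you have not shown that a Condition-B morphism exists for your configuration, and the paper's one-line proof leaves this implicit too; that existence claim is the missing idea.

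\textbf{Step 1 is not finished.} Condition B needs both $u$ and $u'$ inside $R_i$. Bondedness only yields an arrow $1\rightarrow u$ through $o$ whose label $rm$ may lie strictly below $R_i$. The entry annotation cannot repair this: it is absorbing, hence carried by every object reachable from $o$, whether or not that object is reached by an element of $R_i$. You need an argument producing, for every object $u$ of $C_o$, an arrow $1\rightarrow u$ with label in $R_i$, or you must restrict the statement to such objects.
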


\begin{proof}
    Otherwise, a morphism $\phi'$ satisfying the second condition would have been selected.
\end{proof}

\begin{lemma}
    Failure of the algorithm entails that some $C_o$ contains a non-singleton homset, i.e., $C_o$ is not trivial.
    \end{lemma}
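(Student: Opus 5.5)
We argue by contraposition: if every $C_o$ (for $o$ inside $R_i$) has only singleton homsets, then the final step of the inner loop succeeds, i.e.\ $\phi_{i,j}^{-1}(\phi_{i,j}(m))=\{m\}$ for every $m\in R_i$, so the algorithm does not declare failure. Since $\operatorname{Invariant}(i-1)$ holds, it suffices to show that two distinct elements $m\neq m'$ of $R_i$ never share a value $t\in\phi_{i,j}(m)\cap\phi_{i,j}(m')$.

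Suppose they do. First compare entry annotations. The $\phi_{i,1}$-component of $t$ records the entry arrow in $\partial_i$ through which $R_i$ was entered. The plan is to show that $m$ and $m'$ then lie in objects of $E_{i,j}$ that carry the same $\phi_{i,1}$-component. By Lemma~\ref{lemma:structure-of-Eij}(1), for this fixed entry arrow there is exactly one object $u$ with an arrow $1\rightarrow_{(m,u)}u$, and likewise a unique $u'$ for $m'$. Since both arrows have the same target value $t$ (equal up to $\rho_{i,j}$), we get $u=u'$.

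Next we put both arrows into one homset of $C_u$. We fix a base arrow $\alpha_0=(1\rightarrow_{(m_0,u_0)}u_0)$ with $m_0\in R_i$ and the same entry annotation. By Lemma~\ref{lemma:structure-of-Eij}(2) there are $\beta,\beta'\in Hom(u_0\rightarrow u)$ with $\alpha_0\beta=(1\rightarrow_{(m,u)}u)$ and $\alpha_0\beta'=(1\rightarrow_{(m',u)}u)$. Applying Lemma~\ref{lemma:structure-of-Eij}(2) again, with the roles reversed, gives arrows back from $u$ to $u_0$. Hence $u_0$ and $u$ lie in the same bonded component $C_{u_0}$, and $\beta,\beta'$ are arrows of $C_{u_0}$. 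Both end in $R_i$, so the merging in the definition of $C_o$ does not identify them by fiat. By the singleton-homset assumption they are therefore equal as arrows of $C_{u_0}$, hence also in $E_{i,j}$ (here we use that merging only affects arrows ending outside $\bigcup_{j\le i}R_j$). Arrows of the derived category are functions on fibres, so $m=m_0\beta=m_0\beta'=m'$, a contradiction.

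The main obstacle is the first step: deducing that $m$ and $m'$ sit in the same object of $E_{i,j}$ from the fact that they share a value $t$. The object set of $E_{i,j}$ is the image of $\phi_{i,j}$ modulo $\rho_{i,j}$, so the shared value $t$ does give a common object. The delicate point is that the $\rho_{i,j}$-quotient collapses $\dyck_k$ data to $S_k$-signatures, and this collapse must not identify distinct fibres in a way that invalidates ``arrows are functions on $\phi^{-1}$''. I would handle this by checking that the fibre over a $\rho_{i,j}$-class is the union of the fibres of its members, and that the derived arrows remain well defined as functions on that union because $\rho_{i,j}$ is a right congruence (Section~\ref{sec:dyck-monoids-congruence}). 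Composing $\beta,\beta'$ on the fixed element $m_0$ then yields the required equality.
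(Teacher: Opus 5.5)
Your contrapositive reduction in the first paragraph is wrong, and it drops the main case. Failure means some $m\in R_i$ has $\phi_{i,j}^{-1}(\phi_{i,j}(m))\supsetneq\{m\}$, and the colliding element $m'$ may be any element of $M$. $\operatorname{Invariant}(i-1)$ only rules out $m'\in R_1\cup\dots\cup R_{i-1}$: project a common value down to $N_{i-1}$. It says nothing about the classes $R_{i+1},\dots,R_\ell$ that have not yet been processed, in particular elements strictly $\mathcal{R}$-below $R_i$. These are the typical colliders. In the paper's $\dyck_1$ walkthrough, $\phi_1(\bot)=\Z$ meets both $\phi_1(a)$ and $\phi_1(ab)$, and separating $R_i$ from elements that have fallen below it is much of what the procedure does. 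For such $m'$ your construction of $\beta'$ breaks down, because both parts of Lemma~\ref{lemma:structure-of-Eij} assume the arrow labels lie in $R_i$.

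The missing step is the one the paper uses.
\begin{itemize}
\item Read off the entry arrow $u\rightarrow_{(s,t)}v$ recorded in the $\phi_{i,1}$-coordinate of the common object $o$.
\item Use $\operatorname{Invariant}(i-1)$ to get the unique arrow $1\rightarrow_{(s',u)}u$.
\item Factor \emph{both} arrows $1\rightarrow_{(m,\cdot)}o$ and $1\rightarrow_{(m',\cdot)}o$ as $\alpha\beta$ and $\alpha\beta'$, with $\alpha=1\rightarrow_{(s's,v)}v$ and $\beta,\beta'\in Hom(v,o)$. This factorization comes from the entry annotation, not from Lemma~\ref{lemma:structure-of-Eij}(2), so it works whatever class $m'$ lies in.
\end{itemize}
Then $\beta\neq\beta'$ because $m\neq m'$, and $v$ lies in $C_o$ by Lemma~\ref{lemma:structure-of-Eij}. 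The merging in the definition of $C_o$ cannot identify $\beta$ with $\beta'$, because it only merges pairs that \emph{both} land outside $\bigcup_{j\le i}R_j$, and $\beta$ lands in $R_i$. So $Hom(v,o)$ is a non-singleton homset of $C_o$. With this step added, your contrapositive is just the paper's direct argument. Your handling of the case $m'\in R_i$ is fine, and your concern about the $\rho_{i,j}$-quotient is legitimate.
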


    \begin{proof}
Failure of the algorithm implies that there is $m \in R_i$ such that $\phi_{i,j}^{-1}(\phi_{i,j}(m)) \supsetneq \{m\}$.

Now consider $m' \in M$, $m \neq m'$ such that there is some $\hat{o} \in \phi_{i,j}(m) \cap \phi_{i,j}(m')$.

 Let $o$ be the corresponding element of $E_{i,j}$.
 
This means that, in $E_{i,j}$, there are distinct arrows $1 \rightarrow_{(m,\tilde{o})} o$ and $1 \rightarrow_{(m',o)} o$.

Now we consider the $\phi_{i,1}$ coordinate of $o$ (``entry point coordinate''); this is of the form $(u \rightarrow_{(s,t)} v)$ corresponding to some arrow of $D_{\phi_{i-1}}$.

By the inductive hypothesis ``(Invariant(i))'',  $Hom(1 \rightarrow u)$ in that category had exactly one object $1 \rightarrow_{(s',u)} u$, where $s' \in R_k$ for some $k<i$.

We can thus factorize
\begin{equation}
    1 \rightarrow_{(m,\tilde{o})} o = \left(1 \rightarrow_{(s',u)} u\right) \left(u \rightarrow_{(s,t)} v\right) \left(v \rightarrow_{(w,\dots)} o\right)
\end{equation}

\begin{equation}
    1 \rightarrow_{(m',\tilde{o})} o = \left(1 \rightarrow_{(s',u)} u\right) \left(u \rightarrow_{(s,t)} v\right) \left(v \rightarrow_{(w',\dots)} o\right)
\end{equation}

 Now let $v$ be the corresponding entry object, which must be contained in $C_o$ (by Lemma \ref{lemma:structure-of-Eij}.1); then  (by Lemma \ref{lemma:structure-of-Eij}.2) there is an arrow $\alpha = 1 \rightarrow_{(s's,v)} v$ such that there are $\beta, \beta' \in Hom(v, o)$ with $1 \rightarrow_{(m,\tilde{o})} o = \alpha\beta$ and $1 \rightarrow_{(m',o)} o = \alpha\beta'$.
We thus have shown $Hom(v, o)$ to be nonsingleton.
\end{proof}

\begin{lemma}[Extension Lemma]\label{lemma:extend-omega}
Assume $\omega : C_o \relml \Z$ stays finite-image on all arrows in $C_o$ that end in $R_i$.
Then there is an algebraic relational morphism $\tilde{\omega} : E_{i,j} \relml \Z$ such that  $\tilde{\omega}(1 \rightarrow_{(s,t)} t)$ is a singleton set whenever $s \in R_i$, and $\tilde{\omega}|_{C_o} \equiv \omega$. %
\end{lemma}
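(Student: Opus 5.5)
We construct $\tilde{\omega}$ by explicit extension from the bonded component $C_o$ to all of $E_{i,j}$. The idea is to anchor $C_o$ at its entry object and route every arrow of $E_{i,j}$ through a bookkeeping that is constant outside $C_o$.

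First, fix the entry object $v$ of $C_o$, i.e.\ the unique object (by Lemma \ref{lemma:structure-of-Eij}.1) whose $\phi_{i,1}$ component is the entry arrow of $C_o$. Also fix one arrow $\alpha_0 = 1 \rightarrow_{(s_0,v)} v$ with $s_0 \in R_i$. By Lemma \ref{lemma:structure-of-Eij}.2, every arrow $1 \rightarrow_{(s,t)} t$ with $s \in R_i$ and $t$ in $C_o$ factors as $\alpha_0\beta$ with $\beta \in Hom(v,t)$ inside $C_o$. We then define $\tilde{\omega}$ on arrows in three regimes:
\begin{itemize}
\item[(i)] arrows lying inside $C_o$ get $\omega$;
\item[(ii)] arrows from $1$ into an object $t$ of $C_o$ get $\{0\}\cdot\omega(\beta)$ for the factorizations through $\alpha_0$;
\item[(iii)] every other arrow (those between objects outside $C_o$, and those leaving $C_o$ or entering a different bonded component) gets all of $\Z$, except arrows $1 \rightarrow_{(s,t)} t$ with $s \in R_i$ landing in another component, which get $\{0\}$.
\end{itemize}
Formally we take $\tilde{\omega}$ to be the relational morphism generated by these prescriptions, i.e.\ the closure under composition $\tilde{\omega}(\alpha)\supseteq\tilde{\omega}(\beta)\tilde{\omega}(\gamma)$ for $\alpha=\beta\gamma$.

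Second, we verify the two required properties. The closure is a relational morphism by construction. The hard part is showing that the closure does not enlarge the values on $C_o$, nor on arrows from $1$ into objects inside $R_i$. For arrows inside $C_o$, any factorization $\alpha=\beta\gamma$ with $\alpha$ in $C_o$ has both factors in $C_o$, since $C_o$ is strongly connected and paths leaving and re-entering it would contradict bondedness. Hence the closure restricted to $C_o$ is just the closure of $\omega$, which is $\omega$ itself because $\omega$ is already a relational morphism.

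For singleton-ness of $\tilde{\omega}(1\rightarrow_{(s,t)} t)$ with $s\in R_i$, we invoke the factorization condition in the footnote to $\Omega_{i,j}$. Every such arrow factors as $\alpha_0\beta$ (or as the analogous anchor in its own component). The values on arrows ending in $R_i$ are finite by hypothesis, and they are in fact single-valued along $R_i$ by Lemma \ref{lemma:looping-zero}.2. Combining these, different factorizations give the same value, because loops at $v$ ending in $R_i$ get $\{0\}$ by Lemma \ref{lemma:looping-zero}.1. Arrows that exit $R_i$ may receive $\Z$, but by the merging rule in the definition of $C_o$ these are never composed back into arrows that end in $R_i$, since the $\mathcal{R}$-order only decreases.

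The main obstacle is this well-definedness/singleton check: ruling out that some path $1\to\cdots\to t$ through objects outside $C_o$ contributes an extra value to an arrow into $R_i$. To handle it, we would use that the $\mathcal{R}$-classes are processed in order, so any prefix passing through $R_j$ with $j>i$ cannot return to $R_i$. Prefixes through $R_j$ with $j<i$ are covered by Invariant$(i-1)$, which makes $Hom(1\rightarrow u)$ a singleton before entry. Hence every arrow from $1$ into $C_o$ passes through the unique entry arrow, and its value is pinned down by $\alpha_0$ and $\omega$.
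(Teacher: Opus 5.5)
There is a genuine gap, and it is in regime (iii). Any relational morphism must satisfy $\tilde{\omega}(\beta)+\tilde{\omega}(\gamma)\subseteq\tilde{\omega}(\beta\gamma)$, and $\Z+X=\Z$ for every nonempty $X\subseteq\Z$. So once any factor of an arrow carries all of $\Z$, the arrow itself is forced to carry $\Z$.

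For $i>1$, every arrow $1\rightarrow_{(s,t)}t$ with $s\in R_i$ has such a factor. Trivially, the identity arrow at the object $1$ is one: it lies outside $C_o$, because its entry-point coordinate is $*$, whereas objects of $C_o$ carry a left zero of $B$. Left zeros are absorbing, so $1$ is not reachable from $o$. More substantively, whenever $s$ is reached through objects of earlier $\mathcal{R}$-classes, the prefix arrows and the arrow crossing into $C_o$ are also such factors. Your prescription sends all of these to $\Z$, so the closure gives $\tilde{\omega}(1\rightarrow_{(s,t)}t)=\Z$. That is exactly the property you are supposed to prove, and it fails. The arrows you explicitly set to $\{0\}$ in other components are blown up in the same way.

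Your final paragraph does not repair this. Invariant$(i-1)$ says the homset $Hom(1\rightarrow u)$ contains a single arrow, not that $\tilde{\omega}$ takes a single value on it, and under (iii) that arrow carries $\Z$.

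The repair is to put $\{0\}$, not $\Z$, on every arrow outside $C_o$ and then close under composition; this is precisely the paper's $\omega_*$. With that choice:
\begin{itemize}
\item Your bondedness argument still gives $\tilde{\omega}|_{C_o}=\omega$.
\item An arrow from $1$ into an object inside $R_i$ but outside $C_o$ gets $\{0\}$, since no path to it passes through $C_o$.
\item For an arrow from $1$ into $C_o$, each factorization splits into a prefix outside $C_o$, contributing $0$, followed by a suffix lying in $C_o$ and ending in $R_i$. The contribution is contained in $\omega$ of that suffix. Since $C_o$ has finitely many arrows and $\omega$ is finite-image on those ending in $R_i$, the union is finite. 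The paper phrases this as a contradiction with unbounded sums.
\end{itemize}
This yields finiteness, which is what is actually used to conclude $\tilde{\omega}\in\Omega_{i,j}$. Your separate singleton argument via Lemma~\ref{lemma:looping-zero} is circular, because that lemma is stated for elements of $\Omega_{i,j}$ and so presupposes the finiteness you are proving. Moreover, its part 2 concerns arrows starting at an object inside $R_i$, not arrows out of $1$.
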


\begin{proof}
There are two aspects here: achieving $\tilde{\omega}|_{C_o}\equiv \omega$ -- this is easy because $C_o$ is a bonded component -- and achieving that $\tilde{\omega}(1 \rightarrow_{(s,t)} s)$ is a finite set for $s \in R_i$ -- which requires care.

We first define
\begin{equation}
    \omega_*(\alpha) := \begin{cases} \omega(\alpha) & \alpha \in C_o \\ \{0\} & \text{else} \end{cases}
\end{equation}
which is not in general a relational morphism.
We define $\tilde{\omega}$ as the closure under composition, so that $\tilde{\omega}$ is a relational morphism.
Constructively, we can write
\begin{equation}
    \tilde{\omega}(\alpha) = \bigcup_{\gamma_1 \dots \gamma_n =\alpha} \sum_{i=1}^n \omega_*(\gamma_i)
\end{equation}
where ``$\sum$'' is to be understood in a set-valued sense as $A + B = \{a+b : a \in A, b\in B\}$.

Because $C_o$ is a strongly connected component, $\tilde{\omega}|_{C_o} \equiv \omega$.
Now consider $\alpha : 1 \rightarrow_{(s,t)} t$ for some $s \in R_i$.
One option is that $t \not\in C_o$; in this case, no path through $C_o$ can multiply out to this arrow, and $\tilde{\omega}(\alpha) = \{0\}$, which is a finite set.
The other option is that $t \in C_o$.

Assume $\omega(\alpha)$ is infinite. Then, for each $n$, there is a path of the form:
\begin{equation}
    \alpha = \gamma_1 \gamma_2 \gamma_3 \dots \gamma_{n}
\end{equation}
where $\sum_{i=2}^n \omega_*(\gamma_i)$ contains unboundedly (positive or negative) large numbers; we can WLOG choose this so that $\omega_*(\gamma_2) \neq \{0\}$ by multiplying the initial prefix out into $\gamma_1$ if needed.
By construction, $\gamma_2 \in C_o$; hence, $\gamma_1$ must end in $C_o$. But then we can set $\beta := \gamma_2 \cdot \dots \cdot \gamma_{n}$, which starts and ends in $C_o$, and also ends in $R_i$ because $\alpha$ does. 
But then $\tilde{\omega}(\beta(n))$ has unboundedly large numbers as $n\rightarrow \infty$, which is a contradiction to $\tilde{\omega}|_{C_o} \equiv \omega$.

\end{proof}

\begin{lemma}\label{lemma:injectivity-anchor-from-base}\label{lemma:injectivity-anchor}
Consider two arrows in the same homset in $C_o$:
$\beta_1 := u \rightarrow_{(s,t)} ut$ and $\beta_2 := u \rightarrow_{(s',t')} ut$.
If $\beta_1 \neq \beta_2$, then 

\begin{enumerate}
    \item there is $\lambda \in R_i \cap \pi(Hom(1 \rightarrow u))$ such that $\lambda s \neq \lambda s'$.
    \item there is $\alpha \in Hom(o \rightarrow u)$ such that $\alpha\beta_1 \neq \alpha\beta_2$.
\end{enumerate}

\end{lemma}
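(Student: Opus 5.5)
The plan is to unwind what distinctness of arrows means in $C_o$ and then use the structure results for $E_{i,j}$ to move witnesses around. Arrows of $C_o$ are equivalence classes of arrows of $E_{i,j}$, and these are in turn classes of arrows of the derived category $D_{\phi_{i,j}}$. Two arrows $u \rightarrow_{(s,t)} ut$ and $u\rightarrow_{(s',t')} ut$ of a derived category are equal exactly when they induce the same function on $\phi^{-1}(u)$, i.e.\ $xs = xs'$ for every $x$ reaching $u$. Passing to $E_{i,j}$ via $\rho_{i,j}$ only coarsens the object set, and the hom-set action is still given by right multiplication on $\pi(Hom(1\rightarrow u))$. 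The merging step defining $C_o$ identifies two arrows exactly when every prefix $\lambda\in\pi(Hom(1\rightarrow u))$ sends both $\lambda s$ and $\lambda s'$ outside $\bigcup_{j\le i}R_j$.

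For part (1), I would argue by contraposition. Suppose $\lambda s = \lambda s'$ for all $\lambda \in R_i\cap\pi(Hom(1\rightarrow u))$; I want to conclude $\beta_1=\beta_2$. Take an arbitrary prefix $\lambda\in \pi(Hom(1\rightarrow u))$. Since $u$ lies in $C_o$ and $o$ is inside $R_i$, the prefixes reaching $u$ include elements of $R_i$. Three cases arise.
\begin{itemize}
\item If $\lambda\in R_i$, the hypothesis gives $\lambda s=\lambda s'$ directly.
\item If $\lambda$ lies strictly $\mathcal{R}$-below $R_i$, then $\lambda s,\lambda s'$ are also strictly below, hence outside $\bigcup_{j\le i}R_j$. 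The merging rule makes such prefixes irrelevant.
\item If $\lambda\in R_k$ with $k<i$, I would use the entry-point coordinate. Objects in $C_o$ carry a non-$*$ $\phi_{i,1}$ component, so every path $1\rightarrow u$ has already entered $R_i$ (or below) through a boundary arrow. Hence $\lambda\preceq_{\mathcal R}R_i$ and this case cannot occur.
\end{itemize}
Combining the cases, $\beta_1$ and $\beta_2$ act identically wherever it matters, so they are equal in $C_o$. The main subtlety is checking that the $\rho_{i,j}$ quotient and the merging do not introduce other identifications. I would handle this by noting that equality of arrows is still determined by their action on prefix sets, as in the remark following the definition of derived categories.

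For part (2), I would take the witness $\lambda$ from part (1), realized by an arrow $\gamma\in Hom(1\rightarrow u)$ with $\pi(\gamma)=\lambda\in R_i$. Since $o$ and $u$ have the same entry-point component (they share a bonded component), Lemma~\ref{lemma:structure-of-Eij}.2 applies. It gives an arrow $\eta: 1\rightarrow o$ with $\pi(\eta)\in R_i$ and some $\alpha\in Hom(o\rightarrow u)$ with $\gamma=\eta\alpha$. Then $\pi(\eta\alpha\beta_1)=\lambda s\ne\lambda s'=\pi(\eta\alpha\beta_2)$, so $\alpha\beta_1\ne\alpha\beta_2$ as arrows of $E_{i,j}$. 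To see that they also differ in $C_o$, I would observe that at least one of $\lambda s,\lambda s'$ must lie in $\bigcup_{j\le i}R_j$. Otherwise $\beta_1$ and $\beta_2$ would themselves have been merged, contradicting $\beta_1\neq\beta_2$ in $C_o$. I expect this bookkeeping of the merge relation to be the main obstacle. In particular, I would need to verify that the witness $\lambda$ can be chosen with $\lambda s$ or $\lambda s'$ still in $R_i$, using the preceding lemma that every homset of $C_o$ hosts an arrow ending in $R_i$.
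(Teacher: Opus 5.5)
Your route is the one the paper intends. Its proof is only the remark that distinct derived-category arrows are separated by their action on some prefix, and that the later right congruences preserve this. Your added steps are sound: you exclude prefixes in $R_k$, $k<i$, via the entry-point coordinate and Invariant$(i-1)$, and you move the witness to $o$ with Lemma~\ref{lemma:structure-of-Eij}.2. One small correction to your third bullet: a non-$*$ entry coordinate does not give $\lambda\preceq_{\mathcal R}R_i$, since $\delta$ only asks for one preimage below $R_i$. What you get is that every prefix of $u$ lies in an $\mathcal R$-class of index $\ge i$. Prefixes of index $>i$ keep index $>i$ after right multiplication, so they fall under your second bullet.

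The genuine gap is your treatment of the merge that defines $C_o$. You describe it as all-or-nothing: $\beta,\beta'$ are identified only if \emph{every} prefix $\gamma\in Hom(1\rightarrow u)$ sends both $\gamma\beta$ and $\gamma\beta'$ outside $\bigcup_{j\le i}R_j$. Under that rule your second bullet does not yield $\beta_1=\beta_2$. Suppose the two arrows agree on all $R_i$-prefixes, but $\mu s\neq\mu s'$ for some prefix $\mu$ of index $>i$, while some $R_i$-prefix $\lambda$ has $\lambda s=\lambda s'\in R_i$. Then the arrows are distinct in $E_{i,j}$ and are not merged. Such $\mu$ cannot be ruled out structurally. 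The morphisms in $\Omega_{i,j}$ are only required to be finite on arrows $1\rightarrow_{(s,t)}t$ with $s\in R_i$, and lower elements typically reach many objects (in the $(ab)^*$ example, $\phi_1(\bot)=\mathbb{Z}$). So the fact that $\mu s,\mu s'$ both leave makes $\mu$ irrelevant only if the merge is applied prefix by prefix.

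What your argument, and the lemma itself, needs is the pointwise reading. In $Hom(u\rightarrow u')$, identify $\beta,\beta'$ iff for each $\gamma\in Hom(1\rightarrow u)$, either $\gamma\beta=\gamma\beta'$ or both lie outside $\bigcup_{j\le i}R_j$. This is still a congruence, because leaving $\bigcup_{j\le i}R_j$ is stable under right multiplication. Then run your contrapositive with the weaker hypothesis ``for every $\lambda\in R_i\cap\pi(Hom(1\rightarrow u))$, $\lambda s=\lambda s'$ or both leave''. Your case analysis gives $\beta_1=\beta_2$ in $C_o$. So distinctness produces a witness $\lambda\in R_i$ with $\lambda s\neq\lambda s'$ \emph{and} $\lambda s$ or $\lambda s'$ in $R_i$.

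This strengthened witness is exactly what part (2) needs. With $\gamma=\eta\alpha$ from Lemma~\ref{lemma:structure-of-Eij}.2, the prefix $\eta$ of $o$ separates $\alpha\beta_1$ from $\alpha\beta_2$ without both images leaving, so they are distinct in $C_o$. Your proposed fix does not achieve this. The lemma that every homset of $C_o$ hosts an arrow ending in $R_i$ concerns some arrow and some prefix, not your particular distinguishing $\lambda$. Likewise, the inference ``otherwise $\beta_1$ and $\beta_2$ would have been merged'' is invalid for an arbitrary witness from part (1), under either reading of the merge.
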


\begin{proof}
This follows from the derived category construction; the right-congruences applied afterwards preserve it.
\end{proof}

\begin{lemma}[Completeness]\label{lemma:completeness}
    If the procedure terminates with failure, $M$ does not divide an iterated wreath product of $(\Z,\Z_+)$.
\end{lemma}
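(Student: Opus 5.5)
We argue by contradiction, following the informal sketch in the main text. Suppose the procedure fails at outer step $i$, after the inner loop stops at some $E_{i,j}$. Suppose nevertheless that $M$ divides an iterated typed wreath product of $(\Z,\Z_+)$, say $\psi\colon M\preceq W_T\wrt\Z$, with $T$ minimal among all such divisions. By the second part of the Derived Category Theorem, adapted to the typed setting, the rightmost factor yields a relational morphism $\omega_0 := \psi\pi\colon M\relml\Z$. The derived category $D_{\omega_0}$ then divides a power of the $(T-1)$-fold product $W_T$.

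\textbf{Step 1: localize.} By the lemma just above, failure gives a bonded component $C_o$ of $E_{i,j}$ with a non-singleton homset. Lemma~\ref{lemma:injectivity-anchor} lets us pick two distinct arrows $\beta_1\neq\beta_2$ in $Hom(u\rightarrow ut)$ and an anchor $\alpha\in Hom(o\rightarrow u)$ with $\alpha\beta_1\neq\alpha\beta_2$. Concretely, this gives elements $\lambda\in R_i$ and $s,s'$ with $\lambda s\neq\lambda s'$, both products landing back in $R_i$. Any division of $M$ must separate $\lambda s$ from $\lambda s'$.

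\textbf{Step 2: transport $\psi$ to $C_o$.} Compose the canonical division $C_o\prec E_{i,j}\prec M$ with $\psi$. The rightmost coordinate restricts to a relational morphism $\omega\colon C_o\relml\Z$, obtained by pulling $\omega_0$ back along the paths $1\to o'$. We then split into two cases.

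\emph{Case (a): $\omega$ is finite-valued on every arrow of $C_o$ ending in $R_i$.} The Extension Lemma~\ref{lemma:extend-omega} extends $\omega$ to some $\tilde\omega\in\Omega_{i,j}$. Since the loop halted, $\tilde\omega$ satisfies neither Condition~A nor Condition~B. So $\tilde\omega$ never assigns disjoint values to parallel arrows inside $R_i$, and never separates a detour $ab$ from a loop $\ell$ ending in $R_i$. The balance properties of Lemma~\ref{lemma:looping-zero} then force $\omega$ to be single-valued and equal on $\beta_1,\beta_2$ after the anchor $\alpha$, and likewise on every arrow relevant to separating them. Hence the rightmost $\Z$ coordinate carries no information distinguishing $\lambda s$ from $\lambda s'$.

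\emph{Case (b): $\omega$ is unbounded on some arrow ending in $R_i$.} Then there are loops $\rho$ in $C_o$ with $u\rho=u$ and nonzero $\omega$-value, so iterating $\rho$ drives the counter to $\pm\infty$. Type-respecting functions see only finitely many types $[c,\infty)$ after constant shifts. Therefore, for sufficiently many iterations, the first-coordinate functions $f$ in $W_T\wrt\Z$ evaluate the same on all relevant prefixes. The elements $\lambda\rho^N s$ and $\lambda\rho^N s'$ then differ only through $W_T$. Since $\lambda\rho^N=\lambda$, this again shows $\Z$ is not what separates $\lambda s$ from $\lambda s'$.

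\textbf{Step 3: peel off the factor.} In both cases the rightmost $\Z$ contributes no separating information on $R_i$, and on the earlier classes $R_1,\dots,R_{i-1}$ separation is already guaranteed by the invariant. We therefore build a division $M\preceq W_T'$, where $W_T'$ is a $(T-1)$-fold product. In case (b) this uses a fixed threshold shift, absorbed via the constants $C$. Applied inductively, this either contradicts the minimality of $T$, or reduces to $T=0$, where the trivial monoid cannot separate $\lambda s\neq\lambda s'$. This removal is the main obstacle: the $\Z$ factor influences the entire $f$-component through the type-respecting functions, not just $R_i$. I would handle it by choosing a constant replacement $t_0\in\Z$ beyond all thresholds in $C$ (case b), or the common singleton value (case a). I would then verify, via Lemma~\ref{lem:wreath_type_computation}, that substituting $t_0$ preserves all types on $R_i$-anchored computations. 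Elements outside $R_1\cup\dots\cup R_i$ are handled by ordering along $\preceq_{\mathcal{R}}$, again relying on Invariant($i-1$).
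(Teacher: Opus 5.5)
You have a genuine gap, and it lies exactly where you flag the ``main obstacle'': you minimize $T$ over divisions of $M$ and then try to produce a division of \emph{all of} $M$ into a $(T-1)$-fold product. Your Case (a)/(b) analysis only controls $\omega$ on the bonded component $C_o$. It says nothing about arrows elsewhere, and nothing prevents the rightmost counter from being exactly what separates elements of other $\mathcal{R}$-classes, as the height counter does in Section~\ref{sec:dyck_derived}. Evaluating the first coordinate at a fixed point or at $\pm\infty$ would then merge such elements. Invariant($i-1$) cannot repair this. It is a property of the algorithm's own morphism $\phi_{i-1}$, not of the truncation of $\psi$, and splicing $\phi_{i-1}$ back in would add wreath factors rather than keep depth $T-1$. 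Your fallback of tracking only the pair $\lambda s\neq\lambda s'$ down to $T=0$ fails for a related reason. To rerun Case (a)/(b) on the next factor, you need the truncated map as a relational morphism on the whole of $C_o$, not a separation of a single pair. That is what lets you invoke Lemma~\ref{lemma:extend-omega}, use the failure of Conditions A and B in every homset, or anchor into every homset.

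The missing idea is to localize before minimizing. The nontrivial $C_o$, rooted at $o$, inherits a typed division into the iterated product by Lemma~\ref{lem:typed-division-compose}. So take $T$ minimal for $C_o$ itself, and contradict that minimality with a typed division $\mu$ of $C_o$ only.

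Two steps inside your cases also need fixing. In Case (a), a single constant $t_0$ is incompatible with the wreath action: $(g_\alpha,\theta_1)(g_\beta,\theta_2)$ has first coordinate $x\mapsto g_\alpha(x)+g_\beta(x+\theta_1)$. The function $g$ of an arrow out of $u$ must therefore be evaluated at the object-dependent potential $\nu_u=\theta_{o,u}$. It is $\nu_{ut}=\nu_u+\theta_{u,ut}$ that gives $\mu(\alpha)\mu(\beta)\subseteq\mu(\alpha\beta)$. Distinctness of $\mu$ on parallel arrows then comes from the anchor of Lemma~\ref{lemma:injectivity-anchor} together with left quotients of types. In Case (b), the step $\lambda\rho^N=\lambda$ is unjustified, since a loop at an object of $E_{i,j}$ need not fix the underlying element of $M$. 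Instead, route an anchor arrow $1\to u$ through the unbounded arrow $u_0\to u_0t_0$. Pick $\rho_1\in R_i$ and $\rho_2$ with $\rho_1 s_0\rho_2=\lambda$, realized in $E_{i,j}$ via Lemma~\ref{lemma:looping-zero}(2), and define $\mu$ through $g\mapsto g(\infty)$.
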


\begin{proof}
Assume that, after constructing $E_{i,j}$, the procedure terminates with failure.

\paragraph{Setting up a division}
Choose as $C''$ one nontrivial $C_o$ (it doesn't matter which one); from now on this fixes $o$.
For the purposes of defining typed division, we view $C_o$ as rooted in $o$ (Definition \ref{def:typed-division}).
Assume $M$ divides an iterated wreath product of $(\Z,\Z_+)$, then so does $C''$ (in the sense of Definition \ref{def:typed-division}) by Lemma \ref{lem:typed-division-compose}.
Consider a division of a minimum-depth wreath product of $\Z$, $\psi : C'' \prec \underbrace{\Z \wrt \dots \wrt \Z}_{T\text{ times}}$ (here all $\Z$ are typed like $(\Z,\Z_+)$); in particular, because $C''$ is not trivial, $T \geq 1$.
Let $\omega = \pi^{(T)}\circ \psi$; by definition, $\omega : C'' \relml \Z$.
Our goal is to create a division $\mu : C'' \prec \underbrace{\Z \wrt \dots \wrt \Z}_{T-1\text{ times}}$   (again in the sense of Definition \ref{def:typed-division}).
This would be a contradiction to the minimality of $T$.
As a consequence, $C''$ cannot divide \emph{any} wreath product of $(\Z,\Z_+)$.

\textcolor{darkblue}{Note: The basic idea is to show that $\omega$ cannot have provided any useful information, and in particular gives the same values across different arrows in a given hom-set. Either it stays finite-image on all arrows going into $R_i$ (in which case $\omega$ provides as much information as an element of $\Omega_{i,j}$ -- and the algorithm having terminated indicates that no such element would have been helpful), or it is infinite-image (in which case we can pass to a limiting element $\pm \infty$, showing that $\omega$ contributes no information useful to $(\Z,\Z_+)$-based recognizers).}

\paragraph{Case 1: $\omega$ stays finite-image on all arrows ending in $R_i$ inside $C''$}
First, assume $\omega$ stays finite-image on all arrows ending in $R_i$ inside $C''$. 
We obtain $\tilde{\omega} : E_{i,j} \relml \Z$ via Lemma \ref{lemma:extend-omega}.
We observe that $\tilde{\omega} \in \Omega_{i,j}$; hence, $\tilde\omega$ must assign singleton images to any arrow in $C_o$ that ends inside $R_i$.
The algorithm having terminated means that $\tilde{\omega}$, and hence $\omega$, cannot assign distinct images to any two arrows staying inside $R_i$ that appear in a single homset in $C_o$; also, any of these needs to be mapped to a singleton set. It also means that, when considering an arrow that is in $C_o$ but ends outside of $R_i$, it might in fact be assigned some further set of values, but it must overlap with the value assigned to the arrows staying inside $R_i$. That is, in any homset, there is one single value shared across all arrows; also, there potentially is a further set of values assigned to the arrows that end outside of $R_i$.

For each homset $Hom(u \rightarrow u')$, we obtain a unique $\theta_{u,u'} \in \Z$ such that $\omega(\alpha) = \{\theta_{u,u'}\}$ whenever $\alpha \in Hom(u \rightarrow u')$ ends in $R_i$; also, when the homset contains an arrow leaving $R_i$ (we can write it as $\beta : u \rightarrow_{(\bot, \dots)} u'$), $\theta_{u,u'} \in \omega(\beta)$.
We write $\nu_u := \theta_{o,u}$; importantly, $\nu_{ut} = \nu_u \theta_{u,ut}$.

We obtain the desired division $\mu : C_o \prec \underbrace{(\Z \wrt \dots \wrt \Z)}_{T-1 \text{ times}}$ as follows.
Note that any object in the image of $\psi$ has the form $(g, \dots)$ where $g : \Z \rightarrow \underbrace{(\Z \wrt \dots \wrt \Z)}_{T-1 \text{ times}}$.
For any arrow $u \rightarrow_{(s,t)} ut$ in $C_o$, we define:
\begin{equation}
    \mu(u \rightarrow_{(s,t)} ut) := \{g(\nu_u) : (g,\theta_{u,ut}) \in \psi(u \rightarrow_{(s,t)} ut)\} %
\end{equation}
which is a nonempty subset of $\underbrace{(\Z \wrt \dots \wrt \Z)}_{T-1 \text{ times}}$.

We first show that $\mu$ is an algebraic relational morphism.
Consider arrows $\alpha : u \rightarrow_{(s,t)} ut$ and $\beta : ut \rightarrow_{(s', t')} utt'$ in $C_o$. 
Then:
\begin{align*}
    \mu(\alpha)\mu(\beta) = & \{g(\nu_u) : (g,\theta_{u,ut}) \in \psi(\alpha)\} \{g'(\nu_{ut}) : (g',\theta_{ut,utt'}) \in \psi(\beta)\} \\
    = & \{g(\nu_u)g'(\nu_{ut}) : (g,\theta_{u,ut}) \in \psi(\alpha), (g',\theta_{ut,utt'}) \in \psi(\beta)\} \\
    \subseteq & \{g(\nu_u) : (g,\theta_{u,utt'}) \in \psi(\alpha\beta)\} \\
    = & \mu(\alpha\beta)
\end{align*}
where the ``$\subseteq$'' step used the fact that $\psi$ is an algebraic relational morphism.

We, second, show that $\mu$ is a typed division.
Consider two arrows $\beta_1 := u \rightarrow_{s,t} ut$, $\beta_2 := u \rightarrow_{s',t} ut$ in $C''$; if they are distinct, we find $\alpha \in Hom(o \rightarrow u)$ such that $\alpha\beta_1 \neq \alpha\beta_2$ (Lemma \ref{lemma:injectivity-anchor}). Then there are disjoint types $T_1, T_2$ of $\underbrace{\Z \wrt \dots \wrt \Z}_{T\text{ times}}$ such that $\psi(\alpha\beta_1) \subseteq T_1$, $\psi(\alpha\beta_2) \subseteq T_2$.
That is, for any $g_s, g_{s'}$ selected for the two arrows, we have
\begin{equation}
    (g_\alpha(\cdot) g_s(\cdot \theta_{o,u}), \theta_{o,u}\theta_{o,ut}) \in \psi(\alpha\beta_1) \subseteq T_1
\end{equation}
\begin{equation}
    (g_\alpha(\cdot) g_{s'}(\cdot \theta_{o,u}), \theta_{o,u}\theta_{o,ut}) \in \psi(\alpha\beta_2) \subseteq T_2
\end{equation}
Hence, there must be disjoint types $V_1, V_2$ of $\underbrace{\Z \wrt \dots \wrt \Z}_{T-1\text{ times}}$ such that
\begin{equation}
    \mu(\alpha\beta_1) = g_\alpha(0) g_s(\theta_{o,u}) \in V_1
\end{equation}
\begin{equation}
    \mu(\alpha\beta_2) = g_\alpha(0) g_{s'}(\theta_{o,u}) \in V_2
\end{equation}
But then
\begin{equation}
    g_s(\theta_{o,u}) \in g_\alpha(0)^{-1}V_1
\end{equation}
\begin{equation}
    g_{s'}(\theta_{o,u}) \in g_\alpha(0)^{-1}V_2
\end{equation}
where $V_1 \cap V_2 = \emptyset$, the types $g_\alpha(0)^{-1}V_1$ and $g_\alpha(0)^{-1}V_2$ are also disjoint types.\footnote{Here, we take a more general definition of types that are closed under (left) quotients, in line with the topological perspective on typed monoids \citep{gehrke2017stone}. This does not change the underlying expressivity of $\wwpc(\Z)$.}
The above assumes that $T_1, T_2$ are defined simply by first-coordinate evaluation at zero; if they instead arise as Boolean combinations of multiple observations, $\mu$ would instead map into a direct product of multiple copies where we evaluate $g$ at different coordinates; in this case, the proof here would be lifted to depth-$T$ wreath products of direct products of $\Z$.

\paragraph{Case 2: $\omega$ is infinite-image on some arrow ending in $R_i$ inside $C''$}
Let $\alpha = \left(u_0 \rightarrow_{(s_0,t_0)} u_0t_0\right)$ in $C''$ be an arrow ending in $R_i$ which is associated with an infinite number of different values under $\omega$.
Without loss of generality, we may assume that $\sup\omega(\alpha) = +\infty$ (else, $\inf\omega(\alpha) = -\infty$ and we replace $+\infty$ by $-\infty$ below).

Note that any object in the image of $\psi$ has the form $(g, \dots)$ where $g : \Z \rightarrow \underbrace{(\Z \wrt \dots \wrt \Z)}_{T-1 \text{ times}}$.
Because $g$ arises from a composition of type-respecting functions, $g(\infty) := \lim_{x \rightarrow \infty} g(x)$ is well-defined and in $\Z$.
We define
\begin{equation}
    \mu(u \rightarrow_{(s,t)} ut) := \{g(\infty) : (g,\dots) \in \psi(u \rightarrow_{(s,t)} ut)\} %
\end{equation}
which is a subset of $\underbrace{(\Z \wrt \dots \wrt \Z)}_{T-1 \text{ times}}$.

This is an algebraic relational morphism $\mu : C'' \relml \underbrace{(\Z \wrt \dots \wrt \Z)}_{T-1 \text{ times}}$:  For arrows $\beta, \beta'$ such that $\beta\beta'$ is defined, we have:
\begin{align*}
    \mu(\beta)\mu(\beta') =& \{g(\infty)h(\infty) : (g,\dots)\in\psi(\beta),\ (h,\dots)\in\psi(\beta')\} \\ %
    = & \{\lim_{x \rightarrow \infty} (g(x) h(x)) : g..., h...\}  \\
    = & \{\lim_{x \rightarrow \infty} (g(x) h(x \tau_1 )) : g..., h...\}  \\
    \subseteq & \{G(\infty) : (G, \dots) \in \psi(\beta\beta')\} \\ %
    = & \mu(\beta\beta')
\end{align*}

We need to show that it is typed and injective; the proof is similar to Case 1.
Consider two distinct arrows in the same homset in $C''$:
\begin{align*}
    \beta_1 := u \rightarrow_{(s,t)} ut \; \; \; \; \; \; \; \; \; \beta_2 := u \rightarrow_{(s',t')} ut
\end{align*}
If $\beta_1 \neq \beta_2$, that means there is $\lambda \in R_i \cap \pi^{-1}(Hom(1 \rightarrow u))$ such that $\lambda s \neq \lambda s'$, by Lemma \ref{lemma:injectivity-anchor-from-base}.

We now construct $\rho_1, \rho_2 \in M$ such that
\begin{equation}
    \alpha := 1 \rightarrow_{(\rho_1, u_0)} u_0 \rightarrow_{(s_0,t_0)} u_0t_0 \rightarrow_{(\rho_2, t')} u = 1 \rightarrow_{(\rho_1s_0\rho_2, u)} u
\end{equation}
in $Hom(1 \rightarrow u)$ satisfies $\lambda \in \pi^{-1}(\alpha)$.\footnote{Choose $\rho_1 \in R_i$ just to produce the first arrow. Now we choose $\rho_2 \in M$ such that $\rho_1s_0\rho_2 = \lambda$, using that $\rho_1s_0\rho_2$ and $\lambda$ both are in $R_i$.
Now applying Lemma \ref{lemma:looping-zero} (Claim 2) inductively to each $j$, there is an arrow $u_ot_0 \rightarrow_{(\rho_2, t')} u$.
}
Hence, $\alpha\beta \neq \alpha\beta'$ in $E_{i,j}$.
In $C''$, $\alpha' := u_0 \rightarrow_{(s_0,t_0)} u_0t_0 \rightarrow_{(\rho_2, t')} u$ satisfies $\alpha'\beta \neq \alpha'\beta'$. Now, we note that $\sup \omega(\alpha') = +\infty$; this enforces that $\mu(\beta) \cap \mu(\beta') = \emptyset$.

Overall, we have obtained a division $\mu$ in either case, and, by contradiction, established that $M$ does not divide an iterated wreath product of $(\Z, \Z_+)$.
\end{proof}

\subsubsection{Deriving Main Results}\label{sec:c-rasp-wpc-dyck-main-result}

\CraspRegPolyTime*
\begin{proof}

We need to show that (i) the algorithm from Definition \ref{def:decition-procedure} correctly determines membership in $\CRASP$ (shown in Theorem \ref{thm:algebraic-algorithm-is-correct-wreath-products}), and (ii) that it runs in polynomial time in $|M|$.
For (ii), one route is via the automaton-based formulation of the algorithm in Appendix \ref{app:automata-based-proof}, with proof of time polynomial in the number of automaton states (hence the size of the syntactic monoid)
 in Corollary \ref{lem:separable-poly}.
 Another route is by noting that one can identify candidate morphisms $\phi' : E_{i,j} \relml \Z$ by restricting to the strongly-connected component $C_o$, encoding (i) respecting composition, (ii) zeros on idempotent arrows as linear constraints. The number of such linear constraints is polynomial in the size of $M$; an integral basis of the solution space can then be found in polynomial time. A solution can be extended to an element of $\Omega_{i,j}$ via the Extension Lemma \ref{lemma:extend-omega}.
\end{proof}

\begin{corollary}\label{thm:dyck-vs-crasp}
    $\CRASP\cap\vty{REG}=\wwpc(\vty{Dy})$.
\end{corollary}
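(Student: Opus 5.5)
We prove the two inclusions separately, working on the monoid side via Theorem~\ref{thm:crasp_wreath}, which identifies regular languages in $\CRASP$ with those whose syntactic monoid lies in $\wwpc(\Z)$. Throughout, ``$M\in\wwpc(\Z)$'' for a finite $M$ means that $M$ divides an iterated typed wreath product of $(\Z,\Z_+)$, as in Theorem~\ref{thm:algebraic-algorithm-is-correct-wreath-products}.

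\textbf{Inclusion $\wwpc(\vty{Dy})\subseteq\CRASP\cap\vty{REG}$.}
\begin{enumerate}
\item Show that each $\dyck_k$ is in $\CRASP$. The $\CRASP$ program displayed in Appendix~\ref{app:crasp} for $\dyck_2$ generalizes verbatim by replacing the bound $2$ with $k$.
\item Conclude that $M(\dyck_k)\preceq$ an iterated typed wreath product of $\Z$, by Theorem~\ref{thm:crasp_wreath}. Closure under division and direct products then places the pseudovariety $\vty{Dy}$ inside $\wwpc(\Z)$.
\item Handle finite iterated wreath products of monoids in $\vty{Dy}$. Given $M\preceq S$ and $N\preceq T$ with $S,T$ typed iterated wreath products of $\Z$, Lemma~\ref{lemma:finite-product-divides-typed-product} gives $M\wrc N\preceq S^N\wrt T$.
\item Check that $S^N$ is again in $\wwpc(\Z)$. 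A finite direct power divides the corresponding wreath product, and typed wreath closure is closed under direct products.
\item Conclude by induction on wreath depth that every monoid of $\wwp^k(\vty{Dy})$ lies in $\wwpc(\Z)$. Since these monoids are finite, they recognize regular languages, which are therefore in $\CRASP$.
\end{enumerate}

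\textbf{Inclusion $\CRASP\cap\vty{REG}\subseteq\wwpc(\vty{Dy})$.}
\begin{enumerate}
\item Let $L$ be regular and in $\CRASP$. Then $M=M(L)$ is finite and lies in $\wwpc(\Z)$.
\item Apply Lemma~\ref{lemma:completeness} (completeness) to $M$: the decision procedure of Definition~\ref{def:decition-procedure} cannot terminate with failure. It terminates by the termination lemma, so it succeeds.
\item Read off the division from the success case. The procedure outputs $\phi_\ell\colon M\relml N_\ell$, which is injective by $\operatorname{Invariant}(\ell)$, as argued in the proof of Lemma~\ref{lemma:algebraic-correctness}. Here $N_\ell$ is an iterated classical wreath product whose factors are $\dyck_k$ (possibly for several values of $k$) and the $\mathcal R$-trivial monoids $B$.
\item Absorb the $B$ factors into $\vty{Dy}$. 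Each $B$ divides an iterated wreath product of $\uone$, and $\uone\cong M(\dyck_0)$-type behaviour is available inside $\vty{Dy}$: concretely, $\uone$ is a quotient of $M(\dyck_1)$ via the map sending $\bot\mapsto 0$ and all other elements to $1$.
\item Absorb the varying indices $k$. Each $\dyck_k$ is in $\vty{Dy}$ by definition, and for $k'\le k$ we have $M(\dyck_{k'})\preceq M(\dyck_k)\times\cdots$. In fact we only need each factor to lie in $\vty{Dy}$.
\item Invoke associativity of the wreath product at the level of pseudovarieties \citep{tilson1987categories} to rebracket. This places $N_\ell$, and hence $M$, in $\wwp^m(\vty{Dy})$ for some $m$.
\end{enumerate}

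\textbf{The strict chain.}
\begin{enumerate}
\item $\vty{R}\subsetneq\CRASP\cap\vty{REG}$: the inclusion is item~1 of the examples in Appendix~\ref{app:crasp_reg_previously}. Strictness is witnessed by $(ab)^*$, which is in $\CRASP$ but whose syntactic monoid is not $\mathcal R$-trivial, since $ab\,\mathcal R\, a$.
\item $\CRASP\cap\vty{REG}\subseteq\vty{R}^\omega$: show $\vty{Dy}\subseteq\vty{R}^\omega$ by checking via Theorem~\ref{thm:R_infty_idempotents} that $M(\dyck_k)$ is aperiodic with at most one idempotent per $\mathcal R$-class. Then show $\vty{R}^\omega$ is closed under wreath products, or argue via the $\mathcal R$-class idempotent characterization directly.
\item $\CRASP\cap\vty{REG}\subsetneq\vty{R}^\omega$: strictness is witnessed by $(ab+bba)^*$, listed in Table~\ref{tab:languages} as in $\vty{R}^\omega$ but not in $\CRASP$. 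Non-membership follows by running the procedure, so the claim rests on that computation.
\item $\vty{R}^\omega\subsetneq\vty{A}$: aperiodicity is part of Theorem~\ref{thm:R_infty_idempotents}. Strictness uses $\Sigma^*b$: its monoid is aperiodic, but the idempotents $a$ and $b$ lie in the same $\mathcal R$-class.
\item $\vty{A}\subsetneq\vty{REG}$: this is witnessed by $(aa)^*$.
\end{enumerate}

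\textbf{Main obstacle.} I expect the hardest step to be closure of $\vty{R}^\omega$ under wreath products. If that closure is needed, I would first try the derived category theorem combined with the idempotent characterization, showing that a second idempotent in an $\mathcal R$-class of $V\wrc N$ projects to distinct idempotents either in $N$ or in a local monoid of $V$. A fallback is to bypass closure altogether: apply Theorem~\ref{thm:R_infty_idempotents} to $\CRASP$ directly, noting that the Balance on Loops Lemma~\ref{lemma:looping-zero} forces loops at an $\mathcal R$-class to collapse.
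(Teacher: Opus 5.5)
Your proposal is correct and follows the paper's argument. For $\supseteq$, you use $\dyck_k\in\CRASP$ and closure of $\wwpc(\Z)$ under direct products and typed wreath products, which is how the paper argues in the proof of \cref{thm:dyck_vs_R_infty} via the typed wreath product principle. That principle also supplies the closure under $\wrt$ that your induction step uses implicitly. For $\subseteq$, completeness plus termination force the decision procedure to succeed, and you read off a division into wreath products of the $M(\dyck_k)$ and $\mathcal R$-trivial factors $B$, exactly as the paper does.

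One slip: the map $M(\dyck_1)\to\uone$ sending $\bot\mapsto 0$ and everything else to $1$ is not a homomorphism, since $a\cdot a=\bot$. Use instead the submonoid $\{1,\bot\}\cong\uone$ to get $\uone\in\vty{Dy}$. Also, the strict chain you add belongs to \cref{thm:dyck_vs_R_infty}, not to this corollary.
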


\begin{proof}
    The ``$\supseteq$'' direction is implied by the proof of Lemma \ref{lemma:algebraic-correctness}.
    The ``$\subseteq$'' direction follows because  the algorithm from Definition \ref{def:decition-procedure}, when it succeeds on a monoid $M$, supplies a division $M \prec \mathcal{D}_{k_1} \wrc \dots \wrc \mathcal{D}_{k_r}$.
\end{proof}

\subsection{Defining Division between Categories and Typed Monoids}\label{sec:typed-division}

Here, we define relational morphisms and divisions between finite categories and typed monoids. The definitions are naturally typed extensions of the usual definitions for finite categories \citep{tilson1987categories}.

\begin{defin}\label{def:algebraic-relm}
An \emph{algebraic relational morphism} $\phi : X \relml M$ from a category \(X\) to a monoid \(M\) specifies, for each arrow $\alpha$ in $X$ a nonempty set $\phi(\alpha) \subseteq M$, such that $\phi(\alpha)\phi(\beta) \subseteq \phi(\alpha\beta)$ when $\alpha \in Hom(o \rightarrow o'), \beta \in Hom(o' \rightarrow o'')$, and $1_M \in \phi(1_o)$ where $1_o \in Hom(o\rightarrow o)$ is the local identity arrow.
\end{defin}
This definition matches the definition of relational morphisms in \cite{tilson1987categories} in the special case where the target is a monoid; we add the qualifier ``algebraic'' to highlight that type structure does not yet enter its definition.
We do the same for division:
\begin{defin}\label{def:algebraic-division}
An \emph{algebraic division} \(\psi : X\preceq N\) is an algebraic relational morphism $\psi : X \relml N$ where, for any two distinct arrows $\alpha, \alpha' \in Hom(o \rightarrow o')$, we have $\psi(\alpha) \cap \psi(\alpha') = \emptyset$.
\end{defin}
We expand this definition of division to the case where type structure is present:

  \begin{defin}\label{def:typed-division}
  Let \(C\) be a finite category and let \(S=(S,\types S,\units S)\) be a typed
  monoid.  A \emph{typed division}
  \(
  \psi:C\preceq S
  \)
  is an algebraic division \(\psi:C\relml S\), such that, for each
  homset \(H=\Hom_C(o \rightarrow o')\), for each $\alpha \in H$, there is a type $T \in \types{S}$ such that $\psi(\alpha) = T \cap \psi(H)$.
  \end{defin}

We first note that, in the case of finite monoids with discrete types (i.e., each subset of the monoid is a type), the notions of typed relational morphisms and typed divisions reduce to the usual definitions from \cite{tilson1987categories}. 
We can further link the definition to typed recognition,\footnote{We note that there is also a notion of division in \cite{krebs2008typed}, but here we link our relational definition directly to language recognition.} considering the setting of a finite monoid $M$ such that any language it recognizes (in the ordinary sense) via the surjective morphism $\eta$ is also recognized (in the typed sense) by $S$. Then:
\begin{lemma}\label{lem:syntactic-monoid-typed-divides-recognizer}
Let $M$ be a finite monoid, and let $S$ be a typed monoid.
    Let \(\eta:\Sigma^*\twoheadrightarrow M\) be a surjective morphism, and \(h:\Sigma^*\to S\) a morphisms.
  Assume that for every \(P\subseteq M\), there is \(T_P\in\types S\) such that
  \[
    \eta^{-1}(P)=h^{-1}(T_P).
  \]
  Then \(M\preceq S\) in the sense of Definition \ref{def:typed-division}, viewing $M$ as a single-object category.
  \end{lemma}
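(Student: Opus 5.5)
The natural choice is $\psi := h\circ\eta^{-1}$, that is, $\psi(m) := h(\eta^{-1}(m)) = \{h(w) : \eta(w)=m\}$ for $m\in M$. We then check the three requirements of Definition~\ref{def:typed-division} in turn: $\psi$ is an algebraic relational morphism (Definition~\ref{def:algebraic-relm}), it separates distinct elements (Definition~\ref{def:algebraic-division}), and each $\psi(m)$ is cut out of $\psi(M)$ by a type. Here $M$ is a single-object category, so there is one homset $H=M$, and $\psi(H)=h(\Sigma^*)$.

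\emph{Relational morphism.} Since $\eta$ is surjective, $\eta^{-1}(m)\neq\emptyset$, so $\psi(m)\neq\emptyset$. Because $\eta(\varepsilon)=1_M$ and $h(\varepsilon)=1_S$, we get $1_S\in\psi(1_M)$. For compatibility with products, take $h(u)\in\psi(m_1)$ and $h(v)\in\psi(m_2)$. Then $h(u)h(v)=h(uv)$ and $\eta(uv)=m_1m_2$, so $\psi(m_1)\psi(m_2)\subseteq\psi(m_1m_2)$.

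\emph{Injectivity.} Let $m\neq m'$ and apply the hypothesis with $P=\{m\}$. This gives a type $T_{\{m\}}$ with $\eta^{-1}(m)=h^{-1}(T_{\{m\}})$. Suppose some $s$ lies in $\psi(m)\cap\psi(m')$, say $s=h(u)=h(u')$ with $\eta(u)=m$ and $\eta(u')=m'$. Then $h(u')=h(u)\in T_{\{m\}}$, so $u'\in h^{-1}(T_{\{m\}})=\eta^{-1}(m)$, which gives $m'=m$, a contradiction. Hence $\psi(m)\cap\psi(m')=\emptyset$.

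\emph{Typedness.} We show $\psi(m)=T_{\{m\}}\cap h(\Sigma^*)$. The inclusion $\subseteq$ is immediate from $\eta^{-1}(m)=h^{-1}(T_{\{m\}})$. For $\supseteq$, take $s=h(w)\in T_{\{m\}}$. Then $w\in h^{-1}(T_{\{m\}})=\eta^{-1}(m)$, so $s\in\psi(m)$.

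The argument is routine. The only point needing care is the identification $\psi(H)=h(\Sigma^*)$: the type condition must be read relative to the image of $h$, not all of $S$. This is exactly what Definition~\ref{def:typed-division} asks for, since it requires $\psi(\alpha)=T\cap\psi(H)$ rather than $\psi(\alpha)=T$.
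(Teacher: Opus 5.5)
Your proposal is correct and is the paper's proof. The paper also takes $\psi = h\circ\eta^{-1}$ and derives the relational-morphism property from surjectivity of $\eta$ and injectivity from the hypothesis. It gets typedness from $\psi(\alpha)=h(h^{-1}(T_\alpha))=T_\alpha\cap\psi(M)$; you spell out the verifications it leaves implicit, namely the choice $P=\{m\}$, the two inclusions, and the identification $\psi(M)=h(\Sigma^*)$.
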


\begin{proof}
    Consider the relation $\psi \subseteq M \times S$ defined by $\psi :=  h \circ \eta^{-1}$. Because $\eta$ is surjective, this is an algebraic relational morphism.
    The assumption also ensures that it is an algebraic division.
    Then, for any $\alpha \in M$, there is a type $T_\alpha \in \types{S}$ such that $\eta^{-1}(\alpha) = h^{-1}(T_\alpha)$. This entails $\psi(\alpha) = h(\eta^{-1}(\alpha)) = h(h^{-1}(T_\alpha)) = T_\alpha \cap \psi(M)$.
\end{proof}
Thus, our definitions here are compatible with the relevant pre-existing definitions from prior work.

We now verify compatibility of our extended notion of division with composition.

\begin{lemma}\label{lem:typed-division-compose}
Let \(C\) be a finite category, let \(M\) be a finite monoid, and let \(S=(S,\types{S},\units{S})\) be a typed monoid. 
Assume that $M$ and $S$ satisfy the assumptions of Lemma \ref{lem:syntactic-monoid-typed-divides-recognizer}.
Assume \(C\preceq M\) (in the sense of \citet{tilson1987categories}). %
Then \(C\preceq S\) (in the sense of Definition \ref{def:typed-division}).
\end{lemma}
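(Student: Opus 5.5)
The plan is to compose the given Tilson division $\gamma : C \preceq M$ with the typed division $\psi := h\circ\eta^{-1} : M \preceq S$ from Lemma~\ref{lem:syntactic-monoid-typed-divides-recognizer}. Concretely, I will define, for each arrow $\alpha$ of $C$,
\[
\chi(\alpha) := \bigcup_{m \in \gamma(\alpha)} \psi(m) = h\bigl(\eta^{-1}(\gamma(\alpha))\bigr) \subseteq S ,
\]
and check the three conditions of Definition~\ref{def:typed-division} in turn: that $\chi$ is an algebraic relational morphism, that it is an algebraic division, and that each $\chi(\alpha)$ is cut out by a type relative to the image of its homset.

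For the algebraic relational morphism property, nonemptiness of $\chi(\alpha)$ follows because $\gamma(\alpha)\neq\emptyset$ and $\eta$ is surjective. For compatibility with composition, take $\alpha\in Hom(o\to o')$ and $\beta\in Hom(o'\to o'')$. Pick $s = h(u)$ with $\eta(u)\in\gamma(\alpha)$, and $s'=h(v)$ with $\eta(v)\in\gamma(\beta)$. Then $ss' = h(uv)$ and $\eta(uv)=\eta(u)\eta(v)\in\gamma(\alpha)\gamma(\beta)\subseteq\gamma(\alpha\beta)$. For the identity condition, $1_M\in\gamma(1_o)$ and $\eta(\epsilon)=1_M$, so $1_S = h(\epsilon)\in\chi(1_o)$.

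For injectivity, suppose $\alpha\neq\alpha'$ lie in the same homset. Then $\gamma(\alpha)\cap\gamma(\alpha')=\emptyset$, since $\gamma$ is a division. If some $s$ lay in both $\chi(\alpha)$ and $\chi(\alpha')$, we would have $s=h(u)=h(u')$ with $\eta(u)\in\gamma(\alpha)$ and $\eta(u')\in\gamma(\alpha')$. Applying the hypothesis to $P=\gamma(\alpha)$ gives a type $T_P$ with $\eta^{-1}(P)=h^{-1}(T_P)$. Now $u\in\eta^{-1}(P)$ gives $h(u)\in T_P$, hence $h(u')\in T_P$, hence $u'\in\eta^{-1}(P)$, which contradicts disjointness. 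So the images are disjoint.

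For the typing condition, fix the homset $H$ and $\alpha\in H$, and set $P:=\gamma(\alpha)$. The claim is that $\chi(\alpha)=T_P\cap\chi(H)$. The inclusion $\subseteq$ is immediate, since $h(\eta^{-1}(P))\subseteq T_P$. For $\supseteq$, take $s\in T_P\cap\chi(\alpha')$ with $\alpha'\in H$, and write $s=h(u)$ with $\eta(u)\in\gamma(\alpha')$. Since $h(u)\in T_P$, we get $u\in h^{-1}(T_P)=\eta^{-1}(P)$, so $s\in\chi(\alpha)$ (and in fact $\alpha'=\alpha$, by the injectivity argument).

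I expect no serious obstacle here. The only delicate point is that the relevant preimages must be taken under $h$ on words rather than on monoid elements. This is exactly what the hypothesis $\eta^{-1}(P)=h^{-1}(T_P)$ provides, and it has to be applied to the arbitrary subset $P=\gamma(\alpha)$ rather than to a single element.
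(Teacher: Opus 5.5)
Your proof is correct and takes the same route as the paper. Both compose the given division $C\preceq M$ with the typed division $h\circ\eta^{-1}\colon M\preceq S$ from Lemma~\ref{lem:syntactic-monoid-typed-divides-recognizer}, then check the type condition one homset at a time. The differences are minor: you verify the relational-morphism and injectivity properties directly rather than citing Tilson's closure of divisions under composition, and you apply the hypothesis to $P=\gamma(\alpha)$ instead of taking the finite union of the types attached to the individual elements $m\in\gamma(\alpha)$.
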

We will apply this to the setting where $M$ is the syntactic monoid of a regular language that is definable in C-RASP; this then provides an iterated wreath product of $\Z$ recognizing all languages recognized by the syntactic morphism of $M$.

\begin{proof}

Let \(\phi\colon C\relml M\) be an algebraic division and let \(\theta\colon M\relml S\) be a typed division. Define their composite relation \(\psi\colon C\relml S\) by
\[
\psi(\alpha)=\bigcup_{m\in \phi(\alpha)}\theta(m)
\]
for each arrow \(\alpha\) of \(C\). 
Because algebraic relational morphisms and divisions compose\cite{tilson1987categories}, \(\psi\) is again an algebraic relational morphism and a division from \(C\) to the underlying monoid of \(S\).

It remains to check that \(\psi\) is also a typed relational morphism.
Because \(M\) is a one-object category, the hypothesis that \(\theta\) is typed yields a single ambient set \(A\subseteq S\) such that for every \(m\in M\) there is a type \(\type{m}\in\types{S}\) with
\[
\theta(m)=A\cap \type{m}.
\]
Fix a homset \(H\) of \(C\), and put \(A_H^\psi := \bigcup_{\beta\in H}\psi(\beta)\). For \(\alpha\in H\),
\[
\psi(\alpha)=\bigcup_{m\in\phi(\alpha)}(A_H^\psi\cap \type{m})
=A_H^\psi\cap\Bigl(\bigcup_{m\in\phi(\alpha)}\type{m}\Bigr).
\]
Because \(M\) is finite and \(\types{S}\) is a Boolean algebra, the finite union
\[
\type{a}:=\bigcup_{m\in\phi(\alpha)}\type{m}
\]
is again a type of \(S\). Thus
\[
\psi(\alpha)=A_H^\psi\cap \type{a},
\]
so \(\psi\) is a typed relational morphism.

\end{proof}

\section{Algebraic Characterization (Necessary but not Sufficient Criterion)}\label{app:R_infty}

\RInfinity*
\begin{proof}
    We note that $\vty{R}\wrc\vty{G}\cap\vty{A}$ is exactly the aperiodic monoids with at most one idempotent in each $\mathcal{R}$-class.
    First, we show $\vty{R}^\omega\subseteq\vty{R}\wrc\vty{G}\cap\vty{A}$. 
    Suppose $M\in\vty{R}^\omega$. 
    By substituting $1$ for $y$ in $\vty{R}^\omega$ we obtain that $M$ satisfies
    \[x^\omega x=x^\omega\]
    which implies $M\in\vty{A}$ \citep{pin:LIPIcs.STACS.2009.1856}.
    Next, assume for sake of contradiction that $M\not\in \vty{R}\wrc\vty{G}$. 
    Then there is a $\mathcal{R}$-class of $M$ which contains at least $2$ idempotents \citep[Theorem 3.18]{stiffler1973extension}.
    Let $e_1\neq e_2$ be $\mathcal{R}$-equivalent idempotents, where $e_1=e_2m_2$ and $e_2=e_1m_1$. Then 
    \begin{align*}
        (e_1e_2)^\omega=(e_1e_1m_1)^\omega=(e_1m_1)^\omega=e_2^\omega=e_2
    \end{align*}
    while 
    \begin{align*}
        (e_1e_2)^\omega e_1=e_2e_1=e_2e_2m_2=e_2m_2=e_1.
    \end{align*}
    This implies $(e_1e_2)^\omega\neq (e_1e_2)^\omega e_1$, and thus $M\not\in\vty{R}^\omega$, a contradiction. Thus $M\in\vty{R}\wrc\vty{G}$.

    Now we show $\vty{R}\wrc\vty{G}\cap\vty{A}\subseteq\vty{R}^\omega$. Let $m_1,m_2\in M$. First, we note that $(m_1m_2^\omega)^\omega=(m_2^\omega m_1)^\omega$. This is because 

    \begin{align*}
        (m_1m_2^\omega)^\omega&=(m_1m_2^\omega)^{\omega+1}& \text{by aperiodicity}\\
        &=m_1(m_2^{\omega}m_1)^\omega m_2^\omega \\
        &\leq_{\mathcal{R}} m_1(m_2^\omega m_1)^\omega\\
        &\leq_\mathcal{R} (m_1m_2^\omega m_1)^\omega m_1 \\
        &\leq_{\mathcal{R}}(m_2^\omega m_1)^\omega
    \end{align*}

    and similarly we have $(m_2^\omega m_1)^\omega\leq_{\mathcal{R}} (m_1m_2^\omega)^\omega$.
    Then, because $\mathcal{R}$-equivalent idempotents are equal in $M\in\vty{R}\wrc\vty{G}\cap \vty{A}$.
    Now consider the elements$(m_1m_2^\omega)^\omega$ and $(m_1m_2^\omega)^\omega(m_2^\omega m_1)^\omega$, which are both idempotent.
    We show these are $\mathcal{R}$-equivalent.
     First $(m_1m_2^\omega)^\omega\geq_\mathcal{R} (m_1m_2^\omega)^\omega m_1 \geq_\mathcal{R} (m_1m_2^\omega)^{\omega+1}$, so by aperiodicity $(m_1m_2^\omega)^\omega \equiv_\mathcal{R} (m_1m_2^\omega)^\omega m_1$.
    Then 
    \begin{align*}
       (m_1m_2^\omega)^\omega(m_2^\omega m_1)^\omega  & = (m_1m_2^\omega)^\omega m_2^\omega(m_1m_2^\omega)^{\omega}m_1\\
       & = (m_1m_2^\omega)^\omega m_1 m_2^\omega m_2^\omega(m_1m_2^\omega)^{\omega}m_1\\
       & = (m_1m_2^\omega)^{\omega}m_1\\
    \end{align*}
    Since these are in fact both $\mathcal{R}$-equivalent idempotents, they are equal \citep{stiffler1973extension}. Thus using the above equalities we can conclude 

    $$(m_1m_2^\omega)^\omega=(m_1m_2^\omega)^\omega(m_2^\omega m_1)^\omega=(m_1m_2^\omega)^{\omega}m_1.$$
\end{proof}

\DyckVSRInfty*
\begin{proof}
The first claim is shown using the decision procedure (\cref{thm:dyck-vs-crasp}): every finite monoid in $\CRASP$ can be decomposed into wreath products of $\uone$ and $\dyck_k$ -- in this case $\uone\preceq \dyck_1$.
    In the other direction, $\dyck_k\in\CRASP$, and thus  $\wwpc(\vty{Dy})$ contains only monoids in $\CRASP$, using  \cref{thm:typed_wreath_principle}.
    We use this to show the other strict inclusions.
    \begin{itemize} 
        \item First, $\vty{R}=\wwpc{(\uone)}$. $\uone\in\vty{Dy}$. Then we have that $\CRASP\cap\vty{REG}= \wwpc{(\vty{Dy})}$, so $\vty{R}\subseteq \CRASP\cap\vty{REG}$.
        For a strict separation, $\dyck_1\in\CRASP\cap\vty{REG}$ but $\dyck_1\not\in\vty{R}$ \citep{BRZOZOWSKI198032}.
        \item We have that $\CRASP\cap\vty{REG}= \wwpc{(\vty{Dy}})$.
        Then by \cref{prop:dyck-krohn-rhodes-cyclic} we know $\dyck_k$ divides into a wreath product of $\uone$ and cyclic groups. 
        We can move all $\uone$ factors to the left, obtaining a monoid in $\vty{R}\wrc\vty{G}$ \citep{stiffler1973extension}. 
        Furthermore, since all monoids $\CRASP$ are aperiodic, this results in a monoid in $\vty{R}^\omega$, by \cref{thm:R_infty_idempotents}.
        For a strict separation, the monoid $M((ab+bba)^*)\in \vty{R}^\omega\setminus \CRASP$ (this can be computed by the decision procedure)
        \item First, $\vty{R}^\omega\subseteq \vty{A}\subsetneq \vty{REG}$ by \cref{thm:R_infty_idempotents}.
        The strict separation can be witnessed by $M((ab+aabb)^*)$, which is aperiodic but has two idempotents in a single $\mathcal{R}$-class
        \item $\vty{A}\subsetneq \vty{REG}$ is a standard fact. For instance, $M((b^*ab^*ab^*)^*)$ (i.e. $\Z/2\Z$) witnesses the strict separation.
    \end{itemize}
    
\end{proof}

\section{Automata proof of $\CRASP \cap \vty{REG}$}\label{app:automata-based-proof}
We attempt to track the states of a DFA by a $\CRASP$ program by iterating over the reachability order of the DFA. 
Whenever we encounter a nontrivial strongly connected component, we need to check if the program can be extended to cover it. 
If we succeed for all strongly connected components, the language of the DFA is recognized by a $\CRASP$ program. 
Otherwise, we show that failure at any component entails non-membership in $\CRASP$.
First, we will define a property of DFAs which determines their definability in $\CRASP$. 
Then, we will argue this property is decidable in polynomial time. 

We make the following assumptions, which simplify the presentation of the proof but do not affect the expressivity of $\CRASP$:
\begin{enumerate}
    \item All atoms occur bound within $\countl$ (unbound atoms have a constant truth value since we would evaluate them on an $\eos$ token). 
    \item No atom appears negatively (since atoms are disjoint, we didn't need negation on them anyways).
\end{enumerate}

Throughout, write $\automaton=(Q,\delta,\Sigma)$ for a semiautomaton where
$\delta$ is extended to $\Sigma^*$ as usual. 
We write $t^{q,w,i}$ for
the value of a \CRASP\ term $t$ at position $i$ of $w$ when the run
starts in state $q$, and $q,w,i\models\phi$ for the truth of a formula
at that position.
We write $t^{q,w}$ as shorthand for $t^{q,w,|w|}$.
We write $L(\automaton, q_0,F)$ to denote the language recognized by $\automaton$ with start state $q_0\in Q$ and accepting states $F\subseteq Q$. 

\begin{defin}[SCC]
    $W\subseteq Q$ is a \emph{strongly connected component} of
    $\automaton$ if it is maximal such that for all $q_1,q_2\in W$
    there is $w\in\Sigma^*$ with $\delta(q_1,w)=q_2$.
    $W$ is \emph{trivial} if $W=\{q\}$ and $\delta(q,\sigma)\neq q$ for
    every $\sigma$.
\end{defin}

We will be reasoning about individual SCCs of an automaton, so we formalize what it means to take an SCC out of an automaton.

\begin{defin}[Extraction]
    Let $W$ be an SCC of $\automaton$.  The \emph{extraction of $W$} is
    the semiautomaton $\extr{W}=(W\sqcup S_W,\delta_W,\Sigma)$ where
    \begin{align*}
        S_W&=\{\delta(p,\sigma)\mid p\in W,\ \sigma\in\Sigma\}\setminus W\\
        \delta_W(q,\sigma)&=\begin{cases}
            \delta(q,\sigma) & q\in W\\
            q & q\in S_W.
        \end{cases}
    \end{align*}
\end{defin}

In other words, to extract an SCC you isolate the states of $W$ and all other states reachable in one transition. 
These external states become sink states in the extracted semiautomaton.
If a semiautomaton $\automaton$ is the extraction of a SCC from itself, then we say $\automaton$ is strongly connected.
In the sequel, we will denote by $S_W$ the external sink states of an extracted SCC. 
We define the following helper function, which acts differently on internal states and external sink states of an SCC.

\begin{defin}
    Let $\automaton$ be a semiautomaton, $W$ a SCC, $q_0$ a state of $\automaton$, and $t$ a term of $\CRASP$. We define the following function:
    \begin{align*}
        V_{q_0,q}(t)&=\begin{cases}
            \{t^{q_0,w,i}\mid \delta_W(q_0,w_{\leq i})=q\}  & q\in W\\
            \{t^{q_0,w,i}\mid \delta_W(q_0,w_{\leq i-1})\in W,\delta_W(q_0,w_{\leq i})=q\}  & \text{otherwise}\\
        \end{cases}
    \end{align*}
\end{defin}

In other words, for states in $W$, the function $V_{q_0,q}$ take the set of all realizable counts on a path from $q_0\to q$, and for states immediately outside of $W$ we take the counts realizable by a path $q_0\to q$ which takes a single step from $W$ to $q$.
We will distinguish states in an automaton using terms which realize distinct sets of counts when in each state. 
However, only certain sets of counts can be distinguished by $\CRASP$ formulas (which can be formalized by the ``types'' as in \cref{app:typed_monoids}).

\begin{defin}[Separated]\label{def:separated}
    Let $W$ be an SCC
    .
    Distinct $q_1,q_2\in \extr{W}$ are
    \emph{separated} by a term $t$ if there is some $q_0\in W$ such that $V_{q_0,q_1}(t)\cap  V_{q_0,q_2}(t)=\emptyset$.
    We say $W$ is \emph{separable} if every pair of distinct states in $\extr{W}$ is
    separated by some $t$.
    We say $W$ is \emph{separable} if every pair of distinct states in $\extr{W}$ is
    separated by some $t$.
\end{defin}

\begin{defin}[Balanced]
    The \emph{subterms} of a term $t$ are $t$ itself alongside every term $t'$ occurring inside a subformula
    $[t'\geq C]$.
    A term $t$ is \emph{balanced on $W$} if there is an interval
    $[\alpha,\beta]$ such that $V_{q_0,q}(t)\subseteq[\alpha,\beta]$
    for all $q_0,q\in W$.
    A term $t$ is \emph{totally balanced on $W$} if all subterms $t'\in t$ are balanced on $W$. 
    A formula $\phi$ is balanced on $W$ if every term occurring in
    $\phi$ is balanced on $W$.
\end{defin}

Intuitively, if $\phi$ is not balanced on $W$ then we would see arbitrarily large or small values when checking the counters when the DFA is in states in $W$. 
An important observation is that for balanced terms, $|V(t)|=1$.

\begin{lemma}\label{lem:balanced-single-valued}
    If $t$ is balanced on $W$, then $|V_{q_0,q}(t)|=1$ for all $q_0,q\in Q$.
\end{lemma}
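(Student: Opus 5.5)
The plan is to argue by contradiction through pumping along cycles of $W$, using strong connectivity and the fact that a bounded set of integers cannot absorb an unbounded drift. First I will dispose of the degenerate cases of the quantifiers. If $q_0 \in S_W$, then $\delta_W$ fixes $q_0$, so $V_{q_0,q}(t)$ is empty unless $q=q_0$; in that case I treat it as a sink, where the relevant runs are trivial, and reduce to the case $q_0\in W$. If $q \in S_W$, every run counted in $V_{q_0,q}(t)$ is a run $q_0 \to p$ inside $W$ followed by a single letter $\sigma$ with $\delta(p,\sigma)=q$. The value at $q$ is then the value at $p$ plus a one-position contribution. So it suffices to prove the claim for $q_0,q\in W$ and then check that this last step contributes a determined amount once the value at $p$ is fixed. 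This reduces everything to $q_0,q\in W$.

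For $q_0,q\in W$, suppose $w_1,w_2$ both lead from $q_0$ to $q$ but $t^{q_0,w_1}=a\neq b=t^{q_0,w_2}$. By strong connectivity, pick $u$ with $\delta(q,u)=q_0$. I would then compare the words $x_n=(w_1u)^n w_1$ and $y_n=(w_2u)^n w_1$, or mixed interleavings of $w_1u$ and $w_2u$, all of which lead $q_0\to q$ and stay in $W$. The goal is to show that the values of $t$ along these words realize at least $n+1$ distinct integers. This contradicts $V_{q_0,q}(t)\subseteq[\alpha,\beta]$ once $n>\beta-\alpha$.

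The main obstacle is that $t=\sum_m\lambda_m\countl[\phi_m]$ is not obviously additive along concatenations. The truth of $\phi_m$ at a position inside a later block depends on the whole prefix, not only on the DFA state. My approach is a finiteness and pigeonhole argument on contexts. Because $t$ is bounded on $W$, the value of $t$ at block boundaries ranges over a finite set. I would therefore choose, among the exponentially many interleavings of the blocks $w_1u$ and $w_2u$, two prefixes ending at $q_0$ with equal $t$-value after which appending $w_1$ versus $w_2$ still yields the discrepancy. This uses that the one-block difference $a-b$ is witnessed from some context. Iterating then forces the drift. If the context dependence of the subformulas cannot be tamed this way, I would fall back on the DAG-normal form of \CRASP\ (counts only over $j<i$) and argue about the block contribution increments $\Delta_k$ directly. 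Their partial sums stay in $[\alpha,\beta]$ for all choices, which forces every block to contribute the same net increment independent of whether it is $w_1u$ or $w_2u$. Applying this to a single block gives $a=b$.
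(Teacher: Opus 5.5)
Your pumping idea (close $w_1,w_2$ into cycles with a return path $u$ from $q$ to $q_0$, then pump) is the same ingredient the paper uses. The step you add to cope with non-additivity cannot be completed, though, because the statement is false when read for arbitrary balanced terms, as you read it.

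Take the one-state SCC $W=\{q\}$ with $\delta(q,c)=\delta(q,d)=q$, and let $t=\countl[c\wedge(\countl[c]=1)]$. The inner formula holds exactly at the first $c$, so $t^{w,i}=1$ iff $c$ occurs in $w_{\le i}$. Hence $V_{q,q}(t)=\{0,1\}\subseteq[0,1]$, so $t$ is balanced, yet $|V_{q,q}(t)|=2$. With $w_1=c$, $w_2=d$, $u=\varepsilon$, the two blocks differ only from contexts containing no $c$, and this can happen at most once along any word: after the first $c$, every block contributes $0$. So your pigeonhole step (re-witnessing the discrepancy after a prefix with equal $t$-value) fails. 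Your fallback (bounded partial sums force every block to contribute the same increment) fails too. Block increments depend on the whole prefix, and boundedness only constrains their sum.

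The missing idea is to establish additivity first: $t^{w,i}=\sum_{j\le i}\operatorname{wt}(p_{j-1},w_j)$ for some weight on the transitions of $W$, where $p_j$ is the state after $w_{\le j}$. This holds in two cases:
\begin{itemize}
\item for depth-one terms $\sum_\sigma\chi_\sigma\countl[\sigma]$, which is where the paper applies the lemma (Lemma~\ref{lem:poly_balanced_one});
\item for totally balanced terms, by induction on depth, since single-valued subterms make each counted formula at position $j$ a function of the transition $(p_{j-1},w_j)$.
\end{itemize}
The paper's two-line proof takes this additive reading for granted. Every cycle in $W$ has weight $0$, since a nonzero cycle could be pumped past $[\alpha,\beta]$. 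Hence $\operatorname{wt}(w_1)=-\operatorname{wt}(u)=\operatorname{wt}(w_2)$.

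Your reduction for external states also fails. Distinct pairs $(p,\sigma)$ with $p\in W$ may enter the same $q\in S_W$ with different one-step contributions, so the value at $q$ is not determined by the value at $p$. For instance, take $W=\{p\}$ with $\delta(p,e)=p$ and $\delta(p,c)=\delta(p,d)=s$, and the depth-one term $t=\countl[c]$. Then $V_{p,p}(t)=\{0\}$ but $V_{p,s}(t)=\{0,1\}$. Moreover, for $q_0\in S_W$ every $V_{q_0,q}(t)$ is empty. So the lemma can only be proved for $q_0,q\in W$. On the sink states one gets only finiteness of $V_{q_0,q}(t)$, which is all that Lemma~\ref{lem:decision-construction} actually uses.
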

\begin{proof}
    Suppose otherwise that some $V_{q_0,q}(t)$ holds two values $x_1\neq x_2$.
    Since cycles in $W$ must sum to $0$ (otherwise $t$ would be unbalanced), $V_{q,q_0}(t)$ must contain $-x_1$ and $-x_2$. 
    However, this implies the existence of some cycle $q_0\to q_0$ with weight $x_1-x_2\neq 0$. 
\end{proof}

With these notions at hand, we are ready to state the main claim. 
\begin{lemma}[Definability  Criterion]\label{lem:decision-complete}
    Let $\automaton=(Q,\delta,\Sigma)$ be a semiautomaton.  
    The following are equivalent
    \begin{enumerate}
        \item $L(\automaton, q_0, F)$ is recognizable in $\CRASP$ for any $q_0\in Q$ and $F\subseteq Q$.
        \item $\extr{W}$ is separable by totally balanced terms for every SCC $W$ in $\automaton$.
    \end{enumerate}
\end{lemma}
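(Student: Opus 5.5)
We prove the two implications separately. Throughout we process the SCCs of $\automaton$ in a topological order of the reachability relation, mirroring the outer loop over $\mathcal{R}$-classes in \cref{def:decition-procedure}.

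\textbf{(2) $\Rightarrow$ (1).} We build, by induction along the topological order of SCCs, a family of $\CRASP$ formulas $\psi_q$ ($q\in Q$) such that $w,i\models\psi_q$ iff $\delta(q_0,w_{\leq i})=q$. Acceptance is then $\bigvee_{q\in F}\psi_q$ evaluated at $\eos$.

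Suppose the formulas are already constructed for all states in SCCs strictly before $W$. Then the event ``the run enters $W$ at position $j$ from state $p$ via symbol $\sigma$'' is definable as $\psi_p$ at $j-1$ together with $\sigma$ at $j$. Using the standard trick of \citet[Lemma A.4]{yang2025knee}, we refer to the previous position through strict counting. Since at most one entry into $W$ can occur, we can relativize every separating term $t$ for $W$ so that it only counts positions after the entry point. Concretely, we replace each atom $\countl[\phi]$ by $\countl[\phi\wedge \theta_{\text{entered}}]$, where $\theta_{\text{entered}}$ expresses $\countl[\text{entry}]\geq 1$. We also condition on which entry state $q_0\in W$ was used, which takes finitely many cases.

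Because the terms are totally balanced, \cref{lem:balanced-single-valued} gives that each $t$ takes a single value $v_{q_0,q}(t)$ on each state $q\in W$. Each subterm is likewise balanced, so each comparison $[t'\geq C]$ evaluated inside $W$ behaves as in the simulated run. Separability then ensures that for any two distinct states of $\extr{W}$ some term takes disjoint value sets. Hence a finite Boolean combination of tests $t=v_{q_0,q}(t)$ pins down the current state whenever the run is in $W$ or has just left it into $S_W$. States outside $W$ reached later are handled by later SCCs via their own entry events.

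\textbf{(1) $\Rightarrow$ (2).} We argue by contraposition, following the completeness argument of \cref{lemma:completeness}. Assume some SCC $W$ is not separable by totally balanced terms, and suppose a $\CRASP$ program recognizes all $L(\automaton,q_0,F)$. Choose a counterexample program of minimal depth, by \cref{thm:crasp_wreath} the analogue of minimal wreath depth $T$. Consider its outermost counting terms, restricted to runs that enter $W$ at $q_0$ and stay in $W$.

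Each such term is either balanced on $W$ or unbounded on some cycle of $W$. In the unbounded case, pumping that cycle drives the term to $\pm\infty$. Since comparisons against constants stabilize in the limit, the way the term $g(\infty)$ was handled in Case 2 of \cref{lemma:completeness}, such a term can be replaced by a constant truth value on all sufficiently pumped words without changing which pairs of states are distinguished. Balanced subterms that fail to separate the offending pair $q_1,q_2$ (they must fail, by non-separability) contribute equal values on both states and can likewise be eliminated.

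Either way, the top layer can be removed, which yields a shallower program that still distinguishes $q_1$ from $q_2$ on pumped words. This contradicts the minimality of the depth.

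\textbf{Main obstacle.} The hardest part is the elimination step in (1) $\Rightarrow$ (2). We need to guarantee that one family of pumped words is simultaneously ``at infinity'' for all unbalanced terms, while keeping balanced terms fixed and still reaching $q_1$ versus $q_2$. The plan is to pump a single cycle through $W$ that is nonzero on a maximal set of unbalanced terms. For the remaining terms, we take iterated powers and use that the values achievable on cycles form a finitely generated $\Z$-module. This is the same structure that guarantees termination in the algebraic procedure.
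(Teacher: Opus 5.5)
Your direction (2)$\Rightarrow$(1) is essentially the paper's Construction lemma: induct along the SCC order, relativize to positions after the unique entry, use single-valuedness of totally balanced terms, and separate states by disjoint value sets. Your count-level relativization $\countl[\phi]\mapsto\countl[\phi\wedge\theta_{\text{entered}}]$ is, if anything, cleaner than the paper's atom substitution. The gap is in (1)$\Rightarrow$(2), at the reduction step.

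You claim that balanced terms which fail to separate $q_1,q_2$ ``contribute equal values on both states and can likewise be eliminated.'' Non-separability only gives equal values at the two \emph{endpoints}. Outer counts aggregate the comparisons on such a term at \emph{every intermediate position} of the run. For example, in the $\dyck_2$ program the term $\countl[a]-\countl[b]$ is balanced, yet $\countl[\lnot\phi_{\text{bounded}}]=0$ depends on its value throughout the run. So fixing or dropping balanced terms does not preserve the program's behaviour on the pumped words. You then cannot conclude that the shallower program still distinguishes $q_1$ from $q_2$, and the minimal-depth contradiction never materialises.

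Two further problems follow from this structure. First, if you peel off the \emph{outermost} layer, a balanced top-level term may have unbalanced subterms. The hypothesis concerns only \emph{totally} balanced terms, so it says nothing about such a term. Second, the reduced program no longer recognizes $L$, so minimality would have to be taken over a class you never specify. The depth-minimality device works in the algebraic Completeness lemma because balanced information is already absorbed into the objects of $C_o$: Case 1 evaluates $g$ at the fixed value $\nu_u$ rather than discarding $\omega$. To mimic it you would need programs whose atoms can query such state predicates, which you do not set up.

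The paper uses no induction on depth and never removes balanced terms. It takes a single loop $u=u_1^{c^k}u_2^{c^{k-1}}\cdots u_k^{c}$ at the entry state with nonzero sum on every unbalanced term, pumps $xu^N$, and replaces only the comparisons on unbalanced terms by their limiting truth values. It keeps all balanced terms intact. The resulting $\psi$ is (asserted to be) totally balanced and agrees with $\phi$ on $xu^Nw_1v$ and $xu^Nw_2v$. Non-separability and single-valuedness then apply to \emph{every} term of $\psi$ at once. All counts therefore coincide after $xu^Nw_1$ and $xu^Nw_2$, so $\phi$ behaves identically along the distinguishing suffix $v$. Your pumping plan in the ``main obstacle'' paragraph is the right ingredient; the layer-peeling that follows it is what fails.
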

\begin{proof}
    First, $(2)\Rightarrow (1)$ is shown in \cref{lem:decision-complete}.
    Then, $\lnot(2)\Rightarrow \lnot (1)$ is shown in \cref{lem:decision-complete}.
\end{proof}

\begin{lemma}\label{lem:decision-construction}[Construction]
    Suppose every SCC in $\automaton$ is separable by totally balacned terms 
    Fix $q_0\in Q$.
    Then for each $q\in Q$ there exists a $\CRASP$ formula $\phi_q$ such that $w\models \phi_q\iff \delta(q_0,w)=q$.
\end{lemma}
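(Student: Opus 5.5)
We proceed by induction along the reachability order of the SCCs of $\automaton$ starting from $q_0$: the condensation of $\automaton$ is a DAG, and we process its SCCs in topological order. The inductive goal is slightly stronger than the statement. For every state $q$ and every position $i$ we build a $\CRASP$ formula $\phi_q$ with $w,i\models\phi_q$ iff $\delta(q_0,w_{\leq i})=q$; this is the prefix-tracking version, and the statement follows by evaluating at the end. We also build, for every SCC $W$, an ``entry'' formula $\epsilon_{W,p,\sigma}$ for each transition $p\xrightarrow{\sigma}q$ that enters $W$ from outside. The formula $\epsilon_{W,p,\sigma}$ holds at exactly one position: the one where the run first enters $W$ through this edge. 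It is available once the states of all earlier SCCs are tracked, because we can take $\epsilon_{W,p,\sigma}:=\sigma\wedge\psi_p\wedge(\countl[\text{entry into } W]=1)$. Here $\psi_p$ says that the previous position was in state $p$, and this is expressible with the usual trick of testing strict prefixes and the current symbol (as in \citet[Lemma A.4]{yang2025knee}). Since the run leaves an SCC at most once, the entry is unique.

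\textbf{Inductive step.} Suppose all states of SCCs preceding $W$ are tracked. Fix an entry edge into $W$, landing in a state $r\in W$. Separability by totally balanced terms gives finitely many terms $t_1,\dots,t_n$, and each is balanced on $W$. By Lemma~\ref{lem:balanced-single-valued} each $V_{r,q}(t_k)$ is then a singleton $\{v_k(r,q)\}$. We relativize each $t_k$ so that it counts only positions after the entry: replace every count $\countl[\chi]$ by $\countl[\chi\wedge\countl[\epsilon]\geq 1]$. For subterms nested inside comparisons, we apply the same relativization recursively, using the induction on formula structure of the totally balanced terms. This yields a term $t_k^{r}$ whose value at a position $i$ with the run still inside $W$ equals $t_k^{r,u}$, where $u$ is the suffix read since entry. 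Because the value is single-valued, the conjunction $\bigwedge_k (t_k^r = v_k(r,q))$ together with ``still in $W$'' pins down the state $q$. Here ``still in $W$'' means that no exit has happened yet, i.e.\ no position so far satisfies any of the (finitely many) single-step exit conditions into $S_W$. Those exit conditions are in turn detected by the same mechanism. A pair (current state $p\in W$, symbol $\sigma$) with $\delta(p,\sigma)\in S_W$ is recognized by the values for $p$ at the previous step together with the current symbol. Separability of $W$ against the sink states $S_W$ guarantees that such an exit is distinguished from staying inside $W$. The finitely many entry edges are combined by a disjunction, which gives $\phi_q$ for $q\in W$. For states in later SCCs we continue the induction.

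\textbf{Main obstacle.} The delicate point is showing that relativized terms compute exactly $t_k^{r,u}$ and stay finite-valued at the positions where we compare them with constants. Balancedness is stated for runs of $\extr{W}$, so we must check that nested subformulas $[t'\geq C]$ evaluate inside $W$ as they would on $\extr{W}$ started at $r$. We would handle this by a secondary induction on the depth of the term, using total balancedness, which guarantees that every subterm is balanced and hence takes a unique value per state. One more case needs care: sink states in $S_W$ are reached in one step, so their $V$ values are taken with the last step leaving $W$. Separability of $\extr{W}$ (Definition~\ref{def:separated}) covers exactly this case. The remaining case, where the run never enters $W$, is handled trivially because $\countl[\epsilon]=0$ there.
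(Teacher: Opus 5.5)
Your proposal is correct and follows essentially the paper's route (induction along the SCC order, totally balanced separating terms compared against their finitely many---here single---values, relativization to the part of the run after entry, and a disjunction over entry edges). Two local variations are harmless: you detect entry via the predecessor's tracker plus the current symbol rather than via the sink-state formulas $\phi_{q_0'}$, which works because your $\phi_p$ are Boolean combinations of count comparisons; and your explicit ``still in $W$'' guard makes precise something the paper leaves implicit. One small fix is needed: $\countl[\chi\land\countl[\epsilon]\geq 1]$ still counts the entry position itself, which for nested terms shifts the inner thresholds, so relativize with $\countl[\chi\land\countl[\epsilon]\geq 1\land\lnot\epsilon]$ to get exactly $t_k^{r,u}$.
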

\begin{proof}
    First, for any $q\in Q$ not reachable from $q_0$ we can set $\phi_q=\bot$.
    Otherwise, we induct on the reachability order of SCC's in $\automaton$.
    Consider the first SCC $W$ containing $q_0$. 
    Because $W$ is separable by a totally balanced terms, for each $q_1,q_2\in \extr{w}$ there exists a totally balanced term $t$ that separates them. 
    We will define a formula $\phi_{q_1,\lnot q_2}$ such that strings which land on $q_1$ always satisfy $\phi_{q_1,q_2}$ while strings that land in $q_2$ do not. 
    Since the sets of counts upon landing in each state form two disjoint finite sets, we can check these counts using a $\CRASP$ formula:
    \begin{align*}
        \phi_{q_1,\lnot q_2}&:= \left(\bigvee_{c\in V_{q_0,q_1}(t)} t=c\right)\land \lnot \left(\bigvee_{c\in V_{q_0,q_2}(t)} t=c\right)
    \end{align*}

    Then, we can let $\phi_q:=\bigwedge_{q'\neq q\in \extr{W}} \phi_{q,\lnot q'}$.
    For the inductive step on an SCC $W'$, we have by assumption formulas $\phi_{q_0'}$ which detect entry into $W'$ (as these are states in $S_W$ from previous SCC's).
    By assumption, there exists $\phi_{q_0',q}$ which detects if $\automaton$ is in $q$ if we started in state $\phi_{q_0'}$
    Since, by assumption, no atom appears negatively, we can take $\phi_q$ and apply the mapping $\sigma\mapsto (\sigma \land \countl \phi_{q_0'}\geq 1)$ to transform $\phi_{q_0',q}\mapsto \hat{\phi}_{q_0',q}$. 
    Intuitively, this transformation tells $\phi_{q_0',q}$ to ignore all symbols which occurred before the first $q_0'$ (i.e. before we entered $W'$ via $q_0'$).
    Now, define the set of entry points $E_{W'}=\{q\mid \delta(q',\sigma)=q, q'\not\in W'\}$, and then we can let $\phi_{q}=\bigvee_{q_0'\in E_{W'}} \hat{\phi}_{q_0',q}$.
\end{proof}

\begin{lemma}[Contradiction]
    Let $\automaton,q_0,F$ define the minimal automaton of a language. Suppose there exists an SCC $W$ in $\automaton$ which is not separable by any totally balanced terms.
    Then $L(\automaton,q_0,F)$ is not definable in $\CRASP$.
\end{lemma}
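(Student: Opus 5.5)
We argue by contradiction: assume $L=L(\automaton,q_0,F)$ is defined by a $\CRASP$ program $\phi$. We then build a depth-reduction argument on $\phi$ that parallels the completeness argument of \cref{lemma:completeness}.

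\textbf{Step 1: localize to $W$.} Fix a state $p\in W$ reachable from $q_0$ via a word $u$, i.e.\ $\delta(q_0,u)=p$. Since $\automaton$ is minimal, any two distinct states $q_1,q_2\in\extr{W}$ are distinguished by some suffix $z$. Hence it suffices to show that no $\CRASP$ program can tell apart the words $uv_1z$ and $uv_2z$, where $\delta(p,v_1)=q_1$ and $\delta(p,v_2)=q_2$, uniformly over all such choices. Because $\CRASP$ languages form a pseudovariety (\cref{prop:crasp-variety}), quotienting by $u$ on the left and by $z$ on the right keeps us inside $\CRASP$. So we may assume the program is evaluated on words read from $p$ inside $\extr{W}$. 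Ending in a sink of $S_W$ is absorbing, so it is handled exactly like an ordinary state of $\extr{W}$.

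\textbf{Step 2: induction on the nesting depth of $\countl$.} We show that every formula $\psi$ occurring in $\phi$ is, on words that stay in $W$ (plus one possible exit step), either \emph{eventually constant} along pumped loops, or determined by the current state via a totally balanced term. The tool is the threshold dichotomy, the analogue of Case 1 and Case 2 in \cref{lemma:completeness}. Take a term $t=\sum\lambda_m\countl\psi_m$ whose subformulas have already been shown to be state-determined. Then $t$ is additive along paths, and each $V_{p,q}(t)$ is computed from edge weights.

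If $t$ is balanced on $W$, then by \cref{lem:balanced-single-valued} each $V_{p,q}(t)$ is a singleton. In that case every comparison $t\ge C$ is a function of the current state, and $t$ is totally balanced.

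If $t$ is unbalanced, some cycle in $W$ has nonzero weight. Pumping that cycle many times before the part of the word that actually distinguishes states drives $t$ to $\pm\infty$. Every threshold test on $t$ then becomes constant on the pumped words, and this pumping does not change the DFA state, so the state is unaffected. We pump a common large power for all the finitely many unbalanced terms simultaneously.

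\textbf{Step 3: conclude.} After the pumping, every atomic test in $\phi$ evaluated inside $W$ is a Boolean function of tests on totally balanced terms. Since $W$ is not separable by such terms, there are distinct states $q_1,q_2\in\extr{W}$ with $V_{p',q_1}(t)\cap V_{p',q_2}(t)\neq\emptyset$ for every totally balanced $t$ and every start state $p'$.

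The main obstacle is that the common value may be realized by different pairs of paths for different terms $t$. To handle this we first fix the finitely many terms of $\phi$. We then choose paths $v_1,v_2$ that witness agreement for all of them at once. The way to do this is to take the direct sum of these terms as a single $\Z^r$-valued balanced labelling. Balanced sums are again balanced, and by \cref{lem:balanced-single-valued} each is single-valued. Applying non-separability to integer linear combinations then forces the vectors for $q_1$ and $q_2$ to agree: if they differed, a suitable combination would separate the two states.

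With the pumped prefix prepended, $uv_1z$ and $uv_2z$ satisfy the same atomic tests. Hence $\phi$ accepts both or neither, while minimality says exactly one of them lies in $L$. This is a contradiction.
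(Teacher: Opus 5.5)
Your overall strategy is the paper's: pump loops at a base point of $W$ so that every unbalanced term goes to $\pm\infty$, replace the now-constant threshold tests by $\top/\bot$ to obtain a formula with only totally balanced terms, and then combine non-separability with a distinguishing suffix from minimality. Localizing through the quotients of \cref{prop:crasp-variety}, instead of carrying the paper's explicit prefix $x$ and suffix $v$, is a harmless and tidy variation.

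\textbf{The gap: simultaneous saturation in Step 2.} The step that fails as written is ``pump a common large power for all the finitely many unbalanced terms simultaneously.'' Take loops $a$ and $bc$ at $p$ with unbalanced terms $\countl[a]-\countl[b]$ and $\countl[c]$. Both $a^N(bc)^N$ and $(abc)^N$ leave the first term at $0$. So tests on it are not constant on the positions that follow.

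What you need is a \emph{single} loop at $p$ whose weight is nonzero on every unbalanced term at once. The paper builds it as $u=u_1^{c^k}u_2^{c^{k-1}}\cdots u_k^{c}$ with $c$ large, so that the leading nonzero contribution dominates. Equivalently, you can choose exponents that avoid finitely many hyperplanes.

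Your depth induction exposes a further point the pumping must respect. Depth-$d$ weights are only defined once the depth-$<d$ tests have stabilized. A loop chosen for a depth-$d$ term may carry the opposite sign on an already saturated lower-depth term. To keep the earlier signs, order the terms by depth and give the lower-depth loops the dominant exponents. Also, an unbalancing cycle elsewhere in $W$ must first be conjugated to a loop at $p$: one of the loops $p\to s\to p$, or the same loop with the cycle inserted at $s$, has nonzero weight.

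\textbf{Minor points on Steps 2 and 3.} Because atoms occur inside counts, subformulas are determined by the current \emph{transition}, not the current state.

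In Step 3, when $q_1,q_2\in W$, single-valuedness already gives agreement for every term along any paths $v_1,v_2$, so the direct sum is unnecessary. It is genuinely needed when $q_1$ or $q_2$ lies in $S_W$. There \cref{lem:balanced-single-valued} does not apply, since $V_{p,q}(t)$ can contain one value per exit edge. Your justification by single-valuedness therefore does not cover this case; the argument that works is hyperplane avoidance. Suppose every pair of exit edges $e_1\to q_1$, $e_2\to q_2$ had distinct value vectors in $\Z^r$. Then some $c\in\Z^r$ lies off the finitely many hyperplanes orthogonal to their differences, and $\sum_j c_j t_j$ is a totally balanced term separating $q_1$ from $q_2$. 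Phrased this way, your device handles a case the paper's proof skips by simply taking $q_1,q_2\in W$.
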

\begin{proof}
    Let $\phi$ be any $\CRASP$ formula and $C_0$ be larger than any theshold or coefficient in any subformula of the form $\sum\chi\cdot \countl[\psi]\geq C$.
    Since $W$ is not separable by totally balanced terms, there are $q_1,q_2\in W$ such that for all totally balanced terms $t$ and entry points $\delta(q_0,x)=q_\iota$, we have that $V_{q_\iota,q_1}(t)=V_{q_\iota,q_2}(t)$.
    Let $w_1$ and $w_2$ be the minimal strings such that $\delta(q_\iota,w_1)=q_1$ and $\delta(q_\iota,w_2)=q_2$.
    Let $v\in\Sigma^*$ be any distinguishing suffix of $q_1$ and $q_2$ -- such that $\delta(q_1,v)\in F$ while $\delta(q_2,v)\not\in F$ --which always exists as $q_1,q_2$ are distinct states in the minimal automaton. 
    We will prepend very large loops around $q_\iota$, carefully chosen so that all unbalanced terms $\gg C_0$ or $\ll C_0$.
 
    Order all unbalanced terms $t_1, t_2,\ldots, t_k$.
    First, we note that for all unbalanced $t_i$ there exists some loop $u_i\in\Sigma^*$ such that $\delta(q_\iota,u_i)=q_\iota$ and the loop sum $t_i^{q_\iota,u_iu_i}-t_i^{q_\iota,u_i}\neq 0$ -- otherwise, $V_{q_\iota,q_\iota}(t_i)$ would be bounded and $t_i$ would not be unbalanced on $W$. 
    Then, there exists some constant $c$ such that the loop $u:=u_{1}^{c^k}u_2^{c^{k-1}}\cdots u_k^{c}$ has a nonzero sum for all terms (The constant is chosen such that no subsequent loop can ``undo'' the counts accumulated in previous loops, thus guaranteeing that all terms will have a nonzero sum).
    
    There exists a sufficiently large exponent $N$ such that over a long prefix $xu^N$, replacing all unbalanced terms with $\infty$ (or $-\infty$, depending on the sum over $u$) results in a formula $\psi$ whose truth value matches $\phi$ on all positions after the prefix in $xu^Nw_1$ and $xu^Nw_2$.
    Intuitively, any formula $[t\geq C]$ converges to $\top$ or $\bot$ after enough iterations of $u$, and the remaining iterations are used to drown out the finite prefix before convergence. 
    Now, $\psi$ had all unbalanced subformulas replaced with constants, so it is totally balanced. 
    Thus, $t^{q_0,xu^Nw_1}=t^{q_0,xu^Nw_2}$, %
    and appending the suffix $v$ does not change this, so $\phi$ will have the same truth value on $xu^Nw_1v$ as $xu^Nw_2v$ -- therefore $\phi$ cannot define $L(\automaton,q_0,F)$.
\end{proof}

Now we will show that these conditions are decidable in polynomial time.  
We say that two terms are equivalent up to separation of states in the following sense:
Each term defines an equivalence relation $\equiv_{t,q_0}$ over states, where $q_1\approx_{t,q_0} q_2$ whenever $V_{q_0,q_1}(t)=V_{q_0,q_2}(t)$. %

\begin{lemma}\label{lem:poly_balanced_one}[Depth-$1$ Balanced]
    Let $\automaton=(Q,\delta, \Sigma)$ be an SCC. 
    The set of all depth-$1$ $\CRASP$ terms up to equivalent separability of states can be computed in $O(\mathsf{poly}(|Q|,|\Sigma|)$ time.
\end{lemma}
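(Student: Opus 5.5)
Depth-$1$ terms are integer combinations $t=\sum_{\sigma}\lambda_\sigma\countl[\sigma]+c$ (by the standing assumption that atoms occur only inside $\countl$). So a depth-$1$ term is a weight function $\lambda\colon\Sigma\to\Z$, and the additive constant does not affect separability. For a fixed start state $q_0$ the map $\lambda\mapsto (V_{q_0,q}(t_\lambda))_q$ is governed by path sums in the automaton labelled by $\lambda$. Since $\automaton$ is an SCC, the plan is to reduce everything to linear algebra over $\Z$ on the cycle space of the transition graph. We then enumerate a finite, polynomial-size family of representatives. This family contains one term for each separability behaviour that matters, namely a basis of balanced weights together with a description of the unbalanced ones.

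First step: characterize balanced weights. The weight $\lambda$ is balanced on $W$ iff every cycle in the transition graph has $\lambda$-sum $0$. One direction is clear: a nonzero cycle can be pumped, and strong connectivity makes it reachable from every $q_0$. For the converse, fix a spanning tree rooted at some $r$ and let $p(q)$ be the tree potential. Every edge $q\xrightarrow{\sigma}q'$ must then satisfy $p(q)+\lambda_\sigma=p(q')$. These are $|Q||\Sigma|$ homogeneous linear equations in the unknowns $(\lambda_\sigma)_\sigma,(p(q))_q$. The solution set is a submodule $\Lambda_{\mathrm{bal}}\subseteq\Z^{\Sigma}$, and a basis is obtained by Hermite/Smith normal form in polynomial time. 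For $\lambda\in\Lambda_{\mathrm{bal}}$, \cref{lem:balanced-single-valued} gives $V_{q_0,q}(t_\lambda)=\{p_\lambda(q)-p_\lambda(q_0)\}$. The one-step exit states $S_W$ are handled the same way, with value $p_\lambda(q')+\lambda_\sigma-p_\lambda(q_0)$ over the finitely many exit edges.

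Second step: reduce to finitely many representatives. Two states $q_1,q_2$ are separated by some balanced depth-$1$ term iff the linear functional $\lambda\mapsto p_\lambda(q_1)-p_\lambda(q_2)$ is not identically zero on $\Lambda_{\mathrm{bal}}$. For exit states, where several values may occur, separation holds iff the finite value sets can be made disjoint. Nonvanishing of a linear functional on a module is witnessed by a basis vector, so the basis elements serve as representatives for $\equiv_{t,q_0}$ up to separability. Combining a polynomial number of them (for exit states one generic combination, avoiding finitely many hyperplanes) suffices. Unbalanced terms need no separate enumeration, because for each $q_0$ they realize unbounded sets in every state of $W$. For the SCC itself they therefore contribute only their "sign at infinity", and this is determined by which cycle has nonzero sum. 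The unbalanced directions are exactly the complement of $\Lambda_{\mathrm{bal}}$, so the quotient data is the cycle-space description already computed. Looping over all $q_0\in W$ and all pairs $q_1,q_2$ costs another polynomial factor.

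The main obstacle is the exit states together with the "up to equivalence" claim. At exit states $V$ need not be a singleton, so separation requires disjointness of finite sets rather than inequality of single values, and I must ensure that a polynomially sized family of representatives realizes every achievable separation. I would handle this by noting that each exit value set is a finite set of linear forms in $\lambda$. Disjointness then fails only on a finite union of hyperplanes inside $\Lambda_{\mathrm{bal}}$, so a single generic integer combination of basis vectors avoids all of them simultaneously. Such a combination has polynomially bounded coefficients and can be found, for example, by choosing weights $1,N,N^2,\dots$ with $N$ exceeding the number of hyperplanes.
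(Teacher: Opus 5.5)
Your proposal is correct and takes essentially the paper's route. The paper likewise identifies balanced depth-$1$ terms with weights $E\colon\Sigma\to\Z$ that admit a potential $V\colon Q\to\Z$ with $V(q_2)=V(q_1)+E(\sigma)$ on every transition, and it computes this integer solution module in polynomial time by normal-form methods. Unlike you, the paper simply asserts that linear combinations of basis solutions yield no new separations.

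Your generic-combination argument for the non-singleton value sets at exit states is a refinement the paper does not spell out. One slip: with weights $1,N,N^2,\dots$ it is not enough for $N$ to exceed the number of hyperplanes. Each linear form, evaluated along this curve, is a nonzero polynomial in $N$ of degree below the rank of $\Lambda_{\mathrm{bal}}$, so it may have several integer roots. Taking $N$ larger than every $|f(b)|$ does work, where $f$ ranges over the finitely many linear forms and $b$ over the basis vectors.
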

\begin{proof}
    Let $q_0\in Q$.
    For every term $\sum \chi_p\countl[\phi_p]$ occurring in a balanced depth-$1$ we must have that $ \left(\sum \chi_\sigma\countl[\sigma]\right)^{w,|w|}=0$ over any loop $w$, i.e. when $\delta(q_0,w)=q_0$.
    Here, we use the fact that balanced terms are single-valued on each state of an SCC (\cref{lem:balanced-single-valued}).
    First, we assert that such a balanced term exists iff there exists a function $E\colon \Sigma \to \Z$ and $V\colon Q\to \Z$ such that the following condition is satisfied:
    \begin{equation}\label{eq:condition_loop}
        \text{If $\delta(q_1,\sigma)=q_2$ then $V(q_2)=V(q_1)+E(\sigma)$}
    \end{equation}
    If such a function exists, we can obtain a term that sums to $0$ on loops via $\sum E(\sigma)\countl[\sigma]$. 
    If such a term $\sum \chi_\sigma\countl[\sigma]$ exists, we can obtain the function by setting $E(\sigma)=\chi_\sigma$ and $V(q)=\sum_{1\leq i \leq |w|} E(w_i)$ for any $w$ such that $\delta(q_0,w)=q$.
    The only other balanced terms are linear combinations of these terms,
    but these would all be equivalent with respect to separability of states.
   Deciding the existence of such a function can be done in $O(\mathsf{poly}(|Q|,|\Sigma|)$ time. 
    There are $O(|Q|^2|\Sigma|)$ many transitions $\delta(q_1,\sigma)=q_2$ so we obtain $O(|Q|^2\cdot|\Sigma|)$ many constraints of the form $V(q_2)=V(q_1)+E(\sigma)$. 
    These give $O(|Q|^2|\Sigma|)$-many linear constraints over $O(|Q|^2\cdot|\Sigma|+|Q|)$-many variables. 
    This can be solved in $O(\mathsf{poly}(|Q|,|\Sigma|)$ time \citep{doi:10.1137/0208040}.
\end{proof}

\begin{lemma}\label{lem:separable-poly}
    Whether an SCC $W$ is separable by totally balanced terms is decidable in
    $O(\mathsf{poly}(|Q|,|\Sigma|))$ time.
\end{lemma}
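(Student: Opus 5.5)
We would decide separability by an iterative refinement over the SCC $W$, in the spirit of the inner loop of the algebraic procedure. We maintain a finite set $\mathcal{B}$ of totally balanced terms, starting from the depth-$1$ terms of \cref{lem:poly_balanced_one}. Each term in $\mathcal{B}$ induces, for each entry state $q_0\in W$, the relation $\approx_{t,q_0}$ on the states of $\extr{W}$. By \cref{lem:balanced-single-valued} every balanced term takes a single value on each state of $W$. So on $W$ a balanced term is just a potential function $V_t\colon W\to\Z$, together with finitely many values on the sinks $S_W$. The current knowledge can therefore be summarized as a labeling of the states of $\extr{W}$ by the tuple of truth values of the finitely many atomic formulas $[t=c]$ with $t\in\mathcal{B}$. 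This labeling has at most $|Q|$ distinct classes.

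The refinement step is the key one. We form a refined alphabet $\Sigma\times\Lambda$, where $\Lambda$ is the set of current state-labels (equivalently, the satisfied atomic formulas at the current position). We then ask whether a new depth-$1$ term over this refined alphabet is balanced on $W$ and separates two states that are currently unseparated. This is the same linear feasibility question as in \cref{lem:poly_balanced_one}. We look for $E\colon\Sigma\times\Lambda\to\Z$ and $V\colon W\to\Z$ satisfying the potential condition \eqref{eq:condition_loop} along every transition inside $W$, with labels read off the current labeling. The system has $O(|Q|^2|\Sigma|)$ constraints, and an integral basis of its solution module is computable in polynomial time, as in \cref{lem:poly_balanced_one}. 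Separation of a given pair $q_1,q_2$ amounts to some basis vector giving $V(q_1)\neq V(q_2)$, or, for sink states, distinct one-step values. Checking this over a basis suffices, since a linear combination with generic coefficients of separating basis vectors still separates. Any term built this way is totally balanced, because its subterms are members of $\mathcal{B}$ or the new term itself.

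For the running time, each successful round strictly refines the partition of $\extr{W}$, which has at most $|Q|+|Q||\Sigma|$ states. So there are at most polynomially many rounds, each polynomial. We terminate when no new separation is found, and answer \emph{separable} exactly when the final partition is discrete.

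The main obstacle is completeness. We must show that if some totally balanced term separates $q_1,q_2$, then the refinement eventually finds one. We would argue by induction on the depth of the separating term $t=\sum\chi_p\countl[\phi_p]$. Every $\phi_p$ is a Boolean combination of atoms $[t'=c]$ with $t'$ of smaller depth and balanced on $W$. By induction, the truth of each such atom at a position is a function of the state-label of the current partition, and this uses single-valuedness from \cref{lem:balanced-single-valued}. Hence $t$ coincides on runs inside $W$ with a depth-$1$ term over $\Sigma\times\Lambda$, which the linear system would have found. The delicate point is making \enquote{function of the state label} precise: an atom depends on the state reached, not on the transition, so one must label by source state and letter. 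We also need that the label stabilizes consistently across different entry states $q_0$. We would handle the latter by running the procedure once per $q_0\in W$, at a cost of only an extra polynomial factor.
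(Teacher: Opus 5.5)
Your proposal is correct and follows essentially the same route as the paper. Both iteratively refine the states of $\extr{W}$: each round uses the linear system of \cref{lem:poly_balanced_one} to find balanced depth-$1$ terms counting over letters enriched with the current indicator formulas, stops at a fixed point after polynomially many strict refinements, and answers \emph{separable} iff the final partition is discrete. Your induction on term depth for completeness, and your explicit handling of the entry state $q_0$ (the paper fixes one and calls the choice immaterial), spell out points the paper's proof only asserts.
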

\begin{proof}
    We will iteratively refine an equivalence relation $\equiv$ over states, based upon the set of $V_{q_0,q}(t)$ over all terms. 
    At the end, if all states are in their own equivalence class, the SCC is separable by a balanced formula.
    \begin{enumerate}
        \item Initialize $\equiv_0:=W\times W$ by setting all elements to the same class. 
        Initialize a set of $\CRASP$ formulas $\Phi_0:=\emptyset$.
        Fix a start state $q_0$ (since a balanced term is balanced given any start state, the exact choice is immaterial).
        Then perform the loop $(2)-(4)$:
        \item Generate the set $B_k$ of all terms that count over formulas in $\Phi_{k-1}$, up to equivalent separability of states using \cref{lem:poly_balanced_one}.
        \item Denote the indicator for $K\subseteq W$ induced by $t$ as the formula $$\psi_K:= \bigvee_{q\in K}t=V_{q_0,q}.$$
        We generate the collection of indicators of separable state sets by all $t\in B_k$ as the collection $\Psi_k:=\{\psi_K\mid V_{q_0,q_1}(t)=V_{q_0,q_2}(t)\text{ for all $q_1,q_2\in K$}\}$.
        \item Now there are two cases
        \begin{itemize}
            \item If $\Psi_k\not\subseteq \Phi_{k-1}$, then define $\Phi_k$ to refine $\Phi_{k-1}$ via intersection with $\Psi_k$.
        \[\Phi_k:=\{\psi_{K_1}\cap \psi_{K_2}\mid \psi_{K_1}\in \Phi_{k-1}, \psi_{K_2}\in \Psi_k\}.\]
        Next, set $\equiv_k:=\{(q_1,q_2)\mid \text{$q_1,q_2\in K$ for some $\phi_K\in \Phi_{k}$}\}.$
        Then, return to step $(2)$ for iteration $k+1$.
        \item If $\Psi_k\subseteq \Phi_{k-1}$, then we have reached a fixed point. 
        Set $\Phi:=\Phi_k$ and $\equiv:=\equiv_k$ and move to step $(5)$.
        \end{itemize}
        \item If $[q]_{\equiv}$ is a singleton for all $q$, output TRUE and return $\Phi$. Otherwise, output FALSE. 
    \end{enumerate}

    Correctness: Since each $\phi_K\in \Phi$ is balanced, all its subterms are balanced and thus if $[q]_{\equiv}$ are singletons we can obtain totally balanced terms which separate each pair of states. 
    If some $[q]_{\equiv}$ is not a singleton, we know there are two states that no totally balanced term can separate. 
    Thus, the algorithm is correct. 

    Polynomial running time: We crucially maintain that $\equiv_k$ is monotonoically refined at each step, and $\Phi$ keeps only the formulas that define equivalence classes via $\equiv_k$, and thus there are only $O(|Q|^2)$ many of them at each iteration.
    The first step $(1)$ runs in $O(|Q|^2)$ time. 
    Step $(2)$ takes $O(\mathsf{poly}(|Q|,|\Sigma|)$ time by \cref{lem:poly_balanced_one}.
    Step $(3)$ runs in $O(\mathsf{poly}(|Q|,|\Sigma|)$ time by generating in $O(|Q|^2)$ time the equivalence class where $V_{q_0,q_1}(t)=V_{q_0,q_2}(t)$ for each of the $O(\mathsf{poly}(|Q|,|\Sigma|)$ terms $t\in B$.
    Step $(4)$ runs in $O(\mathsf{poly}(|Q|,|\Sigma|)$ time, since the inclusion check is over polynomially sized sets and the intersection only needs to be done via pairs in $\Phi_{k-1}\times \Psi_k$, since each formula identifies a disjoint set of states $K$. 
    Step $(5)$ also just needs to check if the equivalence relation has size $|Q|$, achievable in $O(|Q|)$ time.
    Finally, we only need to iterate the $(2-4)$ loop $O(|Q|^2)$ many times, as the equivalence relation $\equiv_k$ is strictly refined at each step until a fixed-point is reached.
    Thus, this algorithm runs in polynomial time. 
\end{proof}

\begin{thm}
    Checking whether or not every SCC in $\automaton$ is separable by a totally balanced terms is decidable in time polynomial in the size of $\automaton$.
\end{thm}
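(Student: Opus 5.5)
The plan is to reduce the theorem to \cref{lem:separable-poly}, applied once to each strongly connected component of $\automaton$. First, compute the SCC decomposition of the state graph of $\automaton$ (Tarjan's algorithm), in time $O(|Q|+|Q||\Sigma|)$. This gives at most $|Q|$ components $W_1,\dots,W_r$, which we list in a topological order of the reachability relation. Trivial SCCs impose no separation requirement and can be skipped.

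For each remaining SCC $W$, form the extraction $\extr{W}$. This takes one pass over the outgoing transitions of $W$: collect $S_W$ and redirect every sink state to itself. The extraction has at most $|W|+|W||\Sigma|\le |Q|(1+|\Sigma|)$ states, which is polynomial in the size of $\automaton$. We then run the refinement procedure of \cref{lem:separable-poly} on $\extr{W}$. By that lemma this call takes $O(\mathsf{poly}(|Q|,|\Sigma|))$ time. Its inner step solves the integer linear constraint systems of \cref{lem:poly_balanced_one}, which describe the balanced depth-$1$ terms up to equivalent separability. The algorithm answers ``yes'' exactly when every call returns TRUE. Since there are at most $|Q|$ calls, each polynomial, the total running time is polynomial. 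Correctness of the answer is inherited directly from the correctness part of \cref{lem:separable-poly}. Combined with \cref{lem:decision-complete}, this also decides $\CRASP$-definability of $L(\automaton,q_0,F)$.

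The main obstacle is that \cref{lem:separable-poly} is stated for an SCC, while the states of $S_W$ are sinks that lie outside the component. Two points need checking here.

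\emph{Separating sink states.} The quantities $V_{q_0,q}(t)$ for $q\in S_W$ are defined via paths that take exactly one step out of $W$. When we compute separations involving sink states, we must therefore restrict to these one-step exits rather than general paths. For a term that is balanced on $W$, each such $V_{q_0,q}(t)$ is the finite set $\{V(p)+E(\sigma) : \delta(p,\sigma)=q,\ p\in W\}$. These sets are computable directly from the potential functions $V,E$ returned by \cref{lem:poly_balanced_one}.

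\emph{Iteration count.} The number of refinement iterations is bounded by the number of states of $\extr{W}$, not of $W$. This bound is still polynomial in $|Q|$ and $|\Sigma|$.

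I would make both points explicit, and otherwise treat the per-SCC bound as a black box.
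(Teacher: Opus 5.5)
Your proposal is correct and follows the paper's argument. Both compute the SCCs (e.g.\ by Tarjan's algorithm) and run the polynomial-time per-SCC test of \cref{lem:separable-poly} on each of the polynomially many components, so the total time is polynomial. Your extra remarks go beyond the paper's one-line proof without changing the route: you compute the values at the sink states $S_W$ from one-step exits out of $W$, instead of imposing the potential constraints on the sinks' self-loops, and you bound the number of refinement rounds by the number of states of $\extr{W}$.
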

\begin{proof}
    We iterate over $O(\mathsf{poly}(|Q|,|\Sigma|)$ many SCCs (e.g. using Tarjan's algorithm \citep{doi:10.1137/0201010}), and only need to perform a $O(\mathsf{poly}(|Q|,|\Sigma|)$ time query on each one, as in \cref{lem:separable-poly}.
\end{proof}

\section{Implementation}

A sketch of the algorithm used in the python implementation we have provided.
For simplicity the implementation uses a looser version the algorithm which enumerates all loops in a strongly conected component (of which there may be exponentially many), though one could modify it to strictly be a polynomial time algorithm, as proven above. 
For each simple loop in a strongly connected component, the helper function \textsc{BalancedLabelsBasis} computes the basis of all morphisms into $\Z$ given the constraint that all loops must be ``balanced'' (i.e. have the images of the symbols sum to $0$).
By computing the nullspace of this basis, we find all possible morphisms that sum to $0$ on loops, and iteratively relable the symbols according to these morphisms. 
At the end, the helper function \textsc{Separated} checks if all states have differing sets of outgoing transition labels (hence the states can be distinguished only using iterated counting -- and thus being expressible in $\CRASP$).

\begin{algorithm}
    \caption{$\CRASP$ Membership}
    \begin{algorithmic}[1]
        \Procedure{DecideMembership}{$\automaton$}
            \State $\text{SCCs} \gets\textsc{StronglyConnectedComponents}(\automaton)$
            \ForAll{$SCC \in \text{SCCs}$}
                \State $B_{\text{prev}} \gets \emptyset$
                \State $B_{\text{curr}} \gets \textsc{BalancedLabelsBasis}(SCC)$
                \If{$\textsc{Nullspace}(B_{\text{cur}})\neq\emptyset$}
                    
                    \While{$B_{\text{curr}}\not\in\textsc{Span}(B_{\text{prev}})$}
                        \State $labels[\sigma_i] \gets [v_i]_{v\in B_{\text{curr}}}$
                        \ForAll{$loop\in \textsc{LoopsAround}(q_0,SCC)$}
    
                            \ForAll{$0\leq i\leq \textsc{Len}(loop)$}
                                \State $(q_i,\sigma_i, q_{i+1}) \gets loop[i]$
                                \State $SCC.\delta.\textsc{Replace}\left((q_i,\sigma, q_{i+1}),\left(q_i,\left[\sigma_i,\sum_{1\leq j\leq i}labels[\sigma_j]\right],q_{i+1}\right)\right)$
                            \EndFor
                            \State $B_{\text{prev}} \gets B_{\text{curr}}$
                            \State $B_{\text{curr}} \gets \textsc{BalancedLabelsBasis}(SCC)$
                        \EndFor
                    \EndWhile
                \EndIf
                
                \If{$\lnot\textsc{Separated}(SCC)$}
                    \State \Return $\bot$
                \EndIf
            \EndFor
            \State \Return $\top$
        \EndProcedure

    \end{algorithmic}
\end{algorithm}

\end{document}